\numberwithin{equation}{section}
\newcommand{\blk}[2]{  
	\begin{minipage}{#1\textwidth} 
		\centering \vspace*{0.08cm}
		#2 \vspace*{0.05cm} 	
\end{minipage}}
\newcommand{\hinfty}{{\hat{\infty}}}
\newcommand{\tmem}[1]{{\em #1\/}}
\newcommand{\bl}[1]{{#1}}
\newcommand{\tmop}[1]{\ensuremath{\operatorname{#1}}}
\newcommand  \stack[2]  {\overset{\text{#1}}{#2}}
\def \dif {\text{d}}
\def \eye {\mathbf{I}}
\def \zero {\mathbf{0}}
\def \sign      {\tmop{sign}}
\def \diag      {\operatorname*{diag}}
\def \maximize  {\operatorname*{maximize}}
\def \st        {\operatorname*{subject\ to\ }}
\def \var       {\operatorname*{Var}}
\def \nn        {\nonumber}
\def \E {\mathbb{E}}
\def \R {\mathbb{R}}
\def \P {\mathbb{P}}
\def \C {\mathbb{C}}
\def \TT{\mathbb{T}}
\def \I {\mathbb{I}}
\def \balpha {{\boldsymbol {\alpha}}}
\def \bbeta  {{\boldsymbol {\beta}}}
\def \btheta {{\boldsymbol {\theta}}}
\def \bnu    {{\boldsymbol {\nu}}}
\def \brho {{\boldsymbol {\rho}}}
\def \a {\mathbf{a}}
\def \c {\mathbf{c}}
\def \d {\mathbf{d}}
\def \s {\mathbf{s}}
\def \e {\mathbf{e}}
\def \f {\mathbf{f}}
\def \g {\mathbf{g}}
\def \h {\mathbf{h}}
\def \n {\mathbf{n}}
\def \q {\mathbf{q}}
\def \s {\mathbf{s}}
\def \u {\mathbf{u}}
\def \v {\mathbf{v}}
\def \w {\mathbf{w}}
\def \x {\mathbf{x}}
\def \y {\mathbf{y}}
\def \z {\mathbf{z}}
\def \A {\mathcal{A}}
\def \F {\mathcal{F}}
\def \M {\mathcal{M}}
\def \N {\mathcal{N}}
\def \NN {\mathcal{N}}
\def \D {\mathbf{D}}
\def \W {\mathbf{W}}
\def \mA {\mathbf{A}}
\def \mB {\mathbf{B}}
\def \mZ {\mathbf{Z}}
\def \mS {\mathbf{S}}
\def \mH {\mathbf{H}}
\newtheorem{theorem}{Theorem}[section]
\newtheorem{proposition}{Proposition}[section]
\newtheorem{lemma}{Lemma}[section]
\newtheorem{corollary}{Corollary}[section]
\providecommand{\keywords}[1]{{\em Keywords: }\!\!#1}
\newcommand\blfootnote[1]{%
	\begingroup
	\renewcommand\thefootnote{}\footnote{#1}%
	\addtocounter{footnote}{-1}%
	\endgroup
}
\begin{document}

\title{{Approximate Support Recovery of Atomic Line Spectral Estimation:}\\{ A Tale of Resolution and Precision}}

	\author{{Qiuwei Li \qquad     \qquad Gongguo Tang}
		\\
		 { Department of Electrical Engineering,  Colorado School of Mines,  CO,  USA}
	}
	\date{}
	\maketitle

\blfootnote{\textit{Email addresses}: \texttt{qiuli@mines.edu} (Qiuwei Li),  \texttt{gtang@mines.edu} (Gongguo Tang)}

\blfootnote{This work was supported by the National Science Foundation [CCF-1464205, CCF-1704204].}

\begin{abstract}
	
\noindent
This work investigates the parameter estimation performance of super-resolution line spectral estimation using atomic norm minimization. The focus is on analyzing the algorithm's accuracy of inferring the frequencies and complex magnitudes from noisy observations. When the Signal-to-Noise Ratio is reasonably high and the true frequencies are separated by $O(\frac{1}{n})$,  the atomic norm estimator is shown to localize the correct number of frequencies,  each within a neighborhood of size $O(\sqrt{{\log n}/{n^3}} \sigma)$ of one of the true frequencies. Here $n$ is half the number of temporal samples and $\sigma^2$ is the Gaussian noise variance. The analysis is based on a primal-dual witness construction procedure. The obtained error bound matches the Cram\'er-Rao lower bound up to a logarithmic factor. The relationship between resolution (separation of frequencies) and precision or accuracy of the estimator is highlighted. Our analysis also reveals that the atomic norm minimization can be viewed as a convex way to solve a $\ell_1$-norm regularized,  nonlinear and nonconvex least-squares problem to global optimality.

		\bigskip

\noindent\keywords{
atomic norm,  line spectral estimation,  primal-dual witness construction,   super-resolution,  support recovery
}
\end{abstract}

\section{Introduction}\label{sec:intro}

Line spectral estimation,  which aims at approximately inferring the frequency and coefficient parameters from a superposition of complex sinusoids embedded in white noise,  is one of the fundamental problems in statistical signal processing. When the temporal and frequency domains are exchanged,  this classical problem was reinterpreted as the problem of mathematical super-resolution recently~\cite{Candes:2014br,  candes2013super,  fernandez2016super}. This line of work promotes the use of a convex sparse regularizer to solve inverse problems involving spectrally sparse signals,  distinguishing them from classical methods based on root finding and singular value decompositions (e.g.,  Prony's method,  MUSIC,  ESPIRIT,  Matrix Pencil,  etc.). The convex regularizer,  a particular instance of the general atomic norms,  has been shown to achieve optimal performance in signal completion~\cite{Tang:2013fo},  denoising~\cite{Tang:2013gd},  and outlier removal~\cite{Tang:2014outlier,  fernandez2016demixing}. For these signal processing tasks,  either one can recover the spectral signal exactly (and hence extract the true frequencies precisely),  or the error metric is defined using the signal instead of the frequency parameters. The most relevant question of the accuracy of {noisy} frequency estimation has been elusive. This work investigates the parameter estimation performance of super-resolution line spectral estimation using atomic norm minimization. More precisely,  given noisy observations
\begin{align}\label{eqn:noisy_samples}
y(t) = x^\star(t) + w(t),  t = -n,  \ldots,  n
\end{align}
of a spectrally sparse signal
\begin{align}\label{eqn:true}
x^\star(t)&=\sum_{\ell=1}^k c_\ell^\star \exp(i2\pi f_\ell^\star t),  t = -n,  \ldots,  n
\end{align}
with unknown frequencies $T^\star=\left\{f_\ell^\star\right\}_{\ell=1}^k$ and complex amplitudes $\left\{c_\ell^\star\right\}_{\ell=1}^k$, 
we will derive conditions under which the atomic norm formulation will return the correct number of frequencies,  and establish bounds on the frequency and coefficient estimation errors. An informal version of our main result is given in the following theorem,  while a formal statement is presented in Theorem~\ref{thm:main}.

\begin{theorem}[Informal]\label{thm:main:informal}
Suppose we observe $2n+1$ noisy consecutive samples $y(t)=x^\star(t)+w(t)$ of the signal~\eqref{eqn:true} with $w(t)$  being i.i.d. complex Gaussian variables of mean zero and variance $\sigma^2$.  If the unknown frequencies are well-separated,  the Signal-to-Noise Ratio (SNR) is large,  and the dynamic range of the coefficients is small,  then with probability at least $1-\frac{1}{n^2}$,   solving an atomic norm regularized least-squares problem with a large enough regularization parameter will return exactly $k$ estimated frequencies $\{{f}^{\mathrm{glob}}_\ell\}_{\ell=1}^k$ and coefficients $\{{c}^{\mathrm{glob}}_\ell\}_{\ell=1}^k$ that,  when properly ordered, 
satisfy
\begin{align}
\max_{1\leq\ell\leq k} |c_\ell^\star| |{f}^{\mathrm{glob}}_\ell-f^\star_\ell| & = O(\frac{\sqrt{\log n}}{n^{3/2}}\sigma)\label{eqn:freqbound:informal}, \\
\max_{1\leq\ell\leq k}|{c}^{\mathrm{glob}}_\ell-c^\star_\ell|&= O(\sqrt{\frac{\log n}{n}}\sigma)\label{eqn:coefbound:informal}.
\end{align}
\end{theorem}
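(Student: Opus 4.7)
My plan is to use the primal-dual witness (PDW) construction flagged in the introduction. First I would write down the KKT conditions for the atomic norm regularized least-squares problem
\[\min_x \tfrac{1}{2}\sum_{t=-n}^{n} |y(t)-x(t)|^2 + \lambda \|x\|_{\A},\]
which state that any optimum $x=\sum_\ell c_\ell\exp(i2\pi f_\ell\cdot)$ is certified by a dual trigonometric polynomial $Q(f)=\sum_{t=-n}^n q_t e^{-i2\pi f t}$ of degree $n$ satisfying $\|Q\|_\infty\le 1$, $y-x = \lambda q$, and $Q(f_\ell) = c_\ell/|c_\ell|$ at every support frequency. The PDW strategy is then to \emph{guess} that the optimum consists of exactly $k$ atoms, one in a small neighborhood $\Delta_\ell$ of each true $f_\ell^\star$, solve the corresponding restricted (nonconvex) problem for the candidate $(\hat{c}_\ell,\hat{f}_\ell)$, and verify the guess by exhibiting a dual $Q$ with $|Q(f)|<1$ strictly off $\{\hat{f}_\ell\}_{\ell=1}^k$.

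Next I would analyze the restricted problem by perturbation. Writing $\hat{c}_\ell = c_\ell^\star+\delta c_\ell$ and $\hat{f}_\ell = f_\ell^\star+\delta f_\ell$ and linearizing the stationarity conditions around the true parameters yields a block-diagonal-dominant system in $(\delta c,\delta f)$, whose right-hand side is a linear functional of $w$ plus an $O(\lambda)$ bias term. The diagonal dominance reflects the $\Omega(1/n)$ separation hypothesis via decay of the Dirichlet kernel and its first two derivatives, while the small-dynamic-range assumption allows the perturbed Gram matrix to be inverted uniformly. A contraction mapping / implicit function argument then delivers $(\hat{c},\hat{f})$ with bounds $|c_\ell^\star|\,|\delta f_\ell|=O(\sqrt{\log n/n^3}\,\sigma)$ and $|\delta c_\ell|=O(\sqrt{\log n/n}\,\sigma)$, once the relevant noise terms are controlled. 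The weight $|c_\ell^\star|$ appears naturally because the frequency column of the Jacobian at atom $\ell$ is $c_\ell^\star a'(f_\ell^\star)$ with $\|a'(f_\ell^\star)\|\asymp n^{3/2}$.

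With the candidate in hand I would build the dual certificate $Q$ by solving the interpolation problem $Q(\hat{f}_\ell) = \hat{c}_\ell/|\hat{c}_\ell|$, $Q'(\hat{f}_\ell)=0$, modeled on the squared-Fej\'er construction of Cand\`es--Fern\'andez-Granda and Tang et al. Decomposing $Q = Q_0 + Q_{\mathrm{noise}}$, where $Q_0$ is the noiseless certificate at the perturbed nodes and $Q_{\mathrm{noise}}$ is a linear functional of $w/\lambda$, the noiseless analysis gives $|Q_0(f)|<1$ off $\{\hat{f}_\ell\}$ with a uniform margin. I would then verify $|Q(f)|<1$ in two regimes: \emph{near} each $\hat{f}_\ell$ via a second-order Taylor expansion, exploiting $Q_0''(\hat{f}_\ell)\lesssim -n^2$ to get a concave bowl, and \emph{far} from the $\hat{f}_\ell$ using the constant margin of $Q_0$, in both cases absorbing the perturbation $Q_{\mathrm{noise}}$ (and its derivatives) provided these are small enough.

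The main obstacle is the probabilistic control of $Q_{\mathrm{noise}}$ and its first two derivatives uniformly over the unit circle, since this feeds both the perturbation bounds of step two and the strict dual feasibility of step three. This requires sup-norm bounds on a complex Gaussian trigonometric polynomial of degree $2n$; discretizing with an $O(n)$-net, interpolating between net points via Bernstein's inequality, and applying a Gaussian tail bound at each net point should yield $\|Q_{\mathrm{noise}}\|_\infty = O(\sigma\sqrt{n\log n}/\lambda)$ and analogous bounds for the first two derivatives with extra factors of $n$ and $n^2$, with probability at least $1-n^{-2}$. Choosing $\lambda\asymp\sigma\sqrt{n\log n}$ then makes $\|Q_{\mathrm{noise}}\|_\infty$ a small constant, preserving strict dual feasibility, and simultaneously propagates through the perturbation system to deliver the claimed frequency and coefficient error rates. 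The delicate bookkeeping needed to balance separation, SNR, dynamic range, and the size of $\lambda$ against all the relevant trigonometric-polynomial estimates is the true crux of the argument rather than any one isolated inequality.
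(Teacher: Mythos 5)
Your overall skeleton matches the paper's: a primal--dual witness in which the candidate $(\hat{\f},\hat{\c})$ is obtained by a contraction/implicit-function argument on the stationarity system of the restricted nonconvex problem, the frequency errors inherit the $|c_\ell^\star|$, $n^{3/2}$ scaling from the Jacobian, the noise is controlled by sup-norm bounds on Gaussian trigonometric polynomials via a net plus Bernstein, and strict dual feasibility is checked by a near-region concavity argument and a far-region margin. Your choice $\lambda\asymp\sigma\sqrt{n\log n}$ for the unweighted fidelity is also consistent with the paper's $\lambda\asymp\sigma\sqrt{\log n/n}$ for its weighted fidelity.

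The genuine gap is in your third step, the decomposition $Q=Q_0+Q_{\mathrm{noise}}$. In this problem the dual certificate is not free to be ``built'' by squared-Fej\'er interpolation at the perturbed nodes: complementary slackness forces $\hat\q=(\y-\hat{\x})/\lambda$, so the polynomial you must bound is $\hat Q(f)=\a(f)^H\mZ(\x^\star-\hat{\x})/\lambda+\a(f)^H\mZ\w/\lambda$. The second term is indeed your $Q_{\mathrm{noise}}$, but the first term is a residual scaled by $1/\lambda$, not the noiseless Cand\`es--Fern\'andez-Granda certificate at $\{\hat f_\ell\}$; identifying it with a polynomial that is uniformly below one off the support (with the margin and the curvature $\lesssim -n^2$ you want to invoke) is precisely the hard step, and nothing in your outline supplies it. The paper bridges this by a two-step construction: it first runs the fixed-point map on the \emph{noiseless but $\lambda$-regularized} problem to get an intermediate $\btheta^\lambda$, uses $\q^\star=\lim_{\lambda\to0}(\x^\star-\x^\lambda)/\lambda=-\frac{\mathrm{d}}{\mathrm{d}\lambda}\x^\lambda|_{\lambda=0}$ together with an integral/mean-value bound to show $(\x^\star-\x^\lambda)/\lambda$ generates a polynomial close to a bona fide dual certificate $Q^\star$ (Lemmas~\ref{lem:q0:dual:certificate} and~\ref{lem:Q:lambda:closeto:Q:0}), and only then compares with $\hat Q$ through the closeness of $\hat\btheta$ and $\btheta^\lambda$ (Lemma~\ref{lem:Q:hat:lambda:closeto:Q:lambda}). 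Without this intermediate object (or some substitute argument showing the forced residual polynomial is close to a noiseless certificate, uniformly in $f$ and at the scale $1/\lambda$), your near/far verification cannot get started, because you have no a priori bounds on $Q_0$, $Q_0'$, $Q_0''$ for the object that actually appears in the KKT conditions. The candidate-construction and noise-bound parts of your plan are sound; this missing link is where the proof would currently fail.
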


We  would like to first point out that this {\em frequency estimator} $\{f^{\mathrm{glob}}_\ell\}$ given by the atomic norm regularized least-squares is asymptotically unbiased. The $\ell_1$ norm minimization (atomic norm minimization is  an extension of it) is usually considered biased because it pushes down the solution using the $\ell_1$ norm. In the context of atomic norm minimization,  the estimator for the coefficient vector is indeed biased for the same reason. However,  the frequency estimator,  which is of more interest,  might still be unbiased since it is not pushed down by the atomic norm formulation. Indeed,  our result shows that the frequency estimator is at least asymptotically unbiased. 
\begin{corollary}
Under the same setup as in Theorem~\ref{thm:main:informal}, with  probability at least $1- \frac{1}{n^2}$,  the frequency estimator obtained by the  atomic norm regularized minimization is asymptotic unbiased.
\end{corollary}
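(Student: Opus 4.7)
The plan is to derive asymptotic unbiasedness directly from the high-probability error bound in Theorem~\ref{thm:main:informal} by combining the frequency error bound on the ``good'' event with a trivial boundedness argument on the ``bad'' event. Specifically, let $A_n$ denote the event that the conclusion of Theorem~\ref{thm:main:informal} holds, so that $\P(A_n)\geq 1-1/n^2$, and condition on $A_n$ to match each estimated frequency ${f}^{\mathrm{glob}}_\ell$ to the true frequency $f_\ell^\star$ via the optimal pairing underlying~\eqref{eqn:freqbound:informal}.

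On $A_n$, I would invoke~\eqref{eqn:freqbound:informal} directly: it guarantees
\[
|{f}^{\mathrm{glob}}_\ell-f_\ell^\star|\;=\;O\!\left(\frac{\sqrt{\log n}}{n^{3/2}|c_\ell^\star|}\sigma\right),
\]
which tends to $0$ as $n\to\infty$ since $|c_\ell^\star|$ and $\sigma$ are fixed. On the complementary event $A_n^c$, the estimated frequency still lies on the unit torus $\TT=[0,1)$, so the trivial bound $|{f}^{\mathrm{glob}}_\ell-f_\ell^\star|\leq 1$ applies. (If on $A_n^c$ the solver does not return exactly $k$ atoms, I would fix any convention, e.g.\ setting the error to the diameter of $\TT$, since only the aggregated expectation matters.) Combining the two regimes, I would then bound the expected absolute deviation by
\[
\E\bigl|{f}^{\mathrm{glob}}_\ell-f_\ell^\star\bigr|\;\leq\;O\!\left(\frac{\sqrt{\log n}}{n^{3/2}}\right)\cdot\P(A_n)\;+\;1\cdot\P(A_n^c)\;\leq\;O\!\left(\frac{\sqrt{\log n}}{n^{3/2}}\right)+\frac{1}{n^2},
\]
which tends to $0$.

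Finally, since $|\E[{f}^{\mathrm{glob}}_\ell]-f_\ell^\star|\leq \E|{f}^{\mathrm{glob}}_\ell-f_\ell^\star|$ by the triangle inequality (or Jensen's inequality applied to the convex function $|\cdot|$), I obtain $\E[{f}^{\mathrm{glob}}_\ell]\to f_\ell^\star$ as $n\to\infty$, which is exactly the definition of asymptotic unbiasedness.

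The main obstacle is strictly bookkeeping rather than analysis: the estimator ${f}^{\mathrm{glob}}_\ell$ is only defined through an implicit pairing with the truth on $A_n$, so I have to pin down a measurable convention for it on the low-probability failure event $A_n^c$ in order to talk about $\E[{f}^{\mathrm{glob}}_\ell]$ at all. Once a bounded convention on the torus is adopted, the $1/n^2$ tail probability dominates any trivial bound on this event, and the proof reduces to the two-line calculation above.
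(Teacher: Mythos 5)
Your proposal is correct and follows essentially the same route as the paper: condition on the high-probability event where the bound~\eqref{eqn:freqbound:informal} holds, use the trivial boundedness of frequencies on $\TT$ on the complement event of probability at most $1/n^2$, and conclude via $|\E[{f}^{\mathrm{glob}}_\ell]-f_\ell^\star|\leq \E|{f}^{\mathrm{glob}}_\ell-f_\ell^\star|\to 0$. Your explicit handling of the convention on the failure event is a small tidiness improvement over the paper's argument, but the substance is identical.
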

\begin{proof}
To see this,  we note that for any $i$,
\begin{align*}
\mathbb{E}[{f}^{\mathrm{glob}}_i]-f^\star_i
\leq\mathbb{E}\{|{f}^{\mathrm{glob}}_i-f^\star_i|\}
&=\int_{\Omega}|{f}^{\mathrm{glob}}_i(\omega)-f^\star_i(\omega)|\dif\omega+\int_{\Omega^c}|{f}^{\mathrm{glob}}_i(\omega)-f^\star_i(\omega)|\dif\omega\\
&\leq O(\frac{\sqrt{\log n}}{c_{\min}^{\star}n^{3/2}}\sigma)+\frac{2}{n^2}
\\
&=o(\frac{1}{n}).
\end{align*}
Here $\Omega$ is the high-probability sample space where our main result \eqref{eqn:freqbound:informal} holds,  $\Omega^c$ is its complement space,  and $c_{\min}^{\star}$ is defined as the smallest magnitude of $\{c_\ell^\star\}$. The second inequality follows from Eq. \eqref{eqn:freqbound:informal},  $\int_\Omega \dif\omega\leq 1$,   $\int_{\Omega^c} \dif\omega\leq \frac{1}{n^2}$,  and the fact that any frequency is defined in $\mathbb{T}=[0, 1]$. 
Therefore,  the frequency estimator is at least asymptotically unbiased. 
\end{proof}
By the asymptotic unbiasedness of our {\em atomic frequency estimator} and considering that the  Cram\'er-Rao bound (CRB)~\cite{Stoica:1989dn} can be viewed as the best squared error bound for any unbiased frequency estimators, we now compare our main result \eqref{eqn:freqbound:informal} (after taking the square) with the CRB, as well as the two most famous classical line spectral estimation methods, i.e., the MUSIC and Maximum Likelihood Estimation (MLE), in Table \ref{tab:comp:classic}.
\begin{table}[h!t]
\centering
	\begin{tabular}{lc}
		\toprule 
	 {\textbf{Method}} 
		& 	\blk{0.3}{ \textbf{Squared-Error Bound} }  
\\	\midrule
		 
		{CRB~\cite{Stoica:1989dn} }&  {$O(\frac{\sigma^2}{{c_{\min}^{\star2}}n^{3}})$}  
	\\[2ex]
		
	{MUSIC~\cite{Stoica:1989dn}	}& {$O(\frac{\sigma^2}{T{c_{\min}^{\star2}}n^3} + \frac{\sigma^4}{T {c_{\min}^{\star4}}n^4})$}	
	\\[2ex]
		
	{MLE~\cite{Stoica:1989dn} }& {$O(\frac{\sigma^2}{T{c_{\min}^{\star2}}n^3} + \frac{\sigma^4}{T {c_{\min}^{\star4}} n^4})$}	
	\\[2ex]
		
	\rowcolor[gray]{0.9}{{{This work~\eqref{eqn:freqbound:informal}}}}&  {$O(\frac{\sigma^2\log n}{{c_{\min}^{\star2}}n^{3}})$}   \\ 
		\bottomrule 
	\end{tabular} 
	\caption {Comparison with the classical line spectral estimation methods.} \label{tab:comp:classic}
\end{table}
We conclude that the squared error bound of the {\em atomic frequency estimator} matches the CRB up to  a logarithmic factor. We also note that the MUSIC and the MLE only have asymptotic mean squared error in the sense that the number of snapshots $T$ has to be infinitely large~\cite{Stoica:1989dn}. We emphasize that our results are non-asymptotic,  which hold for finite-length,  single-snapshot signals  (i.e., $T=1$),  while classical methods such as MUSIC and MLE are not efficient (i.e., approaching CRB) even with an infinite number of snapshots,  as long as the signal length $n$ is finite.

\section{Signal Model and Atomic Norm Regularization} \label{sec:model}
This paper considers the spectral estimation problem: given noisy temporal samples,  how well can we estimate the locations and determine the magnitudes of spectral lines? The signal of interest $x^\star(t)$ as expressed in~\eqref{eqn:true} is composed of only a small number of spectral spikes located in a normalized interval $\TT=[0, 1]$. We abuse notation and call $T^\star = \{f_\ell^\star\}_{\ell=1}^k$ the support of $\x^\star$. The number of frequencies,  $k$,  is referred to as the model order. The goal is to approximately localize these parameters from  a small number $2n+1$  of equispaced noisy samples given in~\eqref{eqn:noisy_samples}. For technical simplicity,  we assume $n = 2M$ is an even number.
The noise components $w(t)$ are i.i.d. centrally symmetric complex Gaussian variables with variance $\sigma^2$. To simplify notation,  we stack the temporal samples into vectors and write the observation model as
\begin{align}\label{eqn:observe}
\y&=\x^\star+\w, 
\end{align}
where $\x^\star:=[x^\star(-n), \ldots, x^\star(n)]^T, ~\y:=[y(-n), \ldots, y(n)]^T$ and $\w^\star:=[w^\star(-n), \ldots, w^\star(n)]^T$.

To estimate the frequency vector $\f^\star:=[f^\star_1, \ldots, f^\star_k]^T$ and the complex coefficient vector $\c^\star:=[c^\star_1, \ldots, c^\star_k]^T$,  we assume $k$ is small and treat $\x^\star$ as a sparse combination of atoms  $\a(f):=[e^{i2\pi(-n)f}, \ldots, e^{i2\pi n f}]^T$ parameterized by frequency $f\in\TT$,  that is, 
\begin{align}\label{eqn:atom_true}
\x^\star = \sum_{\ell = 1}^k c_\ell^\star \a(f^\star_\ell).
\end{align}
To exploit the structure  of $\x^\star$ encoded in the set of atoms $\A := \{\a(f),  f \in \TT\}$,  we follow~\cite{{Chandrasekaran:2010hl},  {Tang:2013fo}} and define the 
associated atomic norm as
\begin{align}\label{eqn:def:atomic:norm}
\|\x\|_{\A}&= \inf \left\{\sum_{\ell}|c_\ell|: \x=\sum_{\ell}c_\ell \a(f_\ell),  \forall f_\ell\in\TT,  c_\ell \in \C \right\}.
\end{align}
The dual norm of the atomic norm,  which is useful both algorithmically and theoretically,  is defined for any vector $\z$ as $\|\z\|_\A^* = \sup_{f\in\TT} | \a(f)^H  \z|$,  where  $^H$ denotes the Hermitian (conjugate transpose) operation. To solve atomic norm minimizations numerically,  the authors of~\cite{Tang:2013gd, bhaskar2013atomic} (see also~\cite{Candes:2014br}) first proposed to reformulate the atomic norm~\eqref{eqn:def:atomic:norm} as an equivalent semidefinite program. Other numerical schemes are studied in~\cite{rao2015forward, tewari2011greedy,  boyd2015alternating,  eftekhari2013greed}.

Given the noisy observation model~\eqref{eqn:observe},  it is natural to denoise $\x^\star$ by solving the atomic norm regularized minimization  program~\cite{bhaskar2013atomic,  Tang:2013gd}:
\begin{align}\label{eqn:primal}
{\x}^{\mathrm{glob}}=\operatorname*{argmin}_\x \frac{1}{2}\|\y-\x\|^2_{\mZ}+\lambda\|\x\|_{\A}.
\end{align}
For technical reasons,  we used a weighted $\ell_2$ norm,  $\|\z\|_{\mZ}:=\sqrt{\z^H \mZ\z}$,  to measure data fidelity. Here $\mZ=\diag(\frac{g_M(\ell)}{M})\in\R^{(4M+1)\times(4M+1)}$ with $g_M(\ell), \ell=-2M, \ldots, 2M$ defined in~\cite{{Tang:2013fo}} as the discrete convolution of two triangular
functions. We remark that,  in practice,  both a standard $\ell_2$ norm $\|\cdot\|_2$ and a weighted  $\ell_2$ norm $\|\cdot\|_{\mZ}$ achieve similarly satisfying performance. In this work,  we use $\|\cdot\|_{\mZ}$ with $\mZ=\diag(\frac{g_M(\ell)}{M})$ mainly for the purpose of introducing the Jackson kernel $K(f_2-f_1):=\a(f_1)^H\mZ\a(f_2)$  so that we can exploit the beautiful decaying properties of the Jackson kernel (see Section~\ref{sec:A:gM} for more details). When we exchange the frequency and temporal domains,  this weighting scheme trusts low-frequency samples more than high-frequency ones,  even though the noise levels are the same. The second term is a regularization term that penalizes solutions with  large atomic norms,  which typically correspond to spectrally dense signals. The regularization parameter $\lambda$,  whose value will be given later,  controls the trade-off between data fidelity and sparsity.

Once ${\x}^{\mathrm{glob}}$ was solved,  we can extract estimates of the frequencies either from the primal optimal solution ${\x}^{\mathrm{glob}}$ or from the corresponding dual optimal solution. Our goal is to characterize conditions such that i) we obtain exactly $k$ estimated frequencies; ii) there is a natural correspondence between the estimated frequencies and the true frequencies,  whose distances can be explicitly controlled; iii) the distances between the corresponding coefficients can also be explicitly bounded.

To formally present the main theorem,  we need to define a few more quantities. It is known that there is a resolution limit of the atomic norm approach in resolving the atoms,  or the frequency parameter $\f^\star$,  even from the noiseless data~\cite{tang2015resolution}. Therefore,  to recover the support of the line spectral signal $\x^\star$,  we need to impose certain separation condition on the distances of the true frequencies. For this purpose,  we define $\Delta(T)=\min_{\{f_\ell, f_m\}\subset T: f_\ell\neq f_m } |f_\ell-f_m|$,  where $|\cdot|$ is understood as the wrap-around distance in $\TT$. For example,  $|0.1-0.9| = 0.2$ under this distance. We also define 1) the dynamic range of the coefficients $B^\star:=\frac{c^\star_{\max}}{c^\star_{\min}}$,  where $c^\star_{\max}$ and $c^\star_{\min}$ denote the maximal and minimal modules of $\{c^\star_\ell\}_{\ell=1}^k$; 2) the normalized noise level $\gamma_0:=\sigma\sqrt{\frac{\log n}{n}}$;  3) the Noise-to-Signal Ratio $\gamma:=\gamma_0/c^\star_{\min}$ and 4) the regularization parameter $\lambda = 0.646X^\star  \gamma_0$ for some positive constant $X^\star$ to be determined later. Now we are ready to present our main result.

\begin{theorem}\label{thm:main}
Suppose we observe $2n+1$ noisy consecutive samples $y_\ell=x^\star_\ell+w_\ell$ of the signal~\eqref{eqn:true} or~\eqref{eqn:atom_true} with $w_\ell$  being i.i.d. complex Gaussian valuables of mean zero and variance $\sigma^2$.   We assume  $n\geq130$ and
\begin{align}
\Delta(T^\star)&\geq2.5009/n, \label{eqn:separation}\\
X^\star {B^\star}\gamma & \leq 10^{-3} \text{\ and\ }  B^\star /{X^\star }\leq 10^{-4}.  \label{eqn:snr}
\end{align}
Then with  probability at least $1- \frac{1}{n^2}$,   the optimal solution  of~\eqref{eqn:primal} has a decomposition
${\x}^{\mathrm{glob}}=\sum_{\ell=1}^k {c}^{\mathrm{glob}}_\ell \a({f}_\ell^{\mathrm{glob}})$ involving exactly $k$ atoms,  whose frequencies and coefficients,  when properly ordered, 
satisfy

\begin{align}
\max_{1\leq\ell\leq k} |c_\ell^\star| |{f}^{\mathrm{glob}}_\ell-f^\star_\ell| &\leq {0.4(X^\star+35.2) \gamma_0}/{n }, \label{eqn:freqbound}\\
\max_{1\leq\ell\leq k}|{c}^{\mathrm{glob}}_\ell-c^\star_\ell|&\leq (X^\star+35.2) \gamma_0 .\label{eqn:coefbound}
\end{align}

\end{theorem}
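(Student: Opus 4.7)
The approach is a primal-dual witness (PDW) construction. I would build a candidate $\hat\x = \sum_{\ell=1}^k \hat c_\ell\,\a(\hat f_\ell)$ with exactly $k$ atoms whose frequencies $\hat f_\ell$ lie in small neighborhoods of $f_\ell^\star$, and let $Q(f) := \a(f)^H \mZ(\y-\hat\x)/\lambda$ be the associated dual polynomial. By the KKT conditions for \eqref{eqn:primal}, if the interpolation equations $Q(\hat f_\ell) = \hat c_\ell/|\hat c_\ell|$ and the tangency equations $\partial_f|Q(f)|^2|_{f=\hat f_\ell}=0$ hold for $\ell=1,\dots,k$, and $|Q(f)|<1$ strictly off $\{\hat f_\ell\}$, then $\hat\x$ is the unique minimizer of \eqref{eqn:primal} and its minimal atomic decomposition is exactly $\sum_\ell \hat c_\ell\,\a(\hat f_\ell)$. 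The interpolation and tangency conditions form a system of $3k$ real equations in the $3k$ real unknowns $(\hat f_\ell,\Re\hat c_\ell,\Im\hat c_\ell)$, and the two parts of the PDW program are (i) to solve this system near $(\f^\star,\c^\star)$ and (ii) to certify strict dual feasibility of the resulting $Q$.

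\textbf{Existence and perturbation bounds.} I would solve the $3k$-dimensional system by a Banach contraction / implicit function argument centered at $(\f^\star,\c^\star)$. Writing $\y=\x^\star+\w$ and linearizing, the residual of the system at the true support is driven by the noise polynomial $W(f):=\a(f)^H\mZ\w$ and its first derivative evaluated at the $f_\ell^\star$; the Jacobian is a $3k\times 3k$ block matrix whose entries are the Jackson kernel $K(f_m^\star-f_\ell^\star)=\a(f_\ell^\star)^H\mZ\a(f_m^\star)$ together with its first two derivatives. The separation hypothesis $\Delta(T^\star)\geq 2.5009/n$, combined with the rapid decay of $K$ and its derivatives (Section~\ref{sec:A:gM}), makes this Jacobian strictly block-diagonally dominant and hence invertible with a quantitatively bounded inverse. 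A contraction on a ball of radius $O(\gamma_0/(n c_{\min}^\star))$ in the $\hat f_\ell$-coordinates and $O(\gamma_0)$ in the $\hat c_\ell$-coordinates then produces a unique solution and, via the inverse-Jacobian estimates, immediately yields the bounds \eqref{eqn:freqbound}--\eqref{eqn:coefbound} with the explicit constant $35.2$.

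\textbf{Strict dual feasibility and noise control.} The remaining step is to verify $|Q(f)|<1$ off the candidate support. I would split $\TT$ into a \emph{near region}, a union of arcs of radius $\approx c_0/n$ around each $\hat f_\ell$, and a \emph{far region}. On the near region, Taylor-expanding $|Q|^2$ at $\hat f_\ell$ and invoking the tangency condition reduces the task to a strict upper bound on $\partial_f^2|Q|^2$; the negative curvature comes from $K''(0)$ and must survive the noise perturbation. On the far region, write $Q=Q_{\mathrm{ideal}}+Q_{\mathrm{noise}}$ with $Q_{\mathrm{ideal}}$ the $\w=0$ dual polynomial; $|Q_{\mathrm{ideal}}|$ stays strictly below $1$ by Jackson-kernel decay together with the perturbation bounds of the previous paragraph, and $|Q_{\mathrm{noise}}|$ is controlled by Gaussian concentration for random trigonometric polynomials. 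Concretely, a union bound over a $\mathrm{poly}(n)$-sized grid paired with a Bernstein-type estimate on $W'$ delivers $\sup_f|W(f)|=O(\sqrt{n\log n}\,\sigma)=O(n\gamma_0)$, along with analogous estimates for $W'$ and $W''$, with probability at least $1-n^{-2}$; these same bounds also certify the residual estimates used in the contraction.

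\textbf{Main obstacle.} The hardest part will be quantitatively coupling the near-region strict-concavity margin with the perturbation radius in the contraction: the second derivative of $|Q|^2$ must be bounded away from zero uniformly on the near region even after the noise-driven perturbation of $(\hat\f,\hat\c)$, and this margin has to be compatible with the radius of the contraction ball. Carrying explicit constants through the tight Jackson-kernel inequalities of Section~\ref{sec:A:gM} is what forges the precise thresholds $2.5009$, $10^{-3}$, $10^{-4}$, $35.2$, and $n\geq 130$ appearing in the statement, and is the principal source of technical bookkeeping in the argument.
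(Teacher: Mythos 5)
Your overall architecture coincides with the paper's: the interpolation and tangency equations you write down are exactly the stationarity conditions $\nabla G(\btheta)=\zero$ of the $\ell_1$-regularized nonlinear least-squares program~\eqref{eqn:pdw}, your contraction with a block-diagonally dominant Jacobian is the paper's weighted gradient fixed-point map (Lemmas~\ref{lem:fix1}--\ref{lem:fix2}), the near/middle/far split is the paper's, and the gridding-plus-Gaussian-tail bound on $\sup_f|\a(f)^H\mZ\w|$ and its derivatives is Appendix~\ref{sec:B}. Two issues, one of them substantive. First, a minor slip: the residual of your $3k$ system at $(\f^\star,\c^\star)$ is not driven only by $W(f^\star_\ell)$ and $W'(f^\star_\ell)$; the coefficient equations also contain the bias term $\lambda\,\sign(c^\star_\ell)$, which with $\lambda=0.646X^\star\gamma_0$ and $X^\star$ large is in fact the dominant contribution. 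This is precisely why the final radius is $(X^\star+35.2)\gamma_0$ rather than a pure noise-level quantity, and why the paper splits the radius into an $X^\star\gamma_0$ part (regularization) and a $35.2\gamma_0$ part (noise).

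The substantive gap is in strict dual feasibility. Your decomposition $Q=Q_{\mathrm{ideal}}+Q_{\mathrm{noise}}$ assumes that the $\w=0$ dual polynomial at regularization level $\lambda$ ``stays strictly below $1$ by Jackson-kernel decay together with the perturbation bounds,'' but this is exactly the step where naive estimates fail: every candidate dual polynomial here is normalized by $\lambda$, i.e.\ $Q^\lambda(f)=\a(f)^H\mZ(\x^\star-\x^\lambda)/\lambda$, and all the quantities your contraction controls ($\|\f^\lambda-\f^\star\|_\infty$, $\|\c^\lambda-\c^\star\|_\infty$, hence $\x^\star-\x^\lambda$) are themselves $O(\lambda)$, so dividing by $\lambda$ yields only $O(1)$ bounds --- not the small discrepancy needed to transfer a quantitative margin (e.g.\ $|Q|\le 0.93$ in the middle region, strictly negative curvature of $|Q|$ in the near region) to $\hat Q$. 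The paper's proof supplies the missing ideas: (i) it identifies the limiting certificate $\q^\star=\lim_{\lambda\to0}\q^\lambda=-\frac{\mathrm{d}}{\mathrm{d}\lambda}\x^\lambda|_{\lambda=0}$ and proves it satisfies explicit Cand\`es--Fernandez-Granda-type bounds by solving the interpolation system for coefficients $\balpha,\bbeta$ and bounding them (Lemma~\ref{lem:q0:dual:certificate}); (ii) it beats the $1/\lambda$ normalization through the integral representation $Q^\lambda(f)-Q^\star(f)=\frac{1}{\lambda}\int_0^\lambda \a(f)^H\mZ\bigl(\frac{\mathrm{d}}{\mathrm{d}t}\x^0-\frac{\mathrm{d}}{\mathrm{d}t}\x^t\bigr)\mathrm{d}t$, with the derivative differences controlled via the implicit function theorem and Hessian conditioning (Lemma~\ref{lem:Q:lambda:closeto:Q:0}); and (iii) it uses the intermediate fixed point $\btheta^\lambda$ (noise-free data, $\lambda>0$) as a bridge, because comparing $\hat\q$ to $\q^\star$ directly mixes the noise and regularization effects in a way that is hard to bound. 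Without some substitute for (i)--(iii), your far-region and near-region certification does not close; the contraction alone gives the error bounds conditional on $\hat\btheta$ being the global solution, but not the ``exactly $k$ atoms'' conclusion.
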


Several remarks on the conditions follow. Because of the weighting scheme we use in~\eqref{eqn:primal},  our choice of $\lambda$ differs from the standard one in~\cite{bhaskar2013atomic} by a factor $1/n$ and ensures that the weighted dual atomic norm of the noise,  $\|\mZ\w\|_\A^*$,  is less than $\lambda$ with high probability. For technical reasons,  our separation condition~\eqref{eqn:separation} is stronger compared with the previous works~\cite{Candes:2014br, Tang:2013gd, fernandez2013support, candes2013super,  Duval:2015gk, Denoyelle2017, morgenshtern2016super}\footnote{Note that our separation condition is a bit larger when comparing to these recent works in super-resolution,  while there are two other things to be considered. One thing is that most of these works require strong assumptions on the noise in their models (e.g.,   the noise is bounded),  while our work removes such assumptions and hence can deal with the more general Gaussian noise. To make this possible,  we have to develop a new proof strategy involving the two-step construction process of the dual certificate. Another thing is that although some prior works achieve small resolution limit (even comparable to the Relay diffraction limit~\cite{morgenshtern2016super}),  they study a different problem. For example, ~\cite{morgenshtern2016super} considers the signal denoising problem,  that is,   stable recovery of the whole signal $\x$ rather than the parameter estimation  (i.e.,  the source location recovery).  While the focus of our work is the accuracy of parameter estimation in Gaussian noise,  which might be more significant for practical applications such as Radar and single-molecule microscopy,  where precisely locating each target/point source is extremely important. Since the parameter estimation problem is much harder than the denoising problem,  we have to relax a bit the separation condition for ease of analysis.
}. The conditions~\eqref{eqn:snr} wrap several requirements on the problem parameters for the conclusions to hold: the dynamic range of the coefficients $B^\star$,  the Noise-to-Signal Ratio $\gamma$,  and the normalized noise $\gamma_0$ should all be small while the regularization parameter $\lambda$ should be large enough as measured by $X^\star$.

%

It is worth noting that~\eqref{eqn:snr} implicitly imposes a strong assumption on the Noise-to-Signal Ratio
	\[\gamma\leq 10^{-7}/B^{\star 2}\]
	implying a sufficiently large $n$ (but still finite). For high-level ideas,  there might be two reasons to account for this phenomenon. One is that the problem of line spectral estimation is known to be sensitive to noise.  Another is inherently from our proof regime,  which makes the constants in Eq.~\eqref{eqn:snr} a bit conservative. More precisely,  the ultimate objective is to show the boundedness and interpolation property of the target polynomial (see Proposition~\ref{pro:bip} for more details). Our method is using an ``existing" dual polynomial in~\cite{Candes:2014br} satisfying this property and showing the distance between these two polynomials is sufficiently small. So,  we require the noise level to be small,  since we will see in Lemma~\ref{lem:fix2} that the noise level will influence this distance.

One more remark is that the quantity $35.2\gamma_0$ in our results is  related to the expected dual atomic norm of the weighted Gaussian noise $ \E\|\mZ\w\|_{\A}^*$. By noting the definition $\lambda=0.646X^\star\gamma_0$,  we can rewrite the error bounds~\eqref{eqn:freqbound} and~\eqref{eqn:coefbound} in a more concise way:
\begin{align}
\max_{1\leq\ell\leq k} |c_\ell^\star| |{f}^{\mathrm{glob}}_\ell-f^\star_\ell| &= O\left(\lambda + \E\|\mZ\w\|_{\A}^*\right)/n, \label{eqn:freqbound1}\\
\max_{1\leq\ell\leq k}|{c}^{\mathrm{glob}}_\ell-c^\star_\ell|&= O\left(\lambda + \E\|\mZ\w\|_{\A}^*\right) .\label{eqn:coefbound1}
\end{align}
Eq.~\eqref{eqn:freqbound1} and~\eqref{eqn:coefbound1} imply that the error bounds are determined jointly by the regularization parameter $\lambda$ and the expected dual atomic norm of the weighted Gaussian noise $ \E\|\mZ\w\|_{\A}^*$. Since the regularization parameter $\lambda$ has the same order as $\E\|\mZ\w\|_\A$,   the estimated frequencies and coefficients are guaranteed to have errors of orders $O\left(\E\|\mZ\w\|_{\A}^*/n\right)$ and $O\left(\E\|\mZ\w\|_{\A}^*\right)$,  respectively. 
Remarkably,  using atomic dual norm strategy allows us to deal with the Gaussian noise,  while most prior works~\cite{fernandez2013support, candes2013super,  Duval:2015gk, Denoyelle2017} in approximate support recovery have to build their theoretical foundations on the bounded-noise assumption,  which dramatically narrow down the applications. 

Now we summarize the above comparisons of our result with those state-of-the-art modern support recovery methods in the Table \ref{tab:comp:modern}.
\begin{table}[h!t]
	\centering
	\begin{tabular}{lcccll}
		\toprule 
		\blk{0.15}{\textbf{Paper}} 
		& \blk{0.13}{\textbf{Bounded\\Noise}}	
		& \blk{0.13}{\textbf{Positive\\Measure}}
		&\blk{0.14}{\textbf{Support\\Condition}}
		&\blk{0.1}{\textbf{SNR}}
		&\blk{0.13}{\textbf{Support\\~Recovery}}    
		\\
		\midrule 
		{\cite[Theorem 1.5]{Candes:2014br}} &{Yes}&  {No}&
		$\Delta\ge\frac{2}{n}$&\quad Finite&\quad~~{None}    
		\\[0.4ex]
		{\cite[Theorem 1.2]{candes2013super}} &{No}&{No}& $\Delta\ge\frac{2}{n}$&\quad Finite&\quad~~{None}    
		\\[0.4ex]
		{\cite[Theorem 1]{morgenshtern2016super}} &{No}	&{Yes}& {RRC}&\quad Finite&\quad~~{None}    
		\\[0.4ex]
		{\cite[Theorem 1.2]{fernandez2013support}} &{Yes}&{No}& $\Delta\ge\frac{2}{n}$&\quad Finite&\quad~~{Exist}    
		\\[0.4ex]
		{\cite[Theorem 2]{Tang:2013gd}} &{No}&{No}& $\Delta\ge\frac{2}{n}$&\quad Finite&\quad~~{Exist}    
		\\[0.4ex]
		{\cite[Theorem 2]{Duval:2015gk}}&{Yes}&{No}&NDSC&\quad Infinite&\quad~~{Unique}    
		\\[0.4ex]
		{\cite[Theorem 2]{Denoyelle2017}}&{Yes}&{Yes}&NDSC&\quad Infinite&\quad~~{Unique}    
		\\[0.4ex]
		\rowcolor[gray]{0.9}~{Theorem 1.2}&{No}&{No}&$\Delta\ge\frac{2.5009}{n}$&\quad Finite&\quad~~{Unique} 
		\\
		\bottomrule
	\end{tabular} 
	\caption {Comparison with other modern line spectral estimation/super-resolution methods. The \emph{Positive Measure} column refers to whether the result requires the ground-truth measure to be positive. RRC is short for Rayleigh Regularity condition \cite[Definition 1.1]{morgenshtern2016super},   which generalizes the standard separation condition to clustered support. NDSC stands for the  non-degenerate source condition \cite[Definition 5]{Duval:2015gk}. In the \emph{Support Recovery} column, \emph{None} indicates that the work considers signal recovery instead of support recovery; \emph{Existence} means that the work shows the existence of at least one recovered parameter around each ground-true parameter,  but fails to theoretically eliminate the possibility of spurious recovered parameters; \emph{Uniqueness} shows that around each true parameter there is one and only one recovered parameter.}  \label{tab:comp:modern}
\end{table}


Finally,  our proof for Theorem~\ref{thm:main} also reveals the connection between the atomic norm minimization~\eqref{eqn:primal} and the following $\ell_1$-norm regularized,  nonlinear and nonconvex least-squares  program:
\begin{align}\label{eqn:pdw}
\operatorname*{minimize}_{\f,  \c} \frac{1}{2}\|\mA(\f)\c - \y\|_{\mZ}^2 + \lambda \|\c\|_1, 
\end{align}
where $\f := [f_1,  \ldots,  f_{k}]^T, ~\c := [c_1,  \ldots,  c_{k}]^T$,  and $\mA(\f):=[\a(f_1), \ldots, \a(f_{k})]$. The program~\eqref{eqn:pdw} is highly nonconvex,  with numerous local minima and saddle points,  so solving it to global optimality is very difficult. Our analysis shows that,  under the conditions of~Theorem \ref{thm:main},  the convex program~\eqref{eqn:primal} shares the same global optimum as the nonconvex program~\eqref{eqn:pdw},  implying that the atomic norm minimization provides a new convex way to solve the nonconvex program to global optimality. We summarize the result in the following corollary,   with the formal proof listed in Appendix~\ref{sec:connection}.
\begin{corollary}\label{cor:main2}
Under the same setup as in Theorem~\ref{thm:main},  with  probability at least $1- \frac{1}{n^2}$,  the frequencies and coefficients estimated by the  atomic norm regularized minimization~\eqref{eqn:primal} constitute a global optimum of the $\ell_1$-regularized nonlinear least-squares program~\eqref{eqn:pdw}.
\end{corollary}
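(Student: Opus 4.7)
The plan is to sandwich the value of the nonconvex program~\eqref{eqn:pdw} between two evaluations of the convex program~\eqref{eqn:primal}, using only the definition of the atomic norm on one side and the dual certificate delivered by the proof of Theorem~\ref{thm:main} on the other. For any admissible pair $(\f, \c) \in \TT^k \times \C^k$, the vector $\x := \mA(\f)\c$ is a feasible $k$-term decomposition for the atomic norm~\eqref{eqn:def:atomic:norm}, so $\|\x\|_{\A} \leq \sum_{\ell=1}^{k} |c_\ell| = \|\c\|_1$. Combining this with the optimality of $\x^{\mathrm{glob}}$ for~\eqref{eqn:primal} yields
\begin{align*}
\tfrac{1}{2}\|\mA(\f)\c - \y\|_{\mZ}^2 + \lambda\|\c\|_1
\;\geq\; \tfrac{1}{2}\|\x - \y\|_{\mZ}^2 + \lambda\|\x\|_{\A}
\;\geq\; \tfrac{1}{2}\|\x^{\mathrm{glob}} - \y\|_{\mZ}^2 + \lambda\|\x^{\mathrm{glob}}\|_{\A}.
\end{align*}

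The claim therefore reduces to showing that the particular $k$-atom representation $\x^{\mathrm{glob}} = \sum_{\ell=1}^{k} c_\ell^{\mathrm{glob}}\, \a(f_\ell^{\mathrm{glob}})$ furnished by Theorem~\ref{thm:main} is itself tight for the atomic norm, i.e.\ $\|\x^{\mathrm{glob}}\|_{\A} = \|\c^{\mathrm{glob}}\|_1$. Here I would invoke the primal-dual witness construction underlying Theorem~\ref{thm:main}, which produces a dual vector $\q$ whose associated trigonometric polynomial $Q(f) := \a(f)^H \q$ satisfies the boundedness-and-interpolation property of Proposition~\ref{pro:bip}: $|Q(f)| \leq 1$ for all $f \in \TT$ and $Q(f_\ell^{\mathrm{glob}}) = \mathrm{sign}(c_\ell^{\mathrm{glob}})$ at every recovered frequency. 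The bound $|Q(f)| \leq 1$ means $\|\q\|_{\A}^* \leq 1$, so pairing $\q$ with $\x^{\mathrm{glob}}$ produces
\begin{align*}
\|\x^{\mathrm{glob}}\|_{\A}
\;\geq\; \mathrm{Re}\,(\x^{\mathrm{glob}})^H \q
\;=\; \mathrm{Re}\sum_{\ell=1}^{k} \overline{c_\ell^{\mathrm{glob}}}\, Q(f_\ell^{\mathrm{glob}})
\;=\; \sum_{\ell=1}^{k} |c_\ell^{\mathrm{glob}}|,
\end{align*}
while the reverse inequality is immediate from the definition~\eqref{eqn:def:atomic:norm}. Substituting $\|\x^{\mathrm{glob}}\|_{\A} = \|\c^{\mathrm{glob}}\|_1$ into the sandwich inequality gives, for every admissible $(\f, \c)$,
\[
\tfrac{1}{2}\|\mA(\f)\c - \y\|_{\mZ}^2 + \lambda\|\c\|_1 \;\geq\; \tfrac{1}{2}\|\mA(\f^{\mathrm{glob}})\c^{\mathrm{glob}} - \y\|_{\mZ}^2 + \lambda\|\c^{\mathrm{glob}}\|_1,
\]
which is exactly global optimality of $(\f^{\mathrm{glob}}, \c^{\mathrm{glob}})$ for~\eqref{eqn:pdw}.

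The reduction itself is short and purely convex-analytic; the genuine obstacle is upstream, in certifying the existence of a dual polynomial $Q$ that interpolates the complex signs of $\c^{\mathrm{glob}}$ at the $k$ recovered frequencies while remaining uniformly bounded by one on the entire torus (not merely on a grid or at the true support $T^\star$). That heavy lifting is discharged by the PDW construction already used to prove Theorem~\ref{thm:main}, so the corollary drops out with essentially no extra work on the high-probability event. The resulting identity $\|\x^{\mathrm{glob}}\|_{\A} = \|\c^{\mathrm{glob}}\|_1$ is precisely what makes it legitimate to view atomic norm minimization as a convex route to a global optimum of the nonconvex program~\eqref{eqn:pdw}.
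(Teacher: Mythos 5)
Your proposal is correct and follows essentially the same route as the paper: a sandwich argument $G(\f,\c)\geq F(\mA(\f)\c)\geq F(\x^{\mathrm{glob}})$ based on $\|\mA(\f)\c\|_{\A}\leq\|\c\|_1$, combined with the tightness identity $\|\x^{\mathrm{glob}}\|_{\A}=\|\c^{\mathrm{glob}}\|_1$ to conclude $F(\x^{\mathrm{glob}})=G(\f^{\mathrm{glob}},\c^{\mathrm{glob}})$. The only difference is how that identity is certified: the paper invokes the separation of the recovered frequencies (Lemma~\ref{lem:resolution}), whereas you pair $\x^{\mathrm{glob}}$ directly with the dual certificate $\hat\q$ constructed in the proof of Theorem~\ref{thm:main}, which is an equally valid (and arguably more self-contained) justification; your formulation over all feasible $(\f,\c)$ also sidesteps the paper's implicit assumption that the nonconvex program attains its minimum. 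One small correction: with the weighted fidelity term the dual pairing must carry $\mZ$, i.e.\ $\hat Q(f)=\a(f)^H\mZ\hat\q$, $\|\mZ\hat\q\|_{\A}^*\leq 1$, and the inequality should read $\|\x^{\mathrm{glob}}\|_{\A}\geq \mathrm{Re}\{(\x^{\mathrm{glob}})^H\mZ\hat\q\}=\sum_{\ell}|c_\ell^{\mathrm{glob}}|$.
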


\section{Prior Art and Inspirations}\label{sec:prior:art}

Classical line spectral estimation techniques can be broadly classified into two camps: non-parametric and parametric methods. Non-parametric methods are mainly based on Fourier analysis~\cite{kay1988modern, stoica1997introduction}. Such approaches have low computational complexities and no need for signal models. These methods have limited frequency resolution due to spectral leakage. Parametric methods,  however,  can achieve high resolution for parameter estimation. For example,  Prony's method based on polynomial root-finding~\cite{prony1795,  kahn1992consistency} can resolve arbitrarily close frequencies in the noiseless setting. Yet this method is highly sensitive to noise and would fail even in the small noise regime. As stable versions of Prony's method,  the subspace methods recast the noise-sensitive polynomial root-finding problem into more robust matrix eigenvalue problems. For instance,  the matrix pencil method~\cite{Hua:1990kx} arranges the observations into a matrix pencil whose generalized eigenvalues and eigenvectors contain information about the frequencies; the MUSIC algorithm~\cite{music} and the ESPRIT method~\cite{esprit} decompose the autocorrelation matrix into noise-subspace and signal subspace using eigenvalue decomposition and extract frequency estimates from the signal subspace. Both algorithms were shown to achieve CRB asymptotically~\cite{Stoica:1989dn,  fri} when the signal length $2n+1$ and the number of snapshots approach infinite. However,  these classical methods are not efficient (i.e.,  approaching the CRB) even with an infinite number of snapshots,  as long as the signal length is finite. Also,  all classical parametric methods require knowledge of the model order.

Modern convex optimization based methods formulate line spectral estimation as a linear inverse problem and exploit signal sparsity using $\ell_1$-type regularizations. Such methods are modular,  robust,  and do not require knowledge of model orders. To apply the $\ell_1$ regularization techniques,  the continuous frequency domain is divided into a grid of discrete frequencies. When the true frequencies fall onto the discrete Fourier grid,  work in compressive sensing guarantees optimal recovery performance~\cite{donoho2006compressed, candes2006compressive,  baraniuk2007compressive}. When the frequencies do not fall onto the Fourier grid,  however,  the performance of $\ell_1$ minimization degrades significantly due to basis mismatch~\cite{chi2011sensitivity}. The basis mismatch issue can be mitigated by employing finer grids~\cite{tang2013sparse,  duval2015sparse},  which unfortunately often leads to numerical instability.

Atomic norm regularization avoids basis mismatch by enforcing sparsity directly in the continuous frequency domain. Given a set of atoms,  possibly indexed by continuous parameters,  one constructs an atomic norm in a principled way as a generalization of the $\ell_1$-norm to promote signals with parsimonious representations. Using the notion of descent cones,  the authors of~\cite{Chandrasekaran:2010hl} argued that the atomic norm is the best possible convex proxy for recovering sparse models. For the special line spectral estimation problem,  where the atomic norm is induced by the set of parameterized complex exponentials,  atomic regularizations have been shown to achieve optimal performance for several signal processing tasks. For instance,  atomic norm minimization recovers a spectrally sparse signal from a minimal number of random signal samples~\cite{Tang:2013fo},  identifies and removes a maximal number of outliers~\cite{Tang:2014outlier,  fernandez2016demixing},  and performs denoising with an error approaching the minimax rate~\cite{Tang:2013gd}.
When multiple measurement vectors are available,  a method of exploiting the joint sparsity
pattern of different signals to further improve estimation accuracy is proposed in~\cite{li2016off,yang2014exact,li2018atomic}.
All these works draw inspirations from the dual polynomial construction strategy developed in the pioneer work~\cite{Candes:2014br}. This paper adds to this line of work by showing that the atomic framework produces optimal noisy frequency estimators.

Several closely related works also studied conditions for approximate support recovery from noisy observations. The work~\cite{fernandez2013support} developed error bounds on spectral support recovery for bounded noise. In~\cite{Tang:2013gd},  the authors derived suboptimal bounds for the Gaussian noise model. In~\cite{azais2015spike},  the authors extended this line of research to general measurement schemes beyond Fourier samples using the Beurling-LASSO (B-LASSO) program. The B-LASSO program,  which minimizes a least-squares term plus the measure total variation norm,  is mathematically equivalent to the atomic norm formulation. All these works~\cite{azais2015spike, Tang:2013gd,  fernandez2013support} cannot guarantee the recovery of exactly one frequency in each neighborhood of the true frequencies. In this regard,  the work by Duval and Peyr\'{e}~\cite{Duval:2015gk} showed that as long as the SNR is large enough and the sources are well-separated and satisfy a \emph{non-degenerate source condition},  then total variation norm regularization can recover the correct number of the Diracs with both the coefficient error and the frequency error scale as the $\ell_2$ norm of the noise. Compared with their work,  our result uses the (weighted) dual atomic norm of the noise in place of the $\ell_2$ norm,  which differ by order of $\sqrt{n}$,  allowing our bound to match the CRB up to a logarithmic factor. In addition,   their work relies on a \emph{non-degenerate source condition}~\cite[Definition 5]{Duval:2015gk} that is not proven to hold in the spectral super-resolution setting. In this sense,  the present paper is the first to rigorously establish that in a high SNR regime this approach yields the right number of frequencies.
Further our proof technique based on the primal-dual witness construction is also very different from that employed in~\cite{Duval:2015gk} based on a perturbation analysis of the dual certificate in the noise-free case. In particular,  our analysis reveals the connection between the convex approach and a natural nonlinear least-squares method for spectral estimation. More recently, ~\cite{Denoyelle2017} studies the support recovery for positive measures. For a comparison,  there are several major  differences worth remarking here: 1) in~\cite{Denoyelle2017}  more emphasis is put on the asymptotic analysis,  while the presented work instead deals with non-asymptotic settings with finite signal length; 2)~\cite{Denoyelle2017} requires the underlying noise to have finite $\ell_2$ norm,  which severely restricts the scope of noises satisfying such a property,  excluding the well-known and most common Gaussian noise,  while the presented results allow the underlying noise to be Gaussian; 3)  in addition to requiring a sufficiently large signal-to-noise ratio,  the main result in~\cite{Denoyelle2017} also  relies on a \emph{non-degenerate source condition} that is not proven to hold in the spectral super-resolution setting.


\section{Proof by Primal-Dual Witness Construction}\label{sec:proof}

Duality plays an important role in understanding atomic norm regularized line spectral estimation. Standard Lagrangian analysis shows that the dual problem of
\eqref{eqn:primal} has the following form:
\begin{align}\label{eqn:dual}
{\q}^{\mathrm{glob}}=\operatorname*{argmax}_{\q}&~\frac{1}{2}\|\y\|_{\mZ}^2-\frac{1}{2}\|\y-\lambda\q\|_{\mZ}^2\nn\\
\st&~\|\mZ\q\|_\A^* \leq 1.
\end{align}
 The complex trigonometric polynomial $Q(f):=\a(f)^H\mZ\q$ corresponding to a dual feasible solution $\q$ is called a dual polynomial. The dual polynomial associated with the unique dual optimal solution  ${Q}^{\mathrm{glob}}(f):=\a(f)^H\mZ{\q}^{\mathrm{glob}}$ certifies the optimality of the unique primal optimal solution ${\x}^{\mathrm{glob}}$,  and vice versa. The uniqueness of primal and dual optimal solutions is a consequence of the strong convexity of the objective functions of~\eqref{eqn:primal} and~\eqref{eqn:dual},  respectively. In particular,  the primal-dual optimal solutions are related by ${\q}^{\mathrm{glob}} = (\y - {\x}^{\mathrm{glob}})/\lambda$. We summarize these in the following proposition,  with the proof given in Appendix~\ref{sec:proof:optimalitycondition}:
\begin{proposition}\label{pro:bip}
Let the decomposition $\hat{\x} = \sum_{\ell=1}^{\hat{k}} \hat{c}_\ell \a(\hat{f}_\ell)$ with distinct frequencies $\hat{T}=\{\hat{f}_\ell\} \subset \TT$ and nonzero coefficients $\{\hat{c}_\ell\}$ and set $\hat{\q} = ({\y-\hat{\x}})/{\lambda}$. Suppose the corresponding dual polynomial $\hat{Q} (f) = \a(f)^H\mZ\hat\q$ satisfies the following {\tmem{Bounded Interpolation Property (BIP)}}:
  \begin{align*}
\hat{Q} (\hat{f}_\ell) & =  \sign(\hat{c}_\ell), \ell=1,  \ldots , \hat{k}\ \ ( \text{\em Interpolation} );
\\
|\hat{Q} (f)|  & < 1 {} ,   \forall f\notin \hat{T}\ \  (\text{\em Boundedness} );
  \end{align*}
  then $\hat{\x}$ and $\hat{\q}$ are the unique primal-dual optimal solutions to~\eqref{eqn:primal} and~\eqref{eqn:dual},  that is,  $\hat{\x} = {\x}^{\mathrm{glob}}$ and $\hat{\q} = {\q}^{\mathrm{glob}}$. Here the operation $\sign(c) := c/|c|$ for a nonzero complex number and applies entry-wise to a vector.
\end{proposition}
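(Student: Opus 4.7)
The proposition is a standard primal-dual certificate verification, so my plan has three steps: feasibility of the dual witness, equality of the primal and dual values at $(\hat{\x},\hat{\q})$, and a strict-convexity argument for uniqueness. For Step~1, the dual atomic norm unpacks directly as $\|\mZ\hat{\q}\|_\A^{*}=\sup_{f\in\TT}|\a(f)^H\mZ\hat{\q}|=\sup_{f\in\TT}|\hat{Q}(f)|$, and the interpolation condition gives $|\hat{Q}(\hat{f}_\ell)|=1$ while the boundedness condition gives $|\hat{Q}(f)|<1$ for $f\notin\hat{T}$, so $\|\mZ\hat{\q}\|_\A^{*}\le 1$ and $\hat{\q}$ is feasible for~\eqref{eqn:dual}.

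Step~2 computes the duality gap. Using $\lambda\hat{\q}=\y-\hat{\x}$, the dual objective at $\hat{\q}$ collapses to $\tfrac{1}{2}\|\y\|_{\mZ}^{2}-\tfrac{1}{2}\|\hat{\x}\|_{\mZ}^{2}$; expanding $\|\y-\hat{\x}\|_{\mZ}^{2}$ in the primal objective and subtracting shows that the duality gap reduces to $\lambda\|\hat{\x}\|_\A - \lambda\,\mathrm{Re}(\hat{\q}^H\mZ\hat{\x})$. I then sandwich $\mathrm{Re}(\hat{\q}^H\mZ\hat{\x})$ between two copies of $\|\hat{\x}\|_\A$. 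The lower bound uses the given decomposition together with $\hat{Q}(\hat{f}_\ell)=\hat{c}_\ell/|\hat{c}_\ell|$ to evaluate $\hat{\q}^H\mZ\hat{\x}=\sum_\ell \hat{c}_\ell\,\overline{\hat{Q}(\hat{f}_\ell)}=\sum_\ell|\hat{c}_\ell|\ge\|\hat{\x}\|_\A$, the final inequality coming from the infimum in the definition of the atomic norm. The upper bound uses the atomic/dual-atomic H\"older inequality $|\u^H\x|\le\|\u\|_\A^{*}\|\x\|_\A$ (itself an immediate consequence of the two definitions) together with Step~1 to give $\mathrm{Re}(\hat{\q}^H\mZ\hat{\x})\le\|\hat{\x}\|_\A$. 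The two inequalities squeeze the gap to zero, so by weak duality $\hat{\x}$ is primal optimal and $\hat{\q}$ is dual optimal.

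For Step~3, since $\mZ$ is a positive diagonal matrix (a fact used throughout this paper), $\x\mapsto\tfrac{1}{2}\|\y-\x\|_{\mZ}^{2}$ is strictly convex and the primal objective, being the sum of a strictly convex quadratic and a convex norm, has a unique minimizer, forcing $\hat{\x}=\x^{\mathrm{glob}}$; the same strict convexity makes the dual objective strictly concave in $\q$ over the convex feasible set, so $\hat{\q}=\q^{\mathrm{glob}}$ as well. The only real subtlety, rather than an obstacle, is bookkeeping with complex conjugation and the Hermitian weight $\mZ$ when inserting the interpolation identity into the gap expression; a second sanity check worth noting is that the strict inequality $|\hat{Q}(f)|<1$ off $\hat{T}$ is \emph{not} required for this optimality argument (boundedness $\le 1$ suffices for Step~1 and Step~2), and is instead the ingredient that will later pin down the support of any primal optimum to $\hat{T}$ when this proposition is chained with the explicit dual-polynomial constructions in the subsequent sections.
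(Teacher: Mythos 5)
Your proof is correct and takes essentially the same route as the paper's appendix argument: dual feasibility from the BIP, zero duality gap via the interpolation identity $\hat{\q}^H\mZ\hat{\x}=\sum_\ell|\hat{c}_\ell|$ combined with the atomic-norm bound $\|\hat{\x}\|_\A\leq\sum_\ell|\hat{c}_\ell|$, and uniqueness from strong convexity of the weighted quadratic. The paper organizes the algebra as a one-sided chain (value of dual $\geq$ value of primal, then invoke weak duality) rather than the two-sided sandwich of the gap that you use, but the underlying ingredients are identical, and your side remark that only $|\hat{Q}(f)|\le 1$ is needed for optimality itself is accurate.
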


Proposition~\ref{pro:bip} gives a way to extract the frequencies from the dual optimal solution -- one can simply identify the frequencies where the dual polynomial corresponding to the dual optimal solution achieves magnitude $1$.  The uniqueness of the dual solution for~\eqref{eqn:primal} makes the construction of a dual certificate much harder compared with the line spectral signal completion problem~\cite{Tang:2013fo} and demixing problem~\cite{Tang:2014outlier,  fernandez2016demixing}. For the latter two problems,  while the primal optimal solution is unique,  the dual optimal solutions are non-unique. One usually chooses one dual solution that is easier to analyze (e.g.,  the one with minimal energy). For the support recovery problem,  we need to simultaneously construct the primal and dual solutions,  which witness the optimality of each other. In the compressive sensing literature,  this construction process is called the \emph{primal-dual witness construction}~\cite{wainwright2009sharp}. In sparse recovery problems,  a candidate primal solution is relatively easy to find,  since when the noise is relatively small,  the support of the recovered signal would not change. So one only needs to solve a LASSO problem restricted to the true support to determine the candidate coefficients,  as was done in~\cite{wainwright2009sharp}. For the optimization~\eqref{eqn:primal},  due to the continuous nature of the atoms,  even a bit of noise would drive the support away from the true one. So to construct a candidate primal solution (hence a candidate dual solution),  we need to simultaneously seek for the candidate support $\{\hat{f}_\ell\}$ and the candidate coefficients $\{\hat{c}_\ell\}$.

\subsection{Proof Outline}\label{sec:outline}

We use the $\ell_1$-regularized,  nonlinear and nonconvex program~\eqref{eqn:pdw},  which we copy below,  to find plausible candidates for $\{\hat{f}_\ell\}$ and $\{\hat{c}_\ell\}$:
\begin{align*}
\operatorname*{minimize}_{\f,  \c} \frac{1}{2}\|\mA(\f)\c - \y\|_{\mZ}^2 + \lambda \|\c\|_1, 
\end{align*}
where $\f = [f_1,  \ldots,  f_k]^T, ~\c = [c_1,  \ldots,  c_k]^T$ and $\mA(\f)=[\a(f_1), \ldots, \a(f_k)]$. Note that we have effectively fixed the number of estimated frequencies $\hat{k}$ in Proposition~\ref{pro:bip} to be $k$.  But unlike in compressive sensing we cannot fix $\f = \f^\star$ to solve for $\c$ only as was done in~\cite{wainwright2009sharp}. The program~\eqref{eqn:pdw} is highly nonconvex,  with numerous local minima,  local maxima,  and saddle points. So solving it to global optimality is hard even in theory. We are primarily interested in its local minimum $(\{\hat{f}_\ell\},  \{\hat{c}_\ell\})$ in a neighborhood of the true frequencies and coefficients $(\f^\star,  \c^\star)$. To find this local minimum,  we will run gradient descent to~\eqref{eqn:pdw} using $(\f^\star,  \c^\star)$ as initialization. We will argue that under conditions presented in Theorem~\ref{thm:main},  each $\hat{f}_\ell$ and $\hat{c}_\ell$ stay close to $f^\star_\ell$ and $c^\star_\ell$ as given in~\eqref{eqn:freqbound} and~\eqref{eqn:coefbound},  respectively. The major tool we use is the contraction mapping theorem. As shown in Corollary~\ref{cor:main2},  the local minimum found in this manner is actually a global optimum of~\eqref{eqn:pdw}. 

The rest of arguments consist of showing that $\hat{\x} = \sum_{\ell =1}^{k} \hat{c}_\ell \a(\hat{f}_\ell)$ with $\{\hat{f}_\ell\}$ and $\{\hat{c}_\ell\}$ constructed as described above satisfies the Bounded Interpolation Property of Proposition~\ref{pro:bip}. 
The Interpolation property is automatically satisfied due to the construction process and the main challenge is to show the Boundedness property $|\hat{Q}(f)| < 1,  \forall f\notin \hat{T}$. 
The harder part is showing the Boundedness property. For ease of interpretation we first collect the definitions of the most important variables  that will be used throughout the proof,  and then introduce the {\em main logic} and the {\em two-step construction process} of the proof.

\begin{center}
	\begin{tabular}{ll}
		\toprule 
\blk{0.08}{\textbf{Symbol}} & \multicolumn{1}{c}{\bf Definition}		
\\ \midrule
$(\f^\lambda,  \c^\lambda)$ &The local minima of   $\operatorname*{minimize}_{\f,  \c} \frac{1}{2}\|\mA(\f)\c - \x^\star\|_{\mZ}^2 + \lambda \|\c\|_1$ that is  closest to $(\f^\star, \c^\star)$
\\[0.1ex]
$(\hat\f, \hat\c)$ &The local minima of   $\operatorname*{minimize}_{\f,  \c} \frac{1}{2}\|\mA(\f)\c - \y\|_{\mZ}^2 + \lambda \|\c\|_1$ that is closest to $(\f^\lambda,  \c^\lambda)$
\\[0.1ex]
$\x^\lambda$ &The primal solution  defined by the local minima $(\f^\lambda,  \c^\lambda)$ via $\x^\lambda:=\sum_{\ell=1}^k {c}^\lambda_\ell \a({f}^\lambda_\ell)$
\\[0.1ex]	
$\hat\x$ &The primal solution defined by the local minima $(\hat\f,  \hat\c)$ via $\hat\x:=\sum_{\ell=1}^k \hat{c}_\ell \a(\hat{f}_\ell)$
\\[0.1ex]
$\q^\lambda$ &The dual solution corresponding to the primal solution $\x^\lambda$,  that is,  $\q^\lambda:=({\x^\star-\x^\lambda})/{\lambda}$
\\[0.1ex]	
$\hat\q$ &The dual solution corresponding to the primal solution $\hat\x$,  that is,  $\hat\q:=({\y-\hat\x})/{\lambda}$
\\[0.1ex]	
$\q^\star$& $\q^\star:=\lim\limits_{\lambda\to 0} \q^\lambda$,  satisfying the Boundedness and Interpolation property for $(\f^\star, \c^\star)$
\\
\bottomrule
\end{tabular} 
\end{center}

\noindent{\bf Main Logic:}    
Firstly,  identifying that $Q^\star(f):=\a(f)^H\mZ\q^\star$ satisfies the Boundedness property  with some similar arguments used in~\cite{Candes:2014br}.  Secondly,  establishing that $\hat\q$ and $\q^\star$ are sufficiently close (so are $\hat{Q}(f):=\a(f)^H\mZ\hat\q$ and $Q^\star(f)=\a(f)^H\mZ\q^\star$). Therefore $\hat{Q}(f)$ also satisfies the Boundedness property. It turns out that directly showing the closeness of  $\hat\q$ and $\q^\star$  is difficult. That is why we introduce the intermediate dual variable $\q^\lambda$ and use the {\em two-step construction process},  i.e., first showing $\q^\star$ is close to $\q^\lambda$ and then showing $\q^\lambda$ is close to $\hat\q$.

\medskip
\noindent{\bf Two-step Construction Process:}
We will first find a local minimum   $(\f^\lambda,  \c^\lambda)$ of $\frac{1}{2}\|\mA(\f)\c - \x^\star\|_{\mZ}^2 + \lambda \|\c\|_1$ around $(\f^\star,  \c^\star)$,  where one should note we replaced the noisy signal $\y$ in~\eqref{eqn:pdw} by the noise-free signal $\x^\star$. We will then run gradient descent to~\eqref{eqn:pdw} using $(\f^\lambda,  \c^\lambda)$ as initialization. The intermediate quantities $(\f^\lambda,  \c^\lambda)$ will serve as a bridge between $(\f^\star,  \c^\star)$ and $(\hat{\f}, \hat{\c})$ to make the proof easier.  The key is noting that $\hat{Q}(f) = \a(f)^H\mZ\hat\q$ is close to $Q^\lambda (f) = \a(f)^H\mZ\q^\lambda$,  where $\q^\lambda = (\x^\star - \x^\lambda)/\lambda$ and $\x^\lambda = \sum_{\ell = 1}^k c_\ell^\lambda \a(f^\lambda_\ell)$,  and $Q^\lambda (f)$ is close to $Q^\star (f) = \a(f)^H\mZ\q^\star$. Here $\q^\star = \lim_{\lambda \rightarrow 0} \q^\lambda$ is a dual certificate used to certify the atomic decomposition of $\x^\star$. The former claim can be showed using the closeness of $(\f^\lambda,  \c^\lambda)$ and $(\hat{\f}, \hat{\c})$. The later claim,  however,  must take advantage of the fact that $\q^\star = \lim_{\lambda \rightarrow 0} \q^\lambda = - \frac{\mathrm{d}}{\mathrm{d} \lambda} \x^\lambda|_{\lambda = 0}$ and apply the triangle inequality to
\begin{align*}
Q^\lambda (f) - Q^\star(f) = \frac{1}{\lambda}\int_0^\lambda  \a(f)^H\mZ\left( \frac{\mathrm{d}}{\mathrm{d} t}\x^0- \frac{\mathrm{d}}{\mathrm{d} t}\x^t \right)\mathrm{d}t, 
\end{align*}
where $\frac{\mathrm{d}}{\mathrm{d} t}\x^0 = \lim_{\lambda \rightarrow 0} \frac{\mathrm{d}}{\mathrm{d} t}\x^\lambda := \frac{\mathrm{d}}{\mathrm{d} t}\x^\star$.
The closeness of $(\f^\lambda,  \c^\lambda)$ and $({\f}^\star, {\c}^\star)$ ensures that the derivatives in the integrand are also close. Finally,  we exploit  the properties of $Q^\star(f)$ which are similar to those established in~\cite{Candes:2014br} to complete the proof.

\subsection{A Formal Proof: Applying the Contraction Mapping Theorem}
\begin{theorem}[Contraction Mapping Theorem]\label{thm:fix} Given a Banach space $\mathcal{B}$ equipped with a norm $\|\cdot\|$,  a bounded closed set $\NN\subset\mathcal{B}$ and a map $\Theta:\NN\rightarrow \mathcal{B}$,  if $\Theta(\NN)\subset\NN$ (the non-escaping property) and there exists $\rho\in(0, 1)$ such that $\|\Theta(\v)-\Theta(\w)\|\leq\rho\|\v-\w\|$ for each $\v, \w\in\NN$ (the contraction property),  then there exists a unique $\v^\star\in\NN$ such that $\Theta(\v^\star)=\v^\star.$
\end{theorem}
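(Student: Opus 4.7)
The plan is to prove this via the standard Picard iteration argument: construct a candidate fixed point as the limit of iterated applications of $\Theta$ starting from an arbitrary seed, use completeness of the Banach space to guarantee convergence, use closedness of $\NN$ to confine the limit, and use the Lipschitz (contraction) property to establish both the fixed-point equation and uniqueness.

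First I would pick any $\v_0 \in \NN$ and define the sequence $\v_{m+1} := \Theta(\v_m)$. The non-escaping property $\Theta(\NN) \subset \NN$ guarantees inductively that $\v_m \in \NN$ for all $m$, so the iteration is well-defined. Next, I would use the contraction property to show that consecutive differences shrink geometrically: $\|\v_{m+1} - \v_m\| = \|\Theta(\v_m) - \Theta(\v_{m-1})\| \leq \rho\|\v_m - \v_{m-1}\|$, which iterated gives $\|\v_{m+1} - \v_m\| \leq \rho^m \|\v_1 - \v_0\|$. Summing via the triangle inequality and the geometric series bound, for any $p \geq 1$,
\begin{align*}
\|\v_{m+p} - \v_m\| \leq \sum_{j=0}^{p-1} \|\v_{m+j+1} - \v_{m+j}\| \leq \frac{\rho^m}{1-\rho}\|\v_1 - \v_0\|,
\end{align*}
which tends to zero as $m\to\infty$, so $\{\v_m\}$ is Cauchy.

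Since $\mathcal{B}$ is a Banach space, $\{\v_m\}$ converges to some $\v^\star \in \mathcal{B}$; since $\NN$ is closed and the iterates lie in $\NN$, the limit satisfies $\v^\star \in \NN$. The contraction property implies $\Theta$ is Lipschitz and hence continuous on $\NN$, so passing to the limit in $\v_{m+1} = \Theta(\v_m)$ yields $\v^\star = \Theta(\v^\star)$. For uniqueness, if $\v^\star$ and $\u^\star$ are both fixed points in $\NN$, then $\|\v^\star - \u^\star\| = \|\Theta(\v^\star) - \Theta(\u^\star)\| \leq \rho\|\v^\star - \u^\star\|$, and because $\rho < 1$ this forces $\|\v^\star - \u^\star\| = 0$.

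There is no real obstacle here beyond bookkeeping: this is the classical Banach fixed-point theorem, and every hypothesis is used exactly once. The completeness of $\mathcal{B}$ is needed so the Cauchy sequence has a limit; the closedness of $\NN$ is needed so that limit lies in $\NN$ (boundedness of $\NN$ is not actually required for existence or uniqueness, only the non-escaping property of $\Theta$ restricted to $\NN$); the strict inequality $\rho < 1$ is needed both for summability of the geometric series and for the uniqueness contradiction. The only mild subtlety worth flagging is that $\Theta$ is only assumed defined on $\NN$ (not on all of $\mathcal{B}$), so one must verify the iterates stay in $\NN$ before invoking the contraction estimate at each step, which is precisely what the non-escaping hypothesis supplies.
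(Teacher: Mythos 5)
Your proof is correct: it is the standard Picard-iteration argument for the Banach fixed-point theorem, and every step (geometric decay of consecutive differences, Cauchyness, completeness of $\mathcal{B}$, closedness of $\NN$ to trap the limit, continuity of $\Theta$ from the Lipschitz bound, and the $\rho<1$ uniqueness argument) is sound; your observation that boundedness of $\NN$ is not actually needed is also accurate. The paper states this theorem as a classical result and gives no proof of its own, so there is nothing to compare against — your argument is exactly the textbook proof one would supply.
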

This classical result helps to find a candidate solution for the construction of a valid dual certificate. To see this we first choose the bounded closed set $\NN$ to be a small region around the target joint frequency-coefficient vector $\btheta^\star:=(\f^\star, \u^\star, \v^\star)$ (where $\u^\star$ and $\v^\star$ denote respectively the real and imaginary parts of $\c^\star$). Let the fixed point map $\Theta$ be the gradient map of~\eqref{eqn:pdw}. The key is to determine the size of $\NN$ in which the non-escaping and the contraction properties of the fixed point map $\Theta$ hold. Then,  the contraction mapping theorem implies that iteratively performing the gradient map $\Theta$ from any initial point in $\NN$ would produce a candidate solution that still lies in $\NN$ (by the non-escaping property) and hence is close to $\btheta^\star$ (since $\NN$ is small). Finally relating the fixed point equation to the BIP property shows that such a candidate solution generates a valid dual certificate.

In order to apply the contraction mapping theorem to our problem,  we choose the norm in Theorem~\ref{thm:fix} to be a weighted $\ell_\infty$ norm $\|\cdot\|_{\hinfty}$ given by $\|(\f, \u, \v)\|_\hinfty:= \|(\mS\f, \u, \v)\|_{{\infty}}$ with $\mS:=\sqrt{|K''(0)|}\diag(|\c^\star|)$ and $K(\cdot)$ is the Jackson kernel (refer to Appendix~\ref{sec:A} for an introduction). This weighted $\ell_\infty$ norm is used as a metric function to define the neighborhood $\NN$ around $\btheta^\star$. The choice of the weighting matrix $\mS$ ensures that the larger a coefficient $c^\star_i$ is,  the smaller the neighborhood in the direction of the frequency $f_i$. In addition,  since $\sqrt{|K''(0)|}$ is of order $O(n)$,  the frequency neighborhood is smaller than the coefficient neighborhood by the same order. Next,  we choose the fixed point map $\Theta$ to be a weighted gradient map of~\eqref{eqn:pdw}
\begin{align}
{\Theta}(\btheta)  := \btheta-\W^\star\nabla\left(\frac{1}{2}\|\mA(\f)\c-\y\|_{\mZ}^2+\lambda\|\c\|_1\right)\label{eqn:gradient_descent}, 
\end{align}
where the gradient $\nabla$ is taken with respect to the parameter $\btheta = (\f,  \u,  \v)$ and the weighting matrix
\begin{align}\label{eqn:Wstar}
\W^\star=
\begin{bmatrix}
\mS^{-2} &&\\
&\eye_k&\\
&&\eye_k
\end{bmatrix}.
\end{align}
Scaling the gradient vector by $\W^\star$ ensures that the Jacobian matrix of the second term in~\eqref{eqn:gradient_descent} is close to the identity matrix,  which makes it easier to show the contraction property of ${\Theta}$.

\subsubsection{Two-step Construction Process}
As discussed in Section~\ref{sec:outline},  we divide the construction process into two steps. We first analyze the fixed point map $\Theta^\lambda$ obtained by replacing the noisy observation vector $\y$ in~\eqref{eqn:gradient_descent} by the noise-free signal $\x^\star$. We determine a region around $\btheta^\star$,  say $\NN^\star$,  such that both the contraction and non-escaping properties of $\Theta^\lambda$ are satisfied in $\NN^\star$. Then by the contraction mapping theorem,  iterating the gradient map $\Theta^\lambda$ in $\NN^\star$ initialized by $\btheta^\star$ generates a unique fixed point $\btheta^\lambda:=(\f^\lambda, \u^\lambda, \v^\lambda)$. These results are summarized in the following lemma:

\begin{lemma}[The First Fixed Point Map]\label{lem:fix1}
Let the first fixed point map be the weighted gradient map of the nonconvex program~\eqref{eqn:pdw} with the noisy signal $\y$ replaced by the noise-free signal $\x^\star$:
\begin{align}\label{eqn:Map:1}
\Theta^\lambda(\btheta) & := \btheta-\W^\star\nabla\left(\frac{1}{2}\|\mA(\f)\c-\x^\star\|_{\mZ}^2+\lambda\|\c\|_1\right), 
\end{align}
where the gradient $\nabla$ is taken with respect to the parameter $\btheta = (\f,  \u,  \v)$. Let the regularization parameter $\lambda$ vary in $[0,  0.646X^\star  \gamma_0]$. Define a neighborhood $\NN^\star:=\left\{\btheta:\|\btheta-\btheta^\star\|_{\hinfty}\leq {X^\star}\gamma_0/{\sqrt2}\right\}$. Suppose that the separation condition~\eqref{eqn:separation} and the SNR condition~\eqref{eqn:snr} hold. Then the map $\Theta^\lambda$ has a unique fixed point $\btheta^\lambda\in \NN^\star$ satisfying $\Theta^\lambda(\btheta^\lambda) = \btheta^\lambda$. Furthermore,  according to the implicit function theorem,  $\btheta^\lambda$ is a continuously differentiable function of $\lambda$ whose derivative is given by
\begin{align}\label{eqn:thetadiff}
\frac{\mathrm{d}}{\mathrm{d} \lambda} \btheta^\lambda = - (\nabla^2 G^\lambda(\btheta^\lambda))^{-1}\frac{\partial }{\partial \lambda}\nabla G^\lambda(\btheta^\lambda).
\end{align}
Finally,  when $\lambda$ turns to zero,  the fixed point $\btheta^\lambda$ converges to $\btheta^\star$,  i.e.,  $\lim_{\lambda \rightarrow 0} \btheta^\lambda=\btheta^\star$, and therefore $\lim_{\lambda \rightarrow 0} \x^\lambda=\x^\star$.
\end{lemma}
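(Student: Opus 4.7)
\noindent\textbf{Proof plan for Lemma~\ref{lem:fix1}.} The plan is to verify the two hypotheses of the Contraction Mapping Theorem~\ref{thm:fix} for $\Theta^\lambda$ on the closed neighborhood $\NN^\star$ in the weighted norm $\|\cdot\|_\hinfty$, and then obtain differentiability and the $\lambda\to 0$ limit from the implicit function theorem and continuity. Writing $G^\lambda(\btheta):=\tfrac{1}{2}\|\mA(\f)\c-\x^\star\|_{\mZ}^2+\lambda\|\c\|_1$, we have $\Theta^\lambda(\btheta)=\btheta-\W^\star\nabla G^\lambda(\btheta)$, so a fixed point of $\Theta^\lambda$ in $\NN^\star$ is exactly a critical point of $G^\lambda$ in $\NN^\star$ (since $\W^\star$ is invertible). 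The choice of $\mS=\sqrt{|K''(0)|}\diag(|\c^\star|)$ inside $\|\cdot\|_\hinfty$ and of $\W^\star=\diag(\mS^{-2},\eye_k,\eye_k)$ is designed precisely so that $\W^\star\nabla^2 G^0(\btheta^\star)\approx\eye$: the frequency-frequency block of $\nabla^2G^0(\btheta^\star)$ is dominated by $|K''(0)|\diag(|\c^\star|^2)=\mS^2$, the coefficient-coefficient block is dominated by the Gram matrix $\mA(\f^\star)^H\mZ\mA(\f^\star)$ whose off-diagonal entries decay by the Jackson kernel bounds of Appendix~\ref{sec:A}, and the cross blocks are suppressed by the separation condition~\eqref{eqn:separation}.

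\smallskip
\noindent\textbf{Step 1 (gradient at $\btheta^\star$).} Because the residual $\mA(\f^\star)\c^\star-\x^\star=\zero$, the only nonzero piece of $\nabla G^\lambda(\btheta^\star)$ comes from the subgradient of $\lambda\|\c\|_1$ at $\c^\star$, whose entries are of the form $\lambda\,\sign(c^\star_\ell)$ and have modulus $\lambda$. The frequency partials vanish identically. Consequently,
\begin{align*}
\|\W^\star\nabla G^\lambda(\btheta^\star)\|_\hinfty \;=\; \lambda.
\end{align*}

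\smallskip
\noindent\textbf{Step 2 (contraction).} By the mean-value inequality applied coordinate-wise,
\begin{align*}
\Theta^\lambda(\btheta_1)-\Theta^\lambda(\btheta_2) \;=\; \int_0^1\!\bigl[\eye-\W^\star\nabla^2 G^\lambda(\btheta_t)\bigr]\,\dif t\;(\btheta_1-\btheta_2),\qquad \btheta_t=(1-t)\btheta_2+t\btheta_1,
\end{align*}
so it suffices to bound $\sup_{\btheta\in\NN^\star}\|\eye-\W^\star\nabla^2 G^\lambda(\btheta)\|_{\hinfty\to\hinfty}\leq\rho$ for some $\rho<1$. The argument proceeds in two substeps: first, at $\btheta=\btheta^\star$ one expands $\nabla^2G^\lambda(\btheta^\star)$ in its four blocks and estimates each using the Jackson-kernel decay of $K$, $K'$, $K''$ and the separation $\Delta(T^\star)\geq 2.5009/n$; the block cancellation against $\W^\star$ yields $\|\eye-\W^\star\nabla^2 G^\lambda(\btheta^\star)\|\leq\rho_0$ with $\rho_0$ small, governed by $B^\star/X^\star$ through the cross terms. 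Second, one extends the estimate to all of $\NN^\star$ via Lipschitz continuity of $\nabla^2G^\lambda$, where the perturbation bound is proportional to the radius $X^\star\gamma_0/\sqrt2$ of $\NN^\star$ times higher-order Jackson-kernel derivatives; this term is small because $X^\star B^\star\gamma\leq 10^{-3}$ by~\eqref{eqn:snr}.

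\smallskip
\noindent\textbf{Step 3 (non-escape and wrap-up).} Combining Steps 1 and 2 with $\Theta^\lambda(\btheta^\star)=\btheta^\star-\W^\star\nabla G^\lambda(\btheta^\star)$ yields, for every $\btheta\in\NN^\star$,
\begin{align*}
\|\Theta^\lambda(\btheta)-\btheta^\star\|_\hinfty \;\leq\; \rho\,\|\btheta-\btheta^\star\|_\hinfty + \lambda \;\leq\; \rho\,\tfrac{X^\star\gamma_0}{\sqrt 2} + 0.646\,X^\star\gamma_0,
\end{align*}
which stays below $X^\star\gamma_0/\sqrt 2$ as long as $\rho\leq 1-0.646\sqrt 2\approx 0.086$. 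This is the main obstacle: the numerical constants in~\eqref{eqn:snr} must be chosen so that the contraction constant $\rho$ coming out of Step~2 is strictly below this tight threshold, which is precisely where the smallness of $B^\star/X^\star$ and $X^\star B^\star\gamma$ is consumed. Once both properties hold, Theorem~\ref{thm:fix} gives a unique fixed point $\btheta^\lambda\in\NN^\star$.

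\smallskip
\noindent\textbf{Step 4 (differentiability and limit).} Since Step~2 forces $\W^\star\nabla^2 G^\lambda(\btheta^\lambda)$ to be invertible, so is $\nabla^2 G^\lambda(\btheta^\lambda)$. Applying the implicit function theorem to $F(\btheta,\lambda):=\nabla G^\lambda(\btheta)=\zero$ yields the formula~\eqref{eqn:thetadiff}. For the limit, note that $G^0$ vanishes at $\btheta^\star$ and any other zero of $\mA(\f)\c-\x^\star$ corresponds to a permutation of $\btheta^\star$ whose frequency vector lies at distance at least $\Delta(T^\star)$ from $\f^\star$, hence outside $\NN^\star$ by~\eqref{eqn:separation} and the small radius $X^\star\gamma_0/\sqrt 2$. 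Thus $\btheta^0=\btheta^\star$, and by continuity of the implicit solution, $\lim_{\lambda\to 0}\btheta^\lambda=\btheta^\star$ and $\lim_{\lambda\to 0}\x^\lambda=\x^\star$.
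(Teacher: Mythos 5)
Your proposal follows the same route as the paper: apply the Contraction Mapping Theorem on $\NN^\star$, establish the non-escaping property via $\|\W^\star\nabla G^\lambda(\btheta^\star)\|_\hinfty\le\lambda$, establish contraction by bounding $\|\eye-\W^\star\nabla^2 G^\lambda(\btheta)\|_{\hinfty,\hinfty}$ using the Jackson kernel matrix estimates of Appendix~\ref{sec:A:5}, and then get differentiability and the $\lambda\to 0$ limit via the implicit function theorem and continuity. You correctly identified the tight numerical constraint $\rho\le 1-0.646\sqrt{2}\approx 0.0864$, which the paper meets with $\rho=0.08561$, and you correctly identified where the SNR condition~\eqref{eqn:snr} is consumed.

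Two small inaccuracies, neither fatal. First, in Step~1 you assert $\|\W^\star\nabla G^\lambda(\btheta^\star)\|_\hinfty=\lambda$; the correct statement is $\le\lambda$, because the $\ell_\infty$ norm of the real and imaginary parts of a unit complex number can be as small as $1/\sqrt{2}$, and this is what the paper records. Since an upper bound is all the non-escape step needs, the argument goes through unchanged. Second, your Step~4 reasoning for $\btheta^0=\btheta^\star$ argues via zeros of the residual $\mA(\f)\c-\x^\star$. That is the wrong object: a critical point of $G^0=\tfrac12\|\mA(\f)\c-\x^\star\|_\mZ^2$ need not have vanishing residual, so classifying residual-zeros by Vandermonde permutations does not by itself rule out other critical points. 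The clean argument (which the paper uses) is shorter: the contraction property you established in Step~2 holds for $\lambda=0$ as well, so $\Theta^0$ has a unique fixed point in $\NN^\star$; since $\nabla G^0(\btheta^\star)=\zero$ by direct computation, that unique fixed point is $\btheta^\star$, and continuity of the implicit function then gives $\btheta^\lambda\to\btheta^\star$. You should replace the residual-zeros detour with this uniqueness observation.
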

\begin{proof}[Proof of Lemma~\ref{lem:fix1}]
See Appendix~\ref{sec:D}.
\end{proof}

We now turn to the gradient map $\Theta$ in~\eqref{eqn:gradient_descent} defined in a region $\NN^\lambda$ around $\btheta^\lambda$. Similar to the first step,  we show the contraction and non-escaping properties of $\Theta$ in $\NN^\lambda$,  which imply that iterating the gradient map $\Theta$ initialized by $\btheta^\lambda$ produces a unique fixed point $\hat{\btheta}:=(\hat\f, \hat\u, \hat\v)$.

\begin{lemma}[The Second Fixed Point Map]\label{lem:fix2}
Let the second fixed point map be the weighted gradient map of the nonconvex program~\eqref{eqn:pdw}:
\begin{align}\label{eqn:Map:2}
\Theta(\btheta) & = \btheta-\W^\star\nabla\left(\frac{1}{2}\|\mA(\f)\c-\y\|_{\mZ}^2+\lambda\|\c\|_1\right)
\end{align}
and the region $\NN^\lambda:=\left\{\btheta:\|\btheta-\btheta^\lambda\|_{\hinfty}\leq 35.2\gamma_0/{\sqrt2} \right\}$. Set the regularization parameter $\lambda$ as $ 0.646X^\star  \gamma_0$ in~\eqref{eqn:Map:2}. Suppose that the separation condition~\eqref{eqn:separation} and the SNR condition~\eqref{eqn:snr} hold. Then with probability at least $1- \frac{1}{n^2}$,   $\Theta(\btheta)$ has a unique fixed point $\hat{\btheta}$ living in $\NN^\lambda$.
\end{lemma}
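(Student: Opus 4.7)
The plan is to apply the Contraction Mapping Theorem (Theorem~\ref{thm:fix}) on the closed ball $\NN^\lambda$ using the same Banach space and weighted $\ell_\infty$ norm $\|\cdot\|_\hinfty$ as in Lemma~\ref{lem:fix1}. The key observation is that $\Theta$ and $\Theta^\lambda$ differ only through the noise: using $\y=\x^\star+\w$ and noting that the constant $\tfrac12\|\w\|_{\mZ}^2$ has vanishing gradient, one computes
\begin{align*}
\Theta(\btheta)-\Theta^\lambda(\btheta) \;=\; \W^\star\,\nabla_\btheta\,\mathrm{Re}\langle\mA(\f)\c,\w\rangle_{\mZ} \;=:\; h(\btheta),
\end{align*}
whose components, after the $\W^\star$ rescaling, are explicit linear combinations of $\a(f_i)^H\mZ\w$ and $\a'(f_i)^H\mZ\w$ with $c_i$-dependent weights. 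Thus $\Theta$ is simply $\Theta^\lambda$ plus a noise-dependent perturbation $h$, whose size on $\NN^\lambda$ is governed by the weighted dual atomic norm $\|\mZ\w\|_{\A}^*$ and its derivative analogue $\sup_f|\a'(f)^H\mZ\w|$.

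For the contraction property, the Jacobian of $\Theta$ equals that of $\Theta^\lambda$ plus the Jacobian of $h$; the latter involves quantities of the form $\a''(f)^H\mZ\w$ weighted by entries of $\mS^{-2}$. The analysis behind Lemma~\ref{lem:fix1} already shows that $\Theta^\lambda$ contracts with some constant $\rho^\lambda<1$ on a region strictly larger than $\NN^\lambda$; this is admissible because $\NN^\lambda$ is centered at $\btheta^\lambda\in\NN^\star$ and its radius $35.2\gamma_0/\sqrt2$ is of the same order as that of $\NN^\star$ under the SNR condition~\eqref{eqn:snr}. Under the same condition, the Jacobian of $h$ is negligible compared to the spectral gap driving $\rho^\lambda$, so $\Theta$ inherits a strict contraction constant $\rho\in(\rho^\lambda,1)$ on $\NN^\lambda$.

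For the non-escaping property, using $\Theta^\lambda(\btheta^\lambda)=\btheta^\lambda$ together with the contraction of $\Theta^\lambda$ gives, for every $\btheta\in\NN^\lambda$,
\begin{align*}
\|\Theta(\btheta)-\btheta^\lambda\|_\hinfty \;\le\; \rho^\lambda\,\|\btheta-\btheta^\lambda\|_\hinfty \;+\; \sup_{\btheta'\in\NN^\lambda}\|h(\btheta')\|_\hinfty,
\end{align*}
so it suffices to show that on a high-probability event,
\begin{align*}
\sup_{\btheta'\in\NN^\lambda}\|h(\btheta')\|_\hinfty \;\le\; (1-\rho^\lambda)\cdot\frac{35.2\,\gamma_0}{\sqrt 2}.
\end{align*}
Since the entries of $h(\btheta)$ reduce to $\a(f_i)^H\mZ\w$ and a $\sqrt{|K''(0)|}$-normalized version of $\a'(f_i)^H\mZ\w$ with $f_i$ ranging over an $O(\gamma_0)$-neighborhood of $\f^\star$, this reduces to uniform bounds on the two Gaussian suprema $\|\mZ\w\|_\A^*=\sup_f|\a(f)^H\mZ\w|$ and $\sup_f|\a'(f)^H\mZ\w|/\sqrt{|K''(0)|}$.

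The main obstacle is precisely this probabilistic step: establishing a non-asymptotic tail bound that, with probability at least $1-1/n^2$, controls both suprema by a small multiple of $\gamma_0=\sigma\sqrt{\log n/n}$. Each $\a(f)^H\mZ\w$ is a zero-mean complex Gaussian with variance $K(0)/M=O(1/n)$, and $\a'(f)^H\mZ\w/\sqrt{|K''(0)|}$ has variance of the same order, so a standard $\epsilon$-net/chaining argument on the torus $\TT$ together with Gaussian concentration yields the desired bound, the explicit constant $35.2$ being calibrated to match $\E\|\mZ\w\|_\A^*$ plus a logarithmic deviation term whose failure probability is at most $1/n^2$. Together with the contraction estimate this yields $\Theta(\NN^\lambda)\subset\NN^\lambda$, and the Contraction Mapping Theorem then supplies the unique fixed point $\hat\btheta\in\NN^\lambda$ claimed by the lemma.
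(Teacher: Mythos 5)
Your proposal is correct and follows essentially the same route as the paper: both view $\Theta$ as the noise-free map $\Theta^\lambda$ perturbed by a Gaussian term, establish contraction on a slightly enlarged neighborhood of $\btheta^\star$ containing $\NN^\lambda$, and obtain non-escaping from $\Theta^\lambda(\btheta^\lambda)=\btheta^\lambda$ together with uniform high-probability bounds on $\sup_f|\a(f)^H\mZ\w|$ and $\sup_f|\a'(f)^H\mZ\w|$ (and, for the Jacobian of your $h$, also $\sup_f|\a''(f)^H\mZ\w|$), proved by exactly the discretization/union-bound argument you sketch (the paper's Appendix~\ref{sec:B}). Your rearranged non-escaping condition $\sup_{\btheta\in\NN^\lambda}\|h(\btheta)\|_\hinfty\le(1-\rho^\lambda)\,35.2\gamma_0/\sqrt{2}$ is the paper's inequality $(0.08563)(35.2\gamma_0/\sqrt{2})+22.7\gamma_0\le 35.2\gamma_0/\sqrt{2}$ written differently, and the numerical constants close with the same tight margin.
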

\begin{proof}[Proof of Lemma~\ref{lem:fix2}]
See Appendix~\ref{sec:E}.
\end{proof}

The radius of the second contraction region $\N^\lambda$ is determined by a high probability bound on the dual atomic norm of the Gaussian noise and ensures that $\N^\lambda$ is a non-escaping set for $\Theta(\btheta)$. So far,  we have identified the neighborhoods where the two fixed points $\btheta^\lambda$ and $\hat{\btheta}$ live in,  which is the key to show the validity of the dual certificates later. Figure~\ref{fig:two:regions} illustrates the main results of Lemma~\ref{lem:fix1} and Lemma~\ref{lem:fix2}.

\begin{figure}[h!t]
  \centering
  \includegraphics[width=0.35\textwidth]{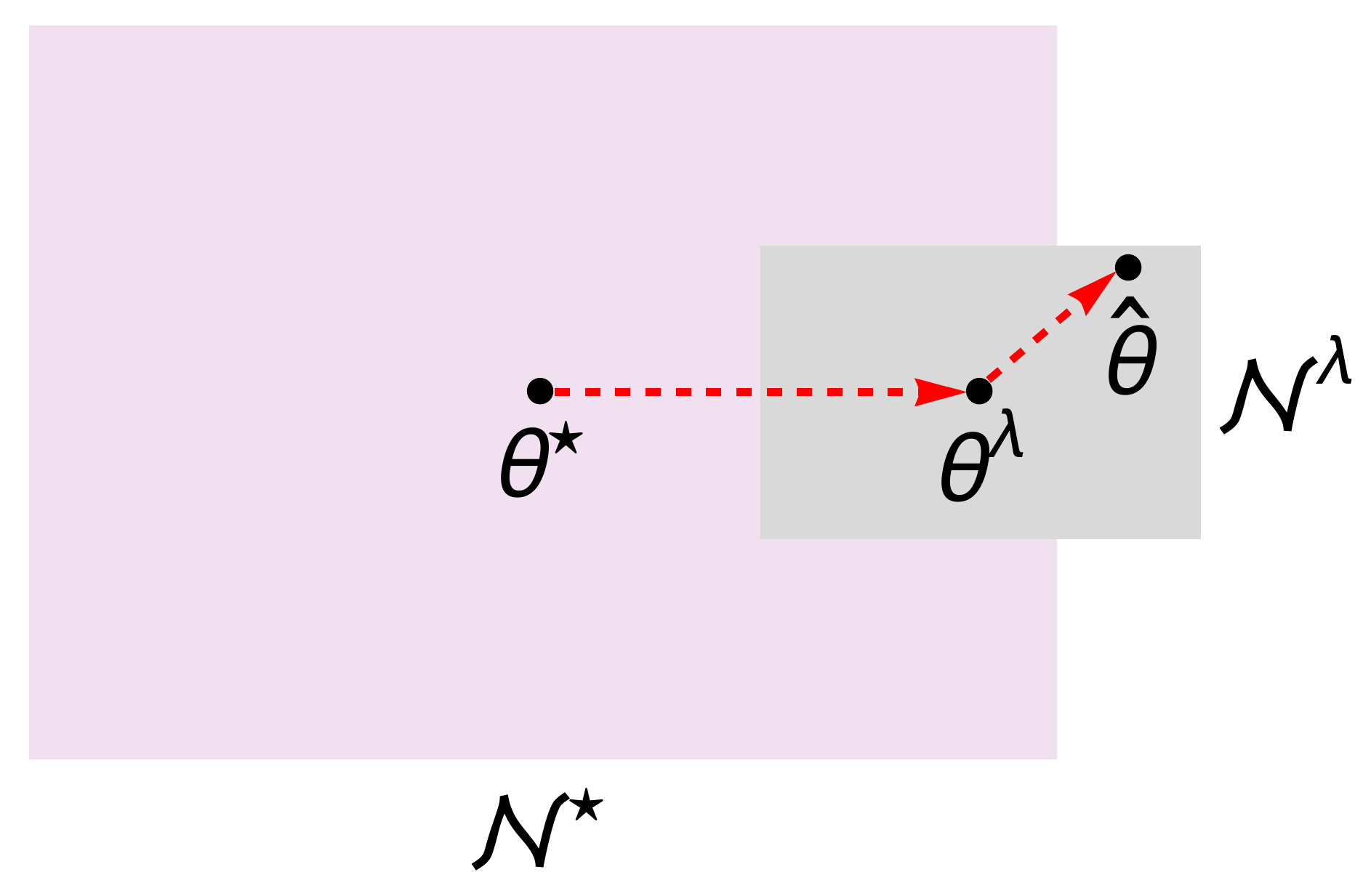}\\
  \caption{Use the true parameter vector $\btheta^\star$ as an initialization and run the first weighted gradient map~\eqref{eqn:Map:1} to obtain the first fixed point $\btheta^\lambda\in\N^\star$. Run the second weighted gradient map~\eqref{eqn:Map:2} initialized by $\btheta^\lambda$ to get the second fixed point $\hat{\btheta}\in\N^\lambda$.  The closeness of $\hat{\btheta}$ and $\btheta^\star$ is determined by the sizes of the two neighborhoods $\N^\star$ and $\N^\lambda$,  whose precise forms are given in Lemmas~\ref{lem:fix1} and~\ref{lem:fix2},  respectively.
   }\label{fig:two:regions}
\end{figure}

\paragraph{Road Map:} Define two pre-certificates using the two fixed points as $\q^\lambda :={(\x^\star-\x^\lambda)}/{\lambda}$ and $\hat{\q} := {(\y-\hat{\x})}/{\lambda}$ with the corresponding pre-dual polynomials denoted by $Q^\lambda(f)$ and $\hat{Q}(f)$. Here $\x^\lambda = \sum_{\ell=1}^k {c}_\ell^\lambda \a({f}_\ell^\lambda)$ and $\hat{\x} = \sum_{\ell =1}^k \hat{c}_\ell \a(\hat{f}_\ell)$. Let $\q^\star = \lim_{\lambda \rightarrow 0} \q^\lambda$.
The remaining steps are to:
\begin{enumerate}[label=\arabic*)]
\item Show that $\q^\star$ is a valid dual certificate that certifies the atomic decomposition of $\x^\star$,  i.e.,  $Q^\star(f) = \a(f)^H\mZ\q^\star$ satisfies $Q^\star(f^\star_\ell) = \sign(c^\star_\ell),  \ell = 1,  \ldots,  k$ and $|Q^\star(f)| < 1,  \forall f \notin T^\star$;
\item Use Lemma~\ref{lem:fix1} to bound the pointwise distance between $Q^\star(f)$ and $Q^\lambda(f)$;
\item Use Lemma~\ref{lem:fix2} to bound the pointwise distance between $Q^\lambda(f)$ and $\hat Q(f)$.
\end{enumerate}

\subsubsection{Showing \texorpdfstring{$\q^\star$}{Lg}  is a Dual Certificate}
To show that $\q^\star$ is a dual certificate,  it is sufficient to show that $Q^\star(f)$ satisfies the Bounded Interpolation Property of Proposition~\ref{pro:bip}. The Interpolation property is automatically satisfied due to the construction process,  and we will show the Boundedness property using the arguments of~\cite{Candes:2014br}. In particular,  fix an arbitrary point $f_0^\star\in T^\star$ as the reference point,  and let $f_{-1}^\star$ be the first frequency in $T^\star$ that lies on the left of $f_0^\star$ while $f_{1}^\star$ be the first frequency in $T^\star$ that lies on the right. Here ``left'' and ``right'' are directions on the complex circle $\TT$. We remark that the analysis depends only on the relative locations of $\{f_\ell^\star\}$. Hence,  to simplify the arguments,  we assume that the reference point $f_0^\star$ is at $0$ by shifting the frequencies if necessary.
Then we divide the region between $f_0^\star = 0$ and $f_{1}^\star/2$ into three parts: Near Region $\N:=[0, 0.24/n]$,  Middle Region $\M:=[0.24/n, 0.75/n]$ and Far Region $\F:=[0.75/n,  f_{1}^\star/2]$. Also their symmetric counterparts are defined as $-\N:=[-0.24/n,  0]$,  $-\M:=[-0.75/n, -0.24/n]$,  and $-\F:=[f_{-1}^\star/2,  -0.75/n]$.  
We first show that the dual polynomial has strictly negative curvature $|Q^\star(f)|''< 0$ in $\N=[0, 0.24/n]$ and $|Q^\star(f)|< 1$ in $\M\cup\F=[0.24/n,  f_{1}^\star/2]$,  implying $|Q^\star(f)|< 1$ in $\N\cup\M\cup\F\backslash\{f_0^\star\}$ by exploiting $|Q^\star(f_0^\star)|=1$ and $|Q^\star(f_0^\star)|'=0$.  Then using the same symmetric arguments as in~\cite{Candes:2014br},  we claim that  $|Q^\star(f)|< 1$ in $(-\N)\cup(-\M)\cup(-\F)\backslash\{f_0^\star\}$. Combining these two results with the fact that the reference point $f_0^\star$ is chosen arbitrarily from $T^\star$ (and shifted to $0$),  we establish that the Boundedness property of $Q^\star(f)$ holds in the entire $\TT \backslash T^\star$.

\begin{lemma}[$\q^\star$ is a dual certificate]\label{lem:q0:dual:certificate} The dual polynomial $Q^\star(f)$ satisfies both the Interpolation and  Boundedness properties with respect to the coefficients $\{c^\star_\ell\}$ and the frequencies $\{f_\ell^\star\}$. In addition,  $Q^\star(f)$ satisfies first
\begin{align*}
\begin{matrix*}[l]
Q^\star_R(f) \geq \bl{0.887594}, 
&{Q_R^\star}''(f) \leq -2.24483 n^2,  \\
|Q^\star_I(f)| \leq \bl{0.0183836}, 
&|{Q^\star}''_I(f)| \leq 0.113197  n^2,  \\
|{Q^\star}'(f)| \leq0.821039 n, 
&|{Q^\star}''(f)| \leq3.40320 n^2, 
\end{matrix*}
\end{align*}
and
\begin{align*}
 {Q}^\star_R(f){Q_R^\star}(f)''+|{{Q}^\star}(f)'|^2+|{Q^\star_I}(f)||{{Q}^\star_I}(f)''| \leq -1.316313n^2 < 0
\end{align*}
for $f \in \N$, 
implying $|{Q}^\star(f)|'' <0$ in $\N$,  and
second, 
\begin{align*}
|Q^\star(f)| &\leq 0.927615, ~f\in\M, \\
|Q^\star(f)| &\leq 0.734123, ~f\in\F.
\end{align*}
Here the subscripts $R$ and $I$ denote respectively the real and complex parts of $Q^\star(f)$.
Thus $\q^\star$ is a valid dual certificate to certify the atomic decomposition $\x^\star = \sum_{\ell = 1}^k c_\ell^\star \a(f_\ell^\star)$ such that $\|\x^\star\|_\A = \sum_{\ell = 1}^k |c_\ell^\star|$.
\end{lemma}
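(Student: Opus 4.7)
The plan is to follow the dual polynomial construction strategy pioneered in \cite{Candes:2014br}, but adapted to the limit object $\q^\star = \lim_{\lambda\to 0}\q^\lambda$ coming out of Lemma~\ref{lem:fix1}. The interpolation half of the BIP is essentially free: at the fixed point $\btheta^\lambda$ of $\Theta^\lambda$, the first-order condition $\nabla\big(\tfrac12\|\mA(\f^\lambda)\c^\lambda-\x^\star\|_{\mZ}^2+\lambda\|\c^\lambda\|_1\big)=0$ reads, after eliminating $\c^\lambda$, as $Q^\lambda(f^\lambda_\ell)=\sign(c^\lambda_\ell)$ and $(Q^\lambda)'(f^\lambda_\ell)=0$ for every $\ell$; passing to the limit $\lambda\to 0$ together with $\btheta^\lambda\to\btheta^\star$ and $\q^\lambda\to\q^\star$ from Lemma~\ref{lem:fix1} gives $Q^\star(f^\star_\ell)=\sign(c^\star_\ell)$ and $(Q^\star)'(f^\star_\ell)=0$. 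Thus the work is concentrated in the six pointwise estimates on $Q^\star$ and its derivatives in $\N$, the curvature inequality in $\N$, and the magnitude bounds in $\M$ and $\F$.

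\textbf{Key steps.} First, I would use that $Q^\star(f)=\a(f)^H\mZ\q^\star$ is a trigonometric polynomial of degree $2M$ built from translates of the Jackson kernel $K$. The interpolation and derivative equations determine the unique such polynomial that can be written as
\begin{align*}
Q^\star(f)=\sum_{\ell=1}^k\Big(\alpha_\ell\, K(f-f_\ell^\star)+\beta_\ell\, K'(f-f_\ell^\star)\Big),
\end{align*}
where $(\alpha_\ell,\beta_\ell)\in\C^{2k}$ is the solution of the $2k\times 2k$ linear system obtained by imposing the interpolation and flat-derivative conditions at $T^\star$. By the separation hypothesis $\Delta(T^\star)\ge 2.5009/n$ together with the decay properties of $K,K',K''$ from Appendix~\ref{sec:A:gM}, this system is a small perturbation of its block-diagonal part (whose $\ell$-th block is $\diag(K(0),K''(0))$), so a Neumann/Gershgorin argument yields $\alpha_\ell=\sign(c^\star_\ell)/K(0)+O(n^{-3})$ and $\beta_\ell=O(n^{-4})$. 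Next, for a reference spike shifted to $f_0^\star=0$, I would split $Q^\star$ as the contribution of that spike plus the tails from the other spikes at distance $\ge\Delta(T^\star)$, and evaluate both contributions on $\N,\M,\F$ using Taylor expansion of $K,K',K'',K'''$ near $0$ and the $(n|f|)^{-3-m}$ decay far from $0$. This produces the six explicit numerical bounds on $Q^\star_R,Q^\star_I$ and their first two derivatives on $\N$, as well as the bounds $|Q^\star|\le 0.927615$ on $\M$ and $|Q^\star|\le 0.734123$ on $\F$. Finally, assembling those six bounds gives the combined inequality $Q^\star_R Q_R^{\star\prime\prime}+|Q^{\star\prime}|^2+|Q^\star_I||Q_I^{\star\prime\prime}|\le -1.316313\,n^2$. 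Writing $g(f):=|Q^\star(f)|$ and using $g\,g''=Q^\star_R Q_R^{\star\prime\prime}+Q^\star_I Q_I^{\star\prime\prime}+|Q^{\star\prime}|^2-(g')^2$, the combined inequality (which dominates $Q^\star_I Q_I^{\star\prime\prime}$ by $|Q^\star_I||Q_I^{\star\prime\prime}|$ and ignores the non-negative $(g')^2$) forces $g''<0$ throughout $\N$; together with $g(0)=1$ and $g'(0)=0$ this yields $|Q^\star(f)|<1$ on $\N\setminus\{0\}$. Combined with the direct magnitude bounds on $\M\cup\F$, the symmetry argument of \cite{Candes:2014br} on $(-\N)\cup(-\M)\cup(-\F)$, and the fact that $f_0^\star$ was an arbitrary spike, boundedness extends to all of $\TT\setminus T^\star$.

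\textbf{Main obstacle.} The existence and well-posedness of $\q^\star$ as a limit (and the interpolation conditions) is clean from Lemma~\ref{lem:fix1}; the analytic heart of the proof is the numerical balancing in the near region. Every bound must be sharp enough that, after including the interference contributions from the other spikes controlled only by the separation $\ge 2.5009/n$, the curvature inequality still has a strictly negative right-hand side with a non-trivial margin, and simultaneously the magnitudes in $\M$ and $\F$ remain strictly below $1$. This forces one to use the precise constants $K(0),K''(0),K^{(4)}(0)$ and the quantitative tail decay of the Jackson kernel rather than order-of-magnitude estimates, and it is essentially this balancing that fixes the separation constant $2.5009$. I would therefore organize the computation by (i) first producing sharp Taylor envelopes for the self-term $\a(f)^H\mZ\a(0)=K(f)$ and its derivatives on $\N,\M,\F$, (ii) then producing one-spike interference envelopes at distance $\ge\Delta(T^\star)$, and (iii) summing over $\ell$ via a geometric tail bound that uses $\Delta(T^\star)\cdot n\ge 2.5009$ to control the partial sums; the stated numerical constants in the lemma then drop out of this bookkeeping.
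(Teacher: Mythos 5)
Your overall plan — write $Q^\star(f)=\sum_\ell\big(\alpha_\ell K(f-f^\star_\ell)+\beta_\ell K'(f-f^\star_\ell)\big)$, solve a near-diagonal linear system for $(\alpha_\ell,\beta_\ell)$ using the separation and Jackson kernel decay, split $\TT$ into near/middle/far regions, and push the curvature inequality in $\N$ — matches the paper's structure, and the region split, curvature inequality, and symmetry reduction are exactly what the paper does. But there is a concrete gap in how you identify the linear system determining $(\alpha_\ell,\beta_\ell)$, and it changes which polynomial you are actually bounding.

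You claim the stationarity of the first fixed-point map yields $(Q^\lambda)'(f^\lambda_\ell)=0$ and hence $(Q^\star)'(f^\star_\ell)=0$, and you then take $(\alpha,\beta)$ to solve the Cand\`es-type $2k\times 2k$ complex system (interpolation plus full vanishing of the derivative). That is not what the first-order condition gives. The $\f$-block of $\nabla G^\lambda(\btheta^\lambda)=\zero$ reads $\R\{\bar{c}^\lambda_\ell\,(Q^\lambda)'(f^\lambda_\ell)\}=0$, i.e.\ only the \emph{real part of the rotated derivative} vanishes — equivalently $|Q^\lambda(f^\lambda_\ell)|'=0$ — not $(Q^\lambda)'(f^\lambda_\ell)=0$. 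Passing to the limit, the paper's $\q^\star$ is pinned down by a $3k\times 3k$ \emph{real} system $\nabla^2 G^0(\btheta^\star)\,[\diag(\c^\star)^{-1}\bbeta;\,\balpha_R;\,\balpha_I]=[\zero;\,\s^\star_R;\,\s^\star_I]$, which imposes the full interpolation ($2k$ real conditions), only the one-dimensional derivative constraint $\R\{c^{\star H}_\ell{Q^\star}'(f^\star_\ell)\}=0$ ($k$ real conditions), and — crucially — the structural constraint that $\bbeta=\diag(\c^\star)\tilde\bbeta$ with $\tilde\bbeta$ real ($\bbeta$ has only $k$ real degrees of freedom). This is genuinely different from the Cand\`es certificate: in general $Q^\star_{\text{paper}}\ne Q_{\text{Cand\`es}}$, so your estimates would be about a different polynomial than the $Q^\star=\a(\cdot)^H\mZ\q^\star$ appearing in the lemma. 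Since the lemma is precisely a statement about the limit of the fixed-point construction (and its constants are reused in the remainder of the proof of Theorem~\ref{thm:main}), you need to derive and invert the actual $3k\times 3k$ system built from $\nabla^2 G^0(\btheta^\star)$ rather than the Cand\`es $2k\times 2k$ complex one. Separately, the asymptotics you write for the coefficients are off: the diagonal scaling $\D_2\approx -\tau\eye$ with $\tau\asymp n^2$ and $\|\D_1\|_{\infty,\infty}=O(n)$ gives $\|\bbeta\|_\infty=O(n^{-1})$ (the paper gets $\le 0.00386/n$), not $O(n^{-4})$; this matters because $\bbeta$ multiplies $K'$, $K''$, $K'''$ in the near-region derivative bounds at order $n, n^2, n^3$, and $O(n^{-1})$ is the borderline scale that makes those contributions $O(1),O(n),O(n^2)$ rather than negligible.
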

\begin{proof}[Proof of Lemma~\ref{lem:q0:dual:certificate}]
See Appendix~\ref{sec:F}.
\end{proof}


Next lemma,  with the proof given in Appendix~\ref{sec:G},  exploits the closeness of $\btheta^\star$ and $\btheta^\lambda$ shown in Lemma~\ref{lem:fix1} to bound the pointwise distance between $Q^\star(f)$ and $Q^\lambda(f)$.

\begin{lemma}[$Q^\lambda(f)$ is close to $Q^\star(f)$]\label{lem:Q:lambda:closeto:Q:0}
Under the settings of Lemma~\ref{lem:fix1},  let $Q^\lambda(f)$ and $Q^\star(f)$ be the dual polynomials corresponding to $\btheta^\lambda$ and $\btheta^\star$,  respectively.
Then the distances between $Q^\lambda(f)$ and $Q^\star(f)$ and their various derivatives are uniformly bounded:
\begin{align*}
\begin{matrix*}[l]
|{Q^\star}(f)-{Q^\lambda}(f)| \leq   {28.7343}X^\star{B^\star} \gamma, ~f\in\N, &
|{Q^\star}(f)-{Q^\lambda}(f)| \leq   {\bl{39.3557}}X^\star{B^\star} \gamma, ~f\in\M, 
\\
|{Q^\star}'(f)-{Q^\lambda}'(f)|\leq  {44.4648}nX^\star{B^\star}\gamma, ~f\in\N, &
|{Q^\star}(f)-{Q^\lambda}(f)| \leq   {66.1596}X^\star{B^\star} \gamma, ~f\in\F, 
\\
|{Q^\star}''(f)-{Q^\lambda}''(f)|\leq  \bl{140.808}n^2X^\star{B^\star} \gamma, ~f\in\N.&
\end{matrix*}
\end{align*}
\end{lemma}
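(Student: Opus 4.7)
\textbf{Proof plan for Lemma~\ref{lem:Q:lambda:closeto:Q:0}.} The plan is to parametrize the difference $Q^\lambda - Q^\star$ by a path in $\lambda$ and reduce everything to estimates on $\frac{d}{dt}\btheta^t$ that Lemma~\ref{lem:fix1} already supplies. Since $\q^t = (\x^\star-\x^t)/t$ and $\q^\star = \lim_{t\to 0}\q^t = -\frac{d}{dt}\x^t\big|_{t=0}$, the identity laid out in Section~\ref{sec:outline} gives
\begin{align*}
Q^\lambda(f) - Q^\star(f) \;=\; \frac{1}{\lambda}\int_0^\lambda \a(f)^H \mZ\!\left(\frac{d}{dt}\x^{0} - \frac{d}{dt}\x^{t}\right)\mathrm{d}t,
\end{align*}
and by differentiating under the integral, the analogous representations for $(Q^\lambda-Q^\star)'$ and $(Q^\lambda-Q^\star)''$ are obtained by replacing $\a(f)$ by $\a'(f)$ and $\a''(f)$. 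Thus it suffices to produce uniform bounds on the integrand in the three regions $\N, \M, \F$.

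\textbf{Step 1 (control of $\frac{d}{dt}\x^t$).} Differentiating $\x^t = \sum_\ell c^t_\ell\, \a(f^t_\ell)$ yields
\begin{align*}
\frac{d}{dt}\x^t = \sum_{\ell=1}^k \left(\frac{dc^t_\ell}{dt}\,\a(f^t_\ell) + c^t_\ell\,\frac{df^t_\ell}{dt}\,\a'(f^t_\ell)\right).
\end{align*}
By Lemma~\ref{lem:fix1}, $\btheta^t$ is continuously differentiable with $\frac{d}{dt}\btheta^t = -(\nabla^2 G^t(\btheta^t))^{-1}\frac{\partial}{\partial t}\nabla G^t(\btheta^t)$, and $\btheta^t\in\N^\star$ for every $t\in[0,\lambda]$. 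The first factor, evaluated at a fixed point inside $\N^\star$, can be shown to be well-conditioned: the separation~\eqref{eqn:separation} together with the Jackson-kernel decay estimates of Appendix~\ref{sec:A:gM} makes the associated Gram matrix close to the identity (in the $\W^\star$-weighted norm), so $(\nabla^2 G^t)^{-1}$ is uniformly bounded in $t$. The second factor is $\frac{\partial}{\partial t}\nabla G^t(\btheta^t) = (0,\mathrm{sign}(\c^t))^T$, which is bounded entrywise by $1$. Combining these gives $\|\frac{d}{dt}\btheta^t\|_{\hat{\infty}}$ uniformly bounded by an absolute constant, which translates into pointwise bounds on $\frac{dc^t_\ell}{dt}$ and $c^t_\ell\frac{df^t_\ell}{dt}$ proportional to $|c^\star_\ell|$.

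\textbf{Step 2 (kernel bounds, region by region).} The integrand reduces to a sum over $\ell$ of terms of the form
\begin{align*}
\a(f)^H\mZ\,\a(f^t_\ell)\cdot\alpha^t_\ell \;+\; \a(f)^H\mZ\,\a'(f^t_\ell)\cdot\beta^t_\ell \;+\;(\text{analogous difference with }t=0),
\end{align*}
where $\alpha^t_\ell,\beta^t_\ell$ are the scalars produced in Step~1 and $\a(f)^H\mZ\a(f_\ell)=K(f_\ell-f)$ is the Jackson kernel. Hence each piece is a linear combination of $K$, $K'$, $K''$, $K'''$ evaluated at displacements $f^t_\ell-f$ or $f^\star_\ell-f$. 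For the reference frequency $f^\star_0=0$ one uses the on-support values $K(0),K'(0),K''(0),K'''(0)$; for the remaining $\ell\neq 0$ the wrap-around separation $|f^\star_\ell-f|\gtrsim 1/n$ in each of $\N,\M,\F$ lets one invoke the decay bounds in Appendix~\ref{sec:A:gM}. Writing each kernel value at $f^t_\ell$ as a Taylor expansion around $f^\star_\ell$ (the displacement is at most $\|\btheta^t-\btheta^\star\|_{\hat{\infty}}\leq X^\star\gamma_0/\sqrt{2}$ divided by the $|c^\star_\ell|\sqrt{|K''(0)|}$ weight), and using the closeness from Lemma~\ref{lem:fix1} to cancel leading terms, turns the integrand into a quantity of size $O(X^\star B^\star \gamma)$ in each region. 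The different constants $28.7343,\,39.3557,\,66.1596,\,44.4648,\,140.808$ then arise from the region-specific numerical bounds on $K,K',K'',K'''$, propagated through the sum and the $1/\lambda\int_0^\lambda dt$ averaging (which simply inherits the pointwise bound).

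\textbf{Main obstacle.} The conceptual structure is straightforward; the real work is the quantitative bookkeeping. The hard part will be keeping the constants tight: one must simultaneously (i) pin down an explicit upper bound on $\|(\nabla^2 G^t(\btheta^t))^{-1}\|$ uniform in $t\in[0,\lambda]$ and $\btheta^t\in\N^\star$, which requires re-using the Hessian-conditioning estimates underlying Lemma~\ref{lem:fix1}; (ii) Taylor-expand the kernels to first (and for the $\M,\F$ bounds, essentially zeroth) order with explicit remainder control, using the worst-case displacement $X^\star\gamma_0/(\sqrt{2}|c^\star_\ell|\sqrt{|K''(0)|})$; and (iii) sum over $\ell$ using the separation~\eqref{eqn:separation} so that the tail sum of $|K^{(j)}(\cdot)|$ over non-reference frequencies is absolutely controlled. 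Once these three ingredients are in place, the five inequalities in the statement follow by plugging in the region-specific kernel bounds and the dynamic-range factor $B^\star$ that arises when bounding $|c^\star_\ell|/|c^\star_0|$.
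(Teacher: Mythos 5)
Your route is the same one the paper takes: the integral representation $Q^\lambda(f)-Q^\star(f)=\frac{1}{\lambda}\int_0^\lambda \a(f)^H\mZ(\frac{d}{dt}\x^0-\frac{d}{dt}\x^t)\,dt$, the derivative formula $\frac{d}{dt}\btheta^t=-(\nabla^2 G^t(\btheta^t))^{-1}\frac{\partial}{\partial t}\nabla G^t(\btheta^t)$ supplied by Lemma~\ref{lem:fix1}, and region-by-region Jackson-kernel bookkeeping over $\N$, $\M$, $\F$. But there is a genuine gap in how the smallness $O(X^\star B^\star\gamma)$ is supposed to emerge. Writing $-\a(f)^H\mZ\frac{d}{dt}\x^t=\bnu^t(f)\,\Xi^t\,\brho^t$, where $\bnu^t(f)$ collects the kernel rows $\D_0(f,\f^t)$, $\D_1(f,\f^t)\diag(\c^t)\mS^{-1}$, $\Xi^t=\Upsilon(\nabla^2 G^t(\btheta^t))^{-1}$ is the weighted inverse Hessian, and $\brho^t$ is the zero-padded sign vector of $\c^t$, the integrand you must control is the \emph{difference} $\bnu^\star(f)\Xi^\star\brho^\star-\bnu^t(f)\Xi^t\brho^t$. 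Your Step 1 asserts only a \emph{uniform} bound on $\|(\nabla^2 G^t(\btheta^t))^{-1}\|$ and on $\|\frac{d}{dt}\btheta^t\|_{\hinfty}$; a uniform bound makes the integrand $O(1)$ (times kernel sums), not $O(\gamma)$. The required cancellation needs three separate perturbation estimates: closeness of the kernel vectors $\bnu^\star(f)$ and $\bnu^t(f)$ (your Taylor/mean-value step does cover this), closeness of the sign vectors, $\|\brho^\star-\brho^t\|_\infty\le 2X^\star\gamma$, and, crucially, closeness of the inverse weighted Hessians, $\|\Xi^\star-\Xi^t\|_{\infty,\infty}=O(X^\star B^\star\gamma)$.

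The last estimate is the heart of the paper's Appendix~\ref{sec:G} and is not a routine detail: it is obtained from the resolvent identity $\Xi^\star-\Xi^t=\Xi^\star({\Xi^t}^{-1}-{\Xi^\star}^{-1})\Xi^t$ together with lengthy bounds on the Hessian difference, including the coefficient terms $\lambda\diag(\u\odot\u./|\c|^{3})$ and the cross terms $\diag(1./|\c^\star|)[\D_1(\f,\f^\star)\c^\star-\D_1(\f)\c]$, and it is precisely where the dynamic-range factor $B^\star$ enters the final constants (the $\bnu$ and $\brho$ differences alone contribute only $X^\star\gamma$). Your ``main obstacle (i)'' asks only for a uniform bound on the inverse Hessian along the path, so the plan as written cannot yield the stated $X^\star B^\star\gamma$ scaling; once you add the inverse-Hessian difference bound and the sign-vector bound, your argument becomes the paper's three-term triangle-inequality decomposition, and the remaining work is exactly the kernel-table bookkeeping you describe.
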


In the following,  we will control the pointwise distance between $Q^\lambda(f)$ and $\hat Q(f)$ by taking advantage of the closeness of
$\btheta^\lambda$ and $\hat\btheta$ given by Lemma~\ref{lem:fix2}.
The key is to observe that
\[\hat{\q}-\q^\lambda=\frac{(\y-\hat{\x})-(\x^\star-\x^\lambda)}{\lambda}
                           =\frac{\w}{\lambda}+\frac{\x^\lambda-\hat{\x}}{\lambda}\]
implying
\begin{align}
|Q^\lambda(f)-\hat{Q}(f)|\leq \frac{|\a(f)^H\mZ\w|}{\lambda} +\frac{| \a(f)^H\mZ(\x^\lambda-\hat{\x})| }{\lambda}.\label{eqn:sec:4:Qlamba-Qhat}
\end{align}
This separates the distance between $Q^\lambda(f)$ and $\hat{Q}(f)$ into two parts: one is $|{\a(f)^H\mZ\w}/{\lambda}|$ determined by the dual atomic norm of the Gaussian noise $\w$,  which is upperbounded in Appendix~\ref{sec:B}; the other is $|{ \a(f)^H\mZ(\x^\lambda-\hat{\x}) }/{\lambda}|$ that can be upperbounded by the dual atomic norm of $\x^\lambda-\hat{\x}$.
We summarize the final result in Lemma~\ref{lem:Q:hat:lambda:closeto:Q:lambda},  where the proof is given in Appendix~\ref{sec:H}.

\begin{lemma}[$\hat{Q}(f)$ is close to $Q^\lambda(f)$]\label{lem:Q:hat:lambda:closeto:Q:lambda}
Under the settings of Lemma~\ref{lem:fix2},  let $\hat{Q}$ and $Q^\lambda$ be the dual polynomials corresponding to $\hat{\btheta}$ and $\btheta^\lambda$,  respectively. Then the pointwise distances between $Q^\lambda(f)$ and $\hat{Q}(f)$ and their derivatives are bounded:
\begin{align*}
\begin{matrix*}[l]
		|{\hat{Q}}(f)-{Q^\lambda}(f)| \leq 82.5975  B^\star/{X^\star}, ~f\in\N, &
		|{\hat{Q}}(f)-{Q^\lambda}(f)| \leq  \bl{114.323}  B^\star/{X^\star}, ~f\in\M, 
		\\
		|{\hat{Q}}(f)'-{Q^\lambda}'(f)|\leq 180.283 n  B^\star/{X^\star}, ~f\in\N, &
		|{\hat{Q}}(f)-{Q^\lambda}(f)| \leq 162.903  B^\star/{X^\star}, ~f\in\F, 
		\\
		|{\hat{Q}}(f)''-{Q^\lambda}''(f)|\leq 758.404n^2  B^\star/{X^\star}, ~f\in\N.&
\end{matrix*}
\end{align*}
\end{lemma}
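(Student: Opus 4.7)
The plan is to carry out the splitting already indicated in~\eqref{eqn:sec:4:Qlamba-Qhat}, namely
\begin{align*}
|Q^\lambda(f)-\hat{Q}(f)| \;\leq\; \frac{|\a(f)^H\mZ\w|}{\lambda} \;+\; \frac{|\a(f)^H\mZ(\x^\lambda-\hat\x)|}{\lambda},
\end{align*}
and its analogues for the first and second derivatives obtained by replacing $\a(f)$ with $\a'(f)$ and $\a''(f)$ on the left of $\mZ$. In particular, the problem reduces to two independent estimates: a \emph{noise} part governed by the weighted Gaussian vector $\mZ\w$, and a \emph{perturbation} part governed by the proximity of $\btheta^\lambda$ and $\hat\btheta$ established in Lemma~\ref{lem:fix2}.

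For the noise part, I would recall that $|\a(f)^H\mZ\w|$, $|\a'(f)^H\mZ\w|$ and $|\a''(f)^H\mZ\w|$ are each bounded by (appropriately weighted) variants of the dual atomic norm $\|\mZ\w\|_\A^*$ whose expectation is of order $\gamma_0$, and whose concentration around its mean is controlled with probability at least $1-\frac{1}{n^2}$ by the machinery already developed in Appendix~\ref{sec:B}. Dividing by $\lambda = 0.646\,X^\star\gamma_0$ produces a contribution of order $1/X^\star$, which when multiplied by $B^\star\ge 1$ yields the $B^\star/X^\star$ scaling appearing in the lemma's right-hand side.

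For the perturbation part, I would expand $\x^\lambda-\hat\x=\sum_{\ell=1}^k[c_\ell^\lambda\a(f_\ell^\lambda)-\hat c_\ell\a(\hat f_\ell)]$ and, via a first-order Taylor expansion in each summand together with the identity $\a(f_1)^H\mZ\a(f_2)=K(f_2-f_1)$, write
\begin{align*}
\frac{\a(f)^H\mZ(\x^\lambda-\hat\x)}{\lambda}
=\frac{1}{\lambda}\sum_{\ell=1}^k(c_\ell^\lambda-\hat c_\ell)K(f_\ell^\lambda-f)
+\frac{1}{\lambda}\sum_{\ell=1}^k \hat c_\ell\,(f_\ell^\lambda-\hat f_\ell)\,K'(\tilde f_\ell-f)
\end{align*}
for some intermediate $\tilde f_\ell$, and similarly for the derivative versions using $K'$, $K''$, $K'''$. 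Lemma~\ref{lem:fix2} gives $|c_\ell^\lambda-\hat c_\ell|\le 35.2\gamma_0/\sqrt 2$ and $|f_\ell^\lambda-\hat f_\ell|\le 35.2\gamma_0/(\sqrt 2\sqrt{|K''(0)|}|c_\ell^\star|)$, and the triangle inequality $|\hat c_\ell|\le|c_\ell^\star|+|\hat c_\ell-c_\ell^\star|$ controls the prefactor in the second sum by $c^\star_{\max}$. The factor $\sqrt{|K''(0)|}$ cancels against the $|K'|$ bound, leaving an overall scaling of $c^\star_{\max}\gamma_0/(c^\star_\ell\cdot\lambda)\sim B^\star/X^\star$ once $c^\star_\ell=c^\star_{\min}$ dominates.

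It then remains to harvest the three regional bounds by evaluating the sums with the decay estimates of the Jackson kernel and its first three derivatives collected in Appendix~\ref{sec:A:gM}. Using the separation hypothesis $\Delta(T^\star)\ge 2.5009/n$ to control the sums over $\ell\neq 0$ (terms far from $f$), and the exact values of $|K|$, $|K'|$, $|K''|$, $|K'''|$ in the near, middle, and far regions, yields the asserted inequalities region by region. The main obstacle I anticipate is not the logical structure, which is linear and routine, but the careful bookkeeping of numerical constants: one must combine the dual-atomic-norm deviation constants from Appendix~\ref{sec:B}, the weighted-$\ell_\infty$ radius $35.2\gamma_0/\sqrt 2$ from Lemma~\ref{lem:fix2}, and the region-dependent Jackson-kernel bounds in a way tight enough to certify the specific numerical values $82.5975$, $114.323$, $162.903$, $180.283$, and $758.404$ appearing in the statement.
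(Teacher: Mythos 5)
Your proposal follows essentially the same route as the paper: the same splitting $\hat{\q}-\q^\lambda=\w/\lambda+(\x^\lambda-\hat{\x})/\lambda$, with the noise term controlled by the Appendix~\ref{sec:B} dual-atomic-norm bounds divided by $\lambda=0.646X^\star\gamma_0$, and the perturbation term controlled by a mean-value-theorem expansion combined with the proximity of $\hat{\btheta}$ and $\btheta^\lambda$ from Lemma~\ref{lem:fix2} and the regional Jackson-kernel sum bounds. The only detail you leave implicit---which the paper handles explicitly via Lemma~\ref{lem:resolution}---is that the separation must first be transferred from $T^\star$ to $\hat{T}$ and to the intermediate frequencies (with the near/middle/far regions slightly enlarged) before the kernel-sum tables can be invoked; this is a routine step given the slack in $\Delta(T^\star)\geq 2.5009/n$.
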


\subsubsection{Proof of Theorem~\ref{thm:main}}
By combining Lemmas~\ref{lem:q0:dual:certificate}, ~\ref{lem:Q:lambda:closeto:Q:0},  and~\ref{lem:Q:hat:lambda:closeto:Q:lambda},  we are now ready
to prove Theorem~\ref{thm:main}.

Basically,  we will show that $\hat{\btheta}$ constructed from the two-step gradient descent procedure and ${\btheta}^{\mathrm{glob}}:=({\f}^{\mathrm{glob}}, {\u}^{\mathrm{glob}}, {\v}^{\mathrm{glob}})$ are the same point. Then the error bounds follow from the closeness of $\hat{\btheta}$ and $\btheta^\star$. First,  we show that the signal $\hat{\x}=\sum_{\ell=1}^k \hat{c}_\ell\a(\hat{f}_\ell)$ and $\hat{\q}=(\y-\hat{\x})/\lambda$ constructed from the second fixed point $\hat{\btheta}$ form primal and dual optimal solutions of~\eqref{eqn:primal}. It suffices to show that the dual polynomial $\hat{Q}(f)=\a(f)^H\mZ\hat\q$ satisfies the Bounded Interpolation Property of Proposition~\ref{pro:bip}.

\bigskip
\noindent{\bf 1) Showing the Interpolation property.}
\medskip
\\
The Interpolation property has the following equivalences:
\begin{align}
\hat{Q} (\hat{f}_\ell) =  \sign(\hat{c}_\ell), \ell=1,  \ldots , {k} \nn
\iff &\ \a(\hat{f}_\ell)^H\mZ(\y-\hat{\x})=\lambda\sign(\hat{c}_\ell),  \ell=1, \ldots, k\nn\\
\iff &\ \a(\hat{f}_\ell)^H\mZ(\y-\mA(\hat{\f})\hat{\c})=\lambda\sign(\hat{c}_\ell),  \ell=1, \ldots, k\nn\\
\iff &\ \mA(\hat\f)^H\mZ(\y-\mA(\hat{\f})\hat{\c})=\lambda\hat\c./|\hat\c|. \label{eqn:interpolation:QW}
\end{align}
From Lemma~\ref{lem:fix2},  $\hat{\btheta}$ is the fixed point solution of the map $\Theta(\btheta)=\btheta-\W^\star\nabla G(\btheta)$, 
i.e.,  $\Theta(\hat{\btheta})=\hat\btheta$,  implying $\nabla G(\hat\btheta)=\zero$ due to the invertibility of $\W^\star$. Invoking the explicit expression for $\nabla G(\btheta)$ developed in Appendix~\ref{sec:C},  we get
\begin{align}\label{eqn:gradient:G}
\nabla G(\hat\btheta)=
\begin{bmatrix*}[l]
\R\{{(\mA'(\hat\f)\diag(\hat\c))^H\mZ(\mA(\hat\f)\hat\c-\y)}\}
\\
\R\{{\mA(\hat\f)^H\mZ(\mA(\hat\f)\hat\c-\y)}+\lambda \hat\c./ |\hat\c |\}
\\
\I\{{\mA(\hat\f)^H\mZ(\mA(\hat\f)\hat\c-\y)}+\lambda \hat\c./ |\hat\c |\}
\end{bmatrix*}
=
\begin{bmatrix}
\zero
\\
\zero
\\
\zero
\end{bmatrix}.
\end{align}
Then the Interpolation property~\eqref{eqn:interpolation:QW} follows from the last two row blocks of~\eqref{eqn:gradient:G}.

\bigskip
\noindent{\bf 2) Showing the Boundedness property.}
\medskip
\\
Following the same arguments preceding Lemma~\ref{lem:q0:dual:certificate},  it is sufficient to show $|\hat{Q}(f)|<1$ in $\N\cup\M\cup\F\backslash \{\hat{f}_0\}$.


First,  since $\hat{f}_0$ might be located in $-\N$ or $\N$,   we bound $|\hat{Q}(f)|$  for $f\in(-\N)\cup\N$. The second-order Taylor expansion of $|\hat{Q}(f)|$ at $f = \hat{f}_0$ states
\begin{align}
|\hat{Q}(f)| &= |\hat{Q}(\hat{f}_0)| + (f-\hat{f}_0) |\hat{Q}(\hat{f}_0)|' + \frac{1}{2}(f-\hat{f}_0)^2 |\hat{Q}(\xi)|''\nonumber\\
& = 1 + (f-\hat{f}_0) |\hat{Q}(\hat{f}_0)|' + \frac{1}{2}(f-\hat{f}_0)^2 |\hat{Q}(\xi)|''\text{\ for\ some\ } \xi \in (-\N)\cup\N, 
\end{align}
where for the second line we used a consequence of the interpolation property.
We argue that
\[
|\hat{Q}(\hat{f}_0)|' = \frac{\hat{Q}_R(\hat{f}_0)\hat{Q}_R(\hat{f}_0)'+\hat{Q}_I(\hat{f}_0)\hat{Q}_I(\hat{f}_0)'}{|\hat{Q}(\hat{f}_0)|} = \frac{\R\{\hat{c}_0\}\hat{Q}_R(\hat{f}_0)'+\I\{\hat{c}_0\}\hat{Q}_I(\hat{f}_0)'}{|\hat{c}_0||\hat{Q}(\hat{f}_0)|} = 0.
\] The last equality is a consequence of the first row block of~\eqref{eqn:gradient:G} since $\R\{\hat{c}_0\}\hat{Q}_R(\hat{f}_0)'+\I\{\hat{c}_0\}\hat{Q}_I(\hat{f}_0)' = \R\{\hat{c}_0^H \a(\hat{f}_0)^H\mZ(\y - \mA(\hat{\f})\hat{\c})\}$. Therefore,  it suffices to show that $|\hat{Q}(f)|'$  has strictly negative derivative in the symmetric Near Region
$f\in(-\N)\cup\N$. By the symmetric arguments,  it suffices to show this in $\N$.  Since
\begin{align*}
|\hat{Q}(f)|'' = -\frac{(\hat{Q}_R(f)\hat{Q}_R(f)'+\hat{Q}_I(f)\hat{Q}_I(f)')^2}{|\hat{Q}(f)|^3} + \frac{\hat{Q}_R(f){\hat{Q}_R}(f)''+|{\hat{Q}}(f)'|^2+|\hat{Q}_I(f)||{\hat{Q}_I}(f)''|}{|\hat{Q}(f)|}, 
\end{align*}
we only need to show that
\begin{align*}
\hat{Q}_R(f){\hat{Q}_R}(f)''+|{\hat{Q}}(f)'|^2+|\hat{Q}_I(f)||{\hat{Q}_I}(f)''|<0, 
\end{align*}
which can be obtained by applying Lemma~\ref{lem:q0:dual:certificate},  Lemma~\ref{lem:Q:lambda:closeto:Q:0},  Lemma~\ref{lem:Q:hat:lambda:closeto:Q:lambda} and the triangle inequality to control these three terms $\hat{Q}_R(f){\hat{Q}_R}(f)''$,  $|{\hat{Q}}(f)'|^2$ and $|\hat{Q}_I(f)||{\hat{Q}_I}(f)''|$,  respectively.

More precisely,    the first term can be bounded by
\begin{align}\label{eqn:first:term}
&\hat{Q}_R(f){\hat{Q}_R}(f)''\nn\\
\leq&     Q^\star_R(f){Q_R^\star}(f)''+|\hat{Q}_R(f)-Q^\star_R(f)||{\hat{Q}_R}(f)''-{Q_R^\star}(f)''|
+|Q^\star_R(f)||{\hat{Q}_R}(f)''-{Q_R^\star}(f)''|+|\hat{Q}_R(f)-Q_R^\star(f)||{Q_R^\star}(f)''|
\nn\\
\leq& (\bl{0.887594})( -2.24483 n^2)
+( 28.7343X^\star{B^\star} \gamma+82.5975B^\star/{X^\star})(\bl{140.808}n^2X^\star{B^\star} \gamma+ 758.404n^2B^\star/{X^\star})
\nn\\
&+(1)(\bl{140.808}n^2X^\star{B^\star} \gamma
+ 758.404n^2B^\star/{X^\star})
+( 28.7343X^\star{B^\star} \gamma+82.5975B^\star/{X^\star})3.40320 n^2
\nn\\
\leq&  -\bl{1.64194} n^2, 
\end{align}
where we have used the SNR condition~\eqref{eqn:snr}: $X^\star B^\star\gamma\leq10^{-3}, B^\star/{X^\star}\leq10^{-4}$ in the last line. We now bound the second term 
\begin{align}\label{eqn:second:term}
|{\hat{Q}}(f)'|^2
=& |{\hat{Q}}(f)'-{Q^\star}'(f)|^2+|{Q^\star}'(f)|^2+ 2|{ {Q}^\star}(f)'||{\hat{Q}}(f)'-{Q^\star}'(f)|
\nn\\
\leq& (44.4648nX^\star{B^\star}\gamma+ 180.283 nB^\star/{X^\star} )^2+(0.821039 n)^2
+2(0.821039 n)(44.4648nX^\star{B^\star}\gamma+ 180.283 nB^\star/{X^\star} )
\nn\\
\leq&0.780629 n^2.
\end{align}
Finally,   the third term can be bounded by
\begin{align}\label{eqn:third:term}
&|\hat{Q}_I(f)||{\hat{Q}_I}(f)''|\nn\\
\leq&( |Q_I^\star(f)|+|\hat{Q}(f)-Q^\star(f)|)(|{Q_I^\star}''(f)|+|{\hat{Q}}(f)''-{Q^\star}''(f)|)
\nn\\
\leq&(\bl{0.0183836}+( 28.7343X^\star{B^\star} \gamma+82.5975B^\star/{X^\star}) )0.113197  n^2
+(\bl{140.808}n^2X^\star{B^\star} \gamma+ 758.404n^2B^\star/{X^\star})
\nn\\
\leq& \bl{0.222917} n^2.
\end{align}
From~\eqref{eqn:first:term}, ~\eqref{eqn:second:term} and~\eqref{eqn:third:term},  we have
\begin{align*}
\hat{Q}_R(f){\hat{Q}_R}(f)''+ |{\hat{Q}}(f)'|^2+   |\hat{Q}_I(f)||{\hat{Q}_I}(f)''| \leq (-\bl{1.64194}+0.780629+\bl{0.222917})n^2 <0, 
\end{align*}
implying that $|\hat{Q}(f)|''<0$ in $\N$. This completes showing $|\hat{Q}(f)|''<0$ in $(-\N)\cup\N$ and
\begin{align}
|\hat{Q}(f)| < 1,  \quad\text{for }f \in (-\N)\cup\N\backslash\{\hat{f}_0\}. \label{eqn:Near:control}
\end{align}
Next,  we bound $|\hat{Q}(f)|$ in Middle Region 
\begin{align}
|\hat{Q}(f)|
\leq& |Q^\star(f)|+ |{Q^\star}(f)-{{Q}^\lambda}(f)| +|{\hat{Q}}(f)-{{Q}^\lambda}(f)|
\nn\\
\leq&  0.927615  +( \bl{39.3557}X^\star{B^\star} \gamma+\bl{114.323}B^\star/{X^\star})
\nn\\
\leq& 0.978403
<1, \quad\text{for }f\in\M.\label{eqn:Middle:control}
\end{align}
Finally,  we arrive at an upper bound of $|\hat{Q}(f)|$ in Far Region:
\begin{align}
|\hat{Q}(f)|  \leq& |Q^\star(f)|+ |{Q^\star}(f)-{{Q}^\lambda}(f)| +|{\hat{Q}}(f)-{{Q}^\lambda}(f)|
\nn\\
\leq& 0.734123+ (66.1596X^\star{B^\star}  \gamma+162.903B^\star/{X^\star})
\nn\\
\leq& 0.81658
<1,  \quad\text{for }f\in\F. \label{eqn:Far:control}
\end{align}
From~\eqref{eqn:Near:control}, ~\eqref{eqn:Middle:control} and~\eqref{eqn:Far:control},  we obtain that $\hat{Q}(f)$ satisfies the BIP property and hence $\hat{\q}$ is a valid dual certificate that certifies the optimality of $\hat{\x} = \sum_{\ell = 1}^k \hat{c}_\ell\a(\hat{f}_\ell)$. The uniqueness of the decomposition as also certified by $\hat{\q}$ implies that $\{\hat{f}_\ell\}_{\ell=1}^k = \{{f}^{\mathrm{glob}}_\ell\}_{\ell=1}^k$ and $\{\hat{c}_\ell\}_{\ell=1}^k = \{{c}_\ell^{\mathrm{glob}}\}_{\ell=1}^k$,  i.e.,  $\hat{\btheta}$ and ${\btheta}^{\mathrm{glob}}$ are the same point.

As the final step,  using Lemma~\ref{lem:fix1},  Lemma~\ref{lem:fix2} and the triangle inequality,  we have
\[\|\hat\btheta-\btheta^\star\|_{\hinfty}\leq \|\hat{\btheta}-\btheta^\lambda\|_\hinfty+\|\btheta^\lambda-\btheta^\star\|_\hinfty  \leq (X^\star+35.2)\gamma_0/\sqrt{2}.\] Then the desired results follow from the definition of the norm $\|\cdot\|_{\hinfty}$ and the fact that $\sqrt{|K''(0)|} \geq 3.289n^2$ for $n\geq130$ by~\eqref{eqn:sec:A:tau} and hence
$1/\sqrt{2 |K''(0)|}\leq 1/ \sqrt{2 (3.289)}/n\leq 0.3899/n\leq 0.4/n.$
{\hfill$\square$}

\section{Numerical Experiments}\label{sec:experiment}

We present numerical results to support our theoretical findings. In particular,  we first examine the phase transition curve of the rate of success in Figure~\ref{fig:phase}. In preparing Figure~\ref{fig:phase},  $k$ complex coefficients $c^\star_1, \ldots, c^\star_k$ were generated uniformly from the unit complex circle such that $c^\star_{\min}=c^\star_{\max}=1$ hence $B^\star=1$. We also generated $k$ normalized frequencies $f^\star_1, \ldots, f^\star_k$ uniformly chosen from $[0, 1]$ such that every pair of frequencies are separated by at least $2.5/n$. Then the signal $\x^\star$ was formed according to~\eqref{eqn:atom_true}. We created our observation $\y$ by adding Gaussian noise of mean zero and variance $\sigma^2$ to the target signal $\x^\star$. Let $\lambda=x\gamma_0$ (recall that $\lambda=0.646X^\star\gamma_0$ in Theorem~\ref{thm:main}  and hence $x=0.646X^\star$). We varied $x$ and the Noise-to-Signal Ratio $\gamma$. For each fixed $(x,  \gamma)$ pair,  $20$ instances of the spectral line signals were generated. We then solved~\eqref{eqn:primal} for each instance and extracted the frequencies and coefficients. We declared success for an instance if i) the recovered frequency vector is within $\gamma/2n$ $\ell_\infty$ distance of the true frequency vector $\f^\star$,  and ii) the recovered coefficient vector is within $2\lambda$ $\ell_\infty$ distance of the true frequency vector $\c^\star$. The rate of success for each algorithm is the proportion of successful instances.
\begin{figure}[h!t]
  \centering
  \includegraphics[width=0.4\textwidth]{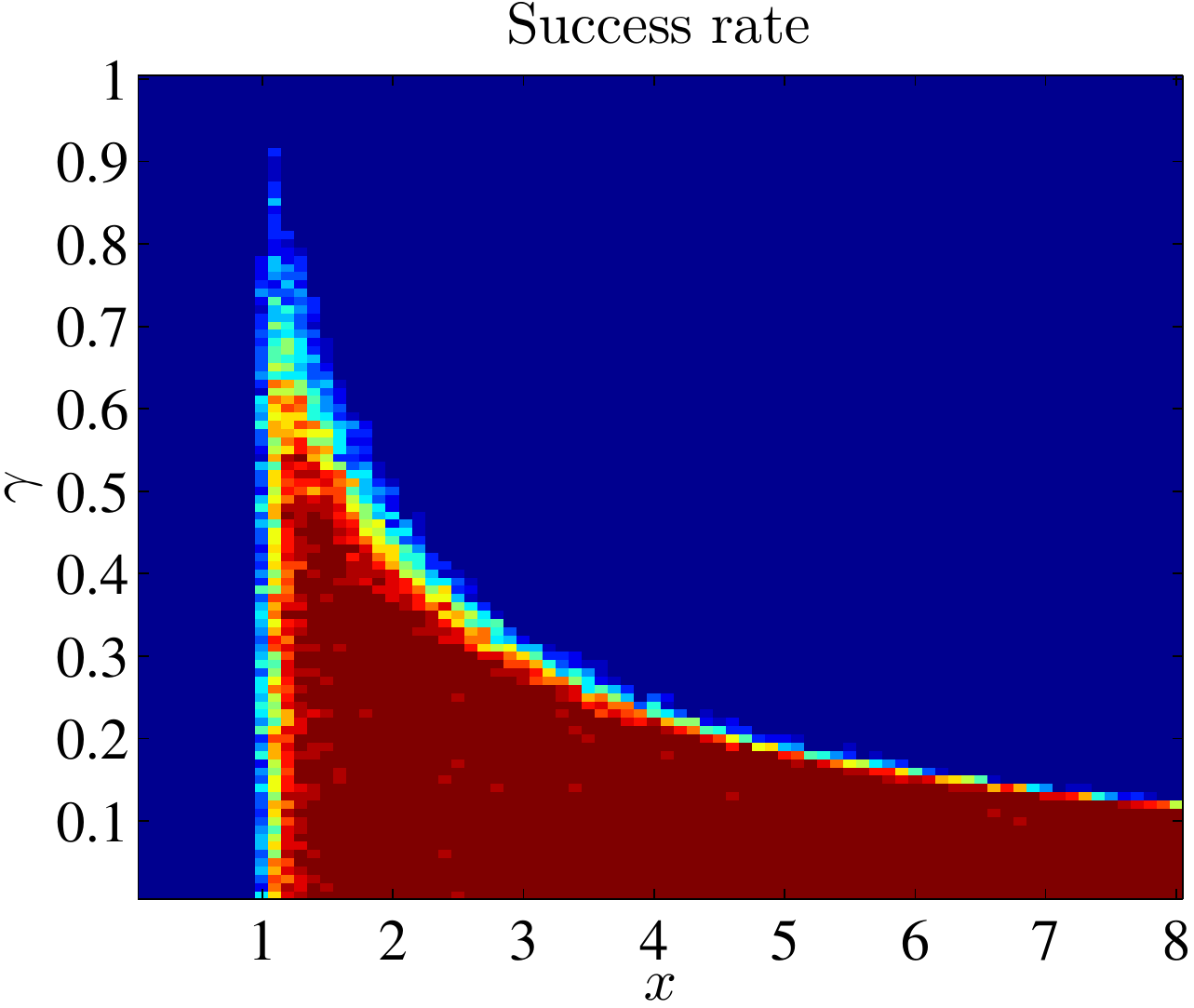}\\
  \caption{Rate of success for line spectral estimation by solving the atomic norm regularized program~\eqref{eqn:primal}.}\label{fig:phase}
\end{figure}

From Figure~\ref{fig:phase},  we observe that solving~\eqref{eqn:primal} is unable to identify the sinusoidal parameters if  $x\leq 1$ and the performance of the method is unstable when $x$ is around $1$. When $x$ is set to be slightly larger than $1$,  however,  we
almost always succeed in finding good estimates of the sinusoidal parameters as long as $x\gamma\leq c$ for some small constant $c$.
This matches the findings in Theorem~\ref{thm:main}.  Figure~\ref{fig:phase} also shows the constants in Theorem~\ref{thm:main}  are a bit conservative.

We also run simulations to compare the mean-squared error for our frequency estimate with those for MUSIC and the MLE,  as well as the CRB. The simulation results are listed in Figure~\ref{fig:crb}. We emphasize that the MLE is initialized using the true frequencies and coefficients,  which are not available in practice. We focus on the case of two unknown frequencies and examine the effect of separation. We observe that the atomic norm minimization method always outperforms MUSIC,  with increased performance gap when the frequencies become closer. While the MLE performs the best,  its initialization is not practical.

\begin{figure}
  \centering
\includegraphics[width=0.32\textwidth]{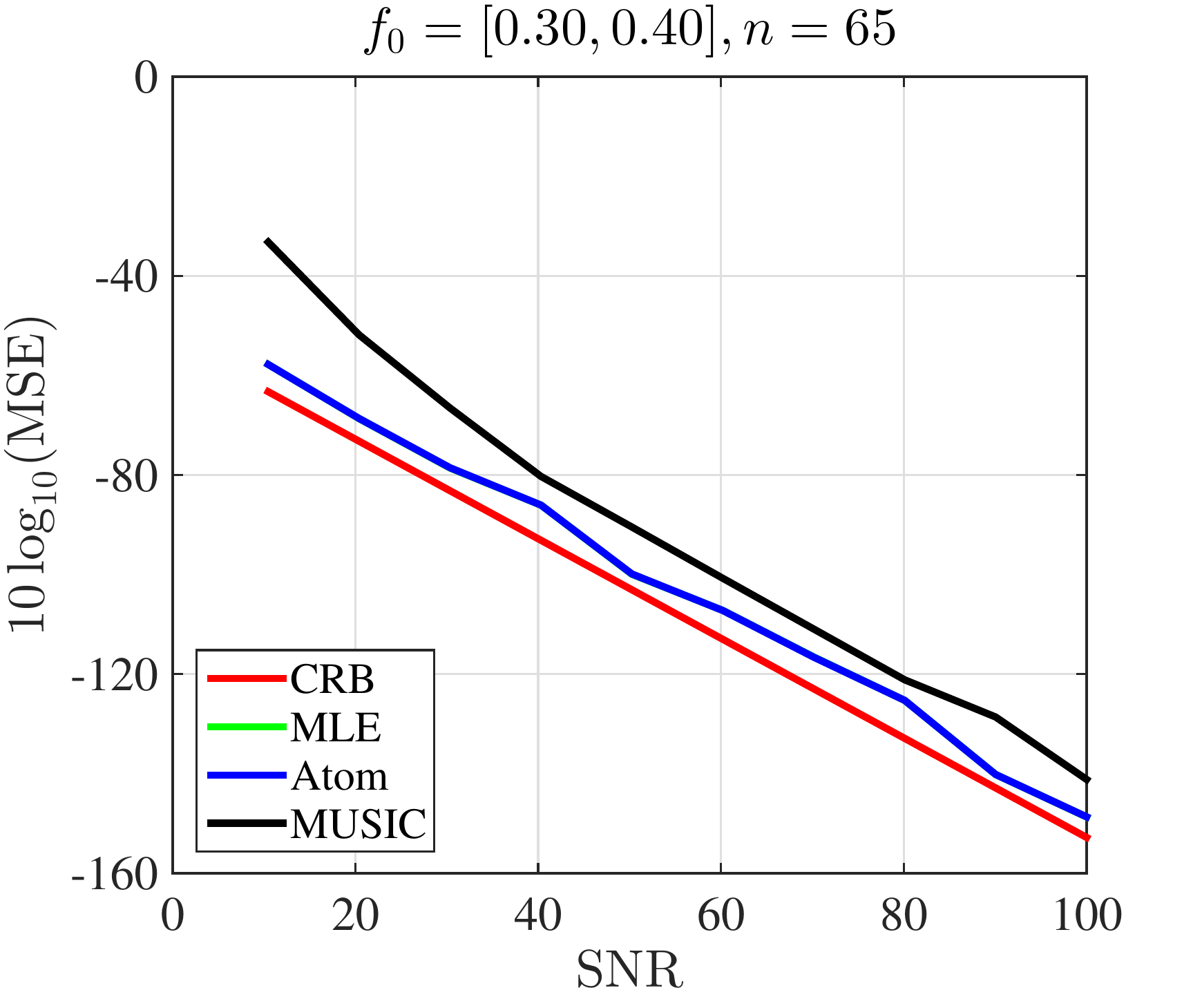}
 \includegraphics[width=0.32\textwidth]{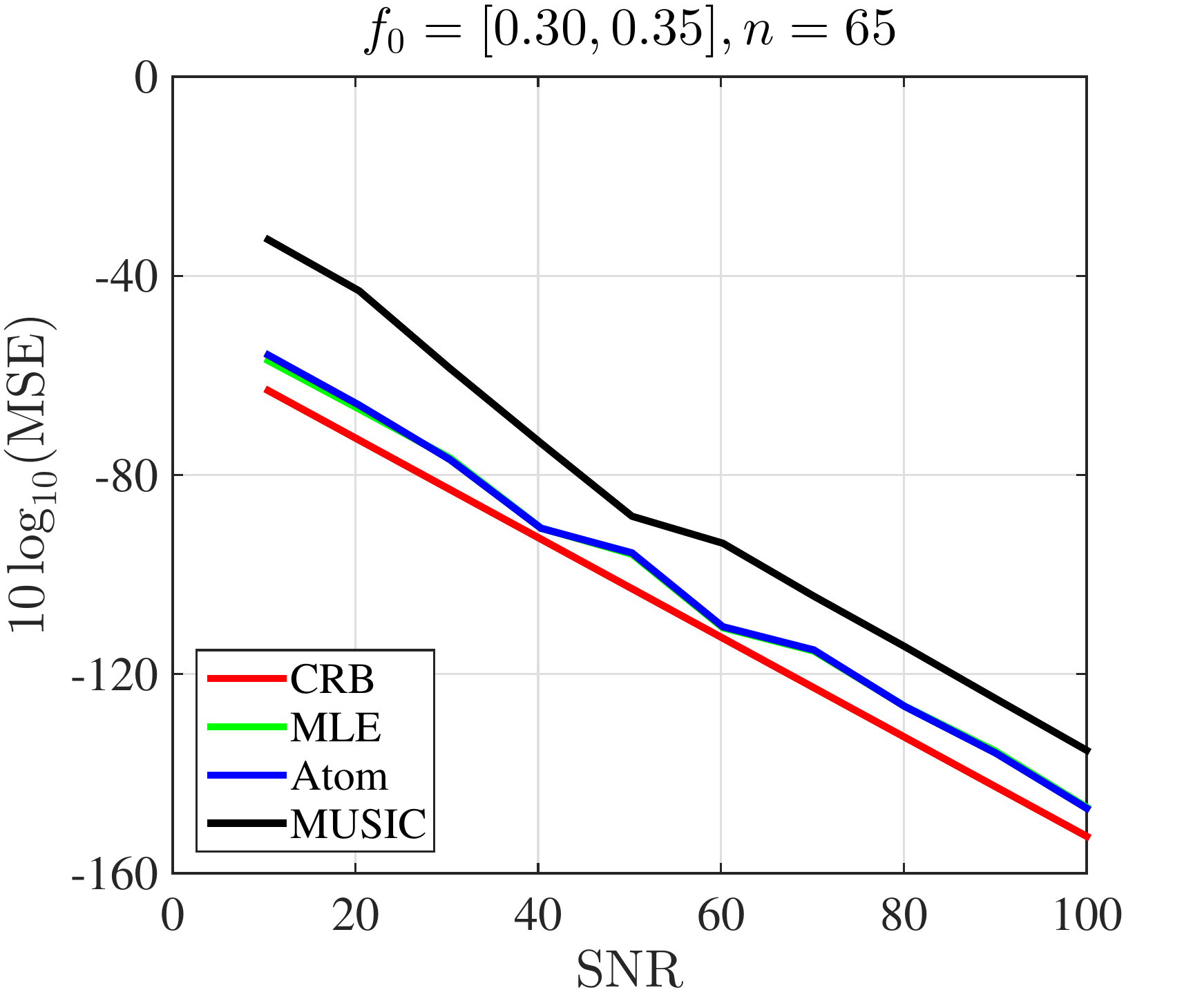}
\includegraphics[width=0.32\textwidth]{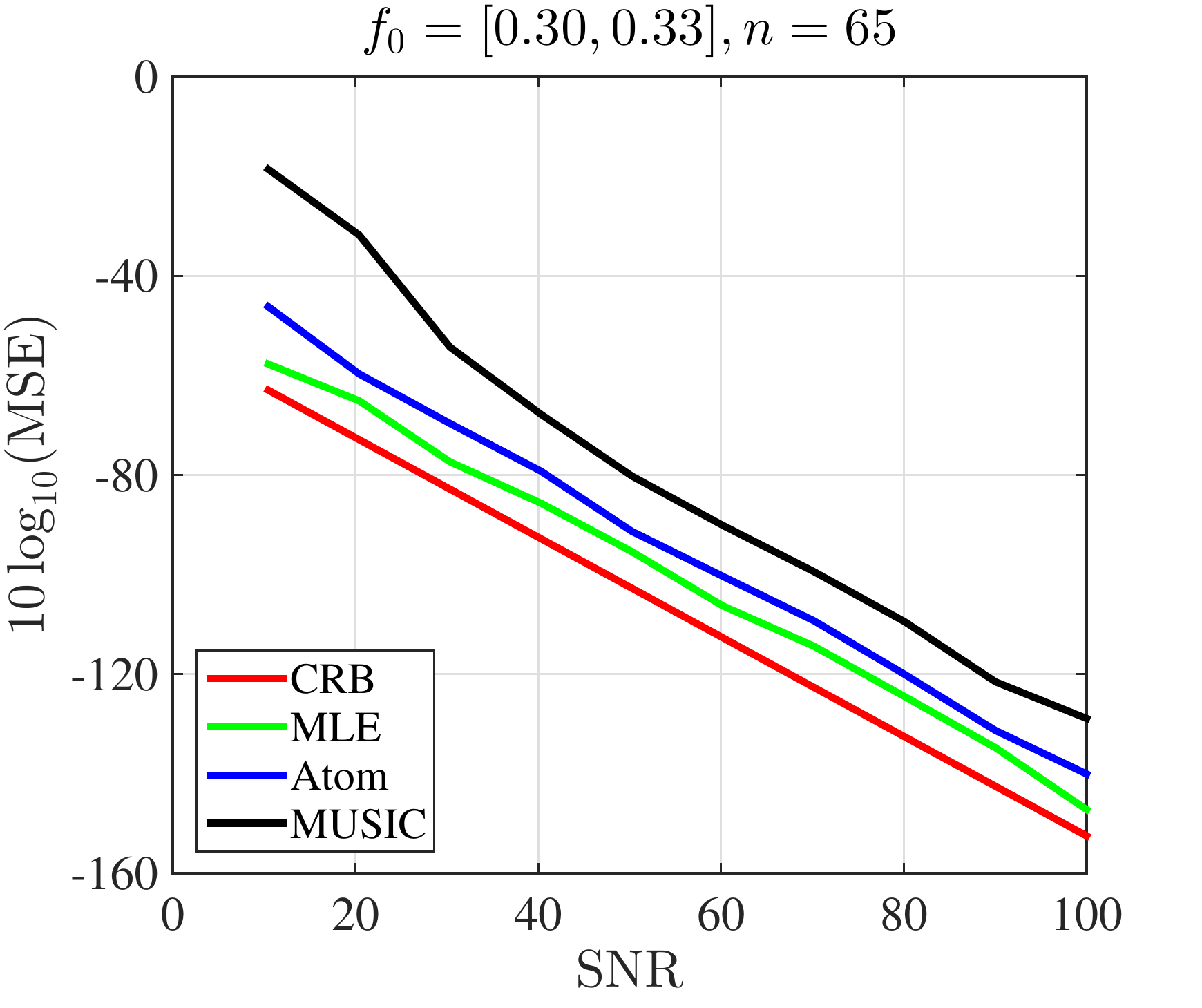}
  \caption{Performance comparison: Atomic norm minimization~\eqref{eqn:primal} (labeled as ``Atom"),  MUSIC,  MLE initialized by the true parameters,  and the CRB.}\label{fig:crb}
\end{figure}

\section{Conclusions}\label{sec:conclusion}

This work considers the problem of approximately estimating the frequencies and coefficients of a superposition of complex sinusoids in white noise. By using a primal-dual witness construction,  we have established theoretical performance guarantees for atomic norm minimization algorithm in line spectral parameter estimation. The obtained error bounds match the Cram\'er-Rao lower bound up to a logarithmic factor. The relationship between resolution (separation of frequencies) and precision or accuracy of the estimator is highlighted. Our analysis also reveals that the atomic norm minimization can be viewed as a convex way to solve a $\ell_1$-norm regularized,  nonlinear and nonconvex least-squares problem to global optimality.

\begin{appendices}
\section{Jackson Kernel}\label{sec:A}
For any integer $M > 0$,  the Jackson kernel,  also known as the squared Fej\'er kernel,  is defined by \cite[Eq. (IV.2)]{Tang:2013fo} or \cite[Eq. (2.3) with $M=f_c/2+1$]{Candes:2014br}
\begin{align}\label{eqn:sec:A:fejer:kernel}
K(f)=\left[ \frac{\sin(\pi Mf)}{M\sin(\pi f)} \right]^4.
\end{align}
The Jackson kernel shows up in the construction of dual polynomials that satisfy the  Boundedness and Interpolation properties. The choice of the Jackson kernel is due to its nice properties as easily seen from its graph: it attains one at the peak,  and quickly decrease to zero. Cand\`{e}s and Fernandez-Granda showed in~\cite{Candes:2014br} that as long as the frequencies composing a signal satisfy certain separation condition,  then a dual polynomial can be constructed as a linear combination of shifted copies of the Jackson kernel and its first-order derivative to certify that the decomposition achieves the signal's atomic norm.

We use $K'(f),  K''(f),  K'''(f)$ to denote respectively the first,  second,  and third order derivatives of the Jackson kernel and more generally $K^{(\ell)}(f)$ the $\ell$th order derivative. We will frequently use the second order derivative of the Jackson kernel evaluated at zero $K''(0)$,  whose value is~\cite[Above Eq. (IV.5)]{Tang:2013fo}
\begin{align*}
K''(0)=-\frac{4\pi^2(M^2-1)}{3}=-\frac{\pi^2(n^2-4)}{3}.
\end{align*}
Here we used the convention that $n=2M.$
Then its absolute value $|K''(0)|$ (denoted by $\tau$) falls into the interval
\begin{align*}
|K''(0)|\in\left[\left(\frac{\pi^2}{3}-\frac{4\pi^2}{3(130)^2}\right)n^2,  \left(\frac{\pi^2}{3}\right)n^2 \right], \quad\text{for } n \geq 130.
\end{align*}
For ease of exposition,  we give an explicit lower bound on $|K''(0)|$ (which is valid for any $n\geq130$):
\begin{align}\label{eqn:sec:A:tau}
\tau:=|K''(0)|\geq \left(\frac{\pi^2}{3}-\frac{4\pi^2}{3(130)^2}\right)n^2\geq 3.289n^2, \quad\text{for }n\geq130.
\end{align}
At a high-level,  the purpose of introducing $\tau=|K''(0)|$ is to normalize the second order derivative of the Jackson kernel to $1$ at $f = 0$.

\subsection{Decomposing the Jackson Kernel}\label{sec:Jackson}
 The Jackson kernel admits the following decomposition~\cite{Tang:2013fo}
\begin{align*}
K(f_2-f_1) & = \left[\frac{\sin(\pi M(f_2-f_1))}{M\sin(\pi(f_2-f_1))}\right]^4
= \a(f_1)^H\mZ\a(f_2)
=\frac{1}{M}\sum_{j=-2M}^{2M}g_M(j)e^{-i2\pi j (f_2-f_1) }, 
\end{align*}
where $M= {n}/{2}$ and $\mZ$ is an $n\times n$ diagonal matrix whose diagonal entries are given by $[\mZ]_{\ell\ell}=\frac{g_M(\ell)}{M}$
with
\begin{align}\label{sec:A:gM}
g_M(\ell)=\frac{1}{M}\sum_{k=\max(\ell-M, -M)}^{\min(\ell+M, M)}\left(1-\bigg|\frac{k}{M}\bigg|\right)
\left(1-\bigg|\frac{\ell-k}{M}\bigg|\right) \geq 0,  \ell=-2M, \ldots, 2M, 
\end{align}
the convolution of two discrete triangle functions scaled by $1/M$. The weighting function $g_M(\ell)$ attains its peak at zero and
\begin{align*}
g_M(0)
&=\frac{1}{M}\sum_{k=-M}^{M}\left(1- \left| \frac{k}{M} \right|\right)^2
=\frac{2}{3} + \frac{2}{M^2}
\stack{\ding{172}}{\leq} \frac{2}{3}+ \frac{2}{65^2}\leq  0.667, 
\end{align*}
where \ding{172} holds  for $M\geq 65$ (or $n\geq 130$) by noting that $2/M^2$ is a decreasing function of $M$.
Using the definition of $\mZ$,  we bound $\|\mZ\|_{\infty, \infty}$ as
\begin{equation}\label{eqn:sec:A:gM}
\begin{aligned}
\|\mZ\|_{\infty, \infty}
&= \max_{-2M\leq j\leq 2M}\frac{ g_M(j)}{M}
= \frac{g_M(0)}{M}
\leq   \frac{0.667}{M}, \quad\text{for~}n\geq 130.
\end{aligned}
\end{equation}

\subsection{Decomposing the Jackson Kernel Matrices}
We frequently use matrices formed by sampling the Jackson kernel and its derivatives at appropriate frequencies. Given a finite set of frequencies $T=\{f_\ell\}_{\ell=1}^k$ (or its vector form $\f\in\R^k$),  we define
\begin{equation}\label{eqn:sec:A:kernelmatrix}
\begin{aligned}
\D_0(\f):&=[K(f_m-f_n)]_{1\leq n\leq k, 1\leq m\leq k}
=\mA(\f)^H\mZ\mA(\f);
\\
\D_1(\f):&=[K'(f_m-f_n)]_{1\leq n\leq k, 1\leq m\leq k}
=\mA(\f)^H\mZ\mA'(\f)=-{\mA'(\f)}^H\mZ\mA(\f);\\
\D_2(\f):&=[K''(f_m-f_n)]_{1\leq n\leq k, 1\leq m\leq k}
=-{\mA'(\f)}^H\mZ\mA'(f)={\mA''(\f)}^H\mZ\mA(\f)=\mA(\f)^H\mZ\mA''(\f), 
\end{aligned}
\end{equation}
where
\begin{align*}
\mA(\f)   := [\a(f_1), \ldots, \a(f_k)], \
\mA'(\f)   := i2\pi\diag(\n)\mA(\f), \
A{''}(\f):= (i2\pi\diag(\n))^2\mA(\f)
\end{align*}
with $\n=[-n,  -n+1, \ldots,  0, \ldots,  n-1,  n]^T$. More generally,  the kernel matrix $\D_\ell(\f):=[K^{(\ell)}(f_m-f_n)]_{1\leq n\leq k, 1\leq m\leq k}$ satisfies the factorization
\begin{align}\label{eqn:sec:A:kernelmatrix:general}
\D_\ell(\f)=(-1)^j\mA^{(j)}(\f)^H\mZ\mA^{(\ell-j)}(\f), \quad\text{for~}j\leq\ell, 
\end{align}
where $\mA^{(\ell)}(\f)$ represents  the $\ell$th order derivative of the matrix $\mA(\f)$:
$$\mA^{(\ell)}(\f)=(i2\pi\diag(\n))^\ell \mA(\f).$$
Similarly,  we define the cross kernel matrices with respect to the frequency pair $(\f^1, \f^2)$ or $(\{f^1_\ell\}, \{f^2_\ell\})$ as
\begin{align*}
\D_\ell(\f^1, \f^2) = [K^{(\ell)}(f_m^2-f_n^1)]_{1\leq n\leq k, 1\leq m\leq k}, \quad\text{for}~\ell=0, 1, 2.
\end{align*}
We can also express $\D_\ell(\f^1, \f^2) $ in factorization forms
\begin{align}\label{eqn:sec:A:kernelmatrix:cross}
\D_\ell(\f^1, \f^2)=(-1)^j\mA^{(j)}(\f^1)^H\mZ\mA^{(\ell-j)}(\f^2),\quad\text{for~}j\leq\ell.
\end{align}

\subsection{Bounding the Jackson Kernel}

The following lemma provides a set of bounds on the $\ell$th derivative of the Jackson kernel for $\ell\in\{0, 1, 2, 3, 4\}$.

\begin{lemma}[Bounds on $|K^{(\ell)}|$] \label{lem:bounds:drivative:K}
For $\ell\in\{0, 1, 2, 3, 4\}$,  let $K^{(\ell)}$ be the $\ell$th derivative of $K$ ($K=K^{(0)}$).
Define $s(f)$ as a symmetric and periodic function with period 1 and $s(f) = \frac{1}{Mf(3-4f^2)}$  for $f\in(0, 1/2]$.
Then for $f\in(0, 1/2]$,  we have
\begin{align*}
|K^{(0)}(f)|&\leq B_0(f):=    s^4(f), \\
|K^{(1)}(f)|&\leq B_1(f):=   2\pi M s^4(f)\left(\frac{3\sqrt3}{8}+ 2s(f)\right), \\
|K^{(2)}(f)|&\leq B_2(f):=  (2 \pi M)^2 s^4(f) \left(1 + \frac{3\sqrt3}{2} s(f)+ 5 s^2(f)\right), \\
|K^{(3)}(f)|&\leq B_3(f):=  (2 \pi M)^3 s^4(f) \left(c_1 + 6 s(f) + \frac{45\sqrt3}{8} s^2(f) + 15s^3(f)\right), \\
|K^{(4)}(f)|&\leq B_4(f):=  (2 \pi M)^4 s^4(f) \left(\frac{5}{2} + c_2 s(f) + 30 s^2(f) + \frac{45\sqrt3}{2}s^3(f) + \frac{105}{2} s^4(f)\right), 
\end{align*}
where
\begin{align*}
c_1&=
\frac{1}{2} \left(\sin \left(2 \tan ^{-1}\left(\sqrt{\frac{1}{5} \left(\sqrt{129}+12\right)}\right)\right)-2 \sin \left(4 \tan ^{-1}\left(\sqrt{\frac{1}{5} \left(\sqrt{129}+12\right)}\right)\right)\right), 
\\
c_2&=
-4 \sin \left(2 \tan ^{-1}\left(\sqrt{\frac{1}{5} \left(\sqrt{129}+12\right)}\right)\right) \left(4 \cos \left(2 \tan ^{-1}\left(\sqrt{\frac{1}{5} \left(\sqrt{129}+12\right)}\right)\right)-1\right).
\end{align*}
Furthermore, 
$B_\ell(f)$ is decreasing in $f$ on $(0, 1/2]$ and
$B_\ell(\Omega-f)+B_\ell(\Omega+f)$   is increasing in $f$ for any positive $\Omega$ such that $\Omega>f$ and $\Omega+f\leq 1/2$.
\end{lemma}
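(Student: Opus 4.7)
The plan is to decompose $K(f) = \psi(f)^4\phi(f)^4$ with $\psi(f) := \sin(\pi M f)$ and $\phi(f) := 1/(M\sin(\pi f))$, and bound each $K^{(\ell)}$ via the Leibniz product rule
\[
K^{(\ell)} = \sum_{k=0}^\ell \binom{\ell}{k}\,(\psi^4)^{(k)}(\phi^4)^{(\ell-k)}.
\]
The starting bound $|\phi(f)| \le s(f)$ reduces to the one-dimensional inequality $\sin(\pi f) \ge f(3-4f^2)$ on $[0,1/2]$, which I would verify by noting that the difference vanishes at both endpoints and analyzing the signs of its successive derivatives to rule out any interior negativity (the inequality is not immediately captured by the Taylor expansion of $\sin$ at $0$, since that bound is not sharp up to $f=1/2$).

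For the $\psi$-factor, the power-reduction identity $\sin^4(\pi M f) = \tfrac18\bigl(3 - 4\cos(2\pi M f) + \cos(4\pi M f)\bigr)$ gives $(\psi^4)^{(k)}(f) = (\pi M)^k\, T_k(\pi M f)$ for explicit bounded trigonometric polynomials $T_k$, and elementary calculus produces their sharp suprema: for $k=1$, $\max|4\sin^3 x\cos x| = 3\sqrt{3}/4$, producing the coefficient $3\sqrt{3}/8$ in $B_1$ after factoring out $2\pi M$; for $k=2$, $\max|2\cos(2x) - 2\cos(4x)| = 4$, producing the leading $1$ in $B_2$. The closed forms $c_1,c_2$ in $B_3,B_4$ come from solving the quartic critical-point equation $\tan^2 x = (12+\sqrt{129})/5$ that appears when maximizing the analogous $T_3, T_4$. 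For the $\phi$-factor, iterating $\phi' = -\pi M\cos(\pi f)\,\phi^2$ writes each $(\phi^4)^{(j)}$ as a polynomial in $\phi$ and $\cos(\pi f)$ with powers of $\pi M$; the key simplification is to use $\cos^2(\pi f) = 1 - \sin^2(\pi f) = 1 - 1/(M\phi)^2$ to absorb the lower-order $\phi$-terms into higher-order ones, so that after applying $|\phi| \le s$ one obtains clean bounds $|(\phi^4)^{(j)}| \le C_j M^j s^{4+j}$. Substituting these into the Leibniz expansion, factoring out $(2\pi M)^\ell s^4$, and collecting by powers of $s$ yields $B_\ell$.

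For the monotonicity statements, set $q(f) := f(3-4f^2) = 3f - 4f^3$ so that $s(f) = 1/(M q(f))$. Since $q'(f) = 3 - 12 f^2 \ge 0$ on $[0,1/2]$, $q$ is strictly increasing and $s$ is strictly decreasing; being a non-negative combination of positive powers of $s$, each $B_\ell$ is strictly decreasing on $(0,1/2]$. The second-order computation
\[
s''(f) = \frac{2 q'(f)^2 - q(f)\, q''(f)}{M\, q(f)^3} = \frac{2(3 - 12 f^2)^2 + 24 f(3f - 4f^3)}{M\, q(f)^3}
\]
has a manifestly positive numerator, so $s$ is convex. The identity $(s^k)'' = k(k-1)\,s^{k-2}(s')^2 + k\,s^{k-1} s''$ shows each $s^k$ is convex (since $s > 0$), hence so is $B_\ell$. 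Convexity together with decreasingness forces $B_\ell'$ to be non-positive and non-decreasing, so $|B_\ell'|$ is non-increasing; consequently $\tfrac{d}{df}[B_\ell(\Omega - f) + B_\ell(\Omega + f)] = B_\ell'(\Omega + f) - B_\ell'(\Omega - f) \ge 0$, which is the desired symmetric monotonicity.

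The main obstacle is the sharp-constant bookkeeping at $\ell = 3,4$: producing the closed forms $c_1, c_2$ requires solving the quartic critical-point equation above, and matching every coefficient of $B_\ell$ exactly relies on tight use of $\cos^2 + \sin^2 = 1$ inside each $(\phi^4)^{(j)}$, otherwise one loses a constant factor at each order. The elementary inequality $\sin(\pi f) \ge f(3-4f^2)$ is also slightly delicate near $f = 1/2$ and needs a global derivative-sign argument rather than a pure Taylor bound.
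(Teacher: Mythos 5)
Your proposal is correct and follows essentially the same route as the paper: the same elementary bound $\sin(\pi f)\ge f(3-4f^2)$ giving $\frac{1}{M|\sin(\pi f)|}\le s(f)$, the same extraction of trigonometric maxima (including $c_1$ and $c_2$ from the critical-point equation in $\cos(2\pi M f)$, equivalently your $\tan^2 x=(12+\sqrt{129})/5$), and the same decreasing-plus-convex argument for the monotonicity of $B_\ell(f)$ and of $B_\ell(\Omega-f)+B_\ell(\Omega+f)$. The only difference is bookkeeping: you organize the computation via the Leibniz rule and a recursion for $(\phi^4)^{(j)}$ using $\sin(\pi f)\phi=1/M$, whereas the paper writes out explicit formulas for $K^{(\ell)}$ and bounds the resulting envelope factors over $[0,1/2]$; both routes yield identical constants.
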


\begin{proof}
We need the following elementary bound on the sine function for $f \in [0,  \frac{1}{2}]$:
\begin{align}
\sin(\pi f) & \geq f(3-4f^2).
\end{align}
Clearly,  a consequence is $\frac{1}{M|\sin(\pi f)|} \leq s(f),  f \in [-\frac{1}{2},  \frac{1}{2}]\backslash \{0\}$. We use this fact together with explicit expressions for $K^{(\ell)}(f)$ to develop upper bounds. 

When $\ell = 0$, 
$$|K(f)| = \left| \frac{\sin(\pi Mf)}{M\sin(\pi f)} \right|^4 \leq s^4(f).$$

When $\ell = 1$, 
\begin{align*}
K^{(1)}(f) = \frac{2\pi M}{(M\sin(\pi f))^4}\frac{1}{M}
\left( -2 \cot (\pi  f) \sin ^4(\pi  f M)+ 2 \sin ^3(\pi  f M) \cos (\pi  f M)M\right)
\end{align*}
implying
$$
|K^{(1)}(f)|  \leq  2\pi M s^4(f)\left(\frac{3\sqrt3}{8}+ 2s(f)\right), 
$$
since
$
\max_f |2\cos(\pi  f M)\sin(\pi  f M)^3|\leq \frac{3 \sqrt{3}}{8}.
$

When $\ell = 2$, 
\begin{align*}
&K^{(2)}(f)  = \frac{(2 \pi M)^2}{(M\sin(\pi f))^4} \frac{1}{M^2}
 \times\\
& \bigg((2 \cos (2 \pi  f)+3) \csc ^2(\pi  f) \sin ^4(\pi  f M)
 -8 \cot (\pi  f) \sin ^3(\pi  f M) \cos (\pi  f M) M
 +
 \sin ^2(\pi  f M) (2 \cos (2 \pi  f M)+1) M^2
 \bigg)
\end{align*}
implying
$$
|K^{(2)}(f)|   \leq (2 \pi M)^2 s^4(f) \left(1 + \frac{3\sqrt3}{2} s(f)+ 5 s^2(f)\right), 
$$
where we used $\max_f |8 \sin ^3(\pi  f M) \cos (\pi  f M)|=\frac{3 \sqrt{3}}{2}$ and
$\max_f |\sin ^2(\pi  f M) (2 \cos (2 \pi  f M)+1)|=1.$

When $\ell = 3$, 
\begin{align*}
K^{(3)}(f) = \frac{(2 \pi M)^3}{(M\sin(\pi f))^4} \frac{1}{M^3}\times
\bigg(
&-(4 \cos (2 \pi  f)+11) \cot (\pi  f) \csc ^2(\pi  f) \sin ^4(\pi  f M)\\
&+6 (2 \cos (2 \pi  f)+3) \csc ^2(\pi  f) \sin ^3(\pi  f M) \cos (\pi  f M)M\\
&-6 \cot (\pi  f) \sin ^2(\pi  f M) (2 \cos (2 \pi  f M)+1)\sin (4 \pi  f M)M^2
-\frac{1}{2} \sin (2 \pi  f M)M^3
\bigg)
\end{align*}
implying
$$
|K^{(3)}(f)|   \leq (2 \pi M)^3 s^4(f) \left(c_1 + 6 s(f) + \frac{45\sqrt3}{8} s^2(f) + 15s^3(f)\right), 
$$
by recognizing the following upper bounds:
\begin{center}
\begin{tabular}{ll} 
	$\max_{f\in[0, 1/2]} |(4 \cos (2 \pi  f)+11) \cos (\pi  f)|=15$, &  
	$\max_f |6 \sin ^2(\pi  f M) (2 \cos (2 \pi  f M)+1)|=6$, 
    \\ 
	$\max_{f\in[0, 1/2]} |6 (2 \cos (2 \pi  f)+3)|=30$	, &  
	$\max_f |\sin (4 \pi  f M)-(1/2) \sin (2 \pi  f M)|= c_1, $
   \\ 
	$\max_f |\sin ^3(\pi  f M) \cos (\pi  f M)|=  3 \sqrt{3}/{16}.$ &
\end{tabular} 
\end{center}

When $\ell=4$, 
\begin{align*}
K^{(4)}(f)=\frac{(2 \pi M)^4}{(M\sin(\pi f))^4} \frac{1}{M^4}\times
\bigg(
&\frac{1}{2} (49 \cos (2 \pi  f)+4 \cos (4 \pi  f)+52) \csc ^4(\pi  f) \sin ^4(\pi  f M)\\
&-8 (4 \cos (2 \pi  f)+11) \cot (\pi  f) \csc ^2(\pi  f) \sin ^3(\pi  f M) \cos (\pi  f M)M\\
&+6 (2 \cos (2 \pi  f)+3) \csc ^2(\pi  f) \sin ^2(\pi  f M) (2 \cos (2 \pi  f M)+1)M^2\\
&-4 \cot (\pi  f) \sin (2 \pi  f M) (4 \cos (2 \pi  f M)-1)M^3\\
&+(2 \cos (4 \pi  f M)-\frac{1}{2} \cos (2 \pi  f M))M^4
\bigg)
\end{align*}
implying
$$
|K^{(4)}(f)|   \leq (2 \pi M)^4 s^4(f) \left(\frac{5}{2} + c_2 s(f) + 30 s^2(f) + \frac{45\sqrt3}{2}s^3(f) + \frac{105}{2} s^4(f)\right), 
$$
which follows from the following upper bounds:
\begin{align*}
\begin{matrix*}[l]
\max_{f\in[0, 1/2]}\frac{1}{2} (49 \cos (2 \pi  f)+4 \cos (4 \pi  f)+52)=105/2, &  
\max_f |\sin ^2(\pi  f M) (2 \cos (2 \pi  f M)+1)|=1, 
\\ 
\max_{f\in[0, 1/2]}|8 (4 \cos (2 \pi  f)+11) \cos (\pi  f)|=120, &  
\max_f |4 \sin (2 \pi  f M) (4 \cos (2 \pi  f M)-1)|=c_2, 
\\ 
\max_f |\sin ^3(\pi  f M) \cos (\pi  f M)|=3\sqrt3/16, &  
\max_f |2 \cos (4 \pi  f M)- {1}/{2} \cos (2 \pi  f M)|=5/2, 
\\ 
\max_{f\in[0, 1/2]} |6 (2 \cos (2 \pi  f)+3)|= 30.&  
\end{matrix*}
\end{align*}
 
Finally,  $s(f)$ is nonnegative and is decreasing in $(0,  1/2]$ since $s'(f)$ is negative on $(0,  1/2)$. Therefore,  the $k$th power $s^k(f)$ is decreasing in $(0,  1/2])$,  which further implies that $B_\ell(f), \ell=0, 1, 2, 3, 4$ is decreasing in $(0,  1/2].$
In addition,  since $s(f)$ is convex in $(0, 1/2]$,  $s^k(f)$ is also convex as a consequence of the composition rule of convex and monotonic functions. Combining the convex and decreasing property of $s^k(f)$ on $(0, 1/2]$ and then applying arguments similar to those in~\cite[Lemma 2.6]{Candes:2014br},  we conclude that
$B_\ell(\Omega-f)+B_\ell(\Omega+f)$ is increasing in $f$ for any positive $\Omega$ such that $\Omega>f$ and $\Omega+f\leq 1/2$.

\end{proof}

\subsection{Bounding the Sums of   the Jackson Kernel }

Without loss of generality,  we assume $0 \in T$ and develop bounds on  $\sum_{f_i\in T\backslash\{0\}}|K^{(\ell)}(f-f_i)|,  \ell\in\{0, 1, 2, 3, 4\}$ when $f$ lives in a neighborhood around $0$. It is easy to verify the following lemma based on the properties of $|K^{(\ell)}(f)|,  \ell=0, 1, 2, 3, 4$ in  Lemma~\ref{lem:bounds:drivative:K}. The proof parallels that of~\cite[Lemma 2.7]{Candes:2014br} and is omitted here.
\begin{lemma}\label{lem:bounds:sum:K}
Suppose $0\in T$ and $f_+$ is the smallest positive frequency in $T$. Let $\Delta:=\Delta(T)\geq\Delta_{\min}$ and $f\in[0, \bar f]$ where $\bar{f} \leq f_+/2$.  Then for $\ell\in\{0, 1, 2, 3, 4\}$, 
\begin{align}
\sum_{f_i\in T\backslash\{0\}}|K^{(\ell)}(f-f_i)|
\leq F_\ell(\Delta,  f)
:=&F^+_\ell(\Delta, f)+F^-_\ell(\Delta, f)\leq F_\ell(\Delta_{\min},  \bar f)
\nn
\end{align}
with
\begin{align*}
F^+_\ell(\Delta, f)&=\max\left\{\max_{\Delta\leq \xi \leq3\Delta_{\min}}|K^{(\ell)}(f-\xi )|, B_\ell(3\Delta_{\min}-f)\right\}+\sum_{j=2}^{\lfloor\frac{1}{2\Delta_{\min}}\rfloor}B_\ell(j\Delta_{\min}-f), \\
F^-_\ell(\Delta, f)&=\max\left\{\max_{\Delta\leq \xi \leq3\Delta_{\min}}|K^{(\ell)}(\xi)|, B_\ell(3\Delta_{\min})\right\}+\sum_{j=2}^{\lfloor\frac{1}{2\Delta_{\min}}\rfloor}B_\ell(j\Delta_{\min}+f).
\end{align*}
$F_\ell(\Delta, f)$ is decreasing in $\Delta$.  When  $\Delta$ is fixed as $\Delta_{\min}$,  $F_\ell(\Delta_{\min}, f)$  is increasing in $f$.
\end{lemma}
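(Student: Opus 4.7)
The plan is to parallel the proof of Lemma 2.7 in~\cite{Candes:2014br} by splitting the sum over $T \setminus \{0\}$ into its positive and negative branches (with respect to $0$ in the wrap-around metric on $\mathbb{T}$) and bounding each branch using the pointwise estimates from Lemma~\ref{lem:bounds:drivative:K}. I would order the positive branch as $0 < f_1^+ < f_2^+ < \cdots$ and the negative branch symmetrically, then use $0 \in T$ together with the separation $\Delta(T) = \Delta \geq \Delta_{\min}$ to conclude that $f_1^\pm$ is at distance at least $\Delta$ from $0$, while for $j \geq 2$ the $j$-th frequency on either side lies at distance at least $\Delta + (j-1)\Delta_{\min} \geq j\Delta_{\min}$ from $0$.

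For each branch, the nearest neighbor $f_1^+$ requires a case split according to whether $f_1^+ \in [\Delta, 3\Delta_{\min}]$ (close case) or $f_1^+ > 3\Delta_{\min}$ (far case). In the close case, $|K^{(\ell)}(f - f_1^+)|$ is bounded directly by the explicit supremum over this interval that appears inside the $\max$ defining $F^+_\ell$. In the far case, the assumption $f \leq \bar f \leq f_1^+/2$ places $f_1^+ - f$ in $(0, 1/2]$, so the decreasing property of $B_\ell$ on $(0, 1/2]$ established in Lemma~\ref{lem:bounds:drivative:K} gives $|K^{(\ell)}(f - f_1^+)| \leq B_\ell(f_1^+ - f) \leq B_\ell(3\Delta_{\min} - f)$, matching the second entry of the $\max$. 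For $j \geq 2$, the distance $f_j^+ - f \geq j\Delta_{\min} - f$ remains in $(0, 1/2]$ as long as $j\Delta_{\min} \leq 1/2$, and the same monotonicity yields $|K^{(\ell)}(f - f_j^+)| \leq B_\ell(j\Delta_{\min} - f)$; summing over $j \in \{2, \ldots, \lfloor 1/(2\Delta_{\min}) \rfloor\}$ produces the tail of $F^+_\ell$. The negative branch is handled identically except that distances now take the form $|f_j^-| + f \geq j\Delta_{\min} + f$, producing $F^-_\ell$ with $B_\ell(j\Delta_{\min} + f)$ in place of $B_\ell(j\Delta_{\min} - f)$, and adding the two branches gives the first inequality.

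For the monotonicity claims, I would observe that $F^\pm_\ell$ depend on $\Delta$ only through the interval $[\Delta, 3\Delta_{\min}]$ over which the supremum is taken; enlarging $\Delta$ shrinks this interval, so $F_\ell(\Delta, f)$ is decreasing in $\Delta$. Setting $\Delta = \Delta_{\min}$, each paired tail term $B_\ell(j\Delta_{\min} - f) + B_\ell(j\Delta_{\min} + f)$ is increasing in $f$ by the final claim of Lemma~\ref{lem:bounds:drivative:K}, and $B_\ell(3\Delta_{\min} - f)$ is increasing in $f$ by monotonicity of $B_\ell$; a brief case analysis handles the bracketed supremum, giving that $F_\ell(\Delta_{\min}, f)$ is increasing in $f$, which combined with the $\Delta$-monotonicity yields $F_\ell(\Delta, f) \leq F_\ell(\Delta_{\min}, \bar f)$. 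I expect the main obstacle to be the careful handling of the close/far split for the first neighbor, in particular verifying that $f_1^+ - f$ stays in the interval $(0, 1/2]$ where $B_\ell$ is controlled and that the supremum $\max_{\Delta \leq \xi \leq 3\Delta_{\min}} |K^{(\ell)}(f - \xi)|$ itself exhibits the correct monotonicity in $f$, neither of which is entirely automatic and both of which rely on the symmetric/convex properties of $s(f)$ underlying Lemma~\ref{lem:bounds:drivative:K}.
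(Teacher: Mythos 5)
Your proposal follows essentially the same route as the paper, which in fact omits the proof entirely and states only that it parallels \cite[Lemma 2.7]{Candes:2014br}: split the sum into the positive and negative branches around $0$, bound the nearest neighbor on each side by the two-case maximum (explicit supremum over $[\Delta,3\Delta_{\min}]$ versus the envelope $B_\ell$), bound the $j$th farther neighbor by $B_\ell(j\Delta_{\min}\mp f)$ using the separation and the decreasing property of $B_\ell$, and derive the monotonicity claims from the shrinking $\xi$-interval in $\Delta$ and the increasing property of $B_\ell(\Omega-f)+B_\ell(\Omega+f)$ from Lemma~\ref{lem:bounds:drivative:K}. The range caveats you flag (keeping $f_1^+-f$ and $j\Delta_{\min}\mp f$ in $(0,1/2]$, and the monotonicity of the bracketed supremum in $f$) are exactly the details the cited argument handles, so no gap in the overall approach.
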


The following lemma provides bounds on $\sum_{f_i\in T}|K^{(\ell)}(f-f_i)|$ for $\ell\in\{0, 1, 2, 3, 4\}$ and is a direct consequence of the decreasing property of $B_\ell(\cdot)$.
\begin{lemma}\label{lem:bounds:sum:K:far}
Suppose $0\in T$,  $f_+$ is the smallest positive frequency in $T$  and $f\in[\underline f,  f_+-\bar f]$.  Then for $\ell\in\{0, 1, 2, 3, 4\}$, 
\begin{align}
\sum_{f_i\in T}|K^{(\ell)}(f-f_i)|
&\leq W_\ell(\underline{f}, \bar{f}):=  \sum_{j=0}^{\lfloor\frac{1}{2\Delta_{\min}}\rfloor} B_\ell(j\Delta_{\min}+\underline{f})+\sum_{j=0}^{\lfloor\frac{1}{2\Delta_{\min}}\rfloor} B_\ell(j\Delta_{\min}+\bar{f}).
\nn
\end{align}
\end{lemma}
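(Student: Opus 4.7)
The plan is to combine the $\Delta_{\min}$-separation of $T$ (inherited from the surrounding context, where the preceding Lemma~\ref{lem:bounds:sum:K} is set up with $\Delta(T) \geq \Delta_{\min}$) with the pointwise bound $|K^{(\ell)}(x)| \leq B_\ell(|x|)$ obtained by combining Lemma~\ref{lem:bounds:drivative:K} with the symmetry and periodicity of $K$, and with the fact that $B_\ell$ is decreasing on $(0, 1/2]$. Unlike Lemma~\ref{lem:bounds:sum:K}, here $f$ is bounded away from $0$ by $\underline{f}$, so the contribution of $f_i = 0$ does not need to be treated separately and can be folded into the sum as the $j=0$ term on the counter-clockwise side.

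Concretely, I would fix $f \in [\underline{f}, f_+ - \bar{f}]$ and partition $T$ along the torus into two arcs relative to $f$. On the clockwise arc beginning at $f_+$, the $j$-th element lies at wrap-around distance at least $\bar{f} + j\Delta_{\min}$ from $f$ (the base $\bar{f}$ coming from $f_+ - f \geq \bar{f}$ and each successive gap contributing at least $\Delta_{\min}$). On the counter-clockwise arc beginning at $0$, the $j$-th element lies at distance at least $\underline{f} + j\Delta_{\min}$ (the base $\underline{f}$ coming from $f - 0 = f \geq \underline{f}$). Applying $|K^{(\ell)}(f - f_i)| \leq B_\ell(|f-f_i|)$ and invoking the monotonicity of $B_\ell$ on $(0, 1/2]$, each term is majorized by $B_\ell(\bar{f} + j\Delta_{\min})$ on the clockwise side and $B_\ell(\underline{f} + j\Delta_{\min})$ on the counter-clockwise side. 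Summing over $j = 0, 1, \ldots, \lfloor 1/(2\Delta_{\min}) \rfloor$ on each arc produces exactly $W_\ell(\underline{f}, \bar{f})$.

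The main subtlety is the torus bookkeeping: one needs to verify that every element of $T$ is counted at most once, and that for the range of $j$ being summed, the quantities $\underline{f} + j\Delta_{\min}$ and $\bar{f} + j\Delta_{\min}$ remain in $(0, 1/2]$ so that the monotonicity of $B_\ell$ applies directly without wrapping. The cutoff $\lfloor 1/(2\Delta_{\min}) \rfloor$ is exactly what prevents the two arcs from overlapping, because each arc has length at most $1/2$ and frequencies are spaced by at least $\Delta_{\min}$. Once this cutoff is adopted, no double counting can occur, and the proof reduces to a direct arithmetic enumeration; no analytic ingredients beyond Lemma~\ref{lem:bounds:drivative:K} are required, which explains why the authors present the result as an immediate corollary.
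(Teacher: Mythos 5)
Your argument is correct and is precisely the intended one: the paper gives no explicit proof, presenting the lemma as a direct consequence of the monotonicity of $B_\ell$ in the spirit of the omitted proof of Lemma~\ref{lem:bounds:sum:K} (following \cite[Lemma 2.7]{Candes:2014br}), and your two-arc enumeration with base offsets $\underline{f}$ and $\bar{f}$ plus $j\Delta_{\min}$ increments, together with the observation that any index $j$ actually occupied by a frequency forces the offset to lie in $(0,1/2]$ so monotonicity applies, fills in exactly that omitted bookkeeping.
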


\subsection{Numerical Bounds on the Jackson Kernel Sums}\label{sec:numerical:bound}

Suppose $0\in T$. Then by Lemma~\ref{lem:bounds:sum:K} we can   bound $\sum_{f_j\in T\backslash\{0\}}|K^{(\ell)}(f-f_j)|$ for $f\in[0, \bar f]$ as:
$$ \sum_{f_j\in T\backslash\{0\}}|K^{(\ell)}(f-f_j)|\leq F_\ell(\Delta_{\min}, \bar f).$$
We list the values of $F_\ell(\Delta_{\min}, \bar f)$ for different $\bar f$ in Table~\ref{tbl:bounds:1}.

We can use Lemma~\ref{lem:bounds:sum:K:far} to bound $\sum_{f_j\in T}|K^{(\ell)}(f-f_j)|$ for $f\in[\underline{f},   f_+-\bar{f}]$ as
$$
\sum_{f_j\in T} |K^{(\ell)}(f-f_j)| \leq W_\ell(\underline{f}, \bar{f}).
$$
We list the values of $W_\ell(\underline{f}, \bar{f})$ for different $\underline{f}, \bar{f}$ in Table~\ref{tbl:bounds:3}.

Finally,  we  list
several numerical upper bounds on $|K^{(\ell)}(f)|$ and $K''(f)$ over different intervals in Table~\ref{tbl:bounds:2},  which directly follow from
\cite[equations (2.21)-(2.24)]{Candes:2014br} and numerical computations.

 \begin{table*}[h!t]
\renewcommand{\arraystretch}{1.3}
\label{tbl:bounds:1}
\centering
\begin{tabular}{|l||c|c|c|c|c|}
\hline
$f$ &$F_{0}(2.5/n, f)$ &$F_{1}(2.5/n, f)$ &$F_{2}(2.5/n, f)$ &$F_{3}(2.5/n, f)$ & $F_{4}(2.5/n, f)$
\\
\hline
0 &0.00755 &$0.01236n$ &$0.05610n^{2}$ & $0.28687n^{3}$ & $1.48634n^{4}$ \\ 
$0.002/n$ &0.00755 &$0.01236n$ &$0.05610n^{2}$ & $0.28687n^{3}$ & $1.48634n^{4}$ \\ 
$0.24/n$ &0.00757 &$0.01241n$ &$0.05637n^{2}$ & $0.28838n^{3}$ & $1.67097n^{4}$ \\ 
$0.2404/n$ &0.00757 &$0.01241n$ &$0.05637n^{2}$ & $0.28838n^{3}$ & $1.67100n^{4}$ \\ 
$0.75/n$ &0.00772 &$0.01450n$ & $0.12639n^{2}$ & $1.07987n^{3}$ & $6.57069n^{4}$ \\ 
$0.7504/n$ &0.00772 &$0.01454n$ &$0.12675n^{2}$ & $1.08211n^{3}$ & $6.57595n^{4}$
\\
\hline
\end{tabular}
\caption{Numerical upper bounds on $F_\ell(2.5/n, f)$.  }
\end{table*}

 \begin{table}[h!t]
\renewcommand{\arraystretch}{1.3}
\label{tbl:bounds:3}
\centering
\begin{tabular}{|l|l||c|c|c|}
\hline
$f_{1}$ &$f_{2}$ &$W_{0}(f_{1}, f_{2})$ &$W_{1}(f_{1}, f_{2})$ &$W_{2}(f_{1}, f_{2})$
\\
\hline
$0.7496/n$ &$1.25/n$ &0.71059 &$5.2265n$ &$48.0330n^{2}$ \\ 
$0.75/n$ &$1.25/n$ &0.70859 &$5.2084n$ & $47.8388n^{2}$
\\
\hline
\end{tabular}
\caption{Numerical upper bounds on $W_\ell(f_1, f_2).$}
\end{table}

 \begin{table*}[h!t]
\renewcommand{\arraystretch}{1.3}
\label{tbl:bounds:2}
\centering
\begin{tabular}{|l||c|c|c|c|c|c|}
\hline
$f$ &$|K(f)|$ & $|K'(f)|$ &$|K''(f)|$ &$|K'''(f)|$ &$|K''''(f)|$ &$K''(f)$
\\
\hline
$[0, 0.002/n]$ & 1 & $0.00658n$ &$3.290n^{2}$ &$0.0649394n^{3}$ & & \\ 
$[0, 0.24/n]$ & 1 & ${0.789569}n$ &$3.290n^{2}$ &$7.79273n^{3}$ & &$-2.35084 n^{2}$ \\ 
$[0, 0.2404/n]$ & 1 & ${0.790885}n$ &$3.290n^{2}$ &$7.80572 n^{3}$ & ${{29.2227}}n^{4}$ & \\ 
$[0.2396/n, 0.7504/n]$ & {0.90951} & $2.46872n$ &$3.290n^{2}$ & & &
\\
\hline
\end{tabular}
\caption{Numerical upper bounds on $|K^{(\ell)}(f)|$ and $K''(f)$.}
\end{table*}


\subsection{Controlling the Jackson Kernel Matrices}\label{sec:A:5}
In this section,  we derive several consequences of the joint frequency-coefficient vector $\btheta = (\f, \u, \v)$ living in the neighborhood $\N^\star$ of the true joint frequency-coefficient vector $\btheta^\star=(\f^\star, \u^\star, \v^\star)$. Recall that $\N^\star$ contains all $\btheta$ that is close to $\btheta^\star$ in the $\ell_{\hinfty}$ norm:
\begin{align}\label{eqn:theta:near:theta}
\N^\star = \{\btheta: \|\btheta-\btheta^\star\|_{\hinfty}\leq {X^\star\gamma_0}/{\sqrt 2}\}.
\end{align}
Recall that the weighted $\ell_\infty$ norm $\|\cdot\|_{\hinfty}$ is defined by $\|(\f, \u, \v)\|_\hinfty:= \|(\mS\f, \u, \v)\|_{{\infty}}$ with $\mS:=\sqrt{\tau}\diag(|\c^\star|)$.
We remark that all the results in this section still hold for the bigger neighborhood $\hat{\N}$ defined by replacing $X^\star$ with $\hat{X}=X^\star+35.2$. Indeed,  for the results to hold,  the key requirement on $\btheta$ is $\|\f-\f^\star\|_{\infty}\leq0.002/n$. This condition holds for both regions because as we will show later
$$\|\f-\f^\star\|_\infty\leq
\begin{cases}
0.4X^\star\gamma &\text{for }\btheta\in\N^\star, \\
0.4\hat{X}\gamma &\text{for }\btheta\in\hat{\N}.
\end{cases}$$
Invoking the SNR condition~\eqref{eqn:snr},  we conclude that the two upper bounds are much smaller than $0.002/n$ in both cases.

Our first set of results bound the distances between the parameters in   $\btheta^\star$  and  $\btheta$:
for each $j = 1,  \ldots,  k$
\begin{equation}
\begin{aligned}
\frac{|c_j-c_j^\star|}{|c^\star_j|}
&\stack{\ding{172}}{\leq}  X^\star  \gamma, 
\\
\frac{|c_j|}{|c_j^\star|}
&\stack{\ding{173}}{\leq}  1+X^\star  \gamma, 
\\
\left|\left( {|c_j|}/{|c_j^\star|}\right)^2-1\right|&\stack{\ding{174}}{\leq}  X^\star   \gamma(2+X^\star   \gamma), 
\\
|f_j-f_j^\star|
& \stack{\ding{175}}\leq  {X^\star   \gamma}/{ \sqrt{2{ \tau }}}\stack{\ding{176}}{\leq} 0.4X^\star   \gamma/n.
\label{eqn:parameter:close}
\end{aligned}
\end{equation}
For \ding{172} to hold,  first note  $\|\u-\u^\star\|_\infty\leq X^\star\gamma_0/\sqrt{2}$ and
$\|\v-\v^\star\|_\infty\leq X^\star\gamma_0/\sqrt{2}$ by~\eqref{eqn:theta:near:theta}. Also note
\begin{align*}
\|\c-\c^\star\|_\infty^2
&=\max_\ell  [(u_\ell-u^\star_\ell)^2+(v_\ell-v_\ell^\star)^2]\\
&\leq \max_\ell (u_\ell-u^\star_\ell)^2+ \max_\ell(v_\ell-v_\ell^\star)^2\\
&=\|\u-\u^\star\|_\infty^2+\|\v-\v^\star\|_\infty^2
\leq 2(X^\star\gamma_0/\sqrt{2})^2=(X^\star\gamma_0)^2.
\end{align*}
Finally \ding{172} follows since $\max_j  {|c_j-c_j^\star|}/{|c^\star_j|}\leq\|\c-\c^\star\|_\infty/ c^\star_{\min}$ and $\gamma=\gamma_0/{c_{\min}^\star}$. After we show \ding{172}, 
\ding{173} follows from  $|c_j|/|c_j^\star|=|c_j-c_j^\star+c_j^\star|/|c_j^\star| $ and  the triangle inequality. \ding{174} follows from
$|( |c_j|/|c_j^\star|)^2-1|=|(|c_j|/|c_j^\star|+1)(|c_j|/|c_j^\star|-1)|$.
\ding{175} follows from the definition of the $\ell_\hinfty$ norm:
\begin{align*}
& \|S(\f-\f^\star)\|_\infty\leq  {X^\star\gamma_0}/{\sqrt2}\\
\Longrightarrow&\|\sqrt{\tau}\diag(|\c^\star|)(\f-\f^\star)\|_\infty\leq  {X^\star\gamma_0}/{\sqrt2}\\
\Longrightarrow& |f_j-f^\star_j|  \leq  {X^\star\gamma_0/|c_j^\star|}/\sqrt{2 \tau}, \ \forall j\\
\Longrightarrow& |f_j-f^\star_j|  \leq  {X^\star\gamma_0/c^\star_{\min}}/\sqrt{2 \tau}= {X^\star\gamma}/\sqrt{2 \tau}, \ \forall j.
\end{align*}
Finally \ding{176} holds due to the fact that $\tau\geq 3.289n^2$ for $n\geq130$ by~\eqref{eqn:sec:A:tau} and hence
$$1/\sqrt{2\tau}\leq 1/ \sqrt{2 (3.289)}/n\leq 0.3899/n\leq 0.4/n.$$

Next,  we present the second class of results that quantify the well-conditionedness of the Jackson kernel matrices $\D_\ell(\f), \ \ell=0, 1, 2$.
Such results are instrumental to dual certificate construction~\cite{Candes:2014br}. Since the minimal separation $\Delta(T)$ is a key quantity affecting the well-conditionedness,  we first show that
those frequencies $T^\lambda:=\{f^\lambda_\ell\}$ and $\hat{T}:=\{\hat{f}_\ell\}$  in Lemma~\ref{lem:fix1} and Lemma~\ref{lem:fix2}  satisfy a  separation condition,  provided $T^\star = \{f^\star_\ell\}$ satisfy a slightly stronger separation condition. The proof is given in Appendix~\ref{sec:proof:resolution}.

\begin{lemma}\label{lem:resolution}
 Let the separation condition~\eqref{eqn:separation} and the SNR condition~\eqref{eqn:snr} hold. Then both the frequencies $T^\lambda=\{f^\lambda_\ell\}$ returned by the first fixed point map~\eqref{eqn:Map:1}  and the frequencies $\hat{T}=\{\hat{f}_\ell\}$ generated by the second fixed point map~\eqref{eqn:Map:2} have minimal separations at least $2.5/n$. Furthermore,  the intermediate frequencies defined by $\tilde{T}=\{\tilde{f}_\ell\}_{\ell=1}^k$ with each $\tilde{f}_\ell \in [f_\ell^\star, f_\ell^\lambda]$ or $[f_\ell^\lambda,  f_\ell^\star]$ and the second intermediate frequencies $\tilde{T}^\lambda:=\{\tilde{f}_\ell\}_{\ell=1}^k$ with each $\tilde{f}_\ell\in[f_\ell^\lambda, \hat{f}_\ell]$ or $[\hat{f}_\ell,  f_\ell^\lambda]$ also have minimal separations at least $2.5/n$:
\begin{align*}
\min\{\Delta(T^\lambda),  \Delta(\tilde{T}),  \Delta(\hat{T}),  \Delta(\tilde{T}^\lambda)\} \geq 2.5/n.
\end{align*}
\end{lemma}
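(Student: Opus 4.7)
The plan is to show that all four frequency sets are close enough to $T^\star$ (in the $\ell_\infty$ sense on frequencies) that the original separation $\Delta(T^\star) \geq 2.5009/n$ degrades by at most $0.0009/n$ to $2.5/n$. The only ingredients are the two neighborhoods $\N^\star$ and $\N^\lambda$ from Lemmas~\ref{lem:fix1} and~\ref{lem:fix2}, the weighted-norm computation \eqref{eqn:parameter:close}, and the SNR condition \eqref{eqn:snr}.

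First, since $\btheta^\lambda \in \N^\star$, line \ding{176} of \eqref{eqn:parameter:close} (which is entirely a consequence of $\|\btheta-\btheta^\star\|_\hinfty\le X^\star\gamma_0/\sqrt{2}$ and $\tau\ge 3.289n^2$) gives $|f_j^\lambda - f_j^\star| \le 0.4\,X^\star\gamma/n$ for every $j$. An identical computation applied to $\hat{\btheta}\in\N^\lambda$, where the radius in the $\ell_\hinfty$ norm is $35.2\,\gamma_0/\sqrt{2}$ and the weighting matrix $\mS$ depends on $|\c^\star|$ (not $|\c^\lambda|$), yields $|\hat{f}_j - f_j^\lambda| \le 0.4\cdot 35.2\,\gamma/n = 14.08\,\gamma/n$. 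So in both cases the displacement of an individual frequency from $f_j^\star$ is $O(\gamma/n)$.

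Next, I would invoke the SNR condition \eqref{eqn:snr}. From $B^\star/X^\star \le 10^{-4}$ and $X^\star B^\star\gamma \le 10^{-3}$ one deduces $X^\star\gamma \le 10^{-3}/B^\star \le 10^{-3}$ and $\gamma \le 10^{-7}/B^{\star 2} \le 10^{-7}$. Consequently, for any $i\ne j$,
\begin{align*}
|f_i^\lambda - f_j^\lambda|
&\ge |f_i^\star - f_j^\star| - |f_i^\lambda-f_i^\star| - |f_j^\lambda-f_j^\star|
\ge \frac{2.5009}{n} - \frac{0.8\,X^\star\gamma}{n}
\ge \frac{2.5009 - 8\cdot 10^{-4}}{n} \ge \frac{2.5}{n},\\
|\hat{f}_i - \hat{f}_j|
&\ge |f_i^\star - f_j^\star| - 2\bigl(0.4\,X^\star\gamma/n\bigr) - 2\bigl(14.08\,\gamma/n\bigr)
\ge \frac{2.5009 - 8\cdot 10^{-4} - 2.816\cdot 10^{-6}}{n} \ge \frac{2.5}{n}.
\end{align*}
This proves $\Delta(T^\lambda)\ge 2.5/n$ and $\Delta(\hat T)\ge 2.5/n$.

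Finally, for the intermediate sets, I observe that if $\tilde f_\ell$ lies on the segment between $f_\ell^\star$ and $f_\ell^\lambda$, then $|\tilde f_\ell - f_\ell^\star|\le |f_\ell^\lambda - f_\ell^\star|\le 0.4\,X^\star\gamma/n$, so the triangle inequality argument for $T^\lambda$ transfers verbatim to $\tilde T$; similarly if $\tilde f_\ell$ lies between $f_\ell^\lambda$ and $\hat f_\ell$, then $|\tilde f_\ell - f_\ell^\star|\le |f_\ell^\lambda-f_\ell^\star| + |\hat f_\ell - f_\ell^\lambda|\le 0.4\,X^\star\gamma/n + 14.08\,\gamma/n$, matching the bound used for $\hat T$. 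The main (mild) obstacle is purely bookkeeping: making sure I use the correct weighting matrix $\mS = \sqrt{\tau}\diag(|\c^\star|)$ in both neighborhoods (the remark preceding \eqref{eqn:parameter:close} justifies this), and that the SNR conditions are strong enough to absorb both the $X^\star\gamma$ and $35.2\,\gamma$ deviations within the slack $0.0009/n$ between $2.5009/n$ and $2.5/n$. As the numerics above show, the slack is comfortably larger than needed.
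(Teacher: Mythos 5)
Your proposal is correct and follows essentially the same route as the paper: convert the $\|\cdot\|_{\hinfty}$ radii of $\N^\star$ and $\N^\lambda$ into per-frequency displacement bounds via \eqref{eqn:parameter:close}, apply the triangle inequality to $\Delta(T^\star)$ (the paper chains $\Delta(\hat T)$ through $\Delta(T^\lambda)$ instead of going back to $\f^\star$, a cosmetic difference), and absorb the $0.8X^\star\gamma$ and $28.16\gamma$ perturbations into the $0.0009/n$ slack using \eqref{eqn:snr}. The handling of the intermediate sets $\tilde T$ and $\tilde T^\lambda$ by noting that a point on the segment is no farther than the endpoint is exactly the paper's argument as well.
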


Now we are ready to  provide numerical bounds related to the well-conditionedness of the Jackson kernel matrices $\D_\ell(\f), \ \ell=0, 1, 2$:
\begin{equation}
\begin{aligned}
\|\eye-\D_0(\f)\|_{\infty, \infty}
&\stack{\ding{172}}{\leq}  F_0(2.5/n, 0)\stack{\ding{175}}{\leq}  0.00755, 
\\
\left\| \D_1(\f)/\sqrt{{\tau}}\right\|_{\infty, \infty}
&\stack{\ding{173}}{\leq}    {F_1(2.5/n, 0)}/\sqrt{{\tau}}\stack{\ding{176}}{\leq}0.01236n/\sqrt{\tau}\leq 0.00682, 
\\
\left\|\eye-\left(- \D_2(\f)/\tau\right)\right\|_{\infty, \infty}
&\stack{\ding{174}}{\leq}   {F_2(2.5/n, 0)}/{\tau}\stack{\ding{177}}{\leq}  0.05610n^2/\tau\leq  0.0171, 
\label{eqn:kernel:matrix:small:norm}
\end{aligned}
\end{equation}
where \ding{172},  \ding{173} and \ding{174} follow because the diagonal entries of these kernel matrices are given by $[\D_0(\f, \f)]_{\ell, \ell}=K(0)=1$,  $[\D_1(\f, \f)]_{\ell, \ell}=K'(0)=0$ and $[\D_2(\f, \f)]_{\ell, \ell}=K''(0)=-\tau$~\cite[Section IV.A]{Tang:2013fo}. Hence,  it suffices to compute
$\sum_{f_i\in T\backslash\{\zeta\}}|K^{(\ell)}(\zeta-f_i)|$ for $\zeta\in T$ which can be bounded by $F_\ell(2.5/n, 0)$ according to Lemma~\ref{lem:bounds:sum:K} since $\Delta(T)\geq 2.5/n$
by Lemma~\ref{lem:resolution}. The inequalities \ding{175},  \ding{176} and \ding{177} follow from the upper bounds on $F_\ell(2.5/n, 0)$ in Table~\ref{tbl:bounds:1} and the fact that $\tau\geq 3.289n^2$ for $n\geq130$ by~\eqref{eqn:sec:A:tau}.

To control the $\ell_{\infty,  \infty}$ distance between two kernel matrices,  say $\D_0(\f)$ and $D_0(\f,  f^\star)$,  we  apply the mean value theorem and Lemma~\ref{lem:bounds:sum:K}:
\begin{equation}
\begin{aligned}
\|\D_0(\f)-\D_0(\f, \f^\star)\|_{\infty,  \infty}
&\stack{\ding{172}}{=} \|\D_0(f_1, \f)-\D_0(f_1, \f^\star)\|_1
\\
&\leq \sum_\ell |K(f_\ell-f_1)-K(f^\star_\ell-f_1)| \\
&\stack{\ding{173}}{=}    \sum_\ell |K'(\tilde{f}_\ell-f_1)(f_\ell-f^\star_\ell)|
\\
&\leq (|K'(\tilde{f}_1-f_1)|+\sum_{\ell\neq 1} |K'(\tilde{f}_\ell-f_1)| )\|\f-\f^\star\|_\infty
\\
&\stack{\ding{174}}{\leq}  (F_1\left( {2.5}/{n},  {0.002}/{n}\right)+\max_{f\in[0, 0.002/n]}|K'(f)|) \|\f-\f^\star\|_\infty
\\
&\stack{\ding{175}}{\leq} (0.01236n+0.00658n)(0.4X^\star\gamma/n)
= 0.00758X^\star\gamma, 
 \label{eqn:bound:diff:kernel:c}
\end{aligned}
\end{equation}
where \ding{172} follows since by rearranging indices if necessary,  we can assume without loss of generality that the maximum absolute row sum of $\D_0(\f)-\D_0(\f, \f^\star)$ happens at the first row; \ding{173} holds because we applied the mean value theorem for some $\tilde{f}_\ell$ between $f_\ell$ and $f_\ell^\star$; \ding{174} follows from the monotonic property of $F_\ell(2.5/n, f)$ in Lemma~\ref{lem:bounds:sum:K} by taking into account that  $\Delta(\tilde T)\geq 2.5/n$ (by Lemma~\ref{lem:resolution})  and $\|\tilde \f-\f\|_\infty\leq\|\f^\star-\f\|_\infty\leq {0.4X^\star\gamma}/{n}\leq {0.002}/{n}$. \ding{175} follows from the upper bounds on $F_1(2.5/n, 0.002/n)$ in Table~\ref{tbl:bounds:1} and $\max_{f\in[0, 0.002/n]}|K'(f)|$ in Table~\ref{tbl:bounds:2}.

Applying the similar arguments as the step \ding{174},  we can get a more general result as follows 
\begin{lemma}\label{lem:control:middle}
Let an arbitrary cluster of points $T:=\{{f_j}\}$ satisfy the separation condition of $\Delta(T)\geq 2.5/n$.
Assume $\underline{f}\leq|f-{f}_r|\leq \bar{f}$ for an arbitrary $f_r\in T$. Then, 
\begin{align}
\sum_j |K^{(\ell)}({f}_j-f)|\leq F_\ell(2.5/n, \bar{f})+\max_{f\in[\underline{f}, \bar{f}]}|K^{(\ell)}(f)|. \label{eqn:rmk:1}
\end{align}
\end{lemma}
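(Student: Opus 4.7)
The plan is to reduce this to Lemma~\ref{lem:bounds:sum:K} by splitting off the contribution from the reference point $f_r$. Because the Jackson kernel is translation-invariant in the argument and its $\ell$th derivative satisfies $|K^{(\ell)}(-g)|=|K^{(\ell)}(g)|$ (since $K^{(\ell)}$ is either even or odd in $g$), all the quantities involved depend only on the relative locations of $\{f_j\}$ and $f$. I would therefore begin by shifting the frequencies so that $f_r = 0$. After this shift, the hypothesis $\underline f \le |f-f_r| \le \bar f$ becomes $\underline f \le |f| \le \bar f$, and by the symmetry of $|K^{(\ell)}|$ I may further assume $f\in[\underline f,\bar f]$ without loss of generality.

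Next, I would decompose the sum as
\begin{align*}
\sum_{j}|K^{(\ell)}(f_j-f)|
\;=\;|K^{(\ell)}(0-f)|\;+\;\sum_{f_j\in T\setminus\{0\}}|K^{(\ell)}(f_j-f)|.
\end{align*}
The first term is controlled directly: since $|K^{(\ell)}(-f)|=|K^{(\ell)}(f)|$ and $f\in[\underline f,\bar f]$, we have $|K^{(\ell)}(0-f)|\le \max_{g\in[\underline f,\bar f]}|K^{(\ell)}(g)|$, which matches the second term on the right-hand side of the claimed bound. For the remaining sum, the shifted set $T\setminus\{0\}$ satisfies $\Delta(T\setminus\{0\})\ge \Delta(T)\ge 2.5/n$ and contains only points at distance at least $2.5/n$ from $0$, so Lemma~\ref{lem:bounds:sum:K} applies with $\Delta_{\min}=2.5/n$ and yields the bound $F_\ell(2.5/n,\bar f)$ uniformly for $f\in[0,\bar f]$, giving the first term on the right-hand side of the claim.

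The main (very minor) obstacle is purely bookkeeping: one must verify that the hypotheses of Lemma~\ref{lem:bounds:sum:K} are satisfied after the shift, which is essentially immediate from the separation $\Delta(T)\ge 2.5/n$ and the fact that the monotonicity statements in that lemma allow $\bar f$ to be used as the uniform upper envelope. Combining the two pieces yields
\begin{align*}
\sum_{j}|K^{(\ell)}(f_j-f)| \;\le\; F_\ell(2.5/n,\bar f)+\max_{g\in[\underline f,\bar f]}|K^{(\ell)}(g)|,
\end{align*}
which is exactly \eqref{eqn:rmk:1}. The argument is essentially a clean abstraction of the reasoning already used in step \ding{174} of the derivation leading to \eqref{eqn:bound:diff:kernel:c}, where the term at the reference frequency was separated from the rest and bounded by $\max_{f\in[0,0.002/n]}|K'(f)|$ while the remainder was controlled by $F_1(2.5/n,0.002/n)$.
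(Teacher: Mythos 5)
Your proposal is correct and follows essentially the same route as the paper, whose proof of this lemma is exactly the step-\ding{174} argument generalized: isolate the term at the reference frequency $f_r$, bound it by $\max_{g\in[\underline{f},\bar{f}]}|K^{(\ell)}(g)|$, and bound the remaining sum by $F_\ell(2.5/n,\bar{f})$ via Lemma~\ref{lem:bounds:sum:K} and its monotonicity in $f$. Note only that, as in the paper's own uses of the lemma (where $\bar f\leq 0.7504/n$), the application of Lemma~\ref{lem:bounds:sum:K} tacitly requires $\bar f$ not to exceed half the distance from $f_r$ to its nearest neighbor, a condition satisfied in every invocation but not literally implied by $\Delta(T)\geq 2.5/n$ alone.
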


To control $\|\D_\ell(\f, \f^\star)-\D_\ell(\f)\|_{\infty,  \infty}$ in a similar manner for $\ell = 1,  2$,  we note that $\|\btheta-\btheta^\star\|_{\hinfty}\leq  {X^\star\gamma_0}/{\sqrt 2}$ and both $T$ and $\tilde{T}$ are well-separated: $\Delta(T)\geq 2.5/n$ and $\Delta(\tilde{T})\geq2.5/n$ with $\tilde{T}$ composed of certain ``middle" frequencies  $\tilde{f}_\ell\in[f_\ell, f_\ell^\star]$ or $[f_\ell^\star,  f_\ell]$.
Then using Lemma~\ref{lem:control:middle},  we upperbound $\|\D_\ell(\f, \f^\star)-\D_\ell(\f)\|_{\infty,  \infty}$  as follows 
\begin{equation}
\begin{aligned}
\frac{1}{\sqrt{\tau}}\|\D_1(\f, \f^\star)-\D_1(\f)\|_{\infty,  \infty}
\stack{\ding{172}}{\leq} &   {1}/{\sqrt{\tau}}\left(F_2( {2.5}/{n},  {0.002}/{n}\right)+\max_{f\in[0, 0.002/n]}|K''(f)|) \|\f-\f^\star\|_\infty
\\
\stack{\ding{173}}\leq &   ({1}/{\sqrt{3.289n^2}})(0.05610n^2+3.290n^2)(0.4X^\star\gamma/n)
\leq 0.73802 X^\star\gamma, 
\label{eqn:bound:diff:kernel:b0}
\end{aligned}
\end{equation}
where \ding{172}  follows by Lemma~\ref{lem:control:middle} and \ding{173} follows from the fact that $\tau\geq 3.289n^2$ for $n\geq130$ in~\eqref{eqn:sec:A:tau} and by
combining the upper bound on $F_2(2.5/n, 0.002/n)$  in Table~\ref{tbl:bounds:1}  and the upper bound on $\max_{f\in[0, 0.002/n]}|K{''}(f)|$  in Table~\ref{tbl:bounds:2}. Similarly,  following from  Lemma~\ref{lem:control:middle} and the mean value theorem,   by combining  the upper bound on $F_3(2.5/n, 0.002/n)$  in Table~\ref{tbl:bounds:1}  and the upper bound on $\max_{f\in[0, 0.002/n]}|K^{(3)}(f)|$  in Table~\ref{tbl:bounds:2},  we have
\begin{equation}
\begin{aligned}
\frac{1}{ {\tau}}\|\D_2(\f, \f^\star)-\D_2(\f)\|_{\infty,  \infty}
\leq& \frac{1}{ {\tau}}(F_3( {2.5}/{n},  {0.002}/{n})+\max_{f\in[0, 0.002/n]}|K'''(f)|) \|\f-\f^\star\|_\infty
\\
\leq& ({1}/{ {3.289n^2}})(0.28687n^3+0.0649394n^3)(0.4X^\star\gamma/n)
= 0.04279X^\star\gamma.
\label{eqn:bound:diff:kernel:b1}
\end{aligned}
\end{equation}

To control $\|\D_\ell(\f^\star)-\D_\ell(\f)\|_{\infty,  \infty} $,  we rewrite $\D_\ell(\f^\star)-\D_\ell(\f)$ as $$\D_\ell(\f^\star)-\D_\ell(\f)=\D_\ell(\f^\star)-\D_\ell(\f^\star, \f)+\D_\ell(\f^\star, \f)-\D_\ell(\f).$$
Then,  the desired results follow from the triangle inequality of the $\ell_{\infty, \infty}$ norm:
\begin{equation}
\begin{aligned}
\|\D_0(\f^\star)-\D_0(\f)\|_{\infty,  \infty}
&\leq\|\D_0(\f^\star)-\D_0(\f^\star, \f)\|_{\infty,  \infty}+\|\D_0(\f^\star, \f)-\D_0(\f)\|_{\infty,  \infty}
\\
&\stack{\ding{172}}{\leq}2(0.00758X^\star\gamma)
=0.01516X^\star\gamma, 
\label{eqn:bound:diff:kernel:b2}
\end{aligned}
\end{equation}
where \ding{172} follows from\eqref{eqn:bound:diff:kernel:c} and an exchange of the roles of $\f$ and $\f^\star$;
\begin{equation}
\begin{aligned}
\frac{1}{\sqrt{\tau}}\|\D_1(\f^\star)-\D_1(\f)\|_{\infty,  \infty}
&\leq \frac{1}{\sqrt{\tau}}\|\D_1(\f^\star)-\D_1(\f^\star, \f)\|_{\infty,  \infty}+ \frac{1}{\sqrt{\tau}}\|\D_1(\f^\star, \f)-\D_1(\f)\|_{\infty,  \infty}
\\
&\stack{\ding{172}}{\leq}2(0.73802 X^\star\gamma)
=1.47604X^\star\gamma, 
\label{eqn:bound:diff:kernel:b3}
\end{aligned}
\end{equation}
where \ding{172} follows from~\eqref{eqn:bound:diff:kernel:b0};
\begin{equation}
\begin{aligned}
\frac{1}{ {\tau}}\|\D_2(\f^\star)-\D_2(\f)\|_{\infty,  \infty}
\leq& \frac{1}{ {\tau}}\|\D_2(\f^\star)-\D_2(\f^\star, \f)\|_{\infty,  \infty}+ \frac{1}{ {\tau}}\|\D_2(\f^\star, \f)-\D_2(\f)\|_{\infty,  \infty}
\\
\stack{\ding{172}}{\leq}& 2(0.04279X^\star\gamma)
=0.08558X^\star\gamma ,  \label{eqn:bound:diff:kernel:b4}
\end{aligned}
\end{equation}
where \ding{172} follows from~\eqref{eqn:bound:diff:kernel:b1}.

Then following from Eq.~\eqref{eqn:parameter:close}, ~\eqref{eqn:bound:diff:kernel:b0}-\eqref{eqn:bound:diff:kernel:b1} and~\eqref{eqn:bound:diff:kernel:b2}-\eqref{eqn:bound:diff:kernel:b4},  and together with the sub-multiplicative property of the $\ell_{\infty, \infty}$ norm,  we have
\begin{equation} \label{eqn:A:DC-DC1}
\begin{aligned}
\left\|\frac{1}{\sqrt{{\tau}}}\diag(1./|\c^\star|)[\D_1(\f, \f^\star)\c^\star-\D_1(\f)\c]\right\|_\infty
\leq &\frac{1}{\sqrt{\tau}}\|\D_1(\f, \f^\star)-\D_1(\f)\|_{\infty, \infty}\|1./\c^\star\|_\infty\|\c^\star\|_\infty
\\&+\frac{1}{\sqrt{\tau}}\|\D_1(\f)\|_{\infty, \infty}\|1./\c^\star\|_\infty\|\c-\c^\star\|_\infty
\\
\leq & (0.73802 X^\star\gamma)B^\star+(0.01236n/\sqrt\tau)B^\star X^\star\gamma
=0.75038B^\star X^\star\gamma, 
\end{aligned}
\end{equation}
where the last but one line follows from $\|1./\c^\star\|_\infty\leq1/{c_{\min}^\star}$ and $\gamma=\gamma_0/{c_{\min}^\star}$. Here and throughout the rest of the paper,   we use $1./\x$,  $1./|\x|$,  $\y./\x$,  $|\y|./|\x|$,  $\x\odot\y$ and $\frac{1}{\x}$,  $\frac{1}{|\x|}$,  $\frac{\y}{\x}$,  $\frac{|\y|}{|\x|}$ in the sense of pointwise arithmetic operations,  here $\x, \y$ stand for any vectors of the same length.

We apply similar arguments to develop the following bound 
\begin{equation} \label{eqn:A:DC-DC2}
\begin{aligned}
\left\|\frac{1}{ {{\tau}}}\diag(1./|\c^\star|)[\D_2(\f, \f^\star)\c^\star-\D_2(\f)\c]\right\|_\infty
\leq&\frac{1}{ {\tau}}\|\D_2(\f, \f^\star)-\D_2(\f)\|_{\infty, \infty}\|1./\c^\star\|_\infty\|\c^\star\|_\infty
\\&
+\frac{1}{ {\tau}}\|\D_2(\f)\|_{\infty, \infty}\|1./\c^\star\|_\infty\|\c-\c^\star\|_\infty
\\
\leq& (0.08558 X^\star \gamma)B^\star+(1.05610)B^\star(X^\star\gamma )
\leq 1.14168B^\star X^\star\gamma.
\end{aligned}
\end{equation}


\section{Bounding the Dual Atomic Norm of Gaussian Noise}\label{sec:B}
In this section,  we develop an upper bound on the dual atomic norm of the weighted Gaussian noise $\mZ\w\sim\N(\mathbf{0}, \sigma^2\mZ^2)$ for the positive definite diagonal matrix $\mZ$ with $[\mZ]_{\ell, \ell}=\frac{g_M(\ell)}{M}$.  First
following~\cite[C.4 with $N\geq4\pi(2n+1)$]{bhaskar2013atomic},  we get
\begin{align}
\sup_{f\in\TT} \left|\a(f)^H\mZ\w\right|&\leq 2\max_{m=0,  \ldots,  N-1}|S_m|,  \label{eqn:sec:B:1}
\end{align}
where $\{S_m\}_{m=0}^{N-1}$ are $N$ equispaced samples of the continuous function $\a(f)^H\mZ\w$ defined on $\TT=[0, 1]$:
\begin{align*}
S_m:
&=\a(\frac{m}{N})^H\mZ\w
=\sum_{\ell=-n}^n \frac{g_M(\ell)}{M}w_\ell e^{-i2\pi\ell\frac{m}{N}}.
\end{align*}
Since $\{w_\ell\}$ are i.i.d. Gaussian variables with mean zero and variance $\sigma^2$,  we have that each $S_m$ is a Gaussian variable with mean zero and variance given by $\var(S_m):=\sum_{\ell=-n}^n \left(\frac{g_M(\ell)}{M}\right)^2\sigma^2.$ The main idea next is first to compute an upper bound (denoted by $\overline{\Pi}$) on the variance $\var(S_m)$  and then apply the Gaussian upper deviation inequality~\cite[Eq.  (7.8)]{foucart2013mathematical}
\begin{align}\label{eqn:sec:B:tail}
\P\left[|S_m|\geq t \sqrt{\overline{\Pi}}\right]\leq e^{-t^2/2}
\end{align}
 to get a high-probability upper bound on $|S_m|$. To evaluate $\overline{\Pi}$,  it is instructive to first note
\begin{align*}
g_M(\ell)=\frac{1}{M}\sum_{k=\max(\ell-M, -M)}^{\min(\ell+M, M)}\left(1-\bigg|\frac{k}{M}\bigg|\right)
\left(1-\bigg|\frac{\ell-k}{M}\bigg|\right), 
\end{align*}
with $\ell=-2M, \ldots, 2M, $
which is the convolution of two triangle functions:
\begin{align}\label{sec:B:QW:triangle}
g_M(\ell)= \frac{1}{M}\tmop{Tri}_M(\ell)\ast \tmop{Tri}_M(\ell), ~\ell=-2M, \ldots, 2M.
\end{align}
Here the triangle function is defined by $\tmop{Tri}_M(\ell):=1-\frac{|\ell|}{M}, \ \ell=-M, \ldots, M$ and $\ast$ represents the convolution operator. Apparently
$\var(S_m)$ is the squared $\ell_2$ norm of the vector $\g_M:=[g_M(-2M), \ldots, g_M(2M)]^T$ scaled by $ {\sigma^2}/{M^2}$. Since by Eq.~\eqref{sec:B:QW:triangle},  $\g_M$ is the convolution of two (the same) triangular vectors $\h_M:=[\tmop{Tri}_M(-M), \ldots, \tmop{Tri}_M(M)]$ and then scaled by $1/M$,  we obtain an upper bound on $\var(S_m)$ by applying Young's inequality $\|\f\ast\g\|_r\leq\|\f\|_p\|\g\|_q$ where $r^{-1}=p^{-1}+q^{-1}-1$ and setting $r=2, p=2, q=1$:
\begin{equation}\label{eqn:sec:B:Pi}
\begin{aligned}
\var(S_m)&=\sum_{\ell=-n}^n \left(\frac{g_M(\ell)}{M}\right)^2\sigma^2
{=}\frac{\sigma^2}{M^2}\|\g_M\|_2^2
{=}\frac{\sigma^2}{M^4}\|\h_M\ast \h_M\|_2^2
{\leq}\frac{\sigma^2}{M^4}\|\h_M\|_1^2 \|\h_M\|_2^2.
\end{aligned}
\end{equation}
Therefore,  to bound $\var(S_m)$,  it remains to bound $\|\h_M\|_1^2$ and $\|\h_M\|_2^2:$
\begin{align}
\|\h_M\|_1^2 =\left(\sum_{\ell=-M}^{M} \left(1-\frac{|\ell|}{M}\right) \right)^2=M^2
\qquad\text{and}\qquad
\|\h_M\|_2^2 =\sum_{\ell=-M}^{M} \left(1- \frac{|\ell|}{M}\right)^2 = \frac{2 M}{3}+\frac{1}{3 M}.\label{eqn:sec:B:Pi:1}
\end{align}
Then plug~\eqref{eqn:sec:B:Pi:1} into~\eqref{eqn:sec:B:Pi},  we obtain an upper bound on $\var(S_m)$ as
\begin{equation}\label{eqn:sec:B:QW1}
\begin{aligned}
\var(S_m)&\leq \frac{\sigma^2}{M^4}M^2\left(\frac{2 M}{3}+\frac{1}{3 M}\right)
=\sigma^2\left( \frac{2}{3M}+\frac{1}{M^3} \right)
\stack{\ding{172}}{=}\sigma^2\left( \frac{4}{3n}+\frac{8}{n^3} \right) \stack{\ding{173}}{\leq} 1.334\sigma^2 /n, \quad\text{for }n\geq130, 
\end{aligned}
\end{equation}
where \ding{172} follows from $n=2M$ and \ding{173} follows since $ {8}/{n^2}$ is a decreasing sequence of $n$ implying the maximal happens at $n=130$. Thus we can choose $\overline\Pi=1.334\sigma^2/n.$ Plugging $\overline\Pi=1.334\sigma^2/n$ into the Gaussian tail bound~\eqref{eqn:sec:B:tail},  we get
\begin{align}\label{eqn:sec:B:tail:2}
 \P\left[|S_m|\geq t\sqrt{1.334}\sigma/\sqrt{n}\right]\leq e^{-t^2/2}
\end{align}
for all $m=0, \ldots, N-1.$

Applying the union bound yields
\begin{align}\label{eqn:sec:B:QW2}
\P\left[\sup_{f\in\TT} \left|\a(f)^H\mZ\w\right|\geq 2t\sqrt{1.334}\sigma/\sqrt n\right] \leq \P\left[\max_{m=0\ldots N-1}|S_m|\geq t\sqrt{1.334}\sigma/\sqrt n\right]\leq Ne^{-t^2/2}, 
\end{align}
where the first inequality follows from~\eqref{eqn:sec:B:1}. Setting $t=\sqrt{8\log n}$ in the above gives
\begin{align}
\P\left[\sup_{f\in\TT} \left|\a(f)^H\mZ\w\right|\geq \underbrace{4\sqrt{2}\sqrt{1.334}}_{\leq 6.534}\sqrt{\log n/n}\sigma\right]\leq\frac{8\pi(2n+1)}{n^4}
{\leq} \frac{1}{n^2}, 
\end{align}
where the last inequality holds for $n\geq130.$ Therefore,  we obtain that
\begin{align}\label{eqn:sec:B:2}
\P\left[\sup_{f\in\TT} \left|\a(f)^H\mZ\w\right|\leq 6.534\sqrt{\log n/n}\sigma\right]\geq 1- \frac{1}{n^2}, \quad\text{for }n\geq130.
\end{align}

To bound $\sup_{f\in\TT} \left|\a'(f)^H\mZ\w\right|$ and $\sup_{f\in\TT} \left|\a''(f)^H\mZ\w\right|$,  a natural approach is to exploit the relations between $\a(\f)$ and its derivatives  $\a'(\f)$,  $\a{''}(\f)$:
\begin{align*}
\a'(f)&=(i2\pi\diag(\n)))\a(f)
\text{ and }
\a''(f)=(i2\pi\diag(\n))^2\a(f).
\end{align*}
Similarly,  define $S_m'$ and $S_m''$ as the $m$th equispaced sample of $\a'(f)^H\mZ\w$ and $\a''(f)^H\mZ\w$,  respectively:
\begin{align*}
S_m'&=\a'(m/N)^H\mZ\w=\a(m/N)^H(-i2\pi\diag(\n))\mZ\w, 
\\
S_m''&=\a''(m/N)^H\mZ\w=\a(m/N)^H(-i2\pi\diag(\n))^2\mZ\w.
\end{align*}
Hence $S_m'\sim\N(0, \var(S_m'))$
and $S_m''\sim\N(0, \var(S_m''))$ with
\begin{align*}
\var(S_m')&=\sum_{\ell=-n}^n \left( {2\pi\ell g_M(\ell)}/{M}\right)^2\sigma^2)
\leq (2\pi n)^2\left(\sum_{\ell=-n}^n \left( {g_M(\ell)}/{M}\right)^2\sigma^2\right)
\stack{\ding{172}}{\leq}(2\pi n)^2 1.334\sigma^2/n, 
\\
\var(S_m'')&=\sum_{\ell=-n}^n \left( {(2\pi\ell)^2 g_M(\ell)}/{M}\right)^2\sigma^2)\leq (2\pi n)^4\left(\sum_{\ell=-n}^n \left( {g_M(\ell)}/{M}\right)^2\sigma^2\right)
\stack{\ding{173}}{\leq}(2\pi n)^4 1.334\sigma^2/n, 
\end{align*}
where \ding{172} and \ding{173} follow from~\eqref{eqn:sec:B:QW1}.
Applying the Gaussian deviation inequality to $S_m', S_m''$ yields
\begin{align*}
\P\left[|S_m'|\geq t2\pi\sqrt{1.334}\sqrt{n}\sigma\right]&\leq 2e^{-t^2/2}
\qquad\text{and}\qquad
\P\left[|S_m''|\geq t4\pi^2\sqrt{1.334}n\sqrt{n}\sigma\right]\leq 2e^{-t^2/2}.
\end{align*}
Then applying the same arguments as~\eqref{eqn:sec:B:QW2},  we get for $n\geq130$, 
\begin{equation}\label{eqn:sec:B:QW3}
\begin{aligned}
&\P\left[\sup_{f\in\TT} \left|\a'(f)^H\mZ\w\right|\leq\underbrace{ 8\sqrt{2}\pi\sqrt{1.334}}_{\leq41.052}\sqrt{n\log n}\sigma\right]\geq 1- \frac{1}{n^2}, 
\\
&\P\left[\sup_{f\in\TT} \left|\a''(f)^H\mZ\w\right|\leq\underbrace{ 16\sqrt{2}\pi^2\sqrt{1.334}}_{\leq257.94}n\sqrt{n\log n}\sigma\right]\geq 1- \frac{1}{n^2}.
\end{aligned}
\end{equation}

Finally,   we invoke that
\begin{align*}
\mA(\f)=[\a(f_1), \ldots, \a(f_k)], 
\qquad
\mA'(\f)=[\a'(f_1), \ldots, \a'(f_k)], 
\qquad
\mA''(\f)=[\a''(f_1), \ldots, \a''(f_k)], 
\end{align*}
and recognize that $\sup_{f\in\TT} \left|\a^{(\ell)}(f)^H\mZ\w\right|$  is an upper bound on $\|\mA^{(\ell)}(\f)^H\mZ\w\|_\infty$ to get
\begin{align*}
\|\mA^{(\ell)}(\f)^H\mZ\w\|_\infty
&{=}\max_{f\in\{f_j\}}|\a^{(\ell)}(f)^H\mZ\w|
{\leq} \sup_{f\in\TT} |\a^{(\ell)}(f)^H\mZ\w|.
\end{align*}
Together with~\eqref{eqn:sec:B:2}, ~\eqref{eqn:sec:B:QW3} and the definition $\gamma_0=\sigma\sqrt{\frac{\log n}{n}}$,  we obtain that the following inequalities hold for $n\geq130$ with probability at least  $1- \frac{1}{n^2}$:
\begin{equation}
\label{eqn:noise:bound1}
\begin{aligned}
\|\mA(\f)^H\mZ\w\|_\infty &\leq \sup_{f\in\TT} \left|\a(f)^H\mZ\w\right|\leq 6.534\gamma_0, 
\\
\|\mA'(\f)^H\mZ\w\|_\infty &\leq \sup_{f\in\TT} \left|\a'(f)^H\mZ\w\right|\leq 41.052 n\gamma_0, 
\\
\|\mA''(\f)^H\mZ\w\|_\infty &\leq \sup_{f\in\TT} \left|\a''(f)^H\mZ\w\right|\leq 257.94 n^2\gamma_0.
\end{aligned}
\end{equation}

As a consequence,  we claim that the following inequalities hold  for $n\geq130$  with probability at least  $1- \frac{1}{n^2}$:
\begin{align}
\|\diag(1./|\c^\star|)\mA'(\f)^H\mZ\w\|_\infty/\sqrt\tau
&
\stack{\ding{172}}{\leq} \|\diag(1./|\c^\star|)\|_{\infty, \infty}\|\mA(\f)^H\mZ\w\|_\infty/\sqrt\tau
\nn\\
&\leq \frac{1}{\sqrt{3.289n^2}}\frac{1}{{c_{\min}^\star}}(41.052n\gamma_0)
\leq22.64\gamma, 
\label{eqn:noise:bound2.1}
\\[1ex]
\|\diag(\c./|\c^\star|^2)\mA''(\f)^H\mZ\w\|_\infty/\tau
&\stack{\ding{173}}{\leq} \|\diag(\c./|\c^\star|)\|_{\infty, \infty}\|\diag(1./|\c^\star|)\|_{\infty, \infty}\|\mA''(\f)^H\mZ\w\|_\infty/\tau
\nn\\
&\leq \frac{1}{3.289n^2}(1+X^\star\gamma)\frac{1}{{c_{\min}^\star}}(257.94n^2\gamma_0)
\leq 78.43(1+X^\star\gamma)\gamma, 
\label{eqn:noise:bound2.2}
\end{align}
where \ding{172} follows from that $\|\mA\x\|_\infty\leq\|\mA\|_{\infty, \infty}\|\x\|_\infty$ by the definition of the $\ell_{\infty, \infty}$ norm  and the fact $\tau\geq 3.289n^2$ for $n\geq130$ by~\eqref{eqn:sec:A:tau}. \ding{173} follows from the sub-multiplicative property of the $\ell_{\infty, \infty}$ norm that $\|\mA\mB\x\|_\infty\leq\|\mA\|_{\infty, \infty}\|\mB\|_{\infty, \infty}\|\x\|_\infty$ and $\|\diag(\c./|\c^\star|)\|_{\infty, \infty}=\max_\ell|c_\ell|/|c^\star_\ell|\leq(1+X^\star\gamma)$ which follows from the assumption $\|\btheta-\btheta^\star\|_{\hinfty}\leq {X^\star\gamma_0}/{\sqrt 2}$ and the derived results~\eqref{eqn:parameter:close}.

\section{Gradient and Hessian for the Nonconvex Program (\ref{eqn:pdw})}\label{sec:C}
Recall that the objective function ${G}$ of the program~\eqref{eqn:pdw} is
\begin{align*}
{G}(\f, \c)=\frac{1}{2}\|\mA(\f)\c-\y\|_{Z}^2+\lambda\|\c\|_1.
\end{align*}
We denote $\c=\u+i\v$ for $\u\in\R^k$ and $\v\in\R^k.$

\subsection{Gradient}
Let the operators $\R\{\cdot\}$ and $\I\{\cdot\}$ take respectively the real and imaginary parts of a complex number or vector.
The gradient of ${G}(\f, \c)$ with respect to $\btheta:=(\f, \u, \v)\in\R^{3k}$ is defined by
\begin{align}\label{eqn:gradient}
\nabla {G}(\btheta)
=
\begin{bmatrix}
 {\partial {G}}/{\partial \f}
\\[0.8ex]
 {\partial {G}}/{\partial \u}
\\[0.8ex]
 {\partial {G}}/{\partial \v}
\end{bmatrix}
\stack{\ding{172}}{=}
\begin{bmatrix*}[l]
 {\partial {G}}/{\partial \f}
\\[1ex]
2\R\{  {\partial {G}}/{\partial \bar\c}\}
\\[1ex]
2\I\{ {\partial {G}}/{\partial \bar\c}\}
\end{bmatrix*}
&\stack{\ding{173}}{=}
\begin{bmatrix*}[l]
\R\left\{{(\mA'(\f)\diag(\c))^H\mZ(\mA(\f)\c-\y)}\right\}
\\[1ex]
\R\left\{{\mA(\f)^H\mZ(\mA(\f)\c-\y)}+\lambda \c./ |\c |\right\}
\\[1ex]
\I\left\{{\mA(\f)^H\mZ(\mA(\f)\c-\y)}+\lambda \c./ |\c |\right\}
\end{bmatrix*}
\nonumber\\[1ex]
&\stack{\ding{174}}{=}
\begin{bmatrix*}[l]
\R\left\{{\diag( {\c})^H(-\D_1(\f)\c+\D_1(\f, \f^\star)\c^\star-\mA'(\f)^H\mZ\w})\right\}
\\[1ex]
\R\left\{{\D_0(\f)\c-\D_0(\f, \f^\star)\c^\star-\mA(\f)^H\mZ\w}+\lambda \c./ |\c |\right\}
\\[1ex]
\I\left\{{\D_0(\f)\c-\D_0(\f, \f^\star)\c^\star-\mA(\f)^H\mZ\w}+\lambda \c./ |\c |\right\}
\end{bmatrix*}, 
\end{align}
where
\ding{172} holds for ${G}\in\R.$
\ding{173} follows from $\diag(\text{d}\f)\c=\diag(\c)\text{d}{\f}$ and $\text{d}|c|=\frac{\bar{c}\text{d}c+c\text{d}\bar{c}}{2|c|}$.
\ding{174} follows from the kernel matrix factorization formulas~\eqref{eqn:sec:A:kernelmatrix:general}-\eqref{eqn:sec:A:kernelmatrix:cross} and by taking into account that  $\y=\x^\star+\w=\mA(\f^\star)\c^\star+\w.$

\subsection{Hessian}

The symmetric Hessian matrix $\nabla^2{G}(\btheta)$ is given by
\begin{align*}
\nabla^2{G}(\btheta)&=
\begin{bmatrix}
\frac{\partial^2 {G}}{\partial \f\partial \f} & \frac{\partial^2 {G}}{\partial \f\partial \u} & \frac{\partial^2 {G}}{\partial \f\partial \v}\\
\frac{\partial^2 {G}}{\partial \u\partial \f} & \frac{\partial^2 {G}}{\partial \u\partial \u} & \frac{\partial^2 {G}}{\partial \u\partial \v}\\
\frac{\partial^2 {G}}{\partial \v\partial \f} & \frac{\partial^2 {G}}{\partial \v\partial \u} & \frac{\partial^2 {G}}{\partial \v\partial \v}\\
\end{bmatrix}
:=
\begin{bmatrix}
\mH_{\f\f} & \mH_{\f\u}& \mH_{\f\v}\\
\mH_{\u\f} & \mH_{\u\u}& \mH_{\u\v}\\
\mH_{\v\f} & \mH_{\v\u}& \mH_{\v\v}\\
\end{bmatrix}
\end{align*}
with
\begin{align}\label{eqn:hessian}
\begin{bmatrix}
\mH_{\f\f}
\\[0.8ex]
\mH_{\f\u}
\\[0.8ex]
\mH_{\f\v}
\\[0.8ex]
\mH_{\u\u}
\\[0.8ex]
\mH_{\v\v}
\\[0.8ex]
\mH_{\u\v}
\end{bmatrix}
&\stack{\ding{172}}{=}
\begin{bmatrix*}[l]
\R\{(\mA'(\f)\mathbf{\Lambda})^H\mZ\mA'(\f)\mathbf{\Lambda}+\diag((\mA''(\f)\mathbf{\Lambda})^H\mZ(\mA(\f)\c-\y))\}
\\[1ex]
\R\{(\mA'(\f)\mathbf{\Lambda})^H\mZ\mA(\f)+\diag(\mA'(\f)^H\mZ(\mA(\f)\c-\y))\}
\\[1ex]
\I\{-(\mA'(\f)\mathbf{\Lambda})^H\mZ\mA(\f) +\diag(\mA'(\f)^H\mZ(\mA(\f)\c-\y))\}
\\[1ex]
\mA(\f)^H\mZ\mA(\f)+\lambda\diag(\v^2\odot|\c|.^{-3})
\\[1ex]
\mA(\f)^H\mZ\mA(\f)+\lambda\diag(\u^2\odot|\c|.^{-3})
\\[1ex]
-\lambda\diag(\u\odot\v\odot|\c|.^{-3})
\end{bmatrix*}
\nonumber\\[1ex]
&\stack{\ding{173}}{=}
\begin{bmatrix*}[l]
\R\{-\mathbf{\Lambda}^H\D_2(\f)\mathbf{\Lambda}-\diag(\mathbf{\Lambda}^H\mA''(\f)^H\mZ\w)
-\diag(\mathbf{\Lambda}^H(\D_2(\f, \f^\star)\c^\star-\D_2(\f)\c))\}
\\[1ex]
\R\{-\mathbf{\Lambda}^H\D_1(\f)-\diag(\mA'(\f)^H\mZ\w)
+\diag(\D_1(\f, \f^\star)\c^\star)-\diag(\D_1(\f)\c)\}
\\[1ex]
\I\{\mathbf{\Lambda}^H\D_1(\f)-\diag(\mA'(\f)^H\mZ\w)
+\diag(\D_1(\f, \f^\star)\c^\star)-\diag(\D_1(\f)\c)\}
\\[1ex]
\D_0(\f)+\lambda\diag(\v^2\odot|\c|.^{-3})
\\[1ex]
\D_0(\f)+\lambda\diag(\u^2\odot|\c|.^{-3})
\\[0.8ex]
-\lambda\diag(\u\odot\v\odot|\c|.^{-3})
\end{bmatrix*}, 
\end{align}
where we denoted $\mathbf{\Lambda}:=\diag(\c)$ to simplify notation. \ding{172} follows from direct computation and \ding{173} follows from the matrix decomposition formulas~\eqref{eqn:sec:A:kernelmatrix:general}-\eqref{eqn:sec:A:kernelmatrix:cross} and by taking into account that  $\y=\x^\star+\w=\mA(\f^\star)\c^\star+\w.$

Remarkably,  if we replace the noisy signal $\y$ in the objective function of the nonconvex program~\eqref{eqn:pdw} with the noise-free signal $\x^\star$ to get
\begin{align*}
{G^\lambda}(\f, \c)=\frac{1}{2}\|\mA(\f)\c-\x^\star\|_{Z}^2+\lambda\|\c\|_1, 
\end{align*}
then its gradient and Hessian matrix can be obtained from those of ${G}(\f, \c)$ by setting the noise $\w$ to zero.

\section{Proof of Lemma~\ref{lem:fix1}}\label{sec:D}
\begin{proof} 
The underlying fixed point map is
\begin{align*}
\Theta^\lambda(\btheta) = \btheta-\W^\star\nabla G^\lambda(\btheta), 
\end{align*}
where $G^\lambda$ is defined as the objective function of the nonconvex  program~\eqref{eqn:pdw} with the noisy signal $\y$ replaced by the noise-free signal $\x^\star$:
$${G}^\lambda(\btheta) =\frac{1}{2}\|\mA(\f)\c-\x^\star\|_{\mZ}^2+\lambda\|\c\|_1.$$
By Theorem~\ref{thm:fix},  to show the existence and uniqueness of a point $\btheta^\lambda\in\NN^\star$ such that $\Theta^\lambda(\btheta^\lambda)=\btheta^\lambda$, 
the key is to show that  $\Theta^\lambda$ satisfies  the non-escaping condition and the contraction condition:
\\[1ex]
\boxed{\vbox{
\vspace*{-0.2cm}
\begin{description}
\item[(i)] $\Theta^\lambda(\NN^\star)\subset\NN^\star$;
\item[(ii)]There exists $\rho\in(0, 1)$ such that $\|\Theta^\lambda(\v)-\Theta^\lambda(\w)\|_\hinfty\leq\rho\|\v-\w\|_\hinfty$ for any $\v, \w\in\NN^\star$.
\end{description}
\vspace*{-0.2cm}
}
}

\subsection{Showing the Contraction Property}
For $\v, \w\in\NN^\star$,  we have
\begin{align*}
\| \Theta^\lambda(\v)-\Theta^\lambda(\w)\|_\hinfty
&\stack{\ding{172}}{=}  \left\|\int_{0}^1[\nabla\Theta^\lambda(t\v+(1-t)\w)](\v-\w)\mathrm{d}t \right\|_\hinfty
\stack{\ding{173}}{\leq}  \maximize_{\btheta\in \NN^\star} \|\nabla\Theta^\lambda(\btheta)\|_{\hinfty, \hinfty}   \|\v-\w\|_\hinfty, 
\end{align*}
where \ding{172} follows from the integral form of the mean value theorem for vector-valued functions (see~\cite[Eq. (A.57)]{nocedal2006numerical});
\ding{173} follows from the sub-multiplicative property of $\|\cdot\|_{\hinfty, \hinfty}$ and the fact that $t\v+(1-t)\w\in\NN^\star$ for $t\in[0, 1]$.
Thus,  it suffices to show
\begin{align*}
\maximize_{\btheta\in \NN^\star} \|\nabla\Theta^\lambda(\btheta)\|_{\hinfty, \hinfty} <1, 
\end{align*}
where the matrix $\ell_{\hinfty, \hinfty}$ norm is defined  by (following from the definition of the $\ell_\hinfty$ norm)
\begin{align*}
\|\mathbf{A}\|_{\hinfty, \hinfty}&=
\left\|\begin{bmatrix}\mA_{11}&\mA_{12} &\mA_{13} \\ \mA_{21}& \mA_{22}&\mA_{23}\\\mA_{31}&\mA_{32}&\mA_{33}\end{bmatrix}\right\|_{\hat{\infty}, \hat{\infty}}
:=
\left\|\begin{bmatrix}\mS\mA_{11}\mS^{-1}&\mS\mA_{12} &\mS\mA_{13} \\ \mA_{21}\mS^{-1}& \mA_{22}&\mA_{23}\\ \mA_{31}\mS^{-1} &\mA_{32}&\mA_{33}\end{bmatrix}\right\|_{\infty, \infty}, 
\end{align*}
with $\mS=\sqrt{\tau}\diag(|\c^\star|).$ Together with
\begin{align*}
\W^\star= 
\begin{bmatrix}
\mS^{-2} &&\\
&\eye_k&\\
&&\eye_k
\end{bmatrix}, 
\end{align*}
we therefore obtain that
\begin{equation}
\begin{aligned}
\|\W^\star\mathbf{A}\|_{\hinfty, \hinfty}
&=
\left\|
\begin{bmatrix}
\mS^{-1} \mA_{11}\mS^{-1}&\mS^{-1} \mA_{12} &\mS^{-1} \mA_{13} \\ \mA_{21}\mS^{-1}& \mA_{22}&\mA_{23}\\ \mA_{31}\mS^{-1} &\mA_{32}&\mA_{33}
\end{bmatrix}
\right\|_{\infty, \infty}=\|{\W^\star}^{\frac{1}{2}}\mathbf{A}{\W^\star}^{\frac{1}{2}}\|_{\infty, \infty}
:=  \|\Upsilon(\mathbf{A}) \|_{\infty, \infty}, \label{eqn:gamma}
\end{aligned}
\end{equation}
where the linear operator $\Upsilon(\cdot):={\W}^{\star\frac{1}{2}}(\cdot){\W}^{\star\frac{1}{2}}$.
The Jacobian of the fixed point map $\Theta^\lambda$ is given by
\begin{align}\label{eqn:grad:theta}
\nabla \Theta^\lambda(\btheta)=\eye-\W^\star\nabla^2G^\lambda(\btheta), 
\end{align}
where the symmetric Hessian matrix $\nabla^2 G^\lambda(\btheta)$ can be obtained from $\nabla^2 {G}(\btheta)$ by setting the noise $\w$ to zero:
\begin{align*}
\nabla^2 G^\lambda(\btheta)=\begin{bmatrix} \mH_{\f\f} & \mH_{\f\u} & \mH_{\f\v}\\ \mH_{\u\f}&\mH_{\u\u}&\mH_{\u\v}\\\mH_{\v\f}&\mH_{\v\u}&\mH_{\v\v}  \end{bmatrix}.
\end{align*}
Due to the symmetric structure of the Hessian matrix,  it suffices to know the expressions for the following block matrices (see Eq.~\eqref{eqn:hessian}):
\begin{center}
\begin{tabular}{ll}
$\mH_{\f\f}= \R\{-\mathbf{\Lambda}^H\D_2(\f)\mathbf{\Lambda}-\diag(\mathbf{\Lambda}^H(\D_2(\f, \f^\star)\c^\star-\D_2(\f)\c))\}; $	&
$\mH_{\u\u}= \D_0(\f)+\lambda\diag(\v\odot\v./\big|\c\big|^{3}); $
\\
$\mH_{\f\u}= \R\{-\mathbf{\Lambda}^H\D_1(\f)+\diag(\D_1(\f, \f^\star)\c^\star)-\diag(\D_1(\f)\c)\}$; &
$\mH_{\v\v}= \D_0(\f)+\lambda\diag(\u\odot\u./\big|\c\big|^{3})$; 
\\
$\mH_{\f\v}= \I\{\mathbf{\Lambda}^H\D_1(\f)
+\diag(\D_1(\f, \f^\star)\c^\star)-\diag(\D_1(\f)\c)\}$; 
&
$\mH_{\u\v}= -\lambda\diag(\u\odot\v./\big|\c\big|^{3})$,
\end{tabular}
\end{center}
where $\mathbf{\Lambda}=\diag(\c)$.

Next we compute the weighed $\ell_{\hinfty, \hinfty}$ norm of the Jacobian of the fixed point map $\Theta^\lambda$:
\begin{align*}
\|\nabla\Theta^\lambda(\btheta)\|_{\hinfty, \hinfty}
&\stack{\ding{172}}{=}\| \W^\star\nabla^2G^\lambda(\btheta)-\eye\|_{\hinfty, \hinfty}
\stack{\ding{173}}{=}  \|\Upsilon(\nabla^2G^\lambda(\btheta)-{\W^\star}^{-1})  \|_{\infty, \infty}
\stack{\ding{174}}{=}   \|\Upsilon(\nabla^2G^\lambda(\btheta))-\eye\|_{\infty, \infty}, 
\end{align*}
where \ding{172} follows from~\eqref{eqn:grad:theta},  \ding{173} follows from~\eqref{eqn:gamma} by noting that $\W^\star\nabla^2G^\lambda(\btheta)-\eye=\W^\star(\nabla^2G^\lambda(\btheta)-{\W^\star}^{-1})$ and \ding{174} from  the linearity of $\Upsilon(\cdot)$ and $\Upsilon({\W^\star}^{-1})={\W^\star}^{\frac{1}{2}}{\W^\star}^{-1}{\W^\star}^{\frac{1}{2}}=\eye.$ Direct computation gives
\begin{align*}
  \Upsilon(\nabla^2G^\lambda(\btheta))-\eye
=& \begin{bmatrix}
\frac{-1}{{\tau}}\R\{\mathbf{\Phi}^H\D_2(\f)\mathbf{\Phi}\}-\eye\!\!\!&
\frac{-1}{\sqrt{\tau}}\R\{\mathbf{\Phi}\}\D_1(\f)\!\!\!&\frac{-1}{\sqrt{\tau}}\I\{\mathbf{\Phi}\}\D_1(\f)\\
\frac{1}{\sqrt{\tau}}\D_1(\f)\R\{\mathbf{\Phi}\}\!\!\!&\D_0(\f)-\eye\!\!\!& \\
\frac{1}{\sqrt{\tau}}\D_1(\f)\I\{\mathbf{\Phi}\}\!\!\!& \!\!\!&\D_0(\f)-\eye
\end{bmatrix}
+
\begin{bmatrix}
\diag(\d_{\f\f})\!\!\!&\diag(\d_{\f\u})\!\!\!&\diag(\d_{\f\v})\\
\diag(\d_{\f\u})\!\!\!&\diag(\d_{\u\u})\!\!\!&\diag(\d_{\u\v})\\
\diag(\d_{\f\v})\!\!\!&\diag(\d_{\u\v})\!\!\!&\diag(\d_{\v\v})
\end{bmatrix} 
\end{align*}
where $\mathbf{\Phi}:=\diag(\c./|\c^\star|)$ and
\begin{center}
	\begin{tabular}{ll}
$\d_{\f\f}= -\R\{\diag(\c./|\c^\star|^2)^H[ \D_2(\f, \f^\star)\c^\star-\D_2(\f)\c]/\tau\};$
	&  
$\d_{\u\u}= \lambda\diag(\u\odot\u./\big|\c\big|^{3});$
	\\ 
$\d_{\f\u}= \R\{\diag(1./|\c^\star|)[ \D_1(\f, \f^\star)\c^\star-\D_1(\f)\c]/\sqrt\tau\};$		
	&  
$\d_{\u\v}= \lambda\diag(\u\odot\v./\big|\c\big|^{3});	$
	\\ 
$\d_{\f\v}= \I\{\diag(1./|\c^\star|)[\D_1(\f, \f^\star)\c^\star-\D_1(\f)\c]/\sqrt\tau\};$	
	&  
$\d_{\v\v}= \lambda\diag(\v\odot\v./\big|\c\big|^{3}).$	
\end{tabular} 
\end{center}
Clearly, 
\begin{align*}
\|\Upsilon(\nabla^2G^\lambda(\btheta))-\eye\|_{\infty, \infty}=\max\left\{\Pi^\lambda_1,  \Pi^\lambda_2,  \Pi^\lambda_3\right\}
\end{align*}
with $\Pi^\lambda_1, \Pi^\lambda_2, \Pi^\lambda_3$ being the first,  second and third absolute row sums of $ \Upsilon(\nabla^2G^\lambda(\btheta))-\eye$,  respectively.

\bigskip
\noindent{\bf Bounding $\Pi^\lambda_1$.}
\begin{align}
\Pi^\lambda_1
\leq&  \left\|-\R\{\diag(\c./|\c^\star|)^H\D_2(\f)/\tau\diag(\c./|\c^\star|)\}-\eye\right\|_{\infty,\infty}
 +2  \left\|\diag(\c./|\c^\star|)\D_1(\f)/\sqrt\tau\right\|_{\infty,\infty}
\nn\\
& +2\left\| \diag(1./|\c^\star|)[\D_1(\f^\lambda, \f^\star)\c^\star-\D_1(\f)\c]/\sqrt\tau\right\|_{\infty}
+\left\| \diag(\c./|\c^\star|^2)[\D_2(\f^\lambda, \f^\star)\c^\star-\D_2(\f)\c]/\tau\right\|_{\infty}
\nn\\
\stack{\ding{172}}{\leq}& (0.05610n^2/\tau+2.12X^\star\gamma)+2(1+X^\star\gamma)(0.01236n/\sqrt\tau)
+2(0.75038B^\star X^\star\gamma)+1.14168B^\star X^\star\gamma
\nn\\
\leq& 0.08561\label{eqn:PI:1}, 
\end{align}
where \ding{172} follows from  Eq.~\eqref{eqn:kernel:matrix:small:norm}, ~\eqref{eqn:A:DC-DC1}, ~\eqref{eqn:A:DC-DC2} and the following bound 
\begin{equation}\label{eqn:A:CDC-I}
\begin{aligned}
\left\|-\R\{\diag(\c./|\c^\star|)^H\D_2(\f)/\tau\diag(\c./|\c^\star|)\}-\eye\right\|_{\infty,\infty}
\leq&  \max_i\bigg|\frac{|c_i|^2}{|c^\star_i|^2}-1\bigg|+(0.05610n^2/\tau) \max_{i, j}\frac{|c_i||c_j|}{|c^\star_i||c^\star_j|}
\\
\leq&   X^\star \gamma(2+X^\star \gamma)+(0.05610n^2/\tau) (1+X^\star \gamma)^2\\
\leq&  1.05610 (X^\star \gamma)^2+2.113 X^\star \gamma+ 0.05610n^2/\tau\\
\leq&   0.05610n^2/\tau+2.12X^\star\gamma.
\end{aligned}
\end{equation}

\bigskip
\noindent{\bf Bounding $\Pi^\lambda_2$ and $\Pi^\lambda_3$.}
\\
Note $\Pi^\lambda_2$ and $\Pi^\lambda_3$ are of the same form. Thus we can bound them together:
 \begin{align*}
\max\{\Pi^\lambda_2, \Pi^\lambda_3\}
\leq& \left\| \D_1(\f)\R\{\diag(\c./|\c^\star|)\}\right\|_{\infty,\infty}/\sqrt\tau
+\left\|{\diag(1./|\c^\star|)}[ \D_1(\f, \f^\star)\c^\star-\D_1(\f)\c]\right\|_{\infty,\infty}/\sqrt\tau
\nn\\
&+\|\D_0(\f)-\eye\|_{{\infty,\infty}}+ 2\|\lambda\diag(\u\odot\v ./\big|\c\big|^{3})\|_{\infty,\infty}
\\
\stack{\ding{172}}{\leq}& (1+X^\star\gamma)(0.01236n/\sqrt\tau)+(0.75038B^\star X^\star\gamma) +(0.00755)+2(0.646 X^\star\gamma)\\
<&\Pi_1^\lambda \text{ (since $B^\star X^\star\gamma\leq 10^{-3}$)}, 
\end{align*}
where \ding{172} follows from Eq.~\eqref{eqn:kernel:matrix:small:norm}, ~\eqref{eqn:A:DC-DC1}-\eqref{eqn:A:DC-DC2} and   $\lambda\leq 0.646X^\star\gamma_0.$
Therefore, 
\begin{align}\label{eqn:norm:Theta:lambda}
\maximize_{\btheta\in\NN^\star}\left\|\Upsilon(\nabla^2G^\lambda(\btheta))-\eye\right\|_{\infty, \infty} \leq 0.08561<1, 
\end{align}
implying the contraction property of $\Theta^\lambda(\btheta)$.

\subsection{ Showing the Non-escaping Property}
 
By the definition of the neighborhood $\NN^\star$,   it suffices to bound the distance between $\Theta^\lambda(\btheta)$
and $\btheta^\star$:
\begin{align*}
\|\Theta^\lambda(\btheta) - \btheta^\star\|_\hinfty
\stack{\ding{172}}{\leq}&  \|\Theta^\lambda(\btheta) - \Theta^\lambda(\btheta^\star)\|_\hinfty + \| \Theta^\lambda(\btheta^\star)-\btheta^\star\|_\hinfty
\\
\stack{\ding{173}}{=}&          \|\int_{0}^1 [\nabla_{\btheta}\Theta^\lambda((1-t)\btheta^\star+t\btheta)](\btheta-\btheta^\star)\dif t\|_\hinfty+ \| \Theta^\lambda(\btheta^\star)-\btheta^\star\|_\hinfty
\\
\stack{\ding{174}}{\leq}& \maximize_{{\z\in\NN^\star}}\|\nabla_{\btheta} \Theta^\lambda(\z)\|_{\hinfty, \hinfty}  \|\btheta-\btheta^\star\|_\hinfty
+\|  \W^\star  \nabla G^\lambda(\btheta^\star)\|_\hinfty
\\
\stack{\ding{175}}{\leq}&  (0.08561)(  X^\star\gamma_0/{\sqrt 2} )+ \lambda
\stack{\ding{176}}{\leq}   X^\star\gamma_0/{\sqrt 2}, 
\end{align*}
where \ding{172} follows from the triangle inequality,  \ding{173} follows from the integral form of the mean value theorem for vector-valued functions (see~\cite[Eq. (A.57)]{nocedal2006numerical}),  \ding{174} follows from sub-multiplicative property of $\|\cdot\|_{\hinfty, \hinfty}$ and the fact that $(1-t)\btheta^\star+t\btheta)\in\N^\star$ for $t\in[0, 1]$,  \ding{175} follows from
\begin{align*}
\|\W^\star \nabla G^\lambda(\btheta^\star)\|_\hinfty
&=\left\|
\begin{bmatrix}
0\\
\R\{\lambda \c^\star./\big|\c^\star\big|\}\\
\I\{\lambda \c^\star./\big|\c^\star\big|\}
\end{bmatrix}
\right\|_{\hinfty}
\leq\lambda, 
\end{align*}
and \ding{176} holds for $ \lambda\leq  0.646X^\star  \gamma_0$ since $(0.08561)(  X^\star\gamma_0/{\sqrt 2} )+ 0.646X^\star  \gamma_0\leq 0.9992X^\star\gamma_0/{\sqrt 2}$.

In sum,  $\Theta^\lambda$ satisfies both the contraction and the non-escaping properties in $\N^\star$. Therefore,   by the contraction mapping theorem, 
the map $\Theta^\lambda$ has a unique fixed point $\btheta^\lambda\in \NN^\star$ satisfying $\Theta^\lambda(\btheta^\lambda) = \btheta^\lambda$.

We continue to show that $\btheta^\lambda$ is a differentiable function of $\lambda$.
Define a function $F:\R^{3k}\times \R\mapsto \R^{3k}$ as $F(\btheta, \lambda)=\nabla G^\lambda(\btheta)$ and recognize $F(\btheta, \lambda)$ is continuously differentiable since it has a continuous Jacobian given by
$$\partial F(\btheta, \lambda)=\begin{bmatrix}\frac{\partial}{\partial\btheta}F(\btheta, \lambda) & \frac{\partial}{\partial\lambda} F(\btheta, \lambda)  \end{bmatrix}=\left[\nabla^2 G^\lambda(\btheta)~~\begin{bmatrix} \zero\\ \R\{\c./|\c|\}\\ \I\{\c./|\c|\}\end{bmatrix}\right], $$
with   $\frac{\partial}{\partial\btheta} F(\btheta, \lambda)$  nonsingular in $\N^\star$ by~\eqref{eqn:norm:Theta:lambda}. Then according to the implicit function theorem (see~\cite[Proposition A.25]{bertsekas1999nonlinear}),  there is a continuously differentiable function $\g(\cdot)$ such that $F(\g(\lambda),  \lambda) = \nabla G^\lambda(\g(\lambda)) = \zero$ and
\begin{align}\label{eqn:gdiff}
\frac{\mathrm{d}}{\mathrm{d} \lambda} \g(\lambda) = - (\frac{\partial}{\partial\btheta}F(\g(\lambda), \lambda))^{-1}\frac{\partial }{\partial \lambda}F(\g(\lambda),  \lambda) = - (\nabla^2 G^\lambda(\g(\lambda)))^{-1}\frac{\partial }{\partial \lambda}\nabla G^\lambda(\g(\lambda)).
\end{align}
Since $\nabla G^\lambda(\g(\lambda)) = \zero$ is equivalent to $\Theta^\lambda(\g(\lambda)) = \g(\lambda)$,  we conclude that $\btheta^\lambda = \g(\lambda)$ due to the uniqueness of the fixed point of $\Theta^\lambda$. Therefore,  $\btheta^\lambda$ is a differentiable function of $\lambda$ and
\begin{align}
\frac{\mathrm{d}}{\mathrm{d} \lambda} \btheta^\lambda = - (\nabla^2 G^\lambda(\btheta^\lambda))^{-1}\frac{\partial }{\partial \lambda}\nabla G^\lambda(\btheta^\lambda).
\end{align}

Finally,  let $\lim_{\lambda \rightarrow 0} \btheta^\lambda = \btheta^0$. Taking limit as $\lambda$ goes to $0$ in the equation $\nabla G^\lambda(\btheta^\lambda) = \zero$  yields $\nabla G^0(\btheta^0) = \zero$ due to the continuity of $\nabla G^\lambda(\btheta)$ in $\lambda$ and $\btheta$ and the continuity of $\btheta^\lambda$. Since $\nabla G^0(\btheta^\star) = \zero$ by direct computation and the solution is unique in $\N^\star$,  we conclude that $\lim_{\lambda \rightarrow 0} \btheta^\lambda = \btheta^0 = \btheta^\star$.
\end{proof}

\section{Proof of Lemma~\ref{lem:fix2}}\label{sec:E}

\begin{proof}
The main idea is again to apply the contraction mapping theorem~\ref{thm:fix} to the fixed point map:
\begin{align*}
\Theta(\btheta) = \btheta-\W^\star\nabla {G}(\btheta), 
\end{align*}
where $G$ is the objective function of~\eqref{eqn:pdw}:
$${G}(\btheta) =\frac{1}{2}\|\mA(\f)\c-\y\|_{\mZ}^2+\lambda\|\c\|_1$$
with $\lambda=0.646X^\star  \gamma_0$.
By Theorem~\ref{thm:fix},   showing the existence of a unique point $\hat{\btheta}\in\NN^\lambda$ such that $\Theta(\hat{\btheta})=\hat{\btheta}$ can be reduced to showing that $\Theta$ satisfies both  the non-escaping property and the contraction properties:
\\[1ex]
\boxed{\vbox{\vspace*{-0.2cm}
\begin{description}
\item[(i)] $\Theta(\NN^\lambda)\subset\NN^\lambda$;
\item[(ii)] There exists $\rho\in(0, 1)$ such that $\|\Theta(\v)-\Theta(\w)\|_\hinfty\leq\rho\|\v-\w\|_\hinfty$ for any $\v, \w\in\NN^\lambda$.
\end{description}
\vspace*{-0.2cm}
}}

\subsection{ Showing the Contraction Property}
 
Recall that $\NN^\star$  is a neighborhood centered at $\btheta^\star$ and $\NN^\lambda$ is a neighborhood centered at $\btheta^\lambda$ defined respectively via
\begin{align*}
\NN^\star&= \left\{\btheta:\|\btheta-\btheta^\star\|_{\hinfty}\leq \frac{X^\star}{\sqrt2}\gamma_0\right\}
\quad
\text{ and }
\quad
\NN^\lambda= \left\{\btheta:\|\btheta-\btheta^\lambda\|_{\hinfty}\leq \frac{35.2}{\sqrt2}\gamma_0\right\}.
\end{align*}
Keep in mind that $\btheta^\lambda$ is the unique point in $\NN^\star$ that satisfies $\nabla G^\lambda(\btheta^\lambda)=\zero.$  To show the contraction of $\Theta$ in
$\NN^\lambda$,  our strategy is to show $\Theta$ is contractive in a larger set $\hat{\N}$ that contains $\NN^\lambda$:
\begin{align*}
 \hat{\NN}=\left\{\btheta:\|\btheta-\btheta^\star\|_{\hinfty}\leq \frac{X^\star+35.2}{\sqrt2}\gamma_0:=\frac{\hat{X}}{\sqrt2}\gamma_0\right\}.
\end{align*}
Recognize that $\hat{\N}$ is  a neighborhood centered at $\btheta^\star$ but with a radius  $35.2\gamma_0/\sqrt{2}$ larger than that of $\N^\star$. Such a choice is made for
the purpose of showing the closeness between the final fixed point solution $\hat{\btheta}$ and $\btheta^\star$. We remark that the quantity $35.2\gamma_0/\sqrt{2}$ corresponds to the dual atomic norm of the weighted Gaussian noise.
Adding such a noise norm term to the radius of the original neighborhood $\N^\star$ ensures that the region $\hat{\N}$ is large enough for $\Theta(\btheta)$ to be non-escaping. This is reasonable because the second fixed point map~\eqref{eqn:Map:2} involves an additive Gaussian noise and we have shown that the first fixed point map~\eqref{eqn:Map:1} (the one constructed in the noise-free setting) satisfies the non-escaping property in $\N^\star$.

Next,  we apply arguments similar to those of showing the contraction of $\Theta^\lambda$ in $\NN^\star$. In particular,  we first compute the expression of
$\Upsilon(\nabla^2{G}(\btheta))-\eye$:
\begin{align*}
\Upsilon(\nabla^2{G}(\btheta))-\eye
=&\begin{bmatrix}
\frac{-1}{{\tau}}\R\{\mathbf{\Phi}^H\D_2(\f)\mathbf{\Phi}\}-\eye\!\!\!&
\frac{-1}{\sqrt{\tau}}\R\{\mathbf{\Phi}\}\D_1(\f)\!\!\!&\frac{-1}{\sqrt{\tau}}\I\{\mathbf{\Phi}\}\D_1(\f)\\
\frac{1}{\sqrt{\tau}}\D_1(\f)\R\{\mathbf{\Phi}\}\!\!\!&\D_0(\f)-\eye\!\!\!&\zero \\
\frac{1}{\sqrt{\tau}}\D_1(\f)\I\{\mathbf{\Phi}\}\!\!\!&\zero \!\!\!&\D_0(\f)-\eye
\end{bmatrix}
+
\begin{bmatrix}
\diag(\hat\d_{\f\f})\!\!\!&\diag(\hat\d_{\f\u})\!\!\!&\diag(\hat\d_{\f\v})\\
\diag(\hat\d_{\f\u})\!\!\!&\diag(\hat\d_{\u\u})\!\!\!&\diag(\hat\d_{\u\v})\\
\diag(\hat\d_{\f\v})\!\!\!&\diag(\hat\d_{\u\v})\!\!\!&\diag(\hat\d_{\v\v})
\end{bmatrix}  
\end{align*}
with $\mathbf{\Phi}=\diag(\c./|\c^\star|)$ and
\begin{center}
\begin{tabular}{ll}
$\hat\d_{\f\f}= - \R\{\diag(\c./|\c^\star|^2)^H[ \mA''(\f)^H\mZ\w + \D_2(\f, \f^\star)\c^\star-\D_2(\f)\c]/\tau\}$;
&  
$\hat\d_{\u\u}= \lambda\diag(\u\odot\u./\big|\c\big|^{3});$
\\ 
$\hat\d_{\f\u}=  \R\{\diag(1./|\c^\star|)[-\mA'(\f)^H\mZ\w+ \D_1(\f, \f^\star)\c^\star-\D_1(\f)\c]/\sqrt\tau\};$
&  
$\hat\d_{\u\v}= \lambda\diag(\u\odot\v./\big|\c\big|^{3});$
\\ 
$\hat\d_{\f\v}=  \I\{\diag(1./|\c^\star|)[-\mA'(\f)^H\mZ\w+\D_1(\f, \f^\star)\c^\star-\D_1(\f)\c]/\sqrt\tau\}$;
&  
$\hat\d_{\v\v}= \lambda\diag(\v\odot\v./\big|\c\big|^{3}).$
\end{tabular} 
\end{center}

Comparing the expressions for $[\Upsilon(\nabla^2 G^\lambda(\btheta))-\eye]$ and $[\Upsilon(\nabla^2{G}(\btheta))-\eye]$ shows that the latter differs in have additional noise terms in the first row and the first column blocks. We have  shown that the first absolute row sum $\Pi_1^\lambda$ of $[\Upsilon(\nabla^2 G^\lambda(\btheta))-\eye]$ dominates the other row sums. Having additional noise terms will only increase the final bounds due to the application of the triangle inequality. Therefore,  the first absolute row sum (denoted by $\hat\Pi_1$) of $[\Upsilon(\nabla^2{G}(\btheta))-\eye]$ also dominates and hence achieves the $\ell_{\infty, \infty}$ norm. Direct computation gives
\begin{align*}
\hat\Pi_1
&\leq\Pi_1^\lambda+2 \|\diag(1./|\c^\star|)\mA'(\f)^H\mZ\w\|_\infty/\sqrt\tau
+ \|\diag(\c./|\c^\star|^2)\mA''(\f)^H\mZ\w\|_\infty/\tau
\\
&\stack{\ding{172}}{\leq} 0.08561+  2(22.64\gamma)+78.43(1+\hat{X}\gamma) \gamma
\\
&\stack{\ding{173}}{\leq}   0.08563, 
\end{align*}
where \ding{172} follows from $\Pi^\lambda_1\leq0.08561$ and Eq.~\eqref{eqn:noise:bound1}-\eqref{eqn:noise:bound2.2},  \ding{173} follows from $\hat{X}=X^\star+35.2$ and
the SNR condition~\eqref{eqn:snr} that $X^\star {B^\star}\gamma  \leq 10^{-3} \text{\ and\ }  B^\star /{X^\star }\leq 10^{-4}$ hence $ 2(22.64\gamma)+78.43(1+\hat{X}\gamma) \gamma\leq0.00002.$
%
Hence, 
\begin{align}
\maximize_{\btheta\in \hat{\NN}} \|\nabla\Theta(\btheta)\|_{\hinfty, \hinfty} \leq 0.08563<1\label{eqn:D:Pi:lambda}.
\end{align}
This implies the contraction of  $\Theta$ in $\NN^\lambda$,  since
\begin{align*}
\maximize_{\btheta\in \NN^\lambda} \|\nabla\Theta(\btheta)\|_{\hinfty, \hinfty} \leq\maximize_{\btheta\in \hat{\NN}} \|\nabla\Theta(\btheta)\|_{\hinfty, \hinfty}.
\end{align*}

\subsection{ Showing the Non-escaping Property}
\begin{align*}
\|\Theta(\btheta) - \btheta^\lambda\|_\hinfty
=& \|(\Theta(\btheta) - \Theta(\btheta^\lambda))+(\Theta(\btheta^\lambda) - \btheta^\lambda)\|_\hinfty
\\
\stack{\ding{172}}{\leq}& \|\nabla \Theta(\tilde\btheta)^T(\btheta-\btheta^\lambda)\|_{\hinfty} + \|  \W^\star  \nabla {G}(\btheta^\lambda)\|_\hinfty
\\
\leq& \max_{\tilde\btheta\in\hat{\NN}}\|\nabla \Theta(\tilde\btheta)\|_{\hinfty, \hinfty}  \|\btheta-\btheta^\lambda\|_\hinfty+ \|  \W^\star  \nabla {G}(\btheta^\lambda)\|_\hinfty
\\
\stack{\ding{173}}{\leq}&  (0.08563) \left( {35.2\gamma_0}/{\sqrt 2}  \right)+ 22.7\gamma_0
\\
\leq&  {35.117\gamma_0}/{\sqrt 2}
<{35.2\gamma_0}/{\sqrt 2}, 
\end{align*}
where \ding{172} follows from the mean value theorem for some $\tilde{\btheta}$ on the line segment joining $\btheta$ and $\btheta^\lambda$
and \ding{173} follows from~\eqref{eqn:D:Pi:lambda} and~\eqref{eqn:E:final}.  Eq.~\eqref{eqn:E:final} is given as follows
\begin{equation}
\begin{aligned}
\left\|\W^\star\nabla {G}(\btheta^\lambda)\right\|_\hinfty
&=\left\|\W^\star
\begin{bmatrix*}[l]
\R\{-\diag({\c^\lambda})^H(\mA'(\f^\lambda)^H\mZ\w+\D_1(\f^\lambda, \f^\star)\c^\star-\D_1(\f^\lambda)\c^\lambda)\}\\
\R\{-\mA(\f^\lambda)^H\mZ\w-\D_0(\f^\lambda, \f^\star)\c^\star+\D_0(\f^\lambda)\c^\lambda+\lambda \c^\lambda./\big|\c^\lambda\big|\}\\
\I\{-\mA(\f^\lambda)^H\mZ(^\lambda\w-\D_0(\f^\lambda, \f^\star)\c^\star+\D_0(\f^\lambda)\c^\lambda+\lambda \c^\lambda./\big|\c^\lambda\big|\}
\end{bmatrix*}
\right\|_\hinfty
\\
&\stack{\ding{172}}{=}\left\|
\begin{bmatrix*}[l]
\R\{-\diag({\c^\lambda}./|\c^\star|)^H\mA'(\f^\lambda)^H\mZ\w\}/\sqrt\tau\\
\R\{-\mA(\f^\lambda)^H\mZ\w\}\\\I\{-\mA(\f^\lambda)^H\mZ\w\}
\end{bmatrix*}
\right\|_\infty
\\
&\stack{\ding{173}}{\leq}\left\|
\begin{bmatrix*}[l]
41.052n/\sqrt{\tau}(1+X^\star\gamma)\gamma_0
\\
6.534\gamma_0
\\
6.534\gamma_0
\end{bmatrix*}
\right\|_\infty
\\
&\leq22.7\gamma_0, 
\end{aligned} \label{eqn:E:final}
\end{equation}
where \ding{172} holds since $\nabla G^\lambda(\btheta)$ vanishes at $\btheta^\lambda$:
\begin{align*}
\nabla G^\lambda(\btheta^\lambda)
=\begin{bmatrix*}[l]
\R\{-\diag({\c^\lambda})^H(\D_1(\f^\lambda, \f^\star)\c^\star-\D_1(\f^\lambda)\c^\lambda)\}\\
\R\{-\D_0(\f^\lambda, \f^\star)\c^\star+\D_0(\f^\lambda)\c^\lambda+\lambda \c^\lambda./\big|\c^\lambda\big|\}\\
\I\{-\D_0(\f^\lambda, \f^\star)\c^\star+\D_0(\f^\lambda)\c^\lambda+\lambda \c^\lambda./\big|\c^\lambda\big|\}
\end{bmatrix*}=\zero.
\end{align*}
\ding{173} holds with probability at least $1- \frac{1}{n^2}$ by~\eqref{eqn:noise:bound1}-\eqref{eqn:noise:bound2.2}.

Hence both the contraction and the non-escaping properties are satisfied by $\Theta$ in $\N^\lambda$. Then by the contraction mapping theorem, 
we conclude the proof of Lemma~\ref{lem:fix2}.
\end{proof}

\section{Proof of Lemma~\ref{lem:q0:dual:certificate}}\label{sec:F}
\begin{proof}
To show that $\q^\star$ is a valid dual certificate,  it is instructive to first relate $\q^\star$ to the derivative of $\x^\lambda$ with respect to $\lambda$ (where we treat $\x^\lambda$ as a function of $\lambda$):
\begin{align}
\q^\star
=\lim_{\lambda\to0}\q^\lambda
=\lim_{\lambda\to0}\frac{\x^\star-\x^\lambda}{\lambda}
=-\frac{\mathrm{d} }{\mathrm{d} \lambda} \x^\lambda\big|_{\lambda=0},  \label{eqn:sec:4:1}
\end{align}
where we used the fact that $\lim_{\lambda \rightarrow 0} \x^\lambda = \lim_{\lambda \rightarrow 0} \mA(\f^\lambda)\c^\lambda = \mA(\f^\star)\c^\star = \x^\star$ by Lemma~\ref{lem:fix1}.
Since $\x^\lambda=\mA(\f^\lambda)\c^\lambda=\sum_\ell c_\ell^\lambda \a(f_\ell^\lambda)$,  we compute the derivative $\frac{\mathrm{d} }{\mathrm{d} \lambda} \x^\lambda$ using the chain rule as:
\begin{align}
\frac{\mathrm{d} }{\mathrm{d} \lambda} \x^\lambda
&= \sum_\ell \left(\frac{\mathrm{d} }{\mathrm{d} \lambda}u_\ell^\lambda
+ i \frac{\mathrm{d} }{\mathrm{d} \lambda}v_\ell^\lambda\right)\a(f_\ell^\lambda)
+ \sum_\ell c_\ell^\lambda \left(\frac{\mathrm{d} f_\ell^\lambda}{\mathrm{d} \lambda} \a'(f_\ell^\lambda)\right)
= \left[\mA'(\f^\lambda)\diag(\c^\lambda)~~\mA(\f^\lambda)~~i\mA(\f^\lambda)\right] \frac{\mathrm{d} }{\mathrm{d} \lambda}\btheta^\lambda\label{eqn:sec:4:2}, 
\end{align}
where $\mA'(\f) = \begin{bmatrix} \a'(f_1) & \cdots & \a'(f_k)\end{bmatrix}$.
Therefore,  using Eq.~\eqref{eqn:sec:4:1} and~\eqref{eqn:sec:4:2} we obtain:
\begin{align}
\q^\star
&=-\lim_{\lambda\to 0}\left[\mA'(\f^\lambda)\diag(\c^\lambda)~~\mA(\f^\lambda)~~i\mA(\f^\lambda)\right] \frac{\mathrm{d} }{\mathrm{d} \lambda}\btheta^\lambda
\nn\\
&= - \left[\mA'(\f^\star)\diag(\c^\star)~~\mA(\f^\star)~~i\mA(\f^\star)\right] \lim_{\lambda\to 0}\frac{\mathrm{d} }{\mathrm{d} \lambda}\btheta^\lambda\nonumber\\
& = \left[\mA'(\f^\star)\diag(\c^\star)~~\mA(\f^\star)~~i\mA(\f^\star)\right] (\nabla^2 G^0(\btheta^\star))^{-1}\frac{\partial }{\partial \lambda}\nabla G^0(\btheta^\star)
\label{eqn:sec:4:3}, 
\end{align}
where in the second line we again used the fact that $\lim_{\lambda\to0}\btheta^\lambda=\btheta^\star$ by Lemma~\ref{lem:fix1},  and in the last line we used the expression for ${\mathrm{d}\btheta^\lambda }/{\mathrm{d} \lambda}$ given in~\eqref{eqn:thetadiff}.

We next compute $\frac{\partial }{\partial \lambda}\nabla G^0(\btheta^\star)$ explicitly. Let $K^{(\ell)}(\cdot)$ denote the $\ell$-order derivative of the Jackson kernel $K(\cdot)$ (see Appendix~\ref{sec:A} for more details).
Recall that $\D_\ell(\f^1, \f^2) := [K^{(\ell)}(f_m^2-f_n^1)]_{1\leq n\leq k, 1\leq m\leq k}$ and $\D_\ell(\f):=\D_\ell(\f, \f)$ are matrices formed by sampling the Jackson kernel and its derivatives. Then we have the following expression for $ \nabla G^\lambda(\btheta)$ (see Appendix~\ref{sec:C} for more details)
\begin{align}\label{eqn:nabla:G:lambda}
&\nabla{G}^\lambda(\btheta)
=\begin{bmatrix*}[l]
\R\{ \diag(\c)(\D_1(\f, \f^\star)\c^\star-\D_1(\f)\c)\}
\\
\R\{ -\D_0(\f, \f^\star)\c^\star+\D_0(\f)\c+\lambda \c./\big|\c\big|\}
\\
\I\{ -\D_0(\f, \f^\star)\c^\star+\D_0(\f)\c+\lambda \c./\big|\c\big|\}
\end{bmatrix*}.
\end{align}
Therefore,  the partial derivative of~\eqref{eqn:nabla:G:lambda} with respect to $\lambda$ is the expanded complex sign vector:
\begin{align}
\frac{\partial }{\partial \lambda}\nabla G^\lambda(\btheta^\lambda)
=
\begin{bmatrix}
\zero\\ \R\{\sign(\c^\lambda)\}\\ \I\{\sign(\c^\lambda)\}
\end{bmatrix} := \begin{bmatrix}\zero \\ \s^\lambda_R \\ \s^\lambda_I\end{bmatrix}
\quad
\Longrightarrow
\quad
\frac{\partial }{\partial \lambda}\nabla G^0(\btheta^\star)
=
\begin{bmatrix}
\zero\\ \R\{\sign(\c^\star)\}\\ \I\{\sign(\c^\star)\}
\end{bmatrix} := \begin{bmatrix}\zero \\ \s^\star_R \\ \s^\star_I\end{bmatrix}.
\label{eqn:sec:4:5}
\end{align}
Here $\s^\lambda = \c^\lambda./|\c^\lambda|,  \s^\star = \c^\star./|\c^\star|$ and the subscript $R$ and $I$ indicate the real and imaginary parts respectively.

Combining Eq.~\eqref{eqn:sec:4:3} and~\eqref{eqn:sec:4:5},  we get
\begin{align}
\q^\star&=
[\mA'(\f^\star)~\mA(\f^\star)~i\mA(\f^\star)] \underbrace{\begin{bmatrix}\diag(\c^\star)& &\\ &\eye&\\ & &\eye  \end{bmatrix}(\nabla^2 G^0(\btheta^\star))^{-1}
\begin{bmatrix}\zero \\ \s^\star_R \\ \s^\star_I\end{bmatrix}}_{:=[\bbeta^T~\balpha_R^T~\balpha_I^T]^T}\label{eqn:sec:4:alpha:beta:gamma}, 
\end{align}
where  we have defined the coefficient vectors $\balpha_R,  \balpha_I$ and $\bbeta$ in~\eqref{eqn:sec:4:alpha:beta:gamma}. These coefficient vectors satisfy
\begin{align}\label{eqn:coef}
\nabla^2 G^0(\theta^\star) \begin{bmatrix}
\diag(\c^\star)^{-1}\bbeta\\
\balpha_R\\
\balpha_I
\end{bmatrix} = \begin{bmatrix}\zero \\ \s^\star_R \\ \s^\star_I\end{bmatrix}.
\end{align}
By denoting $\balpha=\balpha_R+i\balpha_I$ and $\balpha=[\alpha_1, \ldots, \alpha_k]^T$,  $\bbeta=[\beta_1, \ldots, \beta_k]^T$,  we obtain an explicit form for the dual polynomial $Q^\star(f)$:
\begin{align}
Q^\star(f)&=\a(f)^H\mZ\q^\star
=\sum_{\ell=1}^k \alpha_\ell K(f^\star_\ell-f)+\sum_{\ell=1}^k\beta_\ell K'(f^\star_\ell-f)\label{eqn:sec:4:Q:star}.
\end{align}

To show that $\q^\star$ certifies the atomic decomposition $\x^\star = \sum_{\ell = 1}^k c_\ell^\star \a(f_\ell^\star)$,  we need to establish that
\begin{enumerate}[label=\arabic*)]
	\item  $Q^\star(f)$ satisfies $Q^\star(f^\star_\ell) = \sign(c^\star_\ell),  \ell = 1,  \ldots,  k$ (Interpolation);
	\item  $|Q^\star(f)| < 1,  \forall f \notin T^\star$ (Boundedness). 
\end{enumerate}  

\subsection{Showing the Interpolation Property}
The Interpolation property follows from the construction process and is also easy to verify directly by noting
\begin{align*}
\nabla^2 G^0(\btheta^\star)=
\begin{bmatrix}
-\R\{\diag(\c^\star)^H\D_2(\f^\star)\diag(\c^\star)\}&\R\{-\diag(\c^\star)^H\D_1(\f^\star)\}&\I\{\diag(\c^\star)^H\D_1(\f^\star)\}\\
-\R\{{\D_1(\f^\star)}^H\diag(\c^\star)\}&\D_0(\f^\star)&0\\
-\I\{{\D_1(\f^\star)}^H\diag(\c^\star)\}&0&\D_0(\f^\star)
\end{bmatrix}.
\end{align*}
Indeed,  the Interpolation property is a result of~\eqref{eqn:coef}: since $\D_1(\f^\star)\in\R^{k\times k}$ and $\D_1(\f^\star)^T=-\D_1(\f^\star)$ (see Appendix~\ref{sec:A}),   the last two row blocks in~\eqref{eqn:coef} read
\begin{align}\label{eqn:int}
& \begin{bmatrix}
\D_1(\f^\star)\R\{\diag(\c^\star)\}&\D_0(\f^\star)& 0\\
\D_1(\f^\star)\I\{\diag(\c^\star)\}&0 &\D_0(\f^\star)
\end{bmatrix}
 \begin{bmatrix} \diag(\c^\star)^{-1} \bbeta \\ \balpha_R \\ \balpha_I \end{bmatrix}
 =\begin{bmatrix} \s^\star_R \\ \s^\star_I\end{bmatrix}
\nn\\
\iff
&\ \D_1(\f^\star) (\R\{\diag(\c^\star)\}+i\I\{\diag(\c^\star)\})\diag(\c^\star)^{-1} \bbeta+
\D_0(\f^\star)(\balpha_R+i\balpha_I)=\R\{\sign(\c^\star)\}+i\I\{\sign(\c^\star)\}\nn\\
\iff
&\ \D_1(\f^\star)  \bbeta+\D_0(\f^\star)\balpha=\sign(\c^\star)\nn\\
\iff
&\ Q^\star(f^\star_\ell) = \sign(c^\star_\ell),  \ell = 1,  \ldots,  k.
\end{align}
Furthermore,  the first row block of~\eqref{eqn:coef} is equivalent to
\begin{align}\label{eqn:gradientzero}
& -\R\{\diag(\c^\star)^H\D_2(\f^\star)\diag(\c^\star)\} \diag(\c^\star)^{-1} \bbeta +
\R\{-\diag(\c^\star)^H\D_1(\f^\star)\}\balpha_R +
\I\{\diag(\c^\star)^H\D_1(\f^\star)\}\balpha_I = \zero\nonumber\\
\iff &\ \R\{\diag(\c^\star)^H\left(\D_2(\f^\star) \bbeta + \D_1(\f^\star) \balpha\right)\}  = \zero\nonumber \\
 \iff &\ \R\{c^{\star H}_\ell Q^\star(f_\ell)'\} = 0,  \ell = 1,  \ldots,  k.
\end{align}

\subsection{Showing the Boundedness Property}
It remains to show that $Q^\star(f)$ satisfies the Boundedness property,  for which we follow the arguments of~\cite{Candes:2014br}.
We start with estimating the coefficient vectors $\balpha$ and $\bbeta$ by rewriting~\eqref{eqn:sec:4:alpha:beta:gamma} as
\begin{align}
\begin{bmatrix}
\diag(\c^\star)& &\\ &\eye&\\ & &\eye
\end{bmatrix}
\Phi
\left( \Phi
\nabla^2 G^0(\btheta^\star)
\Phi
\right)^{-1} \Phi
\begin{bmatrix}
\zero \\ \s_R^\star \\ \s_I^\star
\end{bmatrix}
=
\begin{bmatrix}
\bbeta \\ \balpha_R \\ \balpha_I
\end{bmatrix}\label{eqn:sec:F:QW}, 
\end{align}
where   $\Phi=\diag\left(\left[\diag\left( \frac{1}{|\c^\star|} \right), \eye, \eye  \right]\right).$
Denoting $\mathbf{\Phi}:=\diag(\s^\star)$,  we further simplify~\eqref{eqn:sec:F:QW} as
\begin{align}\label{eqn:kernel:coefficient}
\begin{bmatrix}
-\R\{\mathbf{\Phi}^H\D_2(\f^\star)\mathbf{\Phi}\}&\R\{-\mathbf{\Phi}^H\D_1(\f^\star)\}&\I\{\mathbf{\Phi}^H\D_1(\f^\star)\}
\\
-\R\{{\D_1(\f^\star)}^H\mathbf{\Phi}\}&\D_0(\f^\star)&0
\\
-\I\{{\D_1(\f^\star)}^H\mathbf{\Phi}\}&0&\D_0(\f^\star)
\end{bmatrix}
\begin{bmatrix}
\mathbf{\Phi}^{-1} \bbeta \\ \balpha_R \\ \balpha_I
\end{bmatrix}
&=
\begin{bmatrix}0 \\ \s_R^\star \\ \s_I^\star
\end{bmatrix}.
\end{align}
Denote
\begin{align*}
\tilde{\D}_2&= -\R\{\diag(\s^\star)^H\D_2(\f^\star)\diag(\s^\star)\};
\\
\tilde{\D}_1&= \diag(\s^\star)^H\D_1(\f^\star);
\\
\tilde{\bbeta}&= \diag(\s^\star)^{-1} \bbeta.
\end{align*}
The last two row blocks of~\eqref{eqn:kernel:coefficient} give
\begin{align*}
\balpha_R&= {\D_0(\f^\star)}^{-1}[\s_R^\star+ \R\{{\D_1(\f^\star)}^H\diag(\s^\star)\}\tilde{\bbeta}];
\\
\balpha_I&= {\D_0(\f^\star)}^{-1}[\s_I^\star+\I\{{\D_1(\f^\star)}^H\diag(\s^\star)\}\tilde{\bbeta}]
\\
\end{align*}
implying
\begin{align}
\balpha&= {\D_0(\f^\star)}^{-1}[\s^\star +{\D_1(\f^\star)}^H\diag(\s^\star)\tilde{\bbeta}]\label{eqn:alpha}
\\
&= {\D_0(\f^\star)}^{-1}[\s^\star +{\D_1(\f^\star)}^H \bbeta ]
\nn\\
&=  \s^\star-(\eye-{\D_0(\f^\star)}^{-1})\s^\star+{\D_0(\f^\star)}^{-1}{\D_1(\f^\star)}^H\bbeta\nn.
\end{align}
Without loss of generality,  we assume  $\e_1^T\s^\star=1$. Then
\begin{align}\label{eqn:alpha1}
\alpha_1=1 -\left[(\eye-{\D_0(\f^\star)}^{-1})\s^\star-{\D_0(\f^\star)}^{-1}{\D_1(\f^\star)}^H\bbeta \right]_1, 
\end{align}
where $[\cdot]_1$ stands for the first entry of a vector.
The first row block of~\eqref{eqn:kernel:coefficient} leads to
\begin{align*}
\tilde{\D}_2\tilde{\bbeta}&= \R\{\tilde{\D}_1\balpha_R\}-\I\{\tilde{\D}_1\balpha_I\}
= \R\{\tilde{\D}_1(\balpha_R+i\balpha_I)\}
=  \R\{\tilde{\D}_1\balpha\}.
\end{align*}
Combining this with~\eqref{eqn:alpha},  we get
\begin{align*}
\tilde{\D}_2\tilde{\bbeta}&=  \R\{\tilde{\D}_1{\D_0(\f^\star)}^{-1}[\s^\star+{\D^\star}_1^H{\bbeta}]\}
\\
&=  \R\{\tilde{\D}_1{\D_0(\f^\star)}^{-1} \s^\star  \}+\R\{\tilde{\D}_1{\D_0(\f^\star)}^{-1}\}{\D^\star}_1^H {\bbeta}
\\
&=  \R\{\tilde{\D}_1{\D_0(\f^\star)}^{-1} \s^\star  \}+\R\{\tilde{\D}_1{\D_0(\f^\star)}^{-1} \}{\D^\star}_1^H\diag(\s^\star)\tilde{\bbeta}
\\
&=  \R\{\tilde{\D}_1{\D_0(\f^\star)}^{-1} \s^\star  \}+\R\{\tilde{\D}_1{\D_0(\f^\star)}^{-1} \}\tilde{\D}_1^H\tilde{\bbeta}.
\end{align*}
This implies
\begin{align}\label{eqn:beta:a}
(\tilde{\D}_2-\R\{\tilde{\D}_1{\D_0(\f^\star)}^{-1} \}\tilde{\D}_1^H)\tilde{\bbeta}=\R\{\tilde{\D}_1{\D_0(\f^\star)}^{-1}\s^\star\}.
\end{align}

\subsubsection{\bf Bounding \texorpdfstring{$\|\tilde{\bbeta}\|_\infty$}{Lg}}
First invoke~\eqref{eqn:kernel:matrix:small:norm} to get
\begin{equation}\label{F:eqn:norm:kernel:matrix}
\begin{aligned}
\|{\D_0(\f^\star)}^{-1}\|_{\infty, \infty}& \leq \frac{1}{1-0.00755 }, 
\\
\{\|\D_1(\f^\star)\|_{\infty, \infty}, \|\tilde{\D}_1\|_{\infty, \infty} \}/\sqrt{\tau}&\leq 0.01236n/\sqrt{\tau}\leq0.00682, 
\\
\|\eye-\tilde{\D}_2/\tau\|_{\infty, \infty}&\leq0.0171.
\end{aligned}
\end{equation}
These inequalities~\eqref{F:eqn:norm:kernel:matrix} immediately lead to
\begin{equation}\label{eqn:F:I-X}
\begin{aligned}
\|\tau\eye-\tilde{\D}_2+\R\{\tilde{\D}_1{\D_0(\f^\star)}^{-1} \}\tilde{\D}_1^H\|_{\infty, \infty}
\stack{\ding{172}}{\leq}& \tau\left(\|\eye-\tilde{\D}_2/\tau\|_{\infty, \infty}+\|\tilde{\D}_1/\sqrt{\tau}\|_{\infty, \infty}^2\|{\D_0(\f^\star)}^{-1}\|_{\infty, \infty}\right)
\\
\stack{\ding{173}}{\leq}& \tau\left(0.0171 + 0.00682^2/(1-0.00755)\right)
\leq  0.01715\tau
{<}\tau, 
\end{aligned}
\end{equation}
where \ding{172} follows from the triangle inequality and the sub-multiplicative property of $\ell_{\infty, \infty}$ norm and \ding{173} follows from~\eqref{F:eqn:norm:kernel:matrix}. This implies that $ \tilde{\D}_2-\R\{\tilde{\D}_1{\D_0(\f^\star)}^{-1} \}\tilde{\D}_1^H $ is nonsingular and well-conditioned. In particular, 
\begin{align*}
\|(\tilde{\D}_2-\R\{\tilde{\D}_1{\D_0(\f^\star)}^{-1} \}\tilde{\D}_1^H)^{-1}\|_{\infty, \infty}
&\stack{\ding{172}}{\leq} \frac{1}{\tau(1-0.01715)}\leq \frac{1.0175}{\tau}, 
\end{align*}
where \ding{172} follows from~\eqref{eqn:F:I-X}. Then from~\eqref{eqn:beta:a},  we have
\begin{equation}\label{eqn:beta:norm}
\begin{aligned}
\left\|\tilde{\bbeta}\right\|_\infty
\leq& \|(\tilde{\D}_2-\R\{\tilde{\D}_1{\D_0(\f^\star)}^{-1} \}\tilde{\D}_1^H)^{-1}\|_{\infty, \infty}\|\R\{\tilde{\D}_1{\D_0(\f^\star)}^{-1}\s^\star\}\|_\infty
\\
\stack{\ding{172}}{\leq}& \|(\tilde{\D}_2-\R\{\tilde{\D}_1{\D_0(\f^\star)}^{-1} \}\tilde{\D}_1^H)^{-1}\|_{\infty, \infty}
\|\tilde{\D}_1\|_{\infty, \infty}\|{\D_0(\f^\star)}^{-1}\|_{\infty, \infty}\|\s^\star\|_\infty
\\
\stack{\ding{173}}{\leq}&  \frac{1.0175}{\tau}\frac{0.00682\sqrt\tau}{1-0.00755}
\leq  \frac{0.00700}{\sqrt{\tau}}, 
\end{aligned}
\end{equation}
where \ding{172} follows from sub-multiplicative property of the operator norm $\|\cdot\|_{\infty, \infty}$,  and \ding{173} follows from  Eq.~\eqref{F:eqn:norm:kernel:matrix} and $\|\s^\star\|_\infty=1$. This indicates that
 \begin{align}
\|{\bbeta}\|_\infty
\leq  \left\|\diag(\s^\star)\right\|_{\infty, \infty}\|\tilde{\bbeta}\|_\infty\leq 0.00700/\sqrt{\tau} \leq 0.00386/n := \beta^\infty,  \label{F:eqn:beta}
\end{align}
where the last inequality follows because $\tau\geq 3.289n^2$ for $n\geq130$ by~\eqref{eqn:sec:A:tau}.

\subsubsection{Bounding \texorpdfstring{$\|{\balpha}\|_\infty$ and $\R\{\alpha_1\}$ and $\left|\I\{\alpha_1\}\right|$}{Lg}}
From~\eqref{eqn:alpha},  we have
\begin{equation}\label{eqn:alpha:norm}
\begin{aligned}
\|\balpha\|_\infty
\stack{\ding{172}}{\leq}& \|{\D_0(\f^\star)}^{-1}\|_{\infty, \infty}\|\s^\star\|_\infty+\|{\D_0(\f^\star)}^{-1}\|_{\infty, \infty}\|{\D_1(\f^\star)}\|_{\infty, \infty}\|\bbeta\|_\infty
\\
\stack{\ding{173}}{\leq}& \frac{1}{1-0.00755 }+\frac{0.00682\sqrt{\tau}}{1-0.00755}\frac{0.00700}{\sqrt{\tau}}
\\\leq&1.00766 := \alpha^\infty, 
\end{aligned}
\end{equation}
where \ding{172} follows from the triangle inequality and the fact that $\|\mA\mB\x\|_\infty\leq\|\mA\|_{\infty, \infty}\|\mB\|_{\infty, \infty}\|\x\|_\infty$. \ding{173} holds since $\|\s^\star\|_\infty=1.$

Second,  recognizing that $\alpha_1= 1 -[(\eye-{\D_0(\f^\star)}^{-1})\s^\star-{\D_0(\f^\star)}^{-1}{\D_1(\f^\star)}^H\bbeta ]_1$ by Eq.~\eqref{eqn:alpha1},  we have
$
\R\{\alpha_1\}= 1 -[\R\{(\eye-{\D_0(\f^\star)}^{-1})\s^\star-{\D_0(\f^\star)}^{-1}{\D_1(\f^\star)}^H\bbeta \}]_1.
$
We further get an upper bound as follows 
\begin{align*}
&\big|\big[\R\big\{(\eye-{\D_0(\f^\star)}^{-1})\s^\star-{\D_0(\f^\star)}^{-1}{\D_1(\f^\star)}^H\bbeta \big\}\big]_1\big|\\
\stack{\ding{172}}{\leq}& \|(\eye-{\D_0(\f^\star)}^{-1})\s^\star-{\D_0(\f^\star)}^{-1}{\D_1(\f^\star)}^H\bbeta\|_\infty
\\
\stack{\ding{173}}{\leq}& \|{\D_0(\f^\star)}^{-1}\|_{\infty, \infty}\|\eye-{\D_0(\f^\star)} \|_{\infty, \infty}\|\s^\star\|_\infty
+ \|{\D_0(\f^\star)}^{-1}\|_{\infty, \infty} \|\D_1(\f^\star)\|_{\infty, \infty} \|\bbeta\|_\infty
\\
\leq& \frac{0.00755}{1-0.00755 }+  \frac{0.00682\sqrt{\tau}}{1-0.00755}\frac{0.00700}{\sqrt{\tau}}
\\
\leq& 0.00766, 
\end{align*}
where \ding{172} follows from the real part of the first entry of a vector is no larger than the infinity norm of this vector and \ding{173} follows from the triangle inequality and the sub-multiplicative property of infinity operator norm that $\|\mA\mB\x\|_\infty\leq\|\mA\|_{\infty, \infty}\|\mB\|_{\infty, \infty}\|\x\|_\infty$. The last inequality follows from Eq.~\eqref{F:eqn:norm:kernel:matrix} and~\eqref{F:eqn:beta}. Combining the above arguments yields
\begin{equation}\label{eqn:alpha1:norm}
\begin{aligned}
\R\{\alpha_1\}
\geq& 1 -0.00766
\quad\text{ and }\quad
\left|\I\{\alpha_1\}\right|
\leq0.00766.
\end{aligned}
\end{equation}

We are ready to show the Boundedness property following the simplifications used in~\cite{Candes:2014br}. In particular,  fix an arbitrary point $f_0^\star\in T^\star$ as the reference point and let $f_{-1}^\star$ be the first frequency in $T^\star$ that lies on the left of $f_0^\star$ while $f_{1}^\star$ be the first frequency in $T^\star$ that lies on the right. Here ``left'' and ``right'' are directions on the complex circle $\TT$. We remark that the analysis depends only on the relative locations of $\{f_\ell^\star\}$. Hence,  to simplify the arguments,  we assume that the reference point $f_0^\star$ is at $0$ by shifting the frequencies if necessary.
Then we divide the region between $f_0^\star = 0$ and $f_{1}^\star/2$ into three parts: Near Region $\N:=[0, 0.24/n]$,  Middle Region $\M:=[0.24/n, 0.75/n]$ and Far Region $\F:=[0.75/n,  f_{1}^\star/2]$. Also their symmetric counterparts: $-\N:=[-0.24/n,  0]$,  $-\M:=[-0.75/n, -0.24/n]$,  and $-\F:=[f_{-1}^\star/2,  -0.75/n]$.  We first show that the dual polynomial has strictly negative curvature $|Q^\star(f)|''< 0$ in $\N=[0, 0.24/n]$ and $|Q^\star(f)|< 1$ in $\M\cup\F=[0.24/n,  f_{1}^\star/2]$,  implying $|Q^\star(f)|< 1$ in $\N\cup\M\cup\F\backslash\{f_0^\star\}$ by exploiting $|Q^\star(f_0^\star)|=1$ and $|Q^\star(f_0^\star)|'=0$.  Then using the same symmetric arguments in~\cite{Candes:2014br},  we claim that  $|Q^\star(f)|< 1$ in $(-\N)\cup(-\M)\cup(-\F)\backslash\{f_0^\star\}$. Combining these two results with the fact that the reference point $f_0^\star$ is chosen arbitrarily from $T^\star$ (and shifted to $0$),  we establish that the Boundedness property of $Q^\star(f)$ holds in the entire $\TT \backslash T^\star$.

\subsubsection{Controlling \texorpdfstring{$Q^\star(f)$}{Lg} in Near Region}
 
For $f\in\N$,  the second-order Taylor expansion of $|Q^\star(f)|$ at ${f}^\star_0 = 0$ states
\begin{align}\label{eqn:taylor}
|Q^\star(f)| &= |Q^\star(f^\star_0)| + (f-f^\star_0) |Q^\star(f^\star_0)|' + \frac{1}{2}(f-f^\star_0)^2 |Q^\star(\xi)|''\nonumber\\
& = 1 + (f-f^\star_0) |Q^\star(f^\star_0)|' + \frac{1}{2}(f-f^\star_0)^2 |Q^\star(\xi)|''\text{\ for\ some\ } \xi \in \N, 
\end{align}
with the second line following from the Interpolation property.
We argue that
\begin{align*}
|Q^\star(f^\star_0)|' &= \frac{Q^\star_R(f^\star_0)Q^\star_R(f^\star_0)'+Q^\star_I(f^\star_0)Q^\star_I(f^\star_0)'}{|Q^\star(f^\star_0)|} = \frac{\R\{c^\star_0\}Q^\star_R(f^\star_0)'+\I\{c^\star_0\}Q^\star_I(f^\star_0)'}{|c^\star_0||Q^\star(f^\star_0)|}
 = \frac{\R\{c^{\star H}_0 Q^\star(f^\star_0)'\}}{|c^\star_0||Q^\star(f^\star_0)|} = 0.
\end{align*}
The last equality is due to~\eqref{eqn:gradientzero}. Together with~\eqref{eqn:taylor},  to bound $|Q^\star(f)|$ strictly below 1,   we only need to show  the concavity of $|Q^\star(f)|$ in Near Region (i.e.,  $|Q^\star(f)|''<0$ for $f\in\N$). Since
\begin{align*}
|Q^\star(f)|'' = -\frac{(Q^\star_R(f)Q^\star_R(f)'+Q^\star_I(f)Q^\star_I(f)')^2}{|Q^\star(f)|^3} + \frac{Q^\star_R(f){Q_R^\star}(f)''+|{Q^\star}(f)'|^2+|Q^\star_I(f)||{Q^\star_I}(f)''|}{|Q^\star(f)|}, 
\end{align*}
we only need to show that
\begin{align*}
Q^\star_R(f){Q_R^\star}(f)''+|{Q^\star}(f)'|^2+|Q^\star_I(f)||{Q^\star_I}(f)''|<0.
\end{align*}

Recall the expression for $Q^\star(f)$ given in Eq.~\eqref{eqn:sec:4:Q:star}
 \begin{align*}
Q^\star(f) =\sum_{f_\ell^\star \in T^\star} \alpha_\ell K(f^\star_\ell-f)+\sum_{f_\ell^\star \in T^\star}\beta_\ell K'(f^\star_\ell-f).
\end{align*}
To bound the real part of $Q^\star(f)$ in $\N = [0, 0.24/n]$,  we observe
\begin{align*}
{Q_R^\star}(f)& \geq \R\{\alpha_1 K(f)\} - \alpha^\infty \sum_{f_\ell^\star \in T^\star\backslash\{0\}}|K(f-f_\ell^\star)| - \beta^\infty |K'(f)| - \beta^\infty \sum_{f_\ell^\star \in T^\star\backslash\{0\}}|K'(f-f_\ell^\star)|\\
& \geq \R\{\alpha_1\}\min_{f\in\N}K(f)-\alpha^\infty F_0(2.5/n, f)
-\beta^\infty(\max_{f\in\N}|K'(f)|+F_1(2.5/n, f))
\\
& \geq (1-0.00766)(0.905252)- (1.00766)0.00757
-( 0.00386/n)(\bl{0.789569}n+0.01241n)
\\
& \geq \bl{0.887594}, 
\end{align*}
where the first inequality follows from an application of the triangle inequality,  and the second is from Lemma~\ref{lem:bounds:sum:K}. The third inequality follows from evaluating $F_0(2.5/n,  f)$ and $F_1(2.5/n,  f)$ at $f = 0.24/n$,  the numerical bounds in Tables~\ref{tbl:bounds:1} and~\ref{tbl:bounds:2} of Appendix~\ref{sec:numerical:bound} and Eq.~\eqref{F:eqn:beta}, ~\eqref{eqn:alpha:norm}, ~\eqref{eqn:alpha1:norm},  as well as  $\min_{f\in\N}K(f)\geq 0.905252.$ This last bound follows from~\cite[Eq. (2.20),  set $f_c=n-2$]{Candes:2014br}  that
$K(f)\geq 1- \frac{\pi^2}{6}(n-2)(n+2)f^2.$
Hence
\begin{align*}
\min_{f\in\N}K(f)&\geq\min_{f\in\N} 1- \frac{\pi^2}{6}(n-2)(n+2)f^2
\geq 1-\frac{\pi^2}{6}(n-2)(n+2)(0.24/n)^2
\geq 0.905252.
\end{align*}

{\allowdisplaybreaks
Similarly,  combining  Eq.~\eqref{F:eqn:beta}, ~\eqref{eqn:alpha:norm}, ~\eqref{eqn:alpha1:norm},  the upper bounds on $F_\ell(2.5/n, 0.24/n)$
in Table~\ref{tbl:bounds:1} and the upper bounds for   $\max_{f\in\N}|K^{(\ell)}(f)|$ and $\max_{f\in\N}K''(f)$ in Table~\ref{tbl:bounds:2},  we get
\begin{align*}
{Q_R^\star}''(f)
\leq& \R\{\alpha_1\}\max_{f\in\N}K''(f)+\alpha^\infty F_2(2.5/n, 0.24/n)
+\beta^\infty(\max_{f\in\N}|K'''(f)|+F_3(2.5/n, 0.24/n))
\\
\leq& (1-0.00766)(-2.35084n^2)+ (1.00766)(0.05637n^2)
+( 0.00386/n)(7.79273n^3+0.28838n^3)
\\
\leq& -2.24483 n^2;
\\[1ex]
|{Q^\star_I}(f)|
\leq& |\I\{\alpha_1\}|\max_{f\in\N}K(f)+\alpha^\infty F_0(2.5/n, 0.24/n)
+\beta^\infty(\max_{f\in\N}|K'(f)|+F_1(2.5/n, 0.24/n))
\\
\leq&(0.00766)\times 1+(1.00766)0.00757
+( 0.00386/n)(\bl{0.789569}n+0.01241n)
\\
\leq&\bl{0.0183836};
\\[1ex]
|{Q^\star_I}''(f)|
\leq& |\I\{\alpha_1\}|\max_{f\in\N}|K''(f)|+\alpha^\infty F_2(2.5/n, 0.24/n)
+\beta^\infty(\max_{f\in\N}|K'''(f)|+F_3(2.5/n, 0.24/n))
\\
\leq& (0.00766)(3.290n^2)+ (1.00766)(0.05637n^2)
+( 0.00386/n)(7.79273n^3+0.28838n^3)
\\
\leq& 0.113197  n^2;
\\[1ex]
|{Q^\star}'(f)|
\leq& \alpha^\infty(\max_{f\in\N}|K'(f)|+F_1(2.5/n, 0.24/n))
+\beta^\infty(\max_{f\in\N}|K''(f)|+F_2(2.5/n, 0.24/n))
\\
\leq& ( 1.00766)(0.789569n+0.01241n)
+( 0.00386/n)(3.290n^2+0.05637n^2)
\\
\leq& 0.821039 n;
\\[1ex]
|{Q^\star}''(f)|
\leq& \alpha^\infty(\max_{f\in\N}|K''(f)|+F_2(2.5/n, 0.24/n))
+\beta^\infty(\max_{f\in\N}|K'''(f)|+F_3(2.5/n, 0.24/n))
\\
\leq& ( 1.00766)(3.290n^2+0.05637n^2)
+( 0.00386/n)(7.79273n^3+0.28838n^3)
\\
\leq& 3.40320 n^2.
\end{align*}
Combining the lower bound on ${Q}^\star_R(f)$ and the upper bounds on ${Q_R^\star}(f)''$,  $|{{Q}^\star}(f)'|$,  $|{Q^\star_I}(f)|$ and
$|{{Q}^\star_I}(f)''|$,  we arrive at
\begin{align*}
&|{Q}^\star(f)|''
={Q}^\star_R(f){Q_R^\star}(f)''+|{{Q}^\star}(f)'|^2+|{Q^\star_I}(f)||{{Q}^\star_I}(f)''|
\leq -1.316313n^2 <0 \ \text{in }\N.
\end{align*}
}

\subsubsection{ Bounding \texorpdfstring{$|Q^\star(f)|$}{Lg}   in   Middle Region}
 
For upperbounding $|Q^\star(f)|$ for  $f \in \M  = [0.24/n,  0.75/n]$,  we firstly apply the triangle inequality 
\begin{align}
|Q^\star(f)|
& = |\sum_{f_\ell^\star \in T^\star} \alpha_\ell K(f^\star_\ell-f)+\sum_{f_\ell^\star \in T^\star}\beta_\ell K'(f^\star_\ell-f)|\nonumber\\
& \leq \|\balpha\|_\infty \bigg(|K(f)| + \sum_{f_\ell^\star \in T^\star\backslash\{0\}}|K(f-f_\ell^\star)|\bigg) + \|\bbeta\|_\infty\bigg( |K'(f)| + \sum_{f_\ell^\star \in T^\star\backslash\{0\}}|K'(f-f_\ell^\star)|\bigg)\nonumber\\
& \leq \alpha^\infty |K(f)| +  \beta^\infty |K'(f)| + \alpha^\infty F_0(2.5/n,  f)  + \beta^\infty F_1(2.5/n,  f), 
\end{align}
where the last inequality is from Lemma~\ref{lem:bounds:sum:K}.  
We then follow \cite[Eq. (2.29)]{Candes:2014br} to upperbound the first two terms in the last line
\begin{align*}
|K(f)| & \leq 1 - \frac{\pi^2(n^2-4)f^2}{6} + \frac{\pi^4n^4 f^4}{72} \qquad\text{and}\qquad
|K'(f)|  \leq \frac{\pi^2 (n^2-4) f}{3}, \qquad \text{for $f \in [-1/2,  1/2]$.}
\end{align*}
The rest of argument consists of defining
\begin{align*}
L_1(f) & =\alpha^\infty\left(1-\frac{1}{6} \pi ^2 (n^2-4) f^2+\frac{1}{72} \pi ^4 n^4 f^4\right)+\beta^\infty\frac{1}{3} \pi^2  (n^2-4) f; 
\\
L_2(f)&=\alpha^\infty F_0(2.5/n, f)+\beta^\infty F_1(2.5/n, f)
\end{align*}
with the derivative of $L_1(f)$ given by
\begin{align*}
L_1'(f) = -\alpha^\infty \left(\frac{\pi^2(n^2-4)f}{3}-\frac{\pi^4 n^4 f^3}{18}\right) + \beta^\infty \frac{\pi^2 (n^2-4)}{3}<0, \quad\text{for }f\in\M,
\end{align*}
implying that $L_1(f)$ is decreasing. Also,   $L_2(f)$ is increasing in $\M$ by Lemma~\ref{lem:bounds:sum:K}. Hence,  by the monotonic property,  we have
\begin{align*}
|Q^\star(f)| \leq&L_1(0.24/n)+ L_2(0.75/n)
\leq  0.919779+0.007836 = 0.927615 < 1.
\end{align*}

\subsubsection{ Bounding \texorpdfstring{$|Q^\star(f)|$}{Lg}  in Far Region. }
 
Recall that $f_0^\star = 0$ is the reference point. To simplify notation,  we re-index the frequencies such that
$\ldots\leq f_{-1}^\star<f_0^\star=0<f_1^\star<\ldots$.  For $f\in\F=[0.75/n,  f_1^\star/2] = [0.75/n, f_1^\star-f_1^\star/2]$,  by Lemma~\ref{lem:bounds:sum:K:far},   we have
\begin{align}
\sum_j |K^{(\ell)}(f-f_j^\star)|
\leq W_\ell (0.75/n, f_1^\star/2)\nn
& \stack{\ding{172}}{=} \sum_{j\geq0}B_\ell(j ({2.5/n})+0.75/n) + \sum_{j\geq0}B_\ell(j ({2.5/n}) +f_1^\star/2)
\nn\\
&\stack{\ding{173}}{\leq} \sum_{j\geq0}B_\ell(j ({2.5/n})+0.75/n) + \sum_{j\geq0}B_\ell(j ({2.5/n}) +1.25/n)\nn\\
&\stack{\ding{174}}{=}W_\ell(0.75/n, 1.25/n), \label{eqn:far:b}
\end{align}
where \ding{172} follow from the definition of $W(\underline f, \bar f)$ in Lemma~\ref{lem:bounds:sum:K:far}, 
\ding{173} follows $f_1^\star/2=(f_1^\star-f_0^\star)/2\geq\Delta_{\min}/2=1.25/n$ and decreasing property of $B_\ell(\cdot), $ and
\ding{174} follows from the definition of $W(\underline f, \bar f)$.

Finally,  applying~\eqref{eqn:far:b}, ~\eqref{F:eqn:beta} and~\eqref{eqn:alpha:norm}  to~\eqref{eqn:sec:4:Q:star},  we arrive at
\begin{align*}
|Q^\star(f)|
\leq& \alpha^\infty \sum_\ell |K(f-f_\ell^\star)|+\beta^\infty \sum_\ell |K'(f-f_\ell^\star)|
\\\leq&   1.00766W_0(0.75/n, 1.25/n) +(0.00386/n)W_1(0.75/n, 1.25/n)
\\
\leq&  1.00766(0.70859)+  (0.00386/n)(5.2084n)
\\=& 0.734123.
\end{align*}
This concludes the proof of Lemma~\ref{lem:q0:dual:certificate}.
\end{proof}

\section{Proof of Lemma~\ref{lem:Q:lambda:closeto:Q:0}} \label{sec:G}

\begin{proof}
We exploit the closeness of $\btheta^\star$ and $\btheta^\lambda$ (see Lemma~\ref{lem:fix1}) to bound the pointwise distance between $Q^\star(f)$ and $Q^\lambda(f)$. Note
\begin{align*}
Q^\lambda (f) - Q^\star(f) = \a(f)^H\mZ(\q^\lambda - \q^\star) = \a(f)^H\mZ\left(\frac{\x^\star - \x^\lambda}{\lambda} + \frac{\mathrm{d} }{\mathrm{d} \lambda} \x^\lambda\big|_{\lambda=0}\right)  =
\frac{1}{\lambda}\int_0^\lambda \a(f)^H\mZ\left( \frac{\mathrm{d}}{\mathrm{d} t}\x^\star- \frac{\mathrm{d}}{\mathrm{d} t}\x^t \right)\mathrm{d}t, 
\end{align*}
which implies that
\begin{align}\label{eqn:diffQlambdaQstar}
|Q^\lambda(f)-Q^\star(f)| \leq  \max_{0\leq t\leq \lambda}
\left|\a(f)^H\mZ( \frac{\mathrm{d}}{\mathrm{d} t}\x^\star-\a(f)^H\mZ( \frac{\mathrm{d}}{\mathrm{d} t}\x^t\right|.
\end{align}
We can also obtain similar bounds on the pointwise distances between derivatives of $Q^\lambda(f)$ and $Q^\star(f)$.

Recall from Eq.~\eqref{eqn:sec:4:2}, ~\eqref{eqn:thetadiff},  and~\eqref{eqn:sec:4:5} that
\begin{align}\label{sec:G:partial:x}
 \frac{\mathrm{d}}{\mathrm{d} \lambda} \x^\lambda
&=-[\mA'(\f^\lambda)\diag(\c^\lambda)~~~\mA(\f^\lambda)~~~i\mA(\f^\lambda)](\nabla^2 G^\lambda(\btheta^\lambda))^{-1} \brho^\lambda, \\
 \frac{\mathrm{d}}{\mathrm{d} \lambda} \x^\star
&=-[\mA'(\f^\star)\diag(\c^\star)~~~\mA(\f^\star)~~~i\mA(\f^\star)](\nabla^2 G^0(\btheta^\star))^{-1} \brho^\star, \label{sec:G:partial:xstar}
\end{align}
where $\brho^\star = \begin{bmatrix}
\zero^T & \R\{\sign(\c^\star)\}^T & \I\{\sign(\c^\star)\}^T
\end{bmatrix}^T$
and $\brho^\lambda = \begin{bmatrix}
\zero^T & \R\{\sign(\c^\lambda)\}^T & \I\{\sign(\c^\lambda)\}^T
\end{bmatrix}^T$.

Multiplying both sides of  Eq.~\eqref{sec:G:partial:x} and~\eqref{sec:G:partial:xstar} by $-\a(f)^H\mZ($
and then inserting $\W^{\star \frac{1}{2}}\W^{\star-\frac{1}{2}}$ (which equals $\eye$) into the spaces before and after  $(\nabla^2G^0(\btheta^\star))^{-1}$ (and $(\nabla^2 G^\lambda(\btheta^\star))^{-1}$) yield
\begin{align*}
&-\a(f)^H\mZ( \frac{\mathrm{d}}{\mathrm{d} \lambda}\x^\lambda=\bnu^\lambda(f)\Xi^\lambda\brho^\lambda, 
\\
&-\a(f)^H\mZ( \frac{\mathrm{d}}{\mathrm{d} \lambda}\x^\star=\bnu^\star(f) \Xi^\star\brho^\star.
\end{align*}
Here
\begin{equation}\label{eqn:v}
\begin{aligned}
\bnu^\lambda(f)&:=[\D_1(f, \f^\lambda ) \diag(\c^\lambda)\mS^{-1}~~\D_0(f, \f^\lambda)~i\D_0(f, \f^\lambda)], 
\\
\bnu^\star(f)&:=[\D_1(f, \f^\star) \diag(\c^\star)\mS^{-1}~~\D_0(f, \f^\star)~i\D_0(f, \f^\star)], 
\end{aligned}
\end{equation}
with  $\D_\ell(f, \f^\lambda)$ a row vector defined by $\D_\ell(f, \f^\lambda):=[K_\ell(f^\lambda_1-f), \ldots, K_\ell(f^\lambda_k-f)]$,  and
\begin{align*}
\Xi^\lambda&:=\Upsilon(\nabla^2G^\lambda(\btheta^\lambda))^{-1}, \\
\Xi^\star&:=\Upsilon(\nabla^2G^0(\btheta^\star))^{-1}, 
\end{align*}
where $\Upsilon(\cdot):={\W}^{\star\frac{1}{2}}(\cdot){\W}^{\star\frac{1}{2}}$ is  a linear operator that normalizes the Hessian matrix so that it is close to the identity.
As a consequence,  we bound the integrand of~\eqref{eqn:diffQlambdaQstar} as follows 
\begin{equation}\label{eqn:sec:G:QW:abc}
\begin{aligned}
&|\bnu^\star(f) \Xi^\star\brho^\star-\bnu^\lambda(f)\Xi^\lambda\brho^\lambda|
\\
{\leq}& |{\bnu^\star(f)}\Xi^\star(\brho^\star-\brho^\lambda)|+|{\bnu^\star(f)}(\Xi^\star-\Xi^\lambda)\brho^\lambda|+|{(\bnu^\star(f)-\bnu^\lambda(f))}\Xi^\lambda\brho^\lambda|
\\
{\leq}&  \|\bnu^\star(f)\|_1\|\Xi^\star\|_{\infty, \infty}\|\brho^\star-\brho^\lambda\|_{\infty}+\|\bnu^\star(f)\|_1\|\Xi^\star-\Xi^\lambda\|_{\infty, \infty}\|\brho^\lambda\|_\infty
+\|\bnu^\star(f)-\bnu^\lambda(f)\|_1\|\Xi^\lambda\|_{\infty, \infty}\|\brho^\lambda\|_\infty, 
\end{aligned}
\end{equation}
where the first line follows from the triangle inequality and the second line follows from H\"{o}lder's inequality and the sub-multiplicative property of the $\ell_{\infty, \infty}$ norm. We next develop upper bounds on $\|\bnu^\star(f)\|_1$,  $\|\bnu^\star(f)-\bnu^\lambda(f)\|_1$,  $\|\Xi^\star\|_{\infty, \infty}$,  $\|\Xi^\star-\Xi^\lambda\|_{\infty, \infty}$,  $\|\Xi^\lambda\|_{\infty, \infty}$,  $\|\brho^\star-\brho^\lambda\|_1$ and $\|\brho^\lambda\|_\infty$.

\bigskip
\noindent{\bf Bounding $\|{\Xi^\star}\|_{\infty, \infty}$ and $\|\Xi^\lambda\|_{\infty, \infty}$ and $\|{\Xi^\star}-\Xi^\lambda\|_{\infty, \infty}$.}
\\
We note that both  ${\Xi^\star}^{-1} = \Upsilon(\nabla^2 G^0(\btheta^\star))$) and ${\Xi^\lambda}^{-1} = \Upsilon(\nabla^2 G^\lambda(\btheta^\lambda))$) are close to the identity matrix. More precisely,  we have
\begin{align*}
\|\eye-{\Xi^\star}^{-1}\|_\infty
&\stack{\ding{172}}{=}\|\eye-\Upsilon(\nabla^2 G^0(\btheta^\star))\|_{\infty, \infty}
\\
&\stack{\ding{173}}{\leq}  [ \|\eye-\diag\left( {\c^\star}./{|\c^\star|}\right)^H(-\D_2(\f^\star)/\tau)\diag\left( {\c^\star}./{|\c^\star|}\right) \|_{\infty, \infty}
+2\|\diag ( {\c^\star}./{|\c^\star|})\D_1(\f^\star)/\sqrt {\tau} \|_{\infty, \infty} ]
\\
&~~~~~\vee  ( \|\diag ( {\c^\star}./{|\c^\star|} )\D_1(\f^\star)/\sqrt {\tau} \|_{\infty, \infty}+\|\eye-\D_0(\f^\star)\|_{\infty, \infty}  )
\\
&\stack{\ding{174}}{\leq} \|\eye-(-\D_2(\f^\star)/\tau) \|_{\infty, \infty}+2 \|\D_1(\f^\star)/\sqrt {\tau} \|_{\infty, \infty}
\\
&\stack{\ding{175}}{\leq} 0.0171+2\times0.00682
\\
&\leq 0.03074, 
\end{align*}
where $a\vee b:=\max(a, b).$
\ding{172} follows from definition of ${\Xi^\star}$ and \ding{173} follows from applying the triangle inequality to the expression of $[\eye-\Upsilon(\nabla^2 G^0(\btheta^\star))].$ \ding{174} follows since the infinity norm of any sign vector is 1,  bounding $\left\|\diag ( {\c^\star}./{|\c^\star|}\right) {\D_1(\f^\star)}/{\sqrt {\tau}} \|_{\infty, \infty}$ is equivalent to bound $ \| \D_1(\f^\star)/\sqrt {\tau} \|_{\infty, \infty}$. Finally,  \ding{175} follows from Eq.~\eqref{eqn:kernel:matrix:small:norm}.
This leads to
\begin{equation}\label{eqn:sec:G:control:B.1}
\begin{aligned}
\|\Xi^\star\|_{\infty, \infty}
\leq& \frac{1}{1-\|\eye-{\Xi^\star}^{-1}\|_{\infty, \infty}}
\leq \frac{1}{1-0.03074}
\leq 1.03172.
\end{aligned}
\end{equation}
According to~\eqref{eqn:norm:Theta:lambda},  we have
\begin{align*}
\|\eye-{\Xi^\lambda}^{-1}\|_{\infty, \infty}
=\|I-\Upsilon(\nabla^2 G^\lambda(\btheta^\lambda))\|_{\infty, \infty}
\leq 0.08561, 
\end{align*}
yielding
\begin{equation}\label{eqn:sec:G:control:B.2}
\begin{aligned}
\|\Xi^\lambda\|_{\infty, \infty}
\leq& \frac{1}{1-\|\eye-{\Xi^\lambda}^{-1}\|_{\infty, \infty}}
\leq \frac{1}{1-0.08561}
\leq 1.09363.
\end{aligned}
\end{equation}
Next,  note
\begin{align*}
\|{\Xi^\star}^{-1}-{\Xi^\lambda}^{-1}\|_{\infty, \infty}
=&\|\Upsilon(\nabla^2 G^0(\btheta^\star))-\Upsilon(\nabla^2 G^\lambda(\btheta^\lambda))\|_{\infty, \infty}
\leq \max\{\Pi_1, \Pi_2, \Pi_3\}, 
\end{align*}
where $\Pi_1, \Pi_2, \Pi_3$ denote the first,  second and third absolute row block sums of $[\Upsilon(\nabla^2 G^0(\btheta^\star))-\Upsilon(\nabla^2 G^\lambda(\btheta^\lambda))]$.
We first bound $\Pi_1$ as follows 
\begin{align}
\Pi_1=&
\|\diag( {\c}./{|\c^\star|})^H {\D_2(\f)} \diag( {\c}./{|\c^\star|})-\diag( {\c^\star}./{|\c^\star|})^H {\D_2(\f^\star)} \diag( {\c^\star}./{|\c^\star|})\|_{\infty, \infty}/\tau
\nn\\
&+2\left\|\diag( {\c}./{|\c^\star|})^H {\D_1(\f)} -\diag( {\c^\star}./{|\c^\star|})^H {\D_1(\f^\star)}  \right\|_{\infty, \infty}/\sqrt\tau
\nn\\
&+2 \left\|\diag\left( {1}./{|\c^\star|}\right)[\D_1(\f, \f^\star)\c^\star-\D_1(\f)\c]\right\|_{\infty  }/\sqrt\tau
\nn\\
&+ \|\diag( {\c}./{|\c^\star|^2})^H[\D_2(\f, \f^\star)\c^\star-\D_2(\f)\c]\|_\infty/\tau
\nn\\
\stack{\ding{172}}{\leq}& [2.19778 X^\star\gamma + 1.14168 (X^\star\gamma)^2]+2[1.48286 X^\star\gamma + 1.47604(X^\star\gamma)^2]
+2(0.75038)X^\star B^\star\gamma+1.14168X^\star B^\star\gamma
\nn\\
\leq& 7.81004X^\star B^\star\gamma\ (\text{by }B^\star X^\star\gamma\leq10^{-3}\label{eqn:g:1}), 
\end{align}
where \ding{172} follows from combining Eq.~\eqref{eqn:A:DC-DC1}-\eqref{eqn:A:DC-DC2} and~\eqref{eqn:F:Pi:1}-\eqref{eqn:F:Pi:2},  where
\eqref{eqn:F:Pi:1}-\eqref{eqn:F:Pi:2} are given by
\begin{align}
&\|\diag( {\c}./{|\c^\star|})^H {\D_2(\f)} \diag( {\c}./{|\c^\star|})-\diag( {\c^\star}./{|\c^\star|})^H {\D_2(\f^\star)} \diag( {\c^\star}./{|\c^\star|})\|_{\infty, \infty}/\tau
\nn\\
\leq&\|\diag( {\c}./{|\c^\star|})^H {\D_2(\f)} \diag( {(\c-\c^\star)}./{|\c^\star|})\|_{\infty, \infty}/\tau
\nn\\
&+\|\diag( {\c}./{|\c^\star|})^H ({\D_2(\f)-\D_2(\f^\star)}) \diag( {\c^\star}./{|\c^\star|})\|_{\infty, \infty}/\tau
\nn\\
&+\|\diag( ({\c-\c^\star})./{|\c^\star|})^H {\D_2(\f^\star)} \diag( {\c^\star}./{|\c^\star|})\|_{\infty, \infty}/\tau
\nn\\
\leq& (1+X^\star\gamma)(1.05610)(X^\star\gamma)+(1+X^\star\gamma)( 0.08558X^\star\gamma)+(X^\star\gamma)(1.05610)\tag{by \eqref{eqn:bound:diff:kernel:b4} and~\eqref{eqn:kernel:matrix:small:norm}}\\
\leq& 2.19778 X^\star\gamma + 1.14168 (X^\star\gamma)^2
\label{eqn:F:Pi:1}
\end{align}
and
\begin{align}
&\left\|\diag\left( {\c}./{|\c^\star|}\right)^H {\D_1(\f)} -\diag\left( {\c^\star}./{|\c^\star|}\right)^H {\D_1(\f^\star)}  \right\|_{\infty, \infty}/\sqrt\tau
\nn\\
\leq& \left\|\diag\left( {\c}./{|\c^\star|}\right)^H (\D_1(\f)-\D_1(\f^\star))  \right\|_{\infty, \infty}/\sqrt\tau
+\left\|\diag\left( {(\c-\c^\star)}./{|\c^\star|}\right)^H {\D_1(\f^\star)}  \right\|_{\infty, \infty}/\sqrt\tau
\nn\\
\leq&(1+X^\star\gamma)(1.47604X^\star\gamma)+(X^\star\gamma)(0.00682)  \tag{by Eq.~\eqref{eqn:bound:diff:kernel:b3}  and~\eqref{eqn:kernel:matrix:small:norm}}
\nn\\
\leq&
1.48286 X^\star\gamma + 1.47604 (X^\star\gamma)^2.
\label{eqn:F:Pi:2}
\end{align}

We next bound $\Pi_2$ and $\Pi_3$:
\begin{align*}
\{\Pi_2, \Pi_3\}
\stack{\ding{172}}{\leq}& \left\|\D_1(\f)\diag\left(\c./|\c^\star|\right)
-\D_1(\f^\star)\diag\left(\c^\star./|\c^\star|\right)\right\|_{{\infty, \infty}}/\sqrt\tau
\\
&+\|\D_0(\f)-\D_0(\f^\star)\|_{\infty, \infty}
+ \|\diag(1./|\c^\star|)[\D_1(\f, \f^\star)\c^\star-\D_1(\f)\c]\|_\infty/\sqrt\tau
\\
&+\lambda\|\u\odot\u./|\c|^3-\u^\star\odot\u^\star./|\c^\star|^3\|_\infty
+\lambda\|\u\odot\v./|\c|^3-\u^\star\odot\v^\star./|\c^\star|^3\|_\infty
\\
\stack{\ding{173}}{\leq}& [1.48286 X^\star\gamma + 1.47604 (X^\star\gamma)^2]+0.01516X^\star\gamma
+0.75038 X^\star B^\star\gamma+ 2(0.646X^\star\gamma)(5.00701)X^\star\gamma
\\
\leq&2.25636 X^\star B\gamma
\\
<& \Pi_1 \ (\text{by }B^\star X^\star\gamma\leq10^{-3}), 
\end{align*}
where \ding{172} follows from the triangle inequality and \ding{173} follows by combining Eq.~\eqref{eqn:F:Pi:2}, ~\eqref{eqn:bound:diff:kernel:b2}, ~\eqref{eqn:A:DC-DC1}, 
and~\eqref{eqn:F:u2c3-uic3}. To show~\eqref{eqn:F:u2c3-uic3},  we assume the norm $\|\u\odot\u./|\c|^3-\u^\star\odot\u^\star./|\c^\star|^3\|_\infty$ is achieved by the $\ell$th entry and proceed as
\begin{equation}
\begin{aligned}
\left|\frac{u_\ell^2}{|c_\ell|^3}-\frac{{u_\ell^\star}^2}{|c_\ell^\star|^3}\right|
\stack{\ding{172}}{\leq}& \left|\frac{c_\ell^2}{|c_\ell|^3}-\frac{{c_\ell^\star}^2}{|c_\ell^\star|^3}\right|
\\
\stack{\ding{173}}{\leq}& \frac{|c_\ell^2-{c_\ell^\star}^2|}{|c_\ell^\star|^3}+ |c_\ell|^2\left|\frac{1}{|c_\ell|^3}-\frac{1}{|c_\ell^\star|^3}\right|
\\
\stack{\ding{174}}{\leq}& \frac{X^\star\gamma}{c^\star_{\min}}\bigg((2+X^\star\gamma)+\frac{(X^\star\gamma)^2+3(X^\star\gamma)+3}{1-X^\star\gamma}\bigg)
\leq \frac{X^\star\gamma}{c^\star_{\min}}(5.00701)\label{eqn:F:u2c3-uic3}, 
\end{aligned}
\end{equation}
where \ding{172} follows from $|\R\{a\}|\leq|a|$ for all $a\in\C$ and \ding{173} follows from the triangle inequality. \ding{174} follows from  Eq.~\eqref{eqn:F:c2-c2} and~\eqref{eqn:F:c2c3-c2c3} given below:
\begin{equation}
\begin{aligned}
\frac{|c_\ell^2-{c_\ell^\star}^2|}{|c_\ell^\star|^3}
\leq&\frac{1}{|c_\ell^\star|}\frac{|c_\ell-c_\ell^\star|}{|c_\ell^\star|}\frac{|c_\ell+c_\ell^\star|}{|c_\ell^\star|}
\leq\frac{X^\star\gamma(2+X^\star\gamma)}{c^\star_{\min}}\label{eqn:F:c2-c2}
\end{aligned}
\end{equation}
and
\begin{equation}\label{eqn:F:c2c3-c2c3}
\begin{aligned}
|c_\ell|^2\left|\frac{1}{|c_\ell|^3}-\frac{1}{|c_\ell^\star|^3}\right|
= |c_\ell|^2 \left|\frac{1}{|c_\ell|}-\frac{1}{|c_\ell^\star|}\right|\left(\frac{1}{|c_\ell|^2}+\frac{1}{|c_\ell^\star|^2}+\frac{1}{|c_\ell^\star||c_\ell|} \right)
\leq& \frac{|c_\ell-c_\ell^\star|}{|c_\ell||c_\ell^\star|} \left(1+\frac{ |c_\ell|^2}{|c_\ell^\star|^2}+\frac{ |c_\ell|}{|c_\ell^\star|} \right)
\\
\leq& \frac{1}{|c_\ell|}X^\star\gamma \left(1+\frac{|c_\ell|^2}{|c_\ell^\star|^2}+\frac{|c_\ell|}{|c_\ell^\star|} \right)
\\
\leq& \frac{1}{c^\star_{\min}(1-X^\star\gamma)} X^\star\gamma\left(1+\frac{|c_\ell|^2}{|c_\ell^\star|^2}+\frac{|c_\ell|}{|c_\ell^\star|} \right)
\\
\leq&\frac{X^\star\gamma}{c^\star_{\min}}\frac{(X^\star\gamma)^2+3(X^\star\gamma)+3}{1-X^\star\gamma}
\end{aligned}
\end{equation}
where the first line follows from $|a^3-b^3|=|(a-b)(a^2+ab+b^2)|=|a-b|(a^2+ab+b^2)$ for any positive $a, b$. The second line holds since $ |\frac{1}{|c_\ell|}-\frac{1}{|c_\ell^\star|} |=\frac{\left||c_\ell|-|c_\ell^\star|\right|}{|c_\ell||c_\ell^\star|}\leq\frac{|c_\ell-c_\ell^\star|}{|c_\ell||c_\ell^\star|}$ by the triangle inequality. The third line follows from $ {|c_\ell-c_\ell^\star|}/{|c_\ell^\star|}\leq X^\star\gamma$ by~\eqref{eqn:parameter:close}. For the fourth line to hold,  note that by~\eqref{eqn:parameter:close},  $\frac{|c_i-c_i^\star|}{|c_i^\star|}\leq X^\star\gamma$,  which implies that $|c_i|\geq |c_i^\star| - |c_i-c_i^\star|\geq (1-X^\star\gamma|){c_{\min}^\star}|$. The last line follows from $|c_\ell|/|c^\star_\ell|\leq(1+X^\star\gamma).$
Finally,  we get the bound
\begin{align*}
\|{\Xi^\star}^{-1}-{\Xi^\lambda}^{-1}\|_{\infty, \infty}=\Pi_1\leq 7.81004X^\star B^\star\gamma
\end{align*}
implying
\begin{equation}\label{eqn:sec:G:control:B.3}
\begin{aligned}
\|\Xi^\star-\Xi^\lambda\|_{\infty, \infty}
\leq&\|\Xi^\star\|_{\infty, \infty}\|{\Xi^\star}^{-1}-{\Xi^\lambda}^{-1}\|_{\infty, \infty}\|\Xi^\lambda\|_{\infty, \infty}
\leq (1.03172)(1.09363)(7.81004X^\star B^\star\gamma)
= 8.81222 X^\star  B^\star\gamma.
\end{aligned}
\end{equation}

\bigskip
\noindent{\bf Bounding $\|\brho^\star-\brho^\lambda\|_\infty$ and $\|\brho^\lambda\|_\infty$.}
\\
First recognize that $\|\brho^\lambda\|_\infty=1$ since $\brho^\lambda$ contains either signs or zeros. Assume the $\ell_\infty$ norm of $(\brho^\star-\brho^\lambda)$ is achieved by $|\sign(c^\lambda_\ell)-\sign(c^\star_\ell)|$,  then applying triangle inequalities gives
\begin{equation}\label{eqn:sec:G:control:C}
\begin{aligned}
\|\brho^\star-\brho^\lambda\|_\infty
=\bigg|\frac{c^\lambda_\ell}{|c^\lambda_\ell|}-\frac{c_\ell^\star}{|c_\ell^\star|}\bigg|
=\bigg|\frac{c^\lambda_\ell}{|c^\lambda_\ell|}-\frac{c^\lambda_\ell}{|c^\star_\ell|}+\frac{c^\lambda_\ell}{|c^\star_\ell|}-\frac{c_\ell^\star}{|c_\ell^\star|}\bigg|
\leq& \bigg|\frac{c^\lambda_\ell}{|c^\lambda_\ell|}-\frac{c_\ell^\lambda}{|c_\ell^\star|}\bigg|+\frac{|c_\ell^\star-c_\ell^\lambda|}{|c^\star_\ell|}
\\
=& |c_\ell^\lambda|\bigg|\frac{1}{|c^\lambda_\ell|}-\frac{1}{|c_\ell^\star|}\bigg|+\frac{|c_\ell^\star-c_\ell^\lambda|}{|c^\star_\ell|}
\\=&\left| c^\lambda_\ell\right|\bigg|\frac{|c^\lambda_\ell|-|c^\star_\ell|}{|c^\lambda_\ell||c^\star_\ell|}\bigg|+\frac{|c_\ell^\star-c_\ell^\lambda|}{|c^\star_\ell|}
\\\leq& 2\frac{|c_\ell^\star-c_\ell^\lambda|}{|c^\star_\ell|}
\leq  2X^\star  \gamma.
\end{aligned}
\end{equation}

\bigskip
\noindent{\bf Bounding ${\bnu^\star(f)},   {\bnu^\star(f)}',  {\bnu^\star(f)}''$ and $({\bnu^\star(f)}-{\bnu^\lambda(f)}),  ({\bnu^\star(f)}-{\bnu^\lambda(f)})',  ({\bnu^\star(f)}-{\bnu^\lambda(f)})''$.}
\\
Applying the triangle inequality and the sub-multiplicative property of the norm to~\eqref{eqn:sec:G:QW:abc} and~\eqref{eqn:v},  we get
\begin{equation}\label{eqn:w}
\begin{aligned}
\|{\bnu^\star(f)}-{\bnu^\lambda(f)}\|_1
\leq&\|[\D_1(f, \f^\lambda)-\D_1(f, \f^\star)]^T \diag(\c^\lambda)\mS^{-1}-\D_1(f, \f^\lambda)^T \mathbf{\Phi}\|_1
+2\|\D_0(f, \f^\lambda)-\D_0(f, \f^\star)\|_1
\\
\leq& \| \D_1(f, \f^\lambda)-\D_1(f, \f^\star)\|_1\| \diag(\c^\lambda)\mS^{-1}\|_{1, 1}
+\|\D_1(f, \f^\lambda)\|_1 \|\mathbf{\Phi}\|_{1, 1}
+2\|\D_0(f, \f^\lambda)-\D_0(f, \f^\star)\|_1;
\\[1ex]
\|\bnu^\star(f)\|_1 \leq& \|\D_1(f, \f^\star)\diag(\c^\star)\mS^{-1}\|_1+2\|\D_0(f, \f^\star)\|_1
\leq\|\D_1(f, \f^\star)\|_1\|\diag(\c^\star)\mS^{-1}\|_{1, 1}+2\|\D_0(f, \f^\star)\|_1, 
\end{aligned}
\end{equation}
where $\mathbf{\Phi}:= \diag(\c^\star)\mS^{-1}-\diag(\c^\lambda)\mS^{-1} $ and $\D_\ell(f, \f):=[K^{(\ell)}({f}_1-f), \ldots,  K^{(\ell)}({f}_k-f)]$. Similar bounds also apply to various derivatives of $\bnu^\star(f)$ and $\bnu^\lambda(f)$,  which we need in order to bound the distances between derivatives of $Q^\star(f)$ and $Q^\lambda(f)$. Using~\eqref{eqn:parameter:close} and $\tau\geq3.289 n^2$,  we have
\begin{equation}
\begin{aligned}
\left\|(\diag(\c^\lambda)-\diag(\c^\star))\mS^{-1}\right\|_{1, 1}
\leq&(\max_i |c_i^\lambda-c^\star_i|/|c^\star_i|)/\sqrt{{\tau}}
\leq X^\star  \gamma/\sqrt{{\tau}}
\leq {0.552 X^\star  \gamma}/{n};
\\ \left\|\diag(\c^\lambda)\mS^{-1}\right\|_{1, 1} \leq
&(1+X  \gamma)/\sqrt{{\tau}} \leq {0.552 }/{n}, \label{eqn:coeff:diff}
\end{aligned}
\end{equation}
which we need to continue the bounds in~\eqref{eqn:w}.

Since $f$ may lie in different regions: Near Region,  Middle Region,   and Far Region,  we next organize our analysis into three parts based on what region $f$ is located in.

\subsection{Near Region Analysis}
\noindent We start with controlling $\|\D_\ell(f, \f^\lambda)-\D_\ell(f, \f^\star)\|_1$ and $|\D_\ell(f, \f^\star)\|_1$ for $\ell=0, 1, 2, 3$ in Near Region. When $\ell = 0$,  we have
\begin{align}
\|\D_0(f, \f^\lambda)-\D_0(f, \f^\star)\|_1
=\sum_\ell |K(f^\lambda_\ell-f)-K(f^\star_\ell-f)|
&\stack{\ding{172}}{\leq}  \sum_\ell |K'(\tilde{f}_\ell-f)|\|\f^\lambda-\f^\star\|_\infty
\nn\\
&\stack{\ding{173}}{\leq}\left(F_1(2.5/n, 0.2404/n)+\max_{f\in \hat{\N}}|K'(f)|\right)\|\f^\lambda-\f^\star\|_\infty
\nn\\
&\stack{\ding{174}}{\leq}(0.01241n+ 0.790885 n)(0.4X^\star \gamma/n)
\nn\\
&\leq0.321318 X^\star\gamma, \label{eqn:near:a}
\end{align}
where \ding{172} is due to the mean value theorem with $\tilde{f}_\ell$ located between $f^\star_\ell$ and $f_\ell^\lambda$.
\ding{173} follows from Lemma~\ref{lem:control:middle}.
To see this,  first note that $\Delta(\{\tilde{f}_\ell\})\geq2.5/n$ by Lemma~\ref{lem:resolution}. Second,  $f\in\N=[0, 0.24/n]$ implies that
\begin{align*}
0\leq|f-\tilde{f}_0|
\leq& |f-f_0^\star|+|f_0^\star-\tilde{f}_0|
\leq 0.24/n+0.4(10^{-3})/n
=0.2404/n.
\end{align*}
We also used the definition $\hat{\N} =[0, 0.2404/n]$ in \ding{173}.
\ding{174} follows from the upper bound on $F_1(2.5/n, 0.2404/n)$ in Table~\ref{tbl:bounds:1},  the upper bound on
$\max_{f\in\hat\N}|K'(f)|$ in  Table~\ref{tbl:bounds:2},  as well as the upper bound on $\|\f^\lambda-\f^\star\|_\infty$ in Lemma~\ref{lem:fix1}.

Applying arguments similar to those for~\eqref{eqn:near:a},  we can control $\|\D_\ell(f, \f^\lambda)-\D_\ell(f, \f^\star)\|_1$ as
\begin{equation}\label{eqn:sec:G:dif:kernel}
\begin{aligned}
\|\D_\ell(f, \f^\lambda)-\D_\ell(f, \f^\star)\|_1
\leq& (F_{\ell+1}(2.5/n, 0.2404/n)+\max_{f\in\hat\N}|K^{(\ell+1)}(f)|)\|\f^\lambda-\f^\star\|_\infty.
\end{aligned}
\end{equation}
We specialize the above inequality to $\ell = 1,  2,  3$ using the upper bounds on $F_\ell(2.5/n, 0.2404/n)$ in Table~\ref{tbl:bounds:1} and those on
$\max_{f\in\hat\N}|K^{(\ell)}(f)|$ in Table~\ref{tbl:bounds:2} to obtain
{\allowdisplaybreaks
\begin{align}
\|\D_1(f, \f^\lambda)-\D_1(f, \f^\star)\|_1
\leq& (F_2(2.5/n, 0.2404/n)+\max_{f\in\hat\N}|K''(f)|)\|\f^\lambda-\f^\star\|_\infty \nn\\
\leq& (0.05637n^2+3.290n^2)(0.4X^\star\gamma/n)
= 1.338548  nX^\star\gamma;
\label{eqn:near1.0}
\\[1ex]
\|\D_2(f, \f^\lambda)-\D_2(f, \f^\star)\|_1
\leq& (F_3(2.5/n, 0.2404/n)+\max_{f\in\hat\N}|K'''(f)|)\|\f^\lambda-\f^\star\|_\infty
\nn\\
\leq& (0.28838n^3+7.80572 n^3)(0.4X^\star\gamma/n)
= 3.23764 n^2X^\star\gamma;
\label{eqn:near1.1}
\\[1ex]
\|\D_3(f, \f^\lambda)-\D_3(f, \f^\star)\|_1
\leq& (F_4(2.5/n, 0.2404/n)+\max_{f\in\hat\N}|K''''(f)|)\|\f^\lambda-\f^\star\|_\infty
\nn\\
\leq& (1.671n^4+\bl{29.2227}n^4)(0.4X^\star\gamma/n)
= \bl{12.3575}  n^3X^\star\gamma.
\label{eqn:near1.2}
\end{align}
}

Furthermore,  we can use similar arguments and Lemma~\ref{lem:control:middle} to control $\|\D_\ell(f, \f)\|_1$ for $f \in \N$:
\begin{align}\label{eqn:sec:G:control:D(f, fstar)}
\|\D_\ell(f, \f^\star)\|_1 \leq F_\ell(2.5/n, 0.2404/n)+\max_{f\in\hat\N}|K^{(\ell)}(f)|, 
\end{align}
which specializes to
\begin{align}\label{eqn:near1.3}
\|\D_0(f, \f^\star)\|_1
\leq& F_0(2.5/n, 0.2404/n)+\max_{f\in\hat\N}|K(f)|
\leq 0.00757+1  = 1.00757;
\\
\label{eqn:near1.4}
\|\D_1(f, \f^\star)\|_1
\leq& F_1(2.5/n, 0.2404/n)+\max_{f\in\hat\N}|K'(f)|
\leq 0.01241n+\bl{0.790885}n  =\bl{0.803295}  n;
\\
\label{eqn:near1.5}
\|\D_2(f, \f^\star)\|_1
\leq& F_2(2.5/n, 0.2404/n)+\max_{f\in\hat\N}|K''(f)|
\leq 0.05637n^2+3.290n^2  =3.34637  n^2;
\\
\label{eqn:near1.6}
\|\D_3(f, \f^\star)\|_1
\leq& F_3(2.5/n, 0.2404/n)+\max_{f\in\hat\N}|K'''(f)|
\leq 0.28838n^3+7.80572 n^3 =8.0941  n^3.
\end{align}

With these preparations,  we are ready to control $\|{\bnu^\star(f)^{(\ell)}}-{\bnu^\lambda(f)^{(\ell)}}\|_1$ and $\|{\bnu^\star(f)^{(\ell)}}\|_1$ for $\ell=0, 1, 2$ in  Near Region. Generalizing~\eqref{eqn:w} to the $\ell$th derivative of $\bnu^\star(f)$ and $\bnu^\lambda(f)$ to get
\begin{equation}\label{eqn:sec:G:control:nu}
\begin{aligned}
\|{\bnu^\star(f)}^{(\ell)}-{\bnu^\lambda(f)}^{(\ell)}\|_1
\leq&\|\D_{\ell+1}(f, \f^\lambda)-\D_{\ell+1}(f, \f^\star)\|_1 \|\diag(\c^\lambda)\mS^{-1}\|_1
\\
+&\|\D_{\ell+1}(f, \f^\lambda)\|_1\|\diag(\c^\star)\mS^{-1}-\diag(\c^\lambda)\mS^{-1}\|_1
+2\|\D_{\ell}(f, \f^\lambda)-\D_{\ell}(f, \f^\star)\|_1;
\\[1ex]
\|{\bnu^\star(f)}^{(\ell)}\|_1
\leq& \|\D_{\ell+1}(f, \f^\star)\diag(\c^\star)\mS^{-1}\|_1+2\|\D_\ell(f, \f^\star)\|_1.
\end{aligned}
\end{equation}
Plugging Eq.~\eqref{eqn:near:a}, ~\eqref{eqn:near1.0}, ~\eqref{eqn:near1.4} and~\eqref{eqn:coeff:diff}  into~\eqref{eqn:sec:G:control:nu},  we obtain
\begin{equation}\label{eqn:N1.1}
\begin{aligned}
\|{\bnu^\star(f)}-{\bnu^\lambda(f)}\|_1
\leq&\|\D_{1}(f, \f^\lambda)-\D_{1}(f, \f^\star)\|_1 \|\diag(\c^\lambda)\mS^{-1}\|_1
\\
&+\|\D_{1}(f, \f^\lambda)\|_1\|\diag(\c^\star)\mS^{-1}-\diag(\c^\lambda)\mS^{-1}\|_1
+2\|\D_{0}(f, \f^\lambda)-\D_{0}(f, \f^\star)\|_1
\\
\leq& 1.338548 nX^\star  \gamma\frac{0.552 }{n}+ (0.803295 n)\frac{0.552 X^\star  \gamma}{n}
+2(0.321318 X^\star  \gamma)
\leq   1.82494X^\star\gamma.
\end{aligned}
\end{equation}
Plugging Eq.~\eqref{eqn:near1.0}-\eqref{eqn:near1.1}, ~\eqref{eqn:near1.5} and~\eqref{eqn:coeff:diff}  into~\eqref{eqn:sec:G:control:nu},  we obtain
\begin{equation}\label{eqn:N1.2}
\begin{aligned}
\|{\bnu^\star(f)}'-{\bnu^\lambda(f)}'\|_1
\leq&\|\D_{2}(f, \f^\lambda)-\D_{2}(f, \f^\star)\|_1 \|\diag(\c^\lambda)\mS^{-1}\|_1
\\
&
+\|\D_{2}(f, \f^\lambda)\|_1\|\diag(\c^\star)\mS^{-1}-\diag(\c^\lambda)\mS^{-1}\|_1
+2\|\D_{1}(f, \f^\lambda)-\D_{1}(f, \f^\star)\|_1
\\
\leq& 3.23764n^2X^\star  \gamma\frac{0.552 }{n}+(3.34637 n^2)\frac{0.552 X^\star  \gamma}{n}
+2(1.338548 n X^\star  \gamma)
\\
\leq&  6.31147nX^\star\gamma.
\end{aligned}
\end{equation}
Plugging Eq.~\eqref{eqn:near1.1}-\eqref{eqn:near1.2}, ~\eqref{eqn:near1.6} and~\eqref{eqn:coeff:diff}  into~\eqref{eqn:sec:G:control:nu},  we obtain
\begin{equation}\label{eqn:N1.3}
\begin{aligned}
\|{\bnu^\star(f)}''-{\bnu^\lambda(f)}''\|_1
\leq&\|\D_{3}(f, \f^\lambda)-\D_{3}(f, \f^\star)\|_1 \|\diag(\c^\lambda)\mS^{-1}\|_1
\\
&+\|\D_{3}(f, \f^\lambda)\|_1\|\diag(\c^\star)\mS^{-1}-\diag(\c^\lambda)\mS^{-1}\|_1
+2\|\D_{2}(f, \f^\lambda)-\D_{2}(f, \f^\star)\|_1
\\
\leq& \bl{12.3575}  n^3X^\star  \gamma\frac{0.552 }{n}+(8.0941 n^3)\frac{0.552 X^\star  \gamma}{n}
+2(3.23764 n^2 X^\star  \gamma)
\\
\leq&   \bl{17.7646}n^2X^\star  \gamma.
\end{aligned}
\end{equation}
Similarly,  plugging Eq.~\eqref{eqn:near1.3}-\eqref{eqn:near1.4}  and~\eqref{eqn:coeff:diff}  into~\eqref{eqn:sec:G:control:nu},  we have
\begin{align}\label{eqn:N1.4}
\|{\bnu^\star(f)}\|_1
\leq&\|\D_{ 1}(f, \f^\star)\diag(\c^\star)\mS^{-1}\|_1+2\|\D_0(f, \f^\star)\|_1
\leq (0.803295 n)\frac{0.552 }{n}+2(1.00757)
\leq  2.45856.
\end{align}
Plugging Eq.~\eqref{eqn:near1.4}-\eqref{eqn:near1.5}  and~\eqref{eqn:coeff:diff}  into~\eqref{eqn:sec:G:control:nu},  we obtain
\begin{align}\label{eqn:N1.5}
\|{\bnu^\star(f)}'\|_1
&\leq\|\D_{2}(f, \f^\star)\diag(\c^\star)\mS^{-1}\|_1+2\|\D_1(f, \f^\star)\|_1
\leq   (3.34637 n^2)\frac{0.552 }{n}+2(0.803295 n)
\leq  3.4538n.
\end{align}
Finally,  plugging Eq.~\eqref{eqn:near1.5}-\eqref{eqn:near1.6}  and~\eqref{eqn:coeff:diff}  into~\eqref{eqn:sec:G:control:nu},  we arrive at
\begin{align}\label{eqn:N1.6}
\|{\bnu^\star(f)}''\|_1
&\leq\|\D_{3}(f, \f^\star)\diag(\c^\star)\mS^{-1}\|_1+2\|\D_2(f, \f^\star)\|_1
\leq (8.0941 n^3)\frac{0.552 }{n}+2(3.34637 n^2)
\leq  11.1607n^2.
\end{align}

We are now ready to control the pointwise distance between ${Q^\star}^{(\ell)}(f)$ and ${Q^\lambda}^{(\ell)}(f)$ using
\begin{align}
|{Q^\star}^{(\ell)}(f)-{Q^\lambda}^{(\ell)}(f)|\leq|{\bnu^\lambda(f)}^{(\ell)}\Xi^\lambda\brho^\lambda-{\bnu^\star(f)}^{(\ell)}\Xi^\star\brho^\star|, ~~~\ell=0, 1, 2. \label{eqn:Qstar-Qlambda}
\end{align}
Plugging Eq.~\eqref{eqn:N1.1}-\eqref{eqn:N1.2}, ~\eqref{eqn:sec:G:control:B.1}-\eqref{eqn:sec:G:control:B.3} and
\eqref{eqn:sec:G:control:C} to~\eqref{eqn:Qstar-Qlambda} with $\ell=0$,  we obtain for $f\in\N$
\begin{align*}
&|{Q^\star}(f)-{Q^\lambda}(f)|
\\
\leq&\|{\bnu^\star(f)}\|_1\|\Xi^\star\|_{\infty, \infty}\|\brho^\star-\brho^\lambda\|_1+\|{\bnu^\star(f)}\|_1\|\Xi^\star-\Xi^\lambda\|_{\infty, \infty}\|\brho^\lambda\|_\infty
+\|{\bnu^\star(f)}-{\bnu^\lambda(f)}\|_1\|\Xi^\lambda\|_{\infty, \infty}\|\brho^\lambda\|_\infty
\\
\leq&(2.45856)(1.03172)(2X^\star  \gamma)+(2.45856)(8.81222X^\star B^\star\gamma)
+(1.82494X^\star\gamma)(1.09363)
\leq 28.7343 X^\star B^\star\gamma.
\end{align*}
Plugging Eq.~\eqref{eqn:N1.3}-\eqref{eqn:N1.4}, ~\eqref{eqn:sec:G:control:B.1}-\eqref{eqn:sec:G:control:B.3} and
\eqref{eqn:sec:G:control:C} to~\eqref{eqn:Qstar-Qlambda} with $\ell=1$,  we obtain for $f\in\N$
\begin{align*}
&|{Q^\star}(f)'-{Q^\lambda}(f)'|
\\
\leq&\|{\bnu^\star(f)}'\|_1\|\Xi^\star\|_{\infty, \infty}\|\brho^\star-\brho^\lambda\|_1+\|{\bnu^\star(f)}'\|_1\|\Xi^\star-\Xi^\lambda\|_{\infty, \infty}\|\brho^\lambda\|_\infty
+\|{\bnu^\star(f)}'-{\bnu^\lambda(f)}'\|_1\|\Xi^\lambda\|_{\infty, \infty}\|\brho^\lambda\|_\infty
\\
\leq&(3.4538n)(1.03172)(2X^\star  \gamma)+(3.4538n)(8.81222X^\star B^\star\gamma)
+(6.31147nX^\star\gamma)(1.09363)
\leq44.4648 n X^\star B^\star\gamma.
\end{align*}
Finally,  plugging Eq.~\eqref{eqn:N1.5}-\eqref{eqn:N1.6}, ~\eqref{eqn:sec:G:control:B.1}-\eqref{eqn:sec:G:control:B.3} and
\eqref{eqn:sec:G:control:C} to~\eqref{eqn:Qstar-Qlambda} with $\ell=2$,  we get for $f\in\N$
\begin{align*}
&|{Q^\star}(f)''-{Q^\lambda}(f)''|
\\
\leq&\|{\bnu^\star(f)}''\|_1\|\Xi^\star\|_{\infty, \infty}\|\brho^\star-\brho^\lambda\|_1+\|{\bnu^\star(f)}''\|_1\|\Xi^\star-\Xi^\lambda\|_{\infty, \infty}\|\brho^\lambda\|_\infty
+\|{\bnu^\star(f)}''-{\bnu^\lambda(f)}''\|_1\|\Xi^\lambda\|_{\infty, \infty}\|\brho^\lambda\|_\infty
\\
\leq&(11.1607n^2)(1.03172)(2X^\star  \gamma)+(11.1607n^2) (8.81222X^\star B^\star\gamma )
+(\bl{17.7646}n^2X^\star\gamma)(1.09363)
\leq \bl{140.808} n^2 X^\star B^\star\gamma.
\end{align*}

\subsection{Middle Region Analysis}
We continue with bounding the pointwise distance between $Q^\star(f)$ and $Q^\lambda(f)$ in Middle Region $\M =[0.24/n, 0.75/n]$.
We start with controlling $\|\D_\ell(f, \f^\lambda)-\D_\ell(f, \f^\star)\|_1$ and $|\D_\ell(f, \f^\star)\|_1$  for $\ell=0, 1$.
First note when $f\in\M=[0.24/n, 0.75/n]$,  we have
\begin{align*}
(a)~|f-\tilde{f}_0|
\leq& |f-f_0^\star|+|f_0^\star-\tilde{f}_0|
\leq 0.75/n+0.0004/n
=0.7504/n, \\
(b)~|f-\tilde{f}_0|
\geq& |f-f_0^\star|-|f_0^\star-\tilde{f}_0|
\geq0.24/n-0.0004/n
=0.2396/n.
\end{align*}
Denote $\hat\M=[0.2396/n,  0.7504/n]$.  We combine the upper bounds on $F_\ell(2.5/n, 0.7504/n)$  in Table~\ref{tbl:bounds:1}  and the upper bounds on $\max_{f\in\hat\M}|K^{(\ell)}(f)|$  in Table~\ref{tbl:bounds:2} to get
\begin{equation}\label{eqn:middleI1.1}
\begin{aligned}
\|\D_0(f, \f^\lambda)-\D_0(f, \f^\star)\|_1
\leq&(F_1(2.5/n, 0.7504/n)+\max_{f\in\hat\M}|K'(f)|)\|\f^\lambda-\f^\star\|_\infty
\\
\leq& (0.01454n+2.46872n)(0.4X^\star\gamma/n)
=0.993304  X^\star\gamma;
\end{aligned}
\end{equation}
\begin{equation}\label{eqn:middleI1.2}
\begin{aligned}
\|\D_1(f, \f^\lambda)-\D_1(f, \f^\star)\|_1
\leq&(F_2(2.5/n, 0.7504/n)+\max_{f\in\hat\M}|K''(f)|)\|\f^\lambda-\f^\star\|_\infty
\\
\leq& (0.12675n^2+3.290n^2)(0.4X^\star\gamma/n)
=1.36670 n X^\star\gamma.
\end{aligned}
\end{equation}
In a similar manner,  we use Lemma~\ref{lem:control:middle} to control $\|\D_\ell(f, \f)\|_1$ as follows 
\begin{align}
&\|\D_0(f, \f)\|_1
\leq F_0(2.5/n, 0.7504/n)+\max_{f\in\hat\M}|K(f))|
\leq 0.00772 +\bl{0.90951}
=\bl{0.91723};
\label{eqn:middleI1.3}
\\
&\|\D_1(f, \f)\|_1
\leq F_1(2.5/n, 0.7504/n)+\max_{f\in\hat\M}|K'(f))|
\leq  0.01454n+2.46872n
=2.48326  n.
\label{eqn:middleI1.4}
\end{align}

To control $\|{\bnu^\star(f)}-{\bnu^\lambda(f)}\|_1$ and $\|{\bnu^\star(f)}\|_1$ in the  Middle Region,  we plug Eq.~\eqref{eqn:middleI1.1}-\eqref{eqn:middleI1.4} into~\eqref{eqn:sec:G:control:nu} to get
\begin{align}
\|{\bnu^\star(f)}-{\bnu^\lambda(f)}\|_1
\leq&\|\D_{1}(f, \f^\lambda)-\D_{1}(f, \f^\star)\|_1 \|\diag(\c^\lambda)\mS^{-1}\|_1+\|\D_{1}(f, \f^\lambda)\|_1\|\diag(\c^\star)\mS^{-1}-\diag(\c^\lambda)\mS^{-1}\|_1
\nn\\
&
+2\|\D_{0}(f, \f^\lambda)-\D_{0}(f, \f^\star)\|_1
\nn\\
\leq&  1.36670nX^\star \gamma\frac{0.552 }{n}
+(2.48326 n)\frac{0.552 X^\star  \gamma}{n}
+2(0.993304  X^\star  \gamma)
\leq 4.11179 X^\star\gamma;\label{eqn:M.1}
\\[1ex]
\|{\bnu^\star(f)}\|_1
\leq&\|\D_{ 1}(f, \f^\star)\diag(\c^\star)\mS^{-1}\|_1+2\|\D_0(f, \f^\star)\|_1
\leq (2.48326 n)\frac{0.552 }{n}+2(\bl{0.91723})
\leq \bl{3.20522}.\label{eqn:M.2}
\end{align}

Finally,  we control $|{Q^\star}(f)-{Q^\lambda}(f)|$ in Middle Region by
plugging Eq.~\eqref{eqn:M.1}-\eqref{eqn:M.2}, ~\eqref{eqn:sec:G:control:B.1}-\eqref{eqn:sec:G:control:B.3} and
\eqref{eqn:sec:G:control:C} to~\eqref{eqn:Qstar-Qlambda} with $\ell=0$ to get
\begin{align*}
|{Q^\star}(f)-{Q^\lambda}(f)|
\leq&\|{\bnu^\star(f)}\|_1\|\Xi^\star\|_{\infty, \infty}\|\brho^\star-\brho^\lambda\|_1+\|{\bnu^\star(f)}\|_1\|\Xi^\star-\Xi^\lambda\|_{\infty, \infty}\|\brho^\lambda\|_\infty
+\|{\bnu^\star(f)}-{\bnu^\lambda(f)}\|_1\|\Xi^\lambda\|_{\infty, \infty}\|\brho^\lambda\|_\infty
\\
\leq&( \bl{3.20522})(1.03172)(2X^\star  \gamma)+( \bl{3.20522}) (8.81222X^\star B^\star\gamma )
+(4.11179 X^\star \gamma )1.09363
\\
 \leq& \bl{39.3557}X^\star B^\star \gamma, \ f\in\M.
\end{align*}

\subsection{Far Region Analysis}
Lastly,  we bound the pointwise distance between $Q^\star(f)$ and $Q^\lambda(f)$ in Far Region $\F=[0.75/n,  f_1^\star/2]$. Again,  we start with controlling $\|\D_\ell(f, \f^\lambda)-\D_\ell(f, \f^\star)\|_1$ and $|\D_\ell(f, \f^\star)\|_1$ for $\ell=0, 1$. First note when $f\in\F=[0.75/n,  f_1^\star/2]$,  we have
\begin{align*}
&(a)~f-\tilde{f}_0
\geq  f-f_0^\star -|f_0^\star-\tilde{f}_0|
\geq 0.75/n-0.0004/n
=0.74996/n, 
\\
&(b)~\tilde{f}_1-f
\geq-|\tilde{f}_1-f_1^\star|+f_1^\star-f
\geq -0.0004n+f_1^\star/2
\geq -0.0004/n+(2.5009/n)/2
\geq1.25/n.
\end{align*}
Further note that $\{\tilde{f}_\ell\}$ satisfies the separation condition that $\Delta(\{\tilde{f}_\ell\})\geq2.5/n$.
Then,    following from  Lemma~\ref{lem:bounds:sum:K:far} and the upper bounds on $W_\ell(0.74996/n, 1.25/n)$ in Table~\ref{tbl:bounds:3},  we have
\begin{align}
\|\D_0(f, \f^\lambda)-\D_0(f, \f^\star)\|_1
\leq& \|\D_1(\tilde{\f}, f)\|_1\|\f^\lambda-\f^\star\|_\infty
\nn\\
\leq& W_1(0.74996/n, 1.25/n)\|\f^\lambda-\f^\star\|_\infty
\leq5.2265n(0.4X^\star\gamma/n)
=2.0906 X^\star\gamma;
\label{eqn:far.1}
\\[1ex]
\|\D_1(f, \f^\lambda)-\D_1(f, \f^\star)\|_1
\leq& W_2(0.74996/n, 1.25/n)\|\f^\lambda-\f^\star\|_\infty
\leq 48.033n^2(0.4X^\star\gamma/n)
=19.2132 nX^\star\gamma.
\label{eqn:far.2}
\end{align}
Similarly,  we can use Lemma~\ref{lem:bounds:sum:K:far} to  control $\|\D_\ell(f, \f)\|_1$ for $\ell = 0,  1$ and $f \in \F$:
\begin{equation}\label{eqn:far.3}
\begin{aligned}
\|\D_0(f, \f^\star)\|_1
\leq& W_0(0.74996/n, 1.25/n)\leq  0.71059;
\\
\|\D_1(f, \f^\star)\|_1
\leq& W_1(0.74996/n, 1.25/n)\leq5.2265n.
\end{aligned}
\end{equation}
Directly plugging Eq.~\eqref{eqn:far.1}-\eqref{eqn:far.3} into~\eqref{eqn:sec:G:control:nu},  we arrive at
\begin{align}
\|{\bnu^\star(f)}-{\bnu^\lambda(f)}\|_1
\leq&\|\D_{1}(f, \f^\lambda)-\D_{1}(f, \f^\star)\|_1 \|\diag(\c^\lambda)\mS^{-1}\|_1+\|\D_{1}(f, \f^\lambda)\|_1\|\diag(\c^\star)\mS^{-1}-\diag(\c^\lambda)\mS^{-1}\|_1
\nn\\
&
+2\|\D_{0}(f, \f^\lambda)-\D_{0}(f, \f^\star)\|_1
\nn\\
\leq&   19.2132nX^\star  \gamma\frac{0.552 }{n}+( 5.2265n)\frac{0.552 X^\star  \gamma}{n}
+2(2.0906 X^\star  \gamma)
\leq 17.6720X^\star\gamma;
\label{eqn:F.1}
\\[1ex]
\|{\bnu^\star(f)}\|_1
\leq&\|\D_{ 1}(f, \f^\star)\diag(\c^\star)\mS^{-1}\|_1+2\|\D_0(f, \f^\star)\|_1
\leq (5.2265n)\frac{0.552 }{n}+2(0.71059)
\leq 4.30621.
\label{eqn:F.2}
\end{align}
As a final step,  we control $|{Q^\star}(f)-{Q^\lambda}(f)|$ in Far Region by plugging Eq.~\eqref{eqn:F.1}-\eqref{eqn:F.2} and~\eqref{eqn:sec:G:control:B.1}-\eqref{eqn:sec:G:control:C} to~\eqref{eqn:Qstar-Qlambda} to get
\begin{align*}|{Q^\star}(f)-{Q^\lambda}(f)|
\leq&\|{\bnu^\star(f)}\|_1\|\Xi^\star\|_{\infty, \infty}\|\brho^\star-\brho^\lambda\|_1+\|{\bnu^\star(f)}\|_1\|\Xi^\star-\Xi^\lambda\|_{\infty, \infty}\|\brho^\lambda\|_\infty
+\|{\bnu^\star(f)}-{\bnu^\lambda(f)}\|_1\|\Xi^\lambda\|_{\infty, \infty}\|\brho^\lambda\|_\infty
\\
\leq&(4.30621)(1.03172)(2X^\star  \gamma)+(4.30621) (8.81222X^\star B^\star\gamma )
+ (17.6720X^\star \gamma )(1.09363)
\\
 \leq& 66.1596  X^\star {B^\star}\gamma, \ f\in\F.
\end{align*}
This concludes the proof of Lemma~\ref{lem:Q:lambda:closeto:Q:0}.
\end{proof}

\section{Proof of Lemma~\ref{lem:Q:hat:lambda:closeto:Q:lambda}}\label{sec:H}

\begin{proof}
The expressions
$\q^\lambda=\frac{\x^\star-\x^\lambda}{\lambda}$ and $\hat{\q}= \frac{\y-\hat{\x}}{\lambda}$ lead to
\begin{align*}
\hat{\q}-\q^\lambda
=\frac{(\y-\hat{\x})-(\x^\star-\x^\lambda)}{\lambda}
=\frac{\w}{\lambda}+\frac{\x^\lambda-\hat{\x}}{\lambda}, 
\end{align*}
implying
\begin{equation}
\begin{aligned}
|Q^\lambda(f)-\hat{Q}(f)|
&\leq \underbrace{\frac{|\a(f)^H\mZ\w|}{\lambda}}_{\Pi_1(f)} +\underbrace{\frac{| \a(f)^H\mZ(\x^\lambda-\hat{\x})| }{\lambda}}_{\Pi_2(f)}\label{eqn:Qlamba-Qhat}.
\end{aligned}
\end{equation}
This separates the distance between $Q^\lambda(f)$ and $\hat{Q}(f)$ into two parts: one is $\Pi_1(f)$ associated with the dual atomic norm of the Gaussian noise $\w$ whose upper bounds were developed in Appendix~\ref{sec:B}; the other is $\Pi_2(f)$ corresponding to the dual atomic norm of
$\x^\lambda-\hat{\x}$. The latter quantity can be bounded by similar arguments as controlling $|{ \a(f)^H\mZ(\x^\lambda-\hat{\x}) }|$ in Lemma~\ref{lem:Q:lambda:closeto:Q:0}.

\bigskip
\noindent{\bf Bounding $\Pi_1(f)$.}
\\
Combining Eq.~\eqref{eqn:noise:bound1}-\eqref{eqn:noise:bound2.2},   we can upperbound $\Pi_1(f), \Pi(f)'$ and $\Pi_1(f)''$ with high probability (at least $1-1/n^2$) for all $f\in\TT$:
\begin{equation}
\begin{aligned}
&\Pi_1(f)  \leq{6.534\gamma_0}/\lambda\leq 10.115/X^\star;
\\
&\Pi_1(f)'  \leq  {41.052n\gamma_0}/\lambda\leq63.458n/X^\star;
\\
&\Pi_1(f)''  \leq  {257.94n^2\gamma_0}/\lambda\leq 399.288 n^2/X^\star \label{eqn:Pi1}, 
\end{aligned}
\end{equation}
where we used $\lambda={0.646X^\star\gamma_0}$.

\bigskip\noindent
{\bf Bounding $\Pi_2(f)$.}
\begin{align*}
\Pi_2(f)
&=\frac{1}{\lambda}| \D_0(f, \f^\lambda)\c^\lambda-\D_0(f, \hat{\f})\hat{\c} |
\\
&\leq\frac{1}{\lambda}\|\D_0(f, \f^\lambda)-\D_0(f, \hat{\f}) \|_1\|\c^\lambda\|_\infty+
\frac{1}{\lambda}\| \D_0(f, \hat{\f}) \|_1\|\hat{\c}-\c^\lambda\|_\infty
\\
&\stack{\ding{172}}{\leq} \frac{1}{\lambda}\|\D_1(f, \tilde{\f} ) \|_1 \|\f^\lambda-\hat{\f}\|_\infty\|\c^\lambda\|_\infty+\frac{1}{\lambda}\| \D_0(f, \hat{\f}) \|_1\|\hat{\c}-\c^\lambda\|_\infty
\\
&\stack{\ding{173}}{\leq} \frac{c_{\max}^\lambda}{0.646X^\star  \gamma_0}\left(\frac{0.4(35.2)\gamma}{n}\|\D_1(f, \tilde{\f})\|_1 +35.2\gamma\|\D_0(f, \hat{\f})\|_1 \right)
\\
&\stack{\ding{174}}{\leq} \frac{ B^\star (1+X^\star \gamma)}{0.646X^\star}\left(\frac{14.08}{n}\|\D_1(f, \tilde{\f})\|_1 +35.2\|\D_0(f, \hat{\f})\|_1  \right), 
\end{align*}
where \ding{172} follows from the mean value theorem.  For \ding{173} to hold,  first note that $\lambda=0.646X^\star\gamma_0$   and  $\hat{\btheta}\in N^\lambda$  by    Lemma~\ref{lem:fix2}. Then,  we can upperbound $\|\hat{\c}-\c^\lambda\|_\infty$ as
\begin{align*}
\|\hat{\c}-\c^\lambda\|_\infty =  \frac{|\hat{c}_j-c^\lambda_j|}{|c^\lambda_j|}|c^\lambda_j| {\leq} (35.2 \gamma) c^\lambda_{\max}, 
\end{align*}
where the equality follows by assuming  the $\ell_\infty$ norm is achieved by the $j$th row  and the inequality follows by changing $X^\star$ to 35.2 in~\eqref{eqn:parameter:close}  and defining  $c^\lambda_{\max}:=\max_j |c^\lambda_j|$.
\ding{174} follows from $\gamma_0 =\gamma {c_{\min}^\star}$ and
\begin{align*}
\frac{c^\lambda_{\max}}{c^\star_{\min}} =  B^\star \frac{|c_j^\lambda|}{c^\star_{\max}}
\leq  B^\star \frac{|c_j^\lambda|}{|c^\star_{j}|}
\leq B^\star (1+X^\star  \gamma).
\end{align*}

As a consequence,  to control $\Pi_2(f)$,  it reduces to bounding $\|\D_\ell(f, \tilde{\f})\|_1$ and $\|\D_\ell(f, \hat{\f})\|_1$. For this purpose,  we first note that
$\{\Delta(\tilde T), \Delta(\hat T)\}\geq2.5/n$  by Lemma~\ref{lem:resolution}. Second,  by
\begin{align*}
\|\tilde{\f}-\f^\star\|_\infty
&\stack{\ding{172}}{\leq}\|\hat{\f}-\f^\star\|_\infty
\stack{\ding{173}}{\leq} 0.4(X^\star+35.2)\gamma
\stack{\ding{174}}{\leq} 0.0004/n + 1.408\times 10^{-6}/n=0.000401408/n, 
\end{align*}
where \ding{172} follows from the length of subinterval is no larger than the whole one. \ding{173} follows from Eq.~\eqref{eqn:parameter:close} and \ding{174} follows from the SNR condition~\eqref{eqn:snr}.
Thus,  we can follow the same arguments that lead to Eq.~\eqref{eqn:near1.3}-\eqref{eqn:near1.4} for Near Region,  Eq.~\eqref{eqn:middleI1.3}-\eqref{eqn:middleI1.4} for Middle Region,  and Eq.~\eqref{eqn:far.3} for Far Region to develop bounds on $\|\D_\ell(f, \hat{\f})\|_1$.

To have a concrete idea,  we first show how to control  $\|\D_\ell(f, \hat{\f})\|_1$ since  the upper bounds for $\|\D_\ell(f, \tilde{\f})\|_1$ then follows by $\|\tilde{\f}-\f^\star\|_\infty \leq\|\hat{\f}-\f^\star\|_\infty$.
First,  consider $f\in\N$. Then we have
\begin{align*}
0\leq|\hat{f}_0-f|
\leq&|\hat{f}_0-f_0^\star|+|f_0^\star-f|
\leq 0.000401408/n+  0.24/n
\leq 0.240401408/n.
\end{align*}
With some abuse of notation,   we denote $\hat\N:=[0, 0.240401408/n]$.
Second,  consider $f\in\M$. Then we have
\begin{align*}
(a)~|f-\hat{f}_0|
\leq& |f-f_0^\star|+|f_0^\star-\hat{f}_0|
\leq0.75/n+0.000401408/n
=0.750401408/n;
\\ 
(b)~|f-\hat{f}_0|
\geq& |f-f_0^\star|-|f_0^\star-\hat{f}_0|
\geq0.24/n-0.000401408/n.
=0.239598592/n.
\end{align*}
Denote $\hat\M=[0.2396/n,  0.7504/n]$.
At last,  we consider $f\in\F=[0.75/n,  f_1^\star/2]$:
\begin{align*}
(a)~f-\hat{f}_0
\geq&  f-f_0^\star -|f_0^\star-\hat{f}_0|
\geq 0.75/n-0.000401408/n
=0.749598592/n;
\\ 
(b)~\hat{f}_1-f \geq&-|\hat{f}_1-f_1^\star|+f_1^\star-f
\geq -0.000401408/n+f_1^\star/2
\geq -0.000401408/n+(2.5009/n)/2
\geq1.25/n.
\end{align*}
Hence we can define $\hat\F:=[0.749598592/n,  1.25/n].$
Furthermore,  we remark that those numerical upper bounds in Table~\ref{tbl:bounds:1}-\ref{tbl:bounds:2} do not change when evaluated for the newly defined intervals $\hat\N$,  $\hat\M$ and $\hat\F$.

Finally,  by directly plugging the upper bounds of $\|\D_\ell(f, \f)\|_\infty$ in~\eqref{eqn:near1.3}-\eqref{eqn:near1.4} for Near Region, ~\eqref{eqn:middleI1.3}-\eqref{eqn:middleI1.4} for Middle Region,   and  equation\eqref{eqn:far.3} for Far Region,   it follows that
\begin{align}
\Pi_2(f)\leq&\frac{ B^\star}{ X^\star}
\begin{cases}
\frac{1.001}{0.646}(\frac{14.08}{n}(0.803295 n)+35.2(1.00757))
\leq 72.4825\frac{ B^\star}{ X^\star},  \    f\in\N;
\\[1ex]
\frac{1.001}{0.646}(\frac{14.08}{n}(2.48326 n)+35.2(\bl{0.91723}))
\leq \bl{104.208}\frac{ B^\star}{ X^\star},  \    f\in\M;
\\[1ex]
\frac{1.001}{0.646}(\frac{14.08}{n}(5.2265n)+35.2(0.71059))
\leq152.788\frac{ B^\star}{ X^\star},  \        f\in\F.
\end{cases}
\label{eqn:Pi2}
\end{align}

 Similarly,  from~\eqref{eqn:near1.4}-\eqref{eqn:near1.6},  we have an upper bound on  $\Pi_2(f)'$ and $\Pi_2(f)''$ as follows 
\begin{align}
\Pi_2(f)'
\leq& \frac{ B^\star (1+X^\star \gamma)}{0.646 X^\star}\left(\frac{14.08}{n}\|\D_2(f, \tilde{\f})\|_1 +35.2\|\D_1(f, \hat{\f})\|_1  \right)
\nn\\
\leq& \frac{ B^\star  }{  X^\star}
\frac{1.001}{0.646}\left(\frac{14.08}{n}(3.34637 n^2)+35.2(0.803295 n)\right)
\leq 116.825 n \frac{ B^\star}{ X^\star}, \ f\in\N;
\label{eqn:Pi2:b.1}
\\[1ex]
\Pi_2(f)''
\leq& \frac{ B^\star (1+X^\star \gamma)}{0.646 X^\star}\left(\frac{14.08}{n}\|\D_3(f, \tilde{\f})\|_1 +35.2\|\D_2(f, \hat{\f})\|_1  \right)
\nn\\
\leq& \frac{ B^\star}{  X^\star}
\frac{1.001}{0.646}\left(\frac{14.08}{n}(8.0941 n^3)+35.2(3.34637 n^2)\right)
\leq 359.116   n^2 \frac{ B^\star}{ X^\star}, \ f\in\N.
\label{eqn:Pi2:b.2}
\end{align}

Combining~\eqref{eqn:Pi1}-\eqref{eqn:Pi2:b.2} for $\Pi_1(f)$ and $\Pi_2(f)$,  we can control
$|{\hat{Q}}^{(\ell)}(f)-{Q^\lambda}^{(\ell)}(f)|$ in Near region $f\in\N$ as follows 
\begin{align*}
&|{\hat{Q}}(f)-{Q^\lambda}(f)|
\leq(10.115+72.4825)B^\star/{X^\star}
=    82.5975  B^\star/{X^\star}, \ f\in\N;
\\
&|{\hat{Q}}(f)'-{Q^\lambda}'(f)|
\leq(63.458n+116.825n)B^\star/{X^\star}
=    180.283 n  B^\star/{X^\star}, \ f\in\N;
\\
&|{\hat{Q}}(f)''-{Q^\lambda}''(f)|\leq(399.288n^2+359.116 n^2)B^\star/{X^\star}
=    758.404n^2  B^\star/{X^\star}, \ f\in\N.
\end{align*}

For the case of Middle Region and Far Region,  we can upperbound them as:
\begin{align*}
|{\hat{Q}}(f)-{Q^\lambda}(f)|
&\leq(10.115+ \bl{104.208})B^\star/{X^\star}
=      \bl{114.323}  B^\star/{X^\star}, ~f\in\M;
\\
|{\hat{Q}}(f)-{Q^\lambda}(f)|
&\leq(10.115+ 152.788)B^\star/{X^\star}
=   162.903  B^\star/{X^\star}, ~f\in\F.
\end{align*}
This completes the proof of  Lemma~\ref{lem:Q:hat:lambda:closeto:Q:lambda}.
\end{proof}

\section{Proof of Proposition~\ref{pro:bip}}\label{sec:proof:optimalitycondition}
\begin{proof}
The uniqueness follows from the strongly convex quadratic term in~\eqref{eqn:primal}. We next show the primal optimality of $\hat{\x}$ and the dual optimality of $\hat\q$ by establishing strong duality. First,  $\hat{\q}$ is feasible to the dual program~\eqref{eqn:dual} because of the BIP property.  Second,  we have the following chain of inequalities:
\begin{align*}
\text{value of~\eqref{eqn:dual}}
&=\frac{1}{2}\|\y\|_{\mZ}^2-\frac{1}{2}\|\y-\lambda\hat{\q}\|_{\mZ}^2\\
&=\frac{1}{2}\|\lambda\hat{\q}\|_{\mZ}^2+\lambda\R\{\hat{\x}^H\mZ\hat{\q}\}\\
&=\frac{1}{2}\|\y-\hat{\x}\|_{\mZ}^2 + \lambda\|\hat{\c}\|_1\\
&\geq\frac{1}{2}\|\y-\hat{\x}\|_{\mZ}^2 +\lambda\|\hat{\x}\|_\A
=\text{value of  }\eqref{eqn:primal}, 
\end{align*}
where the second line follows by plugging  $\y=\hat{\x}+\lambda\hat{\q}$; the third line holds due to the Interpolation property; and the last line holds
since $\|\hat{\x}\|_\A\leq \|\hat{\c}\|_1$ by~\eqref{eqn:def:atomic:norm}.
Since the weak duality theorem ensures that the other direction of the inequality always holds,    we obtain strong duality. As a consequence,  $\hat{\x}$ and $\hat{\q}$ achieve primal optimality and dual optimality,  respectively. This means $\hat{\x} = {\x}^{\mathrm{glob}},  \hat{\q} = {\q}^{\mathrm{glob}}$ due to uniqueness of the solutions.
\end{proof}

\section{Proof of Corollary~\ref{cor:main2}}\label{sec:connection}
\begin{proof}
Denote by $F(\x)$ the objective functions for~\eqref{eqn:primal} and $G(\f, \c)$ for~\eqref{eqn:pdw},  respectively. Assume $(\f^{\mathrm{non}}, \c^{\mathrm{non}})$ is a global optimum for~\eqref{eqn:pdw} with $\x^{\mathrm{non}} = \mA(\f^{\mathrm{non}})\c^{\mathrm{non}}$,  and $\x^{\mathrm{glob}}=\mA(\f^{\mathrm{glob}})\c^{\mathrm{glob}}$ is the global optimum of~\eqref{eqn:primal}.
Then
\begin{align}\label{eqn:proof:global}
F({\x}^{\mathrm{glob}}) \leq F({\x}^{\mathrm{non}}) \leq G({\f}^{\mathrm{non}}, {\c}^{\mathrm{non}}) \leq G({\f}^{\mathrm{glob}}, {\c}^{\mathrm{glob}}), 
\end{align}
where the first inequality uses the optimality of $\x^{\mathrm{glob}}$ to~\eqref{eqn:primal}; the second inequality follows from $\|{\x}^{\mathrm{non}}\|_{\A}\leq \|{\c}^{\mathrm{non}}_\ell\|_1$ by~\eqref{eqn:def:atomic:norm}; and the last inequality follows from the optimality of $(\f^{\mathrm{non}}, \c^{\mathrm{non}})$ to~\eqref{eqn:pdw}. On the other hand,   recognize that $\|{\x}^{\mathrm{glob}}\|_{\A}= \|{\c}^{\mathrm{glob}}_\ell\|_1$ since $\{f^{\mathrm{glob}}_\ell\}$ satisfies the separation condition (revealed by Lemma~\ref{lem:resolution} in Appendix~\ref{sec:A}). This leads to
$G({\f}^{\mathrm{glob}}, {\c}^{\mathrm{glob}})=F({\x}^{\mathrm{glob}}).$ Therefore,  all inequalities in~\eqref{eqn:proof:global} become equalities and hence
$
 G({\f}^{\mathrm{non}}, {\c}^{\mathrm{non}})= G({\f}^{\mathrm{glob}}, {\c}^{\mathrm{glob}}).
$
This implies the global optimality of $({\f}^{\mathrm{glob}}, {\c}^{\mathrm{glob}})$ for the  nonconvex program~\eqref{eqn:pdw}.

\end{proof}

\section{Proof of Lemma~\ref{lem:resolution}}\label{sec:proof:resolution}

\begin{proof}
First of all,  from Lemma~\ref{lem:fix1},  we have $\btheta^\lambda\in\N^\star$,  implying $\|\f^\lambda-\f^\star\|_\infty\leq 0.4X^\star B^\star \gamma/n$ by Eq.~\eqref{eqn:parameter:close} and by Lemma~\ref{lem:fix2},  we obtain that $\hat{\btheta}\in\N^\lambda$,  which implies $\|\hat{\f}-\f^\lambda\|_\infty\leq 0.4(35.2) B^\star \gamma/n$ by Eq.~\eqref{eqn:parameter:close}. More precisely,  we bound $\Delta(T^\lambda)$ as
\begin{align*}
\Delta(T^\lambda)
&\stack{\ding{172}}{=}\min_{i\neq j}|f^\lambda_i-f^\lambda_j|
\\
&=\min_{i\neq j}|f^\lambda_i-f_i^\star+f_i^\star-f_j^\star+f_j^\star-f^\lambda_j|
\\
&\stack{\ding{173}}{\geq} \min_{i\neq j}|f_i^\star-f_j^\star| - \max_{i}|f^\lambda_i-f_i^\star|-\max_{j}|f^\lambda_j-f_j^\star|
\\
&\stack{\ding{174}}{\geq} \Delta(T^\star)-0.8X^\star B^\star\gamma/n
\\
&\stack{\ding{175}}{\geq} 2.5009/n-0.0008/n=2.5001/n >2.5/n, 
\end{align*}
where \ding{172} follows from the definition of the separation distance and \ding{173} follows from the triangle inequality. \ding{174} follows from
that $\btheta^\lambda$ is the fixed point solution of the contraction map~\eqref{eqn:Map:1}. Thus,  $\btheta^\lambda\in\N^\star$ following from the non-escaping property by the contraction mapping theorem. This further implies that $\|\f^\lambda-\f^\star\|_\infty\leq 0.4X^\star B^\star \gamma/n$ by~\eqref{eqn:parameter:close}.
Finally,  \ding{175} follows from that $T^\star$ satisfies the separation condition~\eqref{eqn:separation}: $\Delta(T^\star)\geq 2.5009/n.$

For bounding $\Delta(\tilde{T})$,  first identify that
$-\max_{i}|\tilde{f}_i-f_i^\star|\geq -\max_{i}|f^\lambda_i-f_i^\star|, $
since the inner point $\tilde{f}_i$ is included in the interval $[f_i^\star,  f^\lambda_i]$ and hence the length of the  $[\tilde{f}_i,  f^\star_i]$ is less than the entire interval $[f^\star_i,  f^\lambda_i]$. Then we immediately arrive at $\Delta(\tilde{T})>2.5/n$.

For $\Delta(\hat{T})$,  we have
\begin{align*}
\Delta(\hat{T})
&=\min_{i\neq j}|\hat{f}_i-\hat{f}_j|
\\
&=\min_{i\neq j}|\hat{f}_i-f_i^\lambda+f_i^\lambda-f_j^\lambda+f_j^\lambda-\hat{f}_j|
\\
&\stack{\ding{172}}{\geq} \min_{i\neq j}|f_i^\lambda-f_j^\lambda| - \max_{i}|\hat{f}_i-f_i^\lambda|-\max_{j}|\hat{f}_j-f_j^\lambda|
\\
&\stack{\ding{173}}{\geq} \Delta(T^\lambda)-2\|\hat{\f}-\f^\lambda\|_\infty
\\
&\stack{\ding{173}}{\geq} \Delta(T^\lambda)-2(14.08) B^\star\gamma/n, 
\end{align*}
where \ding{172} follows from the triangle inequality and \ding{173} follows from the definition of $\|\hat{\f}-\f^\lambda\|_\infty.$ \ding{174} follows from that $\|\hat{\f}-\f^\lambda\|_\infty\leq 0.4(35.2)B^\star\gamma/n=14.08 B^\star\gamma/n$ by~\eqref{eqn:parameter:close}. Finally following from the SNR condition~\eqref{eqn:snr} and $\Delta(T^\lambda)\geq 2.5001/n$,  we then have $\Delta(\hat{T})\geq 2.5001/n-2(14.08)\times10^{-7}/n>2.5/n$.

$\Delta(\tilde{T}^\lambda)\geq2.5/n$ holds by the same strategy as $\Delta(\tilde{T})>2.5/n$.
\end{proof}

\end{appendices}

\bibliography{support}

\end{document}